\newtheorem{theorem}{Theorem}[section]
\newtheorem{proposition}{Proposition}[section]
\newtheorem{lemma}{Lemma}[section]
\newtheorem{corollary}{Corollary}[section]
\newcommand{\beqa}{\begin{eqnarray}}
\newcommand{\eeqa}{\end{eqnarray}}
\newcommand{\bra}[1]{\langle\,#1\,|}
\newcommand{\ket}[1]{|\,#1\,\rangle}
\numberwithin{equation}{section}
\begin{document}

\title{\textbf{Lattice Sine-Gordon model}}
\author{\vspace{1cm}\vspace{2cm}}

\begin{flushright}
LPENSL-TH-02-18
\end{flushright}

\bigskip \bigskip

\begin{center}
\textbf{\LARGE Transfer matrix spectrum for cyclic representations of the
6-vertex reflection algebra II}

\vspace{50pt}
\end{center}

\begin{center}
{\large \textbf{J.~M.~Maillet}\footnote[1]{{Univ Lyon, Ens de Lyon,
Univ Claude Bernard Lyon 1, CNRS, Laboratoire de Physique, UMR 5672, F-69342
Lyon, France; maillet@ens-lyon.fr}},~~ \textbf{G. Niccoli}\footnote[2]{%
{Univ Lyon, Ens de Lyon, Univ Claude Bernard Lyon 1, CNRS,
Laboratoire de Physique, UMR 5672, F-69342 Lyon, France;
giuliano.niccoli@ens-lyon.fr}},~~ \textbf{B. Pezelier}\footnote[3]{{
Univ Lyon, Ens de Lyon, Univ Claude Bernard Lyon 1, CNRS, Laboratoire de
Physique, UMR 5672, F-69342 Lyon, France; baptiste.pezelier@ens-lyon.fr}} }
\end{center}

\begin{center}
\vspace{50pt} \today \vspace{50pt}
\end{center}

\begin{itemize}
\item[ ] \textbf{Abstract.}\thinspace \thinspace This article is a direct continuation of \cite{MNP2017} where we begun the study of the transfer matrix spectral problem for the cyclic representations of the trigonometric 6-vertex reflection algebra associated to the Bazhanov-Stroganov Lax operator. There we addressed this problem for the case where one of the $K$-matrices describing the boundary conditions is triangular.  In the present article we consider the most general integrable boundary conditions, namely the most general boundary $K$-matrices satisfying the reflection equation. The spectral analysis is developed by implementing the method of Separation of Variables (SoV). We first design a suitable gauge transformation that enable us to put into correspondence the spectral problem for the most general boundary conditions with another one having one boundary $K$-matrix in a triangular form. In these settings the SoV resolution can be obtained along an extension of the method described in \cite{MNP2017}. The transfer matrix spectrum is then completely characterized in terms of the set of solutions to a discrete system of polynomial equations in a given class of functions and equivalently as the set of solutions to an analogue of Baxter's T-Q functional equation. We further describe scalar product properties of the separate states including eigenstates of the transfer matrix. 
\end{itemize}

\newpage\tableofcontents\newpage

\section{Introduction}

In the recent years, the out-of-equilibrium behavior of close and open physical systems has attracted a lot of interest motivated in particular by new experimental results, see e.g.  \cite{Greiner2002,Kinoshita2006,Hofferberth2007,Bloch2008,Trotzky2012,Schneider2012,Fukuhara2013,Gogolin2015,Ron2013}. Microscopic models able to describe such situations are thus in general not only described through their bulk  Hamiltonian but also by specifying appropriate boundary conditions. It leads eventually to rather complicated dynamical properties with possible deformations of the bulk symmetries. In the context of strongly coupled systems, integrable models in low dimension, with boundary conditions preserving integrability properties, can be used to gain insights into the non-perturbative behavior of such out-of-equilibrium dynamics, see e.g. \cite{CEM2016} and references therein. In particular, they can also describe classical stochastic relaxation processes, like ASEP \cite{Derrida1998,Schutz2000,Alcaraz1994,Bajnok2006,deGier2005} or transport properties in one dimensional quantum systems, see e.g. \cite{SirPA09,Pro11}.

The algebraic description of quantum integrable models with non-trivial boundary conditions (namely going beyond periodic boundary conditions) goes back to Cherednik \cite{OpenCyChe84} and Sklyanin \cite{OpenCySkly88}. Such models have already a long history, that started with spin chains and Bethe ansatz \cite{OpenCyH28,OpenCyBe31,OpenCyYY661,OpenCyGa83,OpenCyGau71,Bariev1979,Bariev1980,Schulz1985,OpenCyAlca87}, and continued using its modern developments,  see e.g. \cite{OpenCySkly88,OpenCyKulS91,OpenCyMazNR90,OpenCyMazN91,OpenCyPS,ADMFBMNR,OpenCy,OpenCydeVegaG-R93,deVegaR-1994,OpenCyGhosZ94, OpenCyJimKKKM95-1, OpenCyJimKKMW95-2, OpenCyBBOY95,OpenCyKKMNST07,OpenCyDoi03, OpenCyFrapNR07, OpenCyRag2+1, OpenCyRag2+2, OpenCyBK05, OpenCyN02, OpenCyNR03, OpenCyMNS06, OpenCyGal08, OpenCyYNZ06, ADMFCaoYSW13, Xu2016, OpenCyDerKM03-2, OpenCyFSW08, OpenCyFGSW11, OpenCyFram+2, OpenCyFram+1, OpenCyGN12-open, OpenCyFalN14, OpenCyFalKN14, OpenCyKitMN14, OpenCyFanHSY96, OpenCyCao03, OpenCyZhang07, OpenCyACDFR05, OpenCyRag1+, OpenCyRag2+,Baseilhac2013,Belliard2013a,Belliard2013b,Belliard2015a,Belliard2015b,Belliard2015c, OpenCy-H1a, OpenCy-H1b, OpenCy-H1c, OpenCy-H1d, OpenCy-ShiW97,Doikou-2006}. The key point of the algebraic approaches is an extension of the standard Quantum Inverse Scattering method , see e.g.  \cite{OpenCySF78,OpenCyFT79,OpenCyS79,OpenCyKS79,OpenCyFST80,OpenCyF80,OpenCyTh81,OpenCyS82,OpenCyF82,OpenCyKS82,OpenCyIK82,OpenCyBaxBook,OpenCyF95,OpenCyJ90,OpenCyLM66,OpenCySh85}, and its associated Yang-Baxter algebra ; it takes the form of the so-called reflection equations \cite{OpenCyChe84,OpenCySkly88} satisfied by the boundary version of the quantum monodromy matrix. The integrable structure of the model with boundaries can be described in terms of the corresponding bulk quantities supplemented with boundary conditions encoded in some $K$-matrices.  To preserve integrability properties, these $K$-matrices should satisfy reflection equations driven by the $R$-matrix of the model in the bulk which solves the usual Yang-Baxter equation. As shown by Cherednik in \cite{OpenCyChe84} these reflection equations are just consequences of the factorization property of the scattering of particles on a segment having reflecting ends described by the boundary $K$-matrices. It leads to compatibility properties between the scattering in the bulk described by the $R$-matrix and the reflection properties of the ends encoded in the $K$-matrices. These are such that there still exists full series of commuting conserved quantities for the model with boundaries generated by the boundary transfer matrix \cite{OpenCySkly88}. Its expression is quadratic in the bulk monodromy matrix entries and depends on the right and left boundary $K$-matrices. Then, as for the periodic case, the local Hamiltonian for the boundary model can be obtained from this boundary transfer matrix. This is the standard framework to then address the resolution of the common spectral problem for the transfer matrix and its associated local Hamiltonian. There have been quite a number of works devoted to boundary integrable models using, as in the periodic situation, various versions of the  Bethe ansatz  \cite{OpenCyKulS91,OpenCyMazNR90,OpenCyMazN91,OpenCyPS,ADMFBMNR,OpenCy,OpenCydeVegaG-R93,deVegaR-1994,OpenCyGhosZ94, OpenCyJimKKKM95-1, OpenCyJimKKMW95-2, OpenCyBBOY95, OpenCyKKMNST07,OpenCyDoi03, OpenCyFrapNR07, OpenCyRag2+1, OpenCyRag2+2, OpenCyBK05, OpenCyN02, OpenCyNR03, OpenCyMNS06, OpenCyGal08, OpenCyYNZ06, ADMFCaoYSW13, Xu2016, OpenCyDerKM03-2, OpenCyFSW08, OpenCyFGSW11, OpenCyFram+2, OpenCyFram+1, OpenCyGN12-open, OpenCyFalN14, OpenCyFalKN14, OpenCyKitMN14, OpenCyFanHSY96, OpenCyCao03, OpenCyZhang07, OpenCyACDFR05, OpenCyRag1+, OpenCyRag2+,Baseilhac2013,Belliard2013a,Belliard2013b,Belliard2015a,Belliard2015b,Belliard2015c, OpenCy-H1a, OpenCy-H1b, OpenCy-H1c, OpenCy-H1d, OpenCy-ShiW97,Doikou-2006}. It appeared however, that while for special models and boundary $K$-matrices a method very similar to the standard algebraic Bethe ansatz (ABA), here based on the reflection equations, can be applied, the case of the most general boundary conditions (and associated $K$-matrices) preserving integrability turns out of be out of the reach of these methods. This motivated the use of different approaches like in particular the use of $q$-Onsager algebras, see e.g. \cite{OpenCyBK05,Baseilhac2013}, modifications of the Bethe ansatz \cite{OpenCyRag2+1,OpenCyRag2+2,Belliard2013a,Belliard2013b,Belliard2015a,Belliard2015b,Belliard2015c,ADMFCaoYSW13} and the implementation for this case of the separation of variable (SoV) method \cite{OpenCySk1,Skl1985,OpenCySk2,OpenCySk3,OpenCyBBS96,OpenCySm98,DerKM03,OpenCyBT06,OpenCyGIPS06,OpenCyGIPST07,OpenCyNT-10,OpenCyN-10,OpenCyGN12,OpenCyGMN12-SG,OpenCyNWF09,OpenCyN12-0,OpenCyN12-1,OpenCyNicT15-2,OpenCyNicT15,OpenCyLevNT15,OpenCyKitMNT15,KitMNT16}. For an extensive discussion and comparison of these various methods in the case of boundary integrable models, we refer to the general discussion given in the introduction of our first article \cite{MNP2017} and to the references therein. 

In our article \cite{MNP2017} we started the implementation of the SoV method to cyclic representation of the 6-vertex reflection equation associated to the most general Bazhanov-Stroganov quantum Lax operator \cite{RocheA-1989,JimboDMM1990,JimboDMM1991,Tarasov-1992,Tarasov-1993,OpenCyBS90}. Let us recall that the periodic boundary conditions case (spectrum and form factors) was considered in previous works  \cite{OpenCyGIPS06,OpenCyGIPST07,OpenCyNT-10,OpenCyN-10,OpenCyGN12,OpenCyGMN12-SG}, generalizing in particular \cite{OpenCyKitMT99,OpenCyMaiT00}. The interest in such a problem is due to the fact that special cases include the Sine-Gordon lattice model at roots of unity and the Chiral Potts model  \cite{OpenCyauYMcPTY,OpenCyMcPTS,OpenCyauYMcPT,OpenCyBaPauY,OpenCyBBP90,OpenCyBa89,OpenCyAMcP0,OpenCyTarasovSChP,OpenCyBa04,OpenCy-Au-YP08}. In \cite{MNP2017} we started the analysis considering the special case where one of the boundary $K$-matrices has triangular form (which is equivalent to one constraint on the boundary parameters). For that situation we have been able to apply successfully the SoV method by identifying the separate basis as the eigenstate basis of a special diagonalizable $B$-operator with simple spectrum which can be  constructed from the boundary monodromy matrix entries. Then using this separate basis, the spectrum (eigenvalues and eigenstates) for the boundary transfer matrix was completely characterized in terms of the set of solutions to a discrete system of polynomial equations in a given class of functions. 

The purpose of the present article is to address this spectral problem for the most general boundary conditions preserving integrability, namely for the most general $K$-matrices solution of the reflection equation. The method to reach this goal is to design a gauge transformation that enable us to put this general situation into correspondence with the previous one, namely with a model having one triangular $K$-matrix. For that purpose, the standard idea of Baxter's gauge transformations, see e.g. \cite{OpenCyBaxBook,OpenCyFT79} and references therein, has to be adapted in a way similar to \cite{OpenCyFanHSY96,OpenCyFalKN14} and generalized to these cyclic representations of the 6-vertex Yang-Baxter algebra. Then using this correspondence, the method and tools obtained in our first paper \cite{MNP2017} can be used, leading to the complete characterization of the spectrum (again eigenvalues and eigenstates) of the general boundary transfer matrix. We also give determinant formula for the scalar products of the separate states. Further, we show that the spectrum characterization admits a representation in terms of functional equations of Baxter T-Q equation type. Let us note that an analogous inhomogeneous Baxter's like equation has been already proposed in \cite{Xu2016} for this model on the basis of pure functional arguments on the fusion of transfer matrices. Thanks to our SoV construction, we prove in the present article that our inhomogeneous Baxter's like equation does characterize the full transfer matrix spectrum. Let us further remark that the inhomogeneous term in the proposal of \cite{Xu2016} is presented in terms of the averages of the entries of the monodromy matrix. For general cyclic representations these quantities are defined only through (unresolved) recursion formula in \cite{Xu2016}. Hence in \cite{Xu2016} this inhomogeneous term is in fact not given explicitly which makes the comparison with our explicit functional equation not directly possible. Moreover, we would like to stress that in our formulation the Q-function are Laurent polynomial of a smaller degree compared to the one in \cite{Xu2016}. This is due to the fact that the inhomogeneous term being computed explicitly in our SoV derivation it allows us to remove $4p$ ($p$ being an integer characteristic value defining the cyclic representation, see \eqref{def-p} in section 2) irrelevant  zeros from the T-Q functional equation as they can be factored out from each term of this equation (see section 5).

This article is organized as follows. In section 2 we just recall the basics of the cyclic representations associated to the Bazhanov-Stroganov quantum Lax operator. In section 3 we define the gauged transformed reflection algebra that  put into correspondence the most general boundary condition $K$-matrix with a triangular one. It enables us to adapt the SoV method that we already described in our first article \cite{MNP2017} to this more general context, leading in section 4 to the transfer matrix spectrum complete characterization in this SoV basis. There we also present the scalar product formulae for the so-called separate states containing the transfer matrix eigenstates.
In section 5 we show that the spectrum characterization admit a representation in terms of functional equations of Baxter T-Q equation type. Details about the construction of the gauge transformation are given in Appendices A and B together with determinant identities used in the spectrum characterization in Appendix C. 

\section{Cyclic representations of 6-vertex reflection algebra.}

Following Sklyanin's paper \cite{OpenCySkly88}, we consider the most general cyclic
solutions of the 6-vertex reflection equation associated to the Bazhanov-Stroganov Lax operator \cite{OpenCyBS90}:%
\begin{equation}
R_{12}(\lambda /\mu )\,\mathcal{U}_{1,-}(\lambda )\,R_{21}(\lambda \mu /q)\,%
\mathcal{U}_{2,-}(\mu )=\mathcal{U}_{2,-}(\mu )\,R_{21}(\lambda \mu /q)\,%
\mathcal{U}_{1,-}(\lambda )\,R_{12}(\lambda /\mu )  \label{bYB}
\end{equation}%
where the two sides of the equation belong to $\text{End}(V_{1}\otimes
V_{2}\otimes \mathcal{H})$ and are defined by the following boundary monodromy
matrices:%
\begin{equation}
\mathcal{U}_{a,-}(\lambda )=M_{a}(\lambda )K_{a,-}(\lambda )\hat{M}%
_{a}(\lambda )=\left( 
\begin{array}{cc}
\mathcal{A}_{-}(\lambda ) & \mathcal{B}_{-}(\lambda ) \\ 
\mathcal{C}_{-}(\lambda ) & \mathcal{D}_{-}(\lambda )%
\end{array}%
\right) _{a}\in \text{End}(V_{a}\otimes \mathcal{H}),  \label{BMM}
\end{equation}%
where:%
\begin{equation}
\hat{M}_{a}(\lambda )=(-1)^{\mathsf{N}}\,\sigma
_{a}^{y}\,M_{a}^{t_{a}}(1/\lambda )\,\sigma _{a}^{y},  \label{Inverse-M}
\end{equation}%
and $V_{a}\simeq \mathbb{C}^{2}$ is the so-called auxiliary space. Here, 
\begin{equation}
M_{a}(\lambda )=\left( 
\begin{array}{cc}
A(\lambda ) & B(\lambda ) \\ 
C(\lambda ) & D(\lambda )%
\end{array}%
\right) _{a}\equiv L_{a,\mathsf{N}}(\lambda q^{-1/2})\cdots L_{a,1}(\lambda
q^{-1/2})\in \text{End}(V_{a}\otimes \mathcal{H}),
\label{24}
\end{equation}%
is the cyclic solution of the 6-vertex Yang-Baxter equation:%
\begin{equation}
R_{12}(\lambda /\mu )M_{1}(\lambda )M_{2}(\mu )=M_{2}(\mu )M_{1}(\lambda
)R_{12}(\lambda /\mu )\in \text{End}(V_{1}\otimes V_{2}\otimes \mathcal{H}),
\end{equation}%
associated to the R-matrix%
\begin{equation}
R_{ab}(\lambda )=\left( 
\begin{array}{cccc}
q\lambda -q^{-1}\lambda ^{-1} & 0 & 0 & 0 \\[-1mm] 
0 & \lambda -\lambda ^{-1} & q-q^{-1} & 0 \\[-1mm] 
0 & q-q^{-1} & \lambda -\lambda ^{-1} & 0 \\[-1mm] 
0 & 0 & 0 & q\lambda -q^{-1}\lambda ^{-1}%
\end{array}%
\right) \in \text{End}(V_{a}\otimes V_{b}),  \label{Rlsg}
\end{equation}%
and defined in terms of the
Bazhanov-Stroganov's Lax operators \cite{OpenCyBS90}:%
\begin{equation}
L_{a,n}(\lambda )\equiv \left( 
\begin{array}{cc}
\lambda \alpha _{n}v_{n}-\beta _{n}\lambda ^{-1}v_{n}^{-1} & u_{n}\left(
q^{-1/2}a_{n}v_{n}+q^{1/2}b_{n}v_{n}^{-1}\right) \\ 
u_{n}^{-1}\left( q^{1/2}c_{n}v_{n}+q^{-1/2}d_{n}v_{n}^{-1}\right) & \gamma
_{n}v_{n}/\lambda -\delta _{n}\lambda /v_{n}%
\end{array}%
\right) _{a}\in \text{End}(V_{a}\otimes \mathcal{R}_{n}),
\label{laxopppp}
\end{equation}%
where:%
\begin{equation}
\gamma _{n}=a_{n}c_{n}/\alpha _{n},\text{ \ \ \ \ }\delta
_{n}=b_{n}d_{n}/\beta _{n}.
\end{equation}%
The $u_{n}\in $End$(\mathcal{R}_{n})$ and $v_{m}\in $End$(\mathcal{R}_{m})$
are unitary Weyl algebra generators:%
\begin{equation}
u_{n}v_{m}=q^{\delta _{n,m}}v_{m}u_{n}\text{ \ \ with \ }%
u_{n}^{p}=v_{m}^{p}=1\text{\ }\forall n,m\in \{1,...,\mathsf{N}\},
\end{equation}%
and: 
\begin{equation}
q=e^{-i\pi \beta ^{2}}\text{, }\beta ^{2}=p^{\prime }/p\text{ with }%
p^{\prime }\text{ even and }p=2l+1\text{ odd, } l \in \mathbb{N}.
\label{def-p}
\end{equation}%
The local quantum spaces $\mathcal{R}_{n}$ are $p$-dimensional Hilbert
spaces and the full representation space of the cyclic Yang-Baxter and
reflection algebra is defined by the tensor product of the local quantum
spaces, i.e. $\mathcal{H}=\otimes _{n=1}^{\mathsf{N}}\mathcal{R}_{n}$.
Moreover, we consider here the most general boundary matrices defined as:%
\begin{equation}
K_{a,\pm }(\lambda )=\left( 
\begin{array}{cc}
a_{\pm }\left( \lambda \right) & b_{\pm }\left( \lambda \right) \\ 
c_{\pm }\left( \lambda \right) & d_{\pm }\left( \lambda \right)%
\end{array}%
\right) _{a}\equiv K_{a}(\lambda q^{(1\pm 1)/2};\zeta _{\pm },\kappa _{\pm
},\tau _{\pm }),
\label{Kmatuno}
\end{equation}%
where:%
\begin{equation}
K_{a}(\lambda ;\zeta ,\kappa ,\tau )=\frac{1}{\zeta -\frac{1}{\zeta }}\left( 
\begin{array}{cc}
\frac{\lambda \zeta }{q^{1/2}}-\frac{q^{1/2}}{\lambda \zeta } & \kappa
e^{\tau }\left( \frac{\lambda ^{2}}{q}-\frac{q}{\lambda ^{2}}\right) \\ 
\kappa e^{-\tau }\left( \frac{\lambda ^{2}}{q}-\frac{q}{\lambda ^{2}}\right)
& \frac{q^{1/2}\zeta }{\lambda }-\frac{\lambda }{\zeta q^{1/2}}%
\end{array}%
\right) _{a}\in \text{End}(V_{a}).
\label{Kmatuno2}
\end{equation}%
We introduce the functions:%
\begin{equation}
\mathsf{A}_{-}(\lambda )\equiv g_{-}(\lambda )a(\lambda
q^{-1/2})d(1/(q^{1/2}\lambda )),\text{ \ }\mathsf{D}_{-}(\lambda )=k(\lambda
)\mathsf{A}_{-}(q/\lambda ),
\label{premierA}
\end{equation}%
where:%
\begin{eqnarray}
a(\lambda ) &\equiv &a_{0}\prod_{n=1}^{\mathsf{N}}(\frac{\beta _{n}}{\lambda 
}+q^{-1}\frac{b_{n}\alpha _{n}}{a_{n}}\lambda ),\text{ \ }k(\lambda )=\frac{%
\left( \lambda ^{2}-1/\lambda ^{2}\right) }{(\lambda
^{2}/q^{2}-q^{2}/\lambda ^{2})}, \\
d(\lambda ) &\equiv &\frac{(-1)^{\mathsf{N}}}{a_{0}}\prod_{n=1}^{\mathsf{N}}%
\frac{a_{n}c_{n}}{\alpha _{n}}(\frac{1}{\lambda }+q\frac{d_{n}\alpha _{n}}{%
c_{n}\beta _{n}}\lambda ),
\end{eqnarray}%
$a_{0}$ is a free nonzero parameter and%
\begin{equation}
g_{\epsilon }(\lambda )\equiv \frac{(\lambda \alpha _{\epsilon
}/q^{1/2}-q^{1/2}/(\lambda \alpha _{\epsilon }))(\lambda \beta _{\epsilon
}^{-\epsilon }/q^{1/2}+q^{1/2}/(\lambda \beta _{\epsilon }^{-\epsilon }))}{%
\left( \alpha _{\epsilon }-1/\alpha _{\epsilon }\right) \left( \beta
_{\epsilon }+1/\beta _{\epsilon }\right) },
\end{equation}%
where\ $\epsilon =\pm 1$ and we have defined:%
\begin{equation}
\left( \alpha _{\epsilon }-1/\alpha _{\epsilon }\right) \left( \beta
_{\epsilon }+1/\beta _{\epsilon }\right) \equiv \frac{\zeta _{\epsilon
}-1/\zeta _{\epsilon }}{\kappa _{\epsilon }},\text{ \ }\left( \alpha
_{\epsilon }+1/\alpha _{\epsilon }\right) \left( \beta _{\epsilon }-1/\beta
_{\epsilon }\right) \equiv \frac{\zeta _{\epsilon }+1/\zeta _{\epsilon }}{%
\kappa _{\epsilon }},  \label{Def-alfa-beta}
\end{equation}%
Moreover, later we will use: 
\begin{equation}
\mu _{n,h}\equiv \left\{ 
\begin{array}{c}
iq^{1/2}\left( a_{n}\beta _{n}/\alpha _{n}b_{n}\right) ^{1/2}\text{ \ \ }h=+,
\\ 
iq^{1/2}\left( c_{n}\beta _{n}/\alpha _{n}d_{n}\right) ^{1/2}\text{ \ \ }h=-.%
\end{array}%
\right.
\end{equation}%
Following the Sklyanin's paper \cite{OpenCySkly88} the next proposition
holds:

\begin{proposition}
The most general boundary transfer matrix associated to the Bazhanov-Stroganov Lax operator in the cyclic
representations of the reflection algebra is defined by:%
\begin{eqnarray}
\mathcal{T}(\lambda ) &\equiv &\text{tr}_{a}\{K_{a,+}(\lambda )\mathcal{U}%
_{a,-}(\lambda )\} \\
&=&a_{+}(\lambda )\mathcal{A}_{-}(\lambda )+d_{+}(\lambda )\mathcal{D}%
_{-}(\lambda )+b_{+}(\lambda )\mathcal{C}_{-}(\lambda )+c_{+}(\lambda )%
\mathcal{B}_{-}(\lambda ),
\end{eqnarray}%
It is a one parameter family of commuting operators satisfying the following
symmetries proprieties:%
\begin{equation}
\mathcal{T}(\lambda )=\mathcal{T}(1/\lambda ),\text{ \ \ }\mathcal{T}%
(-\lambda )=\mathcal{T}(\lambda ).  \label{symmetry-transfer}
\end{equation}%
The boundary quantum determinant:%
\begin{eqnarray}
\text{det}_{q}\ \mathcal{U}_{a,-}(\lambda ) &\equiv &(\left( \lambda /q\right)
^{2}-\left( q/\lambda \right) ^{2})[\mathcal{A}_{-}(\lambda q^{1/2})\mathcal{%
A}_{-}(q^{1/2}/\lambda )+\mathcal{B}_{-}(\lambda q^{1/2})\mathcal{C}%
_{-}(q^{1/2}/\lambda )]  \label{Bound-q-detU_1} \\
&=&(\left( \lambda /q\right) ^{2}-\left( q/\lambda \right) ^{2})[\mathcal{D}%
_{-}(\lambda q^{1/2})\mathcal{D}_{-}(q^{1/2}/\lambda )+\mathcal{C}%
_{-}(\lambda q^{1/2})\mathcal{B}_{-}(q^{1/2}/\lambda )],
\label{Bound-q-detU_2}
\end{eqnarray}%
is a central element in the reflection algebra, i.e.%
\begin{equation}
\lbrack \text{det}_{q}\ \mathcal{U}_{a,-}(\lambda ),\mathcal{U}_{a,-}(\mu )]=0,
\end{equation}%
and its explicit expression reads:%
\begin{equation}
\text{det}_{q}\ \mathcal{U}_{a,-}(\lambda )=(\lambda ^{2}/q^{2}-q^{2}/\lambda
^{2})\mathsf{A}_{-}(\lambda q^{1/2})\mathsf{A}_{-}(q^{1/2}/\lambda ).
\label{B-q-detU_-exp}
\end{equation}
\end{proposition}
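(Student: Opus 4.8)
The plan is to prove \eqref{B-q-detU_-exp} by exploiting the factorized form $\mathcal{U}_{a,-}(\lambda)=M_a(\lambda)K_{a,-}(\lambda)\hat{M}_a(\lambda)$ of \eqref{BMM} together with the multiplicativity of the quantum determinant. Following Sklyanin's construction, the boundary quantum determinant of a product $M K_- \hat{M}$ factorizes into the bulk quantum determinant of $M$, the ordinary determinant of the c-number matrix $K_-$, and the bulk quantum determinant of $\hat{M}$, the spectral arguments being fixed by the shifts $\lambda q^{1/2},\,q^{1/2}/\lambda$ already present in the definition \eqref{Bound-q-detU_1}. The safest way to establish this is at the operator level, contracting \eqref{bYB} with the rank-one antisymmetric projector $P^{-}_{12}$ of $R_{12}$; the centrality asserted in the Proposition then guarantees that the result is a scalar, which may afterwards be evaluated factor by factor.

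The bulk factor is computed site by site. Since $M(\lambda)=L_{a,\mathsf{N}}(\lambda q^{-1/2})\cdots L_{a,1}(\lambda q^{-1/2})$, the bulk quantum determinant is multiplicative, $\text{det}_{q}M(\lambda)=\prod_{n=1}^{\mathsf{N}}\text{det}_{q}L_{n}(\lambda)$, so it suffices to evaluate
\[
\text{det}_{q}L_{n}(\lambda)=(L_{n})_{11}(q^{1/2}\lambda)(L_{n})_{22}(q^{-1/2}\lambda)-(L_{n})_{12}(q^{1/2}\lambda)(L_{n})_{21}(q^{-1/2}\lambda).
\]
This is a short Weyl-algebra computation on \eqref{laxopppp}: the conjugation $u_{n}f(v_{n})u_{n}^{-1}=f(qv_{n})$ turns the off-diagonal product into a c-number, and the $v_{n}^{\pm 2}$ contributions coming from the diagonal and off-diagonal pieces cancel identically thanks to the constraints $\gamma_{n}=a_{n}c_{n}/\alpha_{n}$, $\delta_{n}=b_{n}d_{n}/\beta_{n}$. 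What remains is a scalar that factorizes into one $a$-type and one $d$-type factor; collecting these over the chain (the per-site signs being absorbed into the $(-1)^{\mathsf{N}}$ carried by $d$) yields the clean identity $\text{det}_{q}M(\lambda)=a(q^{1/2}\lambda)\,d(q^{-1/2}\lambda)$. The contribution of $\hat{M}$ then comes for free from \eqref{Inverse-M}: the scalar $(-1)^{\mathsf{N}}$ enters the quadratic quantum determinant as $((-1)^{\mathsf{N}})^{2}=1$, conjugation by $\sigma_{a}^{y}$ leaves the $2\times 2$ determinant structure unchanged, and the transposition merely inverts the argument, so that $\text{det}_{q}\hat{M}(\lambda)=\text{det}_{q}M(1/\lambda)$.

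It remains to compute the $K_-$ contribution and to reassemble. The determinant of $K_{a,-}(\lambda)$ is an elementary $2\times 2$ evaluation from \eqref{Kmatuno}--\eqref{Kmatuno2}; using the parametrization \eqref{Def-alfa-beta} of $(\zeta_-,\kappa_-)$ in terms of $(\alpha_-,\beta_-)$ it organizes into the product of two $g_-$-type factors times the overall prefactor $(\lambda^{2}/q^{2}-q^{2}/\lambda^{2})$. Substituting the shifts found above, namely $\text{det}_{q}M(\lambda q^{-1/2})=a(\lambda)d(\lambda/q)$ and $\text{det}_{q}\hat{M}(\lambda q^{1/2})=\text{det}_{q}M(q^{-1/2}/\lambda)=a(1/\lambda)d(1/(q\lambda))$, and comparing with $\mathsf{A}_{-}(\lambda)=g_{-}(\lambda)a(\lambda q^{-1/2})d(1/(q^{1/2}\lambda))$ of \eqref{premierA} evaluated at $\lambda q^{1/2}$ and at $q^{1/2}/\lambda$, one checks that the product of all four pieces collapses precisely onto $(\lambda^{2}/q^{2}-q^{2}/\lambda^{2})\mathsf{A}_{-}(\lambda q^{1/2})\mathsf{A}_{-}(q^{1/2}/\lambda)$, which is \eqref{B-q-detU_-exp}. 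The equality of the two defining forms \eqref{Bound-q-detU_1}--\eqref{Bound-q-detU_2} follows in the same breath, the factorized expression being manifestly symmetric under the exchange producing them.

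I expect the main obstacle to be the conceptual step of justifying the factorization at the operator level: because the cyclic representations admit no highest- or lowest-weight reference vector, one cannot read off the scalar by acting on a pseudo-vacuum and must argue purely algebraically through the antisymmetrizer and \eqref{bYB}, checking that the fusion of $M$ and $\hat{M}$ across $K_-$ respects the reflection structure. The remaining difficulty is pure bookkeeping: tracking the half-integer powers $q^{\pm 1/2}$ and the normalizations ($a_{0}$, the denominators in $g_-$, and the $(-1)^{\mathsf{N}}$) through all four factors so that they recombine into exactly the stated prefactor and the two shifted values of $\mathsf{A}_{-}$. Once the factorization and the local quantum determinant are secured, the verification is routine, if somewhat lengthy.
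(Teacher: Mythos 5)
Your proposal is correct and takes essentially the same route as the paper, which gives no proof of this proposition beyond the citation of Sklyanin's construction \cite{OpenCySkly88}: centrality via the degeneration of the reflection equation \eqref{bYB} at the rank-one (antisymmetrizer) point of the $R$-matrix, factorization of $\text{det}_{q}\,\mathcal{U}_{a,-}$ into the two bulk quantum determinants and $\det K_{a,-}$, and the site-by-site Weyl-algebra evaluation of $\text{det}_{q}L_{n}$ (with the $v_{n}^{\pm 2}$ cancellations enforced by $\gamma_{n}=a_{n}c_{n}/\alpha_{n}$, $\delta_{n}=b_{n}d_{n}/\beta_{n}$) are exactly the steps that citation stands for. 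Your reassembly is also consistent: the product $a(\lambda)a(1/\lambda)d(\lambda/q)d(1/(q\lambda))$ of the bulk factors times the two $g_{-}$ factors produced by $\det K_{a,-}$ is precisely $\mathsf{A}_{-}(\lambda q^{1/2})\mathsf{A}_{-}(q^{1/2}/\lambda)$, as required by \eqref{B-q-detU_-exp}.
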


\section{Gauged cyclic reflection algebra and SoV representations}

In our previous paper we solved the spectral problem associated to the
transfer matrix of the cyclic representations under the requirement that one
of the boundary matrices is triangular, i.e. $b_{+}(\lambda )\equiv 0$. In
this paper we want to solve the same type of spectral problem but for the
most general boundary conditions. In order to do so we can follow the same
approach used in the case of the transfer matrix associated to the spin-1/2
reflection algebra \cite{OpenCyFalKN14}. That is, we introduce the following linear
combinations of the original reflection algebra generators:%
\begin{align}
\mathcal{A}_{-}(\lambda |\beta )& =\left[ -\left( \lambda q^{3/2}/\beta
\right) \mathcal{A}_{-}(\lambda )-\alpha q\mathcal{B}_{-}(\lambda )+\mathcal{%
C}_{-}(\lambda )/(\alpha q)+\beta \mathcal{D}_{-}(\lambda )/\left( \lambda
q^{3/2}\right) \right] /(\beta /q^{2}-q^{2}/\beta )  \label{A-gauge-def} \\
\mathcal{B}_{-}(\lambda |\beta )& =\left[ -\left( \lambda \beta
/q^{1/2}\right) \mathcal{A}_{-}(\lambda )-\alpha q\mathcal{B}_{-}(\lambda
)+\left( \beta ^{2}/\alpha q\right) \mathcal{C}_{-}(\lambda )+\left( \beta
q^{1/2}/\lambda \right) \mathcal{D}_{-}(\lambda )\right] /(\beta -1/\beta )
\\
\mathcal{C}_{-}(\lambda |\beta )& =\left[ \left( \lambda q^{3/2}/\beta
\right) \mathcal{A}_{-}(\lambda )+\alpha q\mathcal{B}_{-}(\lambda )-\left(
q^{3}/\alpha \beta ^{2}\right) \mathcal{C}_{-}(\lambda )-\left(
q^{5/2}/\lambda \beta \right) \mathcal{D}_{-}(\lambda )\right] /(\beta
/q^{2}-q^{2}/\beta ) \\
\mathcal{D}_{-}(\lambda |\beta )& =\left[ \left( \lambda \beta
/q^{1/2}\right) \mathcal{A}_{-}(\lambda )+\alpha q\mathcal{B}_{-}(\lambda )-%
\mathcal{C}_{-}(\lambda )/\alpha q-\left( q^{1/2}/\lambda \beta \right) 
\mathcal{D}_{-}(\lambda )\right] /(\beta -1/\beta ),
\end{align}%
where $\beta \neq \pm 1,\pm q^{2}$ and $\alpha $ are arbitrary complex
values; to simplify the notation, we won't explicit the dependance in $%
\alpha $. As it is discussed in the appendices A and B, these operators families
still satisfy a set of commutation relations which are gauged versions of 
 the reflection algebra commutation relations. In the
following we will refer to these families as the gauge transformed
reflection algebra generators. In the same appendices, we prove the following
theorem, characterizing the representation of these generators:

\begin{theorem}
\label{B-Pseudo-diago}For almost all the values of the boundary-bulk-gauge
parameters there exit a left $\langle \Omega _{\beta }|$ and a right $%
|\Omega _{\beta }\rangle $ pseudo-eigenstate of $\mathcal{B}_{-}(\lambda
|\beta )$:%
\begin{equation}
\langle \Omega _{\beta }|B_{-}(\lambda |\beta )=\text{\textsc{b}}_{\text{%
\textbf{0}}}(\lambda |\beta )\langle \Omega _{\beta /q^{2}}|,\text{ \ }%
B_{-}(\lambda |\beta )|\Omega _{\beta }\rangle =|\Omega _{q^{2}\beta
}\rangle \text{\textsc{b}}_{\text{\textbf{0}}}(\lambda |\beta ),
\label{L/R-Pseudo-eigenstate}
\end{equation}%
where:%
\begin{equation}
\text{\textsc{b}}_{\text{\textbf{h}}}(\lambda |\beta )=\text{\textsc{b}}%
_{-}(\beta )(\frac{\lambda ^{2}}{q}-\frac{q}{\lambda ^{2}})\prod_{a=1}^{%
\mathsf{N}}(\frac{\lambda }{\text{\textsc{b}}_{-,a}(\beta )q^{h_{a}}}-\frac{%
\text{\textsc{b}}_{-,a}(\beta )q^{h_{a}}}{\lambda })(\lambda q^{h_{a}}\text{%
\textsc{b}}_{-,a}(\beta )-\frac{1}{\lambda q^{h_{a}}\text{\textsc{b}}%
_{-,a}(\beta )}),  \label{pseudo-eigenvalue}
\end{equation}%
for \textbf{h}$=(h_{1},...,h_{\mathsf{N}})\in \left\{ 0,...,p-1\right\} ^{%
\mathsf{N}}$ with%
\begin{eqnarray}
\text{\textsc{b}}_{-,n}^{2}(\beta ) &\neq &q^{1-2h}\alpha _{-}^{2\epsilon },%
\text{\ \textsc{b}}_{-,n}^{2}(\beta )\neq -q^{1-2h}\beta _{-}^{2\epsilon },%
\text{\ }  \label{SOV-cond1} \\
\text{\textsc{b}}_{-,n}^{2}(\beta ) &\neq &q^{1-2h}\mu _{m,+}^{2\epsilon },%
\text{\ \textsc{b}}_{-,n}^{2}(\beta )\neq q^{1-2h}\mu _{m,-}^{2\epsilon },
\label{SOV-cond2}
\end{eqnarray}%
for any $\epsilon =\pm 1,$ $n,m\in \{1,...,\mathsf{N}\}$ and $h\in \left\{
1,...,p-1\right\} $. Then, the following set of states:%
\begin{eqnarray}
\langle \beta ,h_{1},...,h_{\mathsf{N}}| &=&\frac{1}{\text{\textsc{n}}%
_{\beta }}\langle \Omega _{\beta }|\prod_{n=1}^{\mathsf{N}%
}\prod_{k_{n}=1}^{h_{n}}\frac{\mathcal{A}_{-}(q^{1-k_{n}}/\text{\textsc{b}}%
_{-,n}(\beta )|\beta q^{2})}{\mathsf{A}_{-}(q^{1-k_{n}}/\text{\textsc{b}}%
_{-,n}(\beta ))},  \label{Def-left-P-B-basis} \\
|\beta ,h_{1},...,h_{\mathsf{N}}\rangle &\equiv &\frac{1}{\text{\textsc{n}}%
_{\beta /q^{2}}}\prod_{n=1}^{\mathsf{N}}\prod_{k_{n}=1}^{h_{n}}\frac{%
\mathcal{D}_{-}(q^{1-k_{n}}/\text{\textsc{b}}_{-,n}(\beta )|\beta )}{\mathsf{%
D}_{-}(q^{1-k_{n}}/\text{\textsc{b}}_{-,n}(\beta ))}|\Omega _{\beta }\rangle
,  \label{Def-right-P-B-basis}
\end{eqnarray}%
form a left and a right basis of the representation space defining the
following decomposition of the identity:%
\begin{equation}
\mathbb{I}\equiv \sum_{h_{1},...,h_{\mathsf{N}}=0}^{p-1}\prod_{1\leq b<a\leq 
\mathsf{N}}(X_{a}^{(h_{a})}-X_{a}^{(h_{a})})|\beta q^{2},h_{1},...,h_{%
\mathsf{N}}\rangle \langle \beta ,h_{1},...,h_{\mathsf{N}}|,
\end{equation}
with%
\begin{equation}
\langle \beta ,h_{1},...,h_{\mathsf{N}}|\beta q^{2},k_{1},...,k_{\mathsf{N}%
}\rangle =\prod_{1\leq a\leq \mathsf{N}}\delta _{h_{a},k_{a}}\prod_{1\leq
b<a\leq \mathsf{N}}\frac{1}{X_{a}^{(h_{a})}-X_{b}^{(h_{b})}},
\end{equation}%
where%
\begin{equation}
X_{b}^{(h_{b})}=(\text{\textsc{b}}_{-,b}(\beta )q^{h_{b}})^{2}+1/(\text{%
\textsc{b}}_{-,b}(\beta )q^{h_{b}})^{2}
\label{313}
\end{equation}%
for the non-zero normalization fixed by%
\begin{equation}
\text{\textsc{n}}_{\beta }=\left( \prod_{1\leq b<a\leq \mathsf{N}}\left(
X_{a}^{\left( p-1\right) }-X_{b}^{\left( p-1\right) }\right) \langle \Omega
_{\beta }|\Omega _{\beta q^{2}}\rangle \right) ^{1/2}.  \label{Normaliz-def}
\end{equation}%
In this basis the operator family $\mathcal{B}_{-}(\lambda |\beta )$ is
pseudo-diagonalized:%
\begin{eqnarray}
\langle \beta ,h_{1},...,h_{\mathsf{N}}|\mathcal{B}_{-}(\lambda |\beta ) &=&%
\text{\textsc{b}}_{\text{\textbf{h}}}(\lambda |\beta )\langle \beta
/q^{2},h_{1},...,h_{\mathsf{N}}|, \\
\mathcal{B}_{-}(\lambda |\beta )|\beta ,h_{1},...,h_{\mathsf{N}}\rangle
&=&|q^{2}\beta ,h_{1},...,h_{\mathsf{N}}\rangle \text{\textsc{b}}_{\text{%
\textbf{h}}}(\lambda |\beta ),
\end{eqnarray}%
with simple pseudo-spectrum%
\begin{equation}
\text{\textsc{b}}_{-,n}^{2p}(\beta )\neq \pm 1,\text{ \textsc{b}}%
_{-,m}^{p}(\beta )\neq \text{\textsc{b}}_{-,n}^{p}(\beta ),\text{ \ }\forall
n\neq m\in \{1,...,\mathsf{N}\},
\end{equation}%
and the operator families $\mathcal{A}_{-}(\lambda |\beta )$ and $\mathcal{D}%
_{-}(\lambda |\beta )$ in the zeros $\zeta_{a}^{(h_{a})}$ of $\mathcal{B}_{-}(\lambda |\beta )$
act as simple shift operators:%
\begin{align}
\langle \beta ,h_{1},...,h_{\mathsf{N}}|\mathcal{A}_{-}(\zeta
_{a}^{(h_{a})}|\beta q^{2})& =\mathsf{A}_{-}(\zeta _{a}^{(h_{a})})\langle
\beta ,h_{1},...,h_{\mathsf{N}}|T_{a}^{-\varphi _{a}},  \label{Left-A-action}
\\
\mathcal{D}_{-}(\zeta _{a}^{(h_{a})}|\beta )|\beta ,h_{1},...,h_{\mathsf{N}%
}\rangle & =T_{a}^{-\varphi _{a}}|\beta ,h_{1},...,h_{\mathsf{N}}\rangle 
\mathsf{D}_{-}(\zeta _{a}^{(h_{a})}),  \label{Right-D-action}
\end{align}%
where%
\begin{eqnarray}
\langle \beta ,h_{1},...,h_{a},...,h_{\mathsf{N}}|T_{a}^{\pm } &=&\langle
\beta ,h_{1},...,h_{a}\pm 1,...,h_{\mathsf{N}}|, \\
T_{a}^{\pm }|\beta ,h_{1},...,h_{a},...,h_{\mathsf{N}}\rangle &=&|\beta
,h_{1},...,h_{a}\pm 1,...,h_{\mathsf{N}}\rangle ,
\end{eqnarray}%
\smallskip and:%
\begin{eqnarray}
\zeta _{n}^{(h)} &=&\left( \text{\textsc{b}}_{-,n}(\beta )q^{h}\right)
^{\varphi _{n}}\text{\ \ for \ }h\in \{0,...,p-1\}\text{\ \ and\ \ }\forall
n\in \{1,...,2\mathsf{N}\}, \\
\varphi _{a} &=&1-2\theta (a-\mathsf{N})\text{ \ \ with \ }\theta (x)=\{0%
\text{ for }x\leq 0,\text{ }1\text{ for }x>0\}.
\end{eqnarray}
\end{theorem}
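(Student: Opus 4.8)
The plan is to transport the strategy of \cite{MNP2017} through the gauge transformation, with the gauged operator $\mathcal{B}_-(\lambda|\beta)$ now playing the role that the distinguished diagonalizable boundary operator played in the triangular case. Everything rests on the gauged reflection-algebra relations announced for Appendices A and B, of which I would isolate two families: the gauge-shifted commutativity of the $\mathcal{B}_-(\lambda|\beta)$ operators (the analogue of $[\mathcal{B}(\lambda),\mathcal{B}(\mu)]=0$, here dressed by the displacement $\beta\to\beta/q^2$ already visible in \eqref{L/R-Pseudo-eigenstate}), which grants a common system of pseudo-eigenstates, and the intertwining relations by which $\mathcal{A}_-(\cdot|\beta q^2)$ and $\mathcal{D}_-(\cdot|\beta)$ conjugate $\mathcal{B}_-(\lambda|\beta)$ into $\mathcal{B}_-(\lambda|\beta q^{\pm2})$ while displacing its pseudo-spectrum at a single site. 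These are the gauged substitutes for the usual $\mathcal{A}\mathcal{B}$ and $\mathcal{D}\mathcal{B}$ exchange relations that make $\mathcal{A}_-$ and $\mathcal{D}_-$ behave as shift operators on the separated variables.

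First I would produce the reference states $\langle\Omega_\beta|$ and $|\Omega_\beta\rangle$. Since $\mathcal{B}_-(\lambda|\beta)$ is, by its definition \eqref{A-gauge-def} as a linear combination of the boundary monodromy entries, a Laurent polynomial in $\lambda^2$ of the degree fixed by \eqref{pseudo-eigenvalue}, imposing the $\mathbf{h}=\mathbf{0}$ pseudo-eigenvalue equation reduces to a finite linear system on $\mathcal{H}$; its solvability for generic boundary-bulk-gauge data is exactly the ``almost all values'' hypothesis, and solving it reads off the zeros $\textsc{b}_{-,n}(\beta)$ of the pseudo-eigenvalue. I would then verify \eqref{pseudo-eigenvalue} for $\mathbf{h}=\mathbf{0}$ by direct evaluation of $\mathcal{B}_-(\lambda|\beta)$ on $|\Omega_\beta\rangle$, using the cyclic structure of $M_a$ and $\hat M_a$ together with the functions $\mathsf{A}_-,\mathsf{D}_-$ of \eqref{premierA}.

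The bases \eqref{Def-left-P-B-basis}--\eqref{Def-right-P-B-basis} I would build by induction on the multi-index $\mathbf{h}$: each normalized factor $\mathcal{A}_-(q^{1-k_n}/\textsc{b}_{-,n}(\beta)|\beta q^2)/\mathsf{A}_-(\cdots)$ raises $h_n$ by one while shifting $\beta\to\beta q^2$ (and symmetrically with $\mathcal{D}_-(\cdot|\beta)$ for the right states), and the intertwining relations force the resulting state to be a pseudo-eigenstate of $\mathcal{B}_-(\lambda|\beta)$ with the eigenvalue \eqref{pseudo-eigenvalue}, whose zeros are precisely the $\zeta_n^{(h_n)}$. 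The genericity conditions \eqref{SOV-cond1}--\eqref{SOV-cond2}, together with the simplicity hypotheses $\textsc{b}_{-,n}^{2p}(\beta)\neq\pm1$ and $\textsc{b}_{-,m}^p(\beta)\neq\textsc{b}_{-,n}^p(\beta)$, guarantee that the $p^{\mathsf{N}}$ zero-patterns, hence the $p^{\mathsf{N}}$ pseudo-eigenvalues, are pairwise distinct; linear independence then follows, and a dimension count on $\mathcal{H}=\otimes_{n=1}^{\mathsf{N}}\mathcal{R}_n$ promotes each family to a basis. The shift actions \eqref{Left-A-action}--\eqref{Right-D-action} drop out by re-reading the same intertwining relations at the zeros $\zeta_a^{(h_a)}$ and eliminating the companion term with the gauged quantum-determinant identity descending from \eqref{B-q-detU_-exp}.

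The biorthogonality and the identity decomposition I would obtain by pairing a left state at gauge $\beta$ with a right state at gauge $\beta q^2$, commuting the raising factors through with the shift relations until only $\langle\Omega_\beta|\Omega_{\beta q^2}\rangle$ survives; the Vandermonde factors in the $X_a^{(h_a)}$ of \eqref{313} and the normalization \eqref{Normaliz-def} then assemble the stated product of simple poles. The hard part, I expect, is exactly this gauge bookkeeping: because $\mathcal{B}_-(\lambda|\beta)$ is never self-commuting at a fixed $\beta$, the construction lives on the lattice $\{\beta q^{2k}\}$, which closes up only by virtue of the cyclicity $q^{2p}=1$ forced by \eqref{def-p}; ensuring that the $\beta\to\beta q^{\pm2}$ displacements produced at every step are mutually consistent around this lattice, so that the pairing is genuinely diagonal in $\mathbf{h}$ and yields the simple-pole structure rather than spurious cross terms, is the delicate point that distinguishes this gauged SoV from an ordinary fixed-$\beta$ diagonalization.
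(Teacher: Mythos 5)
Your overall skeleton (reference states, tower built by the $\mathcal{A}_-/\mathcal{D}_-$ action, Vandermonde pairing at shifted gauge) matches the paper's, but the step on which everything rests --- the existence of $\langle\Omega_\beta|$ and $|\Omega_\beta\rangle$ for almost all parameter values --- is not established by your argument, and the argument as stated would fail. Requiring $\langle\Omega_\beta|\mathcal{B}_-(\lambda|\beta)=\text{\textsc{b}}_{\mathbf{0}}(\lambda|\beta)\langle\Omega_{\beta/q^2}|$ for all $\lambda$ is not a linear system whose ``generic solvability'' one can invoke: the unknowns are the state \emph{and} the quantities $\text{\textsc{b}}_-(\beta),\text{\textsc{b}}_{-,n}(\beta)$ entering the factorized ansatz \eqref{pseudo-eigenvalue}, an overdetermined homogeneous linear system is generically \emph{unsolvable}, and the paired $\lambda\to 1/\lambda$ zero structure of the pseudo-eigenvalue is itself something to be proven. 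Appendix A of the paper makes the danger concrete: even for the simpler bulk family $A(\lambda|\alpha,\beta,\gamma)$ an annihilated state exists if and only if one imposes the $\mathsf{N}+1$ constraints \eqref{Conditions-nilpotent-g}; generic parameters give no such state at all. The paper's actual proof has a two-stage structure your proposal is missing: (i) for special representations satisfying \eqref{Condi-bulk-quasi-L} together with \eqref{Special-B-simple}--\eqref{Special-B-simple2}, the reference states are built explicitly site by site, and on them the decomposition \eqref{B_gauge-decompo} collapses to the single term $f(\alpha,\beta/q,\gamma q)\,\bar{K}_-(\lambda|\gamma)_{21}\,B(\lambda|\alpha,\beta)B(1/\lambda|\alpha,\beta/q)$, which yields the explicit zeros $\text{\textsc{b}}_{-,n}(\beta)=\mu_{n,\epsilon_n}q^{\pm 1/2}$; (ii) for general representations one writes $\mathcal{B}_-(\lambda|\beta)=(\lambda^2/q-q/\lambda^2)\sum_a\Lambda^a\,\bar{\mathcal{B}}_{-,a}(\beta)T_\beta^{-2}$, uses \eqref{B-comm} to see that the coefficients commute, factorizes with commuting operator roots $\mathcal{B}_{-,a}(\beta)$ that are invertible at the special point, extends invertibility and the conditions \eqref{SOV-cond1}--\eqref{SOV-cond2} to almost all parameters by continuity of polynomial dependence, and invokes the existence of a joint eigenvector of a commuting family. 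Your fallback of ``direct evaluation using the cyclic structure'' cannot replace this either: the paper explicitly notes that for general representations the pseudo-spectrum is accessible only by recursion on the number of sites.

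Two further steps are thinner than they can afford to be. Linear independence does \emph{not} follow from pairwise distinct pseudo-eigenvalues by the usual eigenspace argument, because $\mathcal{B}_-(\lambda|\beta)$ is not an endomorphism at fixed $\beta$ --- it shifts $\beta\to\beta/q^2$ --- so there is no single operator whose eigenspaces separate the states; the paper instead annihilates a putative vanishing combination with the ordered product $\mathcal{B}_{-,\mathbf{h}}(\beta)$ of $\mathcal{B}_-$ operators evaluated at the zeros $\zeta_n^{(k_n)}$ with consistently shifted gauge arguments, which isolates each coefficient $c_{\mathbf{h}}$ one at a time. Likewise, the diagonal Gram entries are not obtained by ``commuting the raising factors through'': they follow from computing the matrix element $\theta_{a,h_a}(\beta)=\langle\beta,h_1,\dots,h_{\mathsf{N}}|\mathcal{A}_-(\zeta_a^{(h_a+1)}|\beta q^2)|\beta q^2,h_1,\dots,h_a+1,\dots,h_{\mathsf{N}}\rangle$ in two independent ways --- once with the left shift action \eqref{Left-A-action}, once via the parity decomposition \eqref{A-Sym} together with the right action --- which produces the ratio recursion \eqref{T2F1} and hence the Vandermonde normalization, while the nonvanishing of $\langle\Omega_\beta|\Omega_{\beta q^2}\rangle$ is deduced from the invertibility of the two change-of-basis matrices rather than assumed. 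You correctly flagged the $\{\beta q^{2k}\}$ lattice bookkeeping as the delicate point, but the proposal does not supply the mechanisms that actually resolve it.
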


Let us comment that the existence of the states $\langle \Omega _{\beta }|$
and $|\Omega _{\beta }\rangle $ can be proven by a general argument which we
present in Appendix B. For general representations, the pseudo-spectrum of $%
\mathcal{B}_{-}(\lambda |\beta )$, i.e. the values of $\text{\textsc{b}}%
_{-,n}(\beta )$ and $\text{\textsc{b}}_{-}(\beta )$, must be computed by
recursion on the number of sites. However, in Appendix B we present the
explicit expression for $\text{\textsc{b}}_{-,n}(\beta )$ and $\text{\textsc{%
b}}_{-}(\beta )$ in some particular representations.

The interest in these gauge transformed boundary generators is due to the
possibility to use them to rewrite the transfer matrix associated to the
most general cyclic 6-vertex reflection algebra representations in a simple
form, as presented in the following proposition:

\begin{proposition}
The quantum determinant can be written in terms of the gauge transformed
boundary generators as:%
\begin{align}
\frac{\text{det}_{q}\ \mathcal{U}_{-}(\lambda )}{(\lambda
^{2}/q^{2}-q^{2}/\lambda ^{2})}& =\mathcal{A}_{-}(q^{1/2}\lambda ^{\epsilon
}|\beta q^{2})\mathcal{A}_{-}(q^{1/2}/\lambda ^{\epsilon }|\beta q^{2})+%
\mathcal{B}_{-}(q^{1/2}\lambda ^{\epsilon }|\beta )\mathcal{C}%
_{-}(q^{1/2}/\lambda ^{\epsilon }|\beta q^{2}) \\
& =\mathcal{D}_{-}(q^{1/2}\lambda ^{\epsilon }|\beta )\mathcal{D}%
_{-}(q^{1/2}/\lambda ^{\epsilon }|\beta )+\mathcal{C}_{-}(q^{1/2}\lambda
^{\epsilon }|\beta q^{2})\mathcal{B}_{-}(q^{1/2}/\lambda ^{\epsilon }|\beta
),
\end{align}%
$\epsilon =\pm 1$. Moreover, if we set the gauge parameter $\alpha $ to:%
\begin{equation}
\alpha =-\beta \beta _{+}/q^{2}\alpha _{+}e^{\tau _{+}},  \label{guage-alpha}
\end{equation}%
($\alpha_+$ and $\beta_+$ are defined in \eqref{Def-alfa-beta}, they are linked to the boundary parameters $\zeta_+,\kappa_+$ and $\tau_+$, see \eqref{Kmatuno}-\eqref{Kmatuno2}) then the transfer matrix can be written as%
\begin{align}
\mathcal{T}(\lambda )& = \mathsf{a}_{+}(\lambda )\mathcal{A}_{-}(\lambda
|\beta )+\mathsf{a}_{+}(1/\lambda )\mathcal{A}_{-}(1/\lambda |\beta )+q%
\mathsf{c}_{+}(\lambda |\beta )\mathcal{B}_{-}(\lambda |\beta /q^{2})
\label{T-triangu-L} \\
\mathcal{T}(\lambda )& = \mathsf{d}_{+}(\lambda )\mathcal{D}_{-}(\lambda
|\beta )+\mathsf{d}_{+}(1/\lambda )\mathcal{D}_{-}(1/\lambda |\beta )+%
\mathsf{c}_{+}(\lambda |\beta )\mathcal{B}_{-}(\lambda |\beta )/q,
\label{T-triangu-R}
\end{align}%
where we have defined: 
\begin{align}
\mathsf{a}_{+}(\lambda )& =-\frac{\lambda ^{2}q-1/q\lambda ^{2}}{\lambda
^{2}-1/\lambda ^{2}}g_{+}(\lambda ),\text{ \ }\mathsf{d}_{+}(\lambda )=\frac{%
\lambda ^{2}q-1/q\lambda ^{2}}{\lambda ^{2}-1/\lambda ^{2}}g_{+}(q/\lambda ), \label{deuxiemeA}
\\
\mathsf{c}_{+}(\lambda |\beta )& =\frac{-q(\lambda ^{2}q-1/q\lambda
^{2})\left( \beta \beta _{+}/q\alpha _{+}-q\alpha _{+}/\beta \beta
_{+}\right) }{\beta \left( \alpha _{+}-1/\alpha _{+}\right) \left( \beta
_{+}+1/\beta _{+}\right) }.
\end{align}
\end{proposition}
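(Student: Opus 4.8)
The plan is to realise the four gauge-transformed families as the entries of the boundary monodromy matrix $\mathcal{U}_{a,-}(\lambda)$ written in a $\lambda$- and $\beta$-dependent basis of the auxiliary space $V_{a}$. Reading off the scalar coefficients in \eqref{A-gauge-def} and the three subsequent definitions, one identifies left and right $2\times 2$ gauge matrices $G_{L}(\lambda|\beta)$ and $G_{R}(\lambda|\beta)$ such that the matrix with entries $\mathcal{A}_{-}(\lambda|\beta),\dots,\mathcal{D}_{-}(\lambda|\beta)$ equals $G_{L}(\lambda|\beta)\,\mathcal{U}_{a,-}(\lambda)\,G_{R}(\lambda|\beta)$, up to the scalar normalisations $(\beta/q^{2}-q^{2}/\beta)$ and $(\beta-1/\beta)$. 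The dynamical shifts $\beta\to\beta q^{\pm 2}$ occurring in the statement are exactly the shifts carried by these gauge vectors, in accordance with the pseudo-eigenvector structure of Theorem~\ref{B-Pseudo-diago}. Everything below rests on this identification together with the gauged commutation relations established in Appendices~A and B.

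For the quantum determinant I would use that $\text{det}_q\,\mathcal{U}_{a,-}(\lambda)$ is central, hence a scalar, and is therefore the (quantum) determinant of $\mathcal{U}_{a,-}$ in the auxiliary space, which is insensitive to the change of basis $\mathcal{U}_{a,-}\mapsto G_{L}\mathcal{U}_{a,-}G_{R}$ up to $\det G_{L}\det G_{R}$. Concretely, the proposed combination $\mathcal{A}_{-}(q^{1/2}\lambda^{\epsilon}|\beta q^{2})\mathcal{A}_{-}(q^{1/2}/\lambda^{\epsilon}|\beta q^{2})+\mathcal{B}_{-}(q^{1/2}\lambda^{\epsilon}|\beta)\mathcal{C}_{-}(q^{1/2}/\lambda^{\epsilon}|\beta q^{2})$ is the $(1,1)$ matrix element of the product $\tilde{\mathcal{U}}(q^{1/2}\lambda^{\epsilon})\tilde{\mathcal{U}}(q^{1/2}/\lambda^{\epsilon})$ of gauge-transformed monodromies, the two shifted gauge parameters being forced to match by the dynamical shift. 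Substituting the definitions and reordering the products with the gauged commutation relations, the gauge-dependent prefactors cancel and one recovers the $(1,1)$-form \eqref{Bound-q-detU_1} of the original quantum determinant, whose explicit value is \eqref{B-q-detU_-exp}; the second equality follows identically from the $(2,2)$-form \eqref{Bound-q-detU_2}.

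For the transfer matrix I would start from $\mathcal{T}(\lambda)=\tr_{a}\{K_{a,+}(\lambda)\mathcal{U}_{a,-}(\lambda)\}$, insert $G_{L}^{-1}G_{L}$ and $G_{R}G_{R}^{-1}$, and use cyclicity of the auxiliary trace to move the gauge matrices onto $K_{a,+}$, producing a gauge-transformed boundary matrix $\tilde{K}_{a,+}(\lambda)=G_{R}^{-1}(\lambda|\cdot)K_{a,+}(\lambda)G_{L}^{-1}(\lambda|\cdot)$. The whole point of the construction is that the tuning $\alpha=-\beta\beta_{+}/(q^{2}\alpha_{+}e^{\tau_{+}})$ of \eqref{guage-alpha} makes the obstructing off-diagonal entry of $\tilde{K}_{a,+}$ vanish, i.e. triangularises $K_{+}$ in the rotated frame, so that only the diagonal and a single $\mathcal{B}_{-}$ contribution survive the trace. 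Using the intrinsic $\lambda\to 1/\lambda$ relation between the gauged $\mathcal{A}_{-}$ and $\mathcal{D}_{-}$ families (the gauged analogue of \eqref{symmetry-transfer}) the diagonal part splits into $\mathsf{a}_{+}(\lambda)\mathcal{A}_{-}(\lambda|\beta)$ and $\mathsf{a}_{+}(1/\lambda)\mathcal{A}_{-}(1/\lambda|\beta)$, giving \eqref{T-triangu-L}; matching the surviving off-diagonal factor with the entries of $K_{a,+}$ from \eqref{Kmatuno}--\eqref{Kmatuno2} and the prefactor $g_{+}$ fixes $\mathsf{a}_{+}$ and $\mathsf{c}_{+}$ as in \eqref{deuxiemeA}. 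The dual representation \eqref{T-triangu-R} is obtained in the same way from the diagonal-$\mathcal{D}_{-}$ decomposition of the trace.

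The main obstacle is the dynamical bookkeeping. Because the gauge matrices shift $\beta$, the trace-cyclicity and the sandwich rewriting must be carried out so that every factor lands in the correct gauge sector ($\beta$, $\beta q^{2}$ or $\beta/q^{2}$); in particular one must check that \eqref{guage-alpha} annihilates the obstructing entry exactly rather than merely simplifying it, and that the two diagonal entries recombine through the gauged $\lambda\to 1/\lambda$ symmetry precisely into $\mathsf{a}_{+}(\lambda)$ and $\mathsf{a}_{+}(1/\lambda)$. The ensuing matching of the scalar coefficients $\mathsf{a}_{+},\mathsf{d}_{+},\mathsf{c}_{+}$ demands careful tracking of the normalisations $(\beta/q^{2}-q^{2}/\beta)$, $(\beta-1/\beta)$ and of the factor $g_{+}$, but is routine once the triangularisation is established.
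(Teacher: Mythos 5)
Your starting point is sound: the gauged families are indeed the entries of a rotated boundary monodromy matrix, namely \eqref{U-gauge} of Appendix~B, and your candidate combination for the quantum determinant is exactly the $(1,1)$ entry of $\mathcal{U}_{-}(q^{1/2}\lambda|\alpha,\beta)\,\mathcal{U}_{-}(q^{1/2}/\lambda|\alpha,\beta)$. (For context, the paper gives no computation here at all: its proof is the observation that these identities are representation independent and coincide with the spin-$1/2$ computation of \cite{OpenCyFalKN14}, so you are reconstructing that cited argument.) The genuine gap is in your mechanism for the quantum determinant. The boundary quantum determinant is not an auxiliary-space determinant, and the claimed insensitivity ``up to $\det G_{L}\det G_{R}$'' is not available, because the gauge matrices depend on the spectral parameter while \eqref{Bound-q-detU_1} couples the two distinct points $q^{1/2}\lambda$ and $q^{1/2}/\lambda$. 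Concretely, from \eqref{U-gauge} one gets
\begin{equation*}
\mathcal{U}_{-}(q^{1/2}\lambda|\alpha,\beta)\,\mathcal{U}_{-}(q^{1/2}/\lambda|\alpha,\beta)
=G^{-1}(q\lambda|\alpha,\beta)\,\mathcal{U}_{-}(q^{1/2}\lambda)
\begin{pmatrix} q & 0\\ 0 & 1\end{pmatrix}
\mathcal{U}_{-}(q^{1/2}/\lambda)\,G(\lambda|\alpha,\beta),
\end{equation*}
since a direct computation gives $G(1/\lambda|\alpha,\beta)\,G^{-1}(q/\lambda|\alpha,\beta)=\mathrm{diag}(q,1)$. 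Nothing telescopes: the inner factor is not the identity, and the outer pair sits at the different arguments $q\lambda$ and $\lambda$, so even a central $\mathcal{U}_{-}\mathcal{U}_{-}$ would not have its $(1,1)$ entry preserved (one finds $[G^{-1}(q\lambda)G(\lambda)]_{11}=(\beta-q/\beta)/(\beta-1/\beta)\neq 1$). Hence the stated equality with \eqref{Bound-q-detU_1} cannot follow from any change-of-basis argument; it genuinely requires the reflection-algebra commutation relations applied to the explicit linear combinations \eqref{A-gauge-def}. Your fallback sentence (``reordering the products with the gauged commutation relations, the prefactors cancel'') \emph{is} the proof that must be performed, not a justification of it.

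A gap of the same nature affects the transfer-matrix part. Cyclicity does give $\mathcal{T}(\lambda)=\mathrm{tr}_{a}\{\tilde{K}_{+}(\lambda)\,\mathcal{U}_{-}(\lambda|\alpha,\beta)\}$ with $\tilde{K}_{+}(\lambda)=\frac{\lambda}{q^{1/2}}G^{-1}(q^{1/2}/\lambda|\alpha,\beta)\,K_{a,+}(\lambda)\,G(\lambda q^{1/2}|\alpha,\beta)$, that is
\begin{equation*}
\mathcal{T}(\lambda)=\tilde{K}_{+,11}\,\mathcal{A}_{-}(\lambda|\beta q^{2})
+\tilde{K}_{+,12}\,\mathcal{C}_{-}(\lambda|\beta q^{2})
+\tilde{K}_{+,21}\,\mathcal{B}_{-}(\lambda|\beta)
+\tilde{K}_{+,22}\,\mathcal{D}_{-}(\lambda|\beta),
\end{equation*}
and it is plausible that \eqref{guage-alpha} kills $\tilde{K}_{+,12}$ identically in $\lambda$. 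But even granting that, every surviving operator sits at argument $\lambda$ in the sectors $\beta q^{2}$ and $\beta$, whereas \eqref{T-triangu-L}--\eqref{T-triangu-R} involve both arguments $\lambda$ and $1/\lambda$ and the sectors $\beta$ and $\beta/q^{2}$. Bridging the two is not ``routine bookkeeping'': it needs the parity identities \eqref{A-Sym}--\eqref{B-C-Sym} and the $\beta$-symmetry \eqref{B-to-C-identity}, and, because those relations never change the gauge sector, a final re-expansion of all terms over the four original generators $\mathcal{A}_{-}(\lambda),\mathcal{B}_{-}(\lambda),\mathcal{C}_{-}(\lambda),\mathcal{D}_{-}(\lambda)$ with matching of the scalar coefficients against $\mathsf{a}_{+},\mathsf{d}_{+},\mathsf{c}_{+}$ --- which is precisely the representation-independent computation of \cite{OpenCyFalKN14} that the paper's one-line proof appeals to. Your skeleton (triangularisation of the rotated $K_{+}$ plus the gauged $\lambda\to1/\lambda$ relations) is the right one, but as written it does not yet derive \eqref{T-triangu-L}--\eqref{T-triangu-R}.
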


\begin{proof}
The proof of this statement coincides with the one given in \cite{OpenCyFalKN14} for
the XXZ spin 1/2 quantum chain with general integrable boundaries; in fact,
this statement is representation independent. The only difference is that
here we have used a Laurent polynomial form while in the XXZ case it was a
trigonometric form.
\end{proof}

The simple representations $(\ref{T-triangu-L})$-$(\ref{T-triangu-R})$ of
the transfer matrix in terms of the gauge transformed boundary generators
and the known actions $(\ref{Left-A-action})$-$(\ref{Right-D-action})$ of
these operators imply that the transfer matrix spectral problem is separated
in the pseudo-eigenbasis of $\mathcal{B}_{-}(\lambda |\beta )$.

\section{$\mathcal{T}$-spectrum characterization in SoV basis and scalar products}

In this section we present the complete characterization of the spectrum of
the transfer matrix $\mathcal{T}(\lambda )$ associated to the cyclic
representations of the 6-vertex reflection algebra. We first present some
preliminary properties satisfied by all the eigenvalue functions of the
transfer matrix $\mathcal{T}(\lambda )$:

\begin{lemma}
Denote by $\Sigma _{\mathcal{T}}$ the transfer matrix spectrum, then any $%
\tau (\lambda )\in \Sigma _{\mathcal{T}}$ is an even function of $\lambda $\
symmetrical under the transformation $\lambda \rightarrow 1/\lambda $ which
admits the following interpolation formula:%
\begin{align}
\tau (\lambda )=& \sum_{a=1}^{\mathsf{N}}\frac{\Lambda ^{2}-X^{2}}{%
(X_{a}^{(0)})^{2}-X^{2}}\prod_{\substack{ b=1  \\ b\neq a}}^{\mathsf{N}}%
\frac{\Lambda -X_{b}^{(0)}}{X_{a}^{(0)}-X_{b}^{(0)}}\tau (\zeta
_{a}^{(0)})+(-1)^{\mathsf{N}}\frac{(\Lambda +X)}{2}\prod_{b=1}^{\mathsf{N}}%
\frac{\Lambda -X_{b}^{(0)}}{X-X_{b}^{(0)}}\text{det}_{q}M(1)  \notag \\
& -(-1)^{\mathsf{N}}\frac{(\Lambda -X)}{2}\prod_{b=1}^{\mathsf{N}}\frac{%
\Lambda -X_{b}^{(0)}}{X+X_{b}^{(0)}}\frac{(\zeta _{+}+1/\zeta _{+})}{(\zeta
_{+}-1/\zeta _{+})}\frac{(\zeta _{-}+1/\zeta _{-})}{(\zeta _{-}-1/\zeta _{-})%
}\text{det}_{q}M(i)  \notag \\
& +(\Lambda ^{2}-X^{2})\tau _{\infty }\prod_{b=1}^{\mathsf{N}}(\Lambda
-X_{b}^{(0)}),  \label{set-tau}
\end{align}%
where:
\begin{equation}
\Lambda\equiv(\lambda ^{2}+1/\lambda ^{2}) \ \ \text{ and \ \ \ }X\equiv q+1/q
\label{defX}
\end{equation}
and
\begin{equation}
\tau _{\infty }\equiv \frac{\kappa _{+}\kappa _{-}(e^{\tau _{+}-\tau
_{-}}\prod_{b=1}^{\mathsf{N}}\delta _{b}\gamma _{b}+e^{\tau _{-}-\tau
_{+}}\prod_{b=1}^{\mathsf{N}}\alpha _{b}\beta _{b})}{\left( \zeta
_{+}-1/\zeta _{+}\right) \left( \zeta _{-}-1/\zeta _{-}\right) }\text{ .\ \
\ }
\end{equation}
\end{lemma}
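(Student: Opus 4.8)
The plan is to show that each eigenvalue $\tau(\lambda)$ is a polynomial of controlled degree in the single variable $\Lambda=\lambda^{2}+1/\lambda^{2}$, and then to reconstruct it by a Lagrange-type interpolation from $\mathsf{N}$ generic SoV values supplemented by two distinguished (quantum-determinant) values and one leading coefficient. First I would read off the two symmetries directly from the symmetry relations \eqref{symmetry-transfer}: since $\mathcal{T}(\lambda)=\mathcal{T}(1/\lambda)$ and $\mathcal{T}(-\lambda)=\mathcal{T}(\lambda)$ hold as operator identities, every eigenvalue $\tau(\lambda)$ inherits both, hence is even in $\lambda$ and invariant under $\lambda\to1/\lambda$. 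Consequently $\tau$ depends on $\lambda$ only through $\Lambda$, so that $\tau(\lambda)=P(\Lambda)$ for a single-variable function $P$.

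Next I would establish polynomiality and pin down the degree. From the explicit Bazhanov--Stroganov Lax operator \eqref{laxopppp}, the boundary matrices \eqref{Kmatuno2}, and the trace formula $\mathcal{T}=\text{tr}_{a}\{K_{a,+}\mathcal{U}_{a,-}\}$, the entries of $\mathcal{U}_{-}(\lambda)$ are Laurent polynomials in $\lambda$ of bounded degree ($M$ is a product of $\mathsf{N}$ Lax operators, each linear in $\lambda^{\pm1}$, while $K_{\pm}$ contribute terms of order $\lambda^{\pm2}$). Collecting powers and imposing the two symmetries shows that $P$ is an ordinary polynomial in $\Lambda$ of degree $\mathsf{N}+2$, so $\tau$ is fixed by $\mathsf{N}+3$ data.

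The substantive step is the evaluation of $\tau$ at the fusion points and at infinity. The $R$-matrix \eqref{Rlsg} degenerates onto a one-dimensional projector at $\lambda=q^{\pm1}$; at the corresponding spectral values $\lambda=q^{1/2}$ (where $\Lambda=X$) and $\lambda=iq^{1/2}$ (where $\Lambda=-X$, with $X=q+1/q$ as in \eqref{defX}) the boundary transfer matrix collapses onto the bulk quantum determinant. Because $M_{a}(\lambda)=L_{a,\mathsf{N}}(\lambda q^{-1/2})\cdots L_{a,1}(\lambda q^{-1/2})$, the shift $\lambda q^{-1/2}$ sends $q^{1/2}\mapsto1$ and $iq^{1/2}\mapsto i$, which is the structural origin of the arguments $1$ and $i$; carrying out the reduction (using \eqref{Kmatuno2}--\eqref{Def-alfa-beta}) should yield $\tau|_{\Lambda=X}=(-1)^{\mathsf{N}}X\,\text{det}_{q}M(1)$ and $\tau|_{\Lambda=-X}=X\,\frac{(\zeta_{+}+1/\zeta_{+})(\zeta_{-}+1/\zeta_{-})}{(\zeta_{+}-1/\zeta_{+})(\zeta_{-}-1/\zeta_{-})}\,\text{det}_{q}M(i)$. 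Separately, expanding $\mathcal{T}(\lambda)$ as $\lambda\to\infty$ and extracting the coefficient of $\Lambda^{\mathsf{N}+2}$ produces the asymptotic constant $\tau_{\infty}$, whose value follows from the leading entries of the Lax operators and of the $K$-matrices. I expect this step --- getting the fusion constants and $\tau_{\infty}$ exactly right, including all the $\zeta_{\pm},\kappa_{\pm},\tau_{\pm}$ factors --- to be the main obstacle.

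Finally I would assemble the interpolation. The remaining $\mathsf{N}$ data are the SoV values $\tau(\zeta_{a}^{(0)})$ at the nodes $\Lambda=X_{a}^{(0)}=(\text{\textsc{b}}_{-,a}(\beta))^{2}+1/(\text{\textsc{b}}_{-,a}(\beta))^{2}$ (eq. \eqref{313} with $h_{a}=0$). The claimed formula \eqref{set-tau} is then verified term by term: the first sum is the Lagrange basis reproducing $\tau(\zeta_{a}^{(0)})$ at each $X_{a}^{(0)}$ while vanishing at $\Lambda=\pm X$ thanks to the factor $\Lambda^{2}-X^{2}$; the second and third terms vanish at every $X_{a}^{(0)}$ and inject precisely the two fusion values at $\Lambda=\pm X$; and the last term, being the unique degree-$(\mathsf{N}+2)$ contribution vanishing at all finite nodes, fixes the top coefficient $\tau_{\infty}$. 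Since a degree-$(\mathsf{N}+2)$ polynomial is uniquely determined by these $\mathsf{N}+3$ conditions, the interpolation formula follows.
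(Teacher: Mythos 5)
Your proposal is correct and takes essentially the same route as the paper: the paper's proof is just a citation of Lemma 5.1 of \cite{MNP2017}, whose argument is precisely your combination of the symmetries \eqref{symmetry-transfer}, the degree-$(\mathsf{N}+2)$ polynomiality in $\Lambda$, the fusion identities $\mathcal{U}_{-}(q^{1/2})=(-1)^{\mathsf{N}}\,\text{det}_{q}M(1)\,I_{0}$ and $\mathcal{U}_{-}(iq^{1/2})=-i\,\frac{\zeta_{-}+1/\zeta_{-}}{\zeta_{-}-1/\zeta_{-}}\,\text{det}_{q}M(i)\,\sigma_{0}^{z}$ (stated as representation-independent in Appendix B of the present paper), the $\lambda\to\infty$ asymptotics, and Lagrange interpolation at the $\mathsf{N}+3$ data. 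Your evaluations at $\Lambda=\pm X$, namely $(-1)^{\mathsf{N}}X\,\text{det}_{q}M(1)$ and $X\,\frac{(\zeta_{+}+1/\zeta_{+})(\zeta_{-}+1/\zeta_{-})}{(\zeta_{+}-1/\zeta_{+})(\zeta_{-}-1/\zeta_{-})}\,\text{det}_{q}M(i)$, and your identification of the leading coefficient $\tau_{\infty}$ (coming only from $b_{+}\mathcal{C}_{-}+c_{+}\mathcal{B}_{-}$, since $a_{+},d_{+}$ are of lower degree) agree exactly with what formula \eqref{set-tau} encodes.
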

We recall that $\zeta_-,\kappa_-,\tau_-,\zeta_+,\kappa_+$ and $\tau_+$ are the boundary parameters, see \eqref{Kmatuno}-\eqref{Kmatuno2}, and $\alpha_n,\beta_n,\gamma_n$ and $\delta_n$ are the bulk parameters, see \eqref{laxopppp}.
\begin{proof}
This lemma coincides with Lemma 5.1 of our previous paper.
\end{proof}

We introduce the following one-parameter family $D_{\tau }(\lambda )$ of $%
p\times p$ matrices:%
\begin{equation}
D_{\tau }(\lambda )\equiv 
\begin{pmatrix}
\tau (\lambda ) & -\text{\textsc{a}}{}(1/\lambda ) & 0 & \cdots & 0 & -\text{%
\textsc{a}}(\lambda ) \\ 
-\text{\textsc{a}}(q\lambda ) & \tau (q\lambda ) & -\text{\textsc{a}}%
(1/\left( q\lambda \right) ) & 0 & \cdots & 0 \\ 
0 & {\quad }\ddots &  &  &  & \vdots \\ 
\vdots &  & \cdots &  &  & \vdots \\ 
\vdots &  &  & \cdots &  & \vdots \\ 
\vdots &  &  &  & \ddots {\qquad } & 0 \\ 
0 & \ldots & 0 & -\text{\textsc{a}}(q^{2l-1}\lambda ) & \tau
(q^{2l-1}\lambda ) & -\text{\textsc{a}}(1/\left( q^{2l-1}\lambda \right) )
\\ 
-\text{\textsc{a}}(1/\left( q^{2l}\lambda \right) ) & 0 & \ldots & 0 & -%
\text{\textsc{a}}(q^{2l}\lambda ) & \tau (q^{2l}\lambda )%
\end{pmatrix}%
,  \label{D-matrix}
\end{equation}%
where for now $\tau (\lambda )$ is a generic function and we have defined:%
\begin{equation}
\text{\textsc{a}}(\lambda )=\mathsf{a}_{+}(\lambda )\mathsf{A}_{-}(\lambda ),
\label{45}
\end{equation}%
(from \eqref{premierA} and \eqref{deuxiemeA}) where the coefficient \textsc{a}$(\lambda )$ satisfies the quantum
determinant condition:%
\begin{equation}
\text{\textsc{a}}(\lambda q^{1/2})\text{\textsc{a}}(q^{1/2}/\lambda )=\frac{%
\mathsf{a}_{+}(\lambda q^{1/2})\mathsf{a}_{+}(q^{1/2}/\lambda )\text{det}_{q}%
\mathcal{U}_{-}(\lambda )}{\left( \lambda /q\right) ^{2}-\left( q/\lambda
\right) ^{2}}.
\end{equation}%
The separation of variables lead to the following discrete characterization
of the transfer matrix spectrum.

\begin{theorem}
\label{discrete-SoV-ch} For almost all the values of the boundary-bulk
parameters $\mathcal{T}(\lambda )$ is diagonalizable and it has simple spectrum and $\Sigma _{%
\mathcal{T}}$ coincides with the set of polynomials $\tau (\lambda )$ of the
form $(\ref{set-tau})$ which satisfy the following discrete system of
equations:%
\begin{equation}
\text{det}\ \text{$D$}_{\tau }(\zeta _{a}^{(0)})=0,\text{ }\forall a\in
\{1,...,\mathsf{N}\}.  \label{OpenCyI-Functional-eq}
\end{equation}

\begin{itemize}
\item[\textsf{I)}] The right $\mathcal{T}$-eigenstate corresponding to $\tau
(\lambda )\in \Sigma _{\mathcal{T}}$ is defined by the following
decomposition in the right SoV-basis:%
\begin{equation}
|\tau \rangle =\sum_{h_{1},...,h_{\mathsf{N}}=0}^{p-1}\prod_{a=1}^{\mathsf{N}%
}q_{\tau ,a}^{(h_{a})}\prod_{1\leq b<a\leq \mathsf{N}%
}(X_{a}^{(h_{a})}-X_{b}^{(h_{b})})|\beta ,h_{1},...,h_{\mathsf{N}}\rangle ,
\label{OpenCyeigenT-r-D}
\end{equation}%
where the gauge parameters $\alpha $ and $\beta $ satisfy the condition $%
\left( \ref{guage-alpha}\right) $ and the $q_{\tau ,a}^{(h_{a})}$ are the
unique nontrivial solutions up to normalization of the linear homogeneous
system:%
\begin{equation}
\text{$D$}_{\tau }(\zeta _{a}^{(0)})\left( 
\begin{array}{c}
q_{\tau ,a}^{(0)} \\ 
\vdots \\ 
q_{\tau ,a}^{(p-1)}%
\end{array}%
\right) =\left( 
\begin{array}{c}
0 \\ 
\vdots \\ 
0%
\end{array}%
\right) .  \label{OpenCyt-Q-relation}
\end{equation}

\item[\textsf{II)}] The left $\mathcal{T}$-eigenstate corresponding to $\tau
(\lambda )\in \Sigma _{\mathcal{T}}$ is defined by the following
decomposition in the left SoV-basis:%
\begin{equation}
\langle \tau |=\sum_{h_{1},...,h_{\mathsf{N}}=0}^{p-1}\prod_{a=1}^{\mathsf{N}%
}\hat{q}_{\tau ,a}^{(h_{a})}\prod_{1\leq b<a\leq \mathsf{N}%
}(X_{a}^{(h_{a})}-X_{b}^{(h_{b})})\langle h_{1},...,h_{\mathsf{N}},\beta
/q^{2}|,  \label{OpenCyeigenT-l-D}
\end{equation}%
where the gauge parameters $\alpha $ and $\beta $ satisfy the condition $%
\left( \ref{guage-alpha}\right) $ and the $\hat{q}_{\tau ,a}^{(h_{a})}$ are
the unique nontrivial solutions up to normalization of the linear
homogeneous system:%
\begin{equation}
\left( 
\begin{array}{ccc}
\hat{q}_{\tau ,a}^{(0)} & \ldots & \hat{q}_{\tau ,a}^{(p-1)}%
\end{array}%
\right) \left( \text{$\hat{D}$}_{\tau }(\zeta _{a}^{(0)})\right)
^{t_{0}}=\left( 
\begin{array}{ccc}
0 & \ldots & 0%
\end{array}%
\right) ,
\end{equation}%
and $\hat{D}$$_{\tau }(\lambda )$ is the family of $p\times p$ matrices
defined substituting in $D_{\tau }(\lambda )$ the coefficient \textsc{a}$%
(\lambda )$ with 
\begin{equation}
\text{\textsc{d}}(\lambda )=\mathsf{d}_{+}(\lambda )\mathsf{D}_{-}(\lambda ).
\label{412}
\end{equation}
\end{itemize}
defined from \eqref{premierA} and \eqref{deuxiemeA}.
\end{theorem}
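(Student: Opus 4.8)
The plan is to exploit the two representations (\ref{T-triangu-L})--(\ref{T-triangu-R}) of $\mathcal{T}(\lambda)$ together with the SoV basis constructed in Theorem \ref{B-Pseudo-diago}, following the strategy of \cite{MNP2017} adapted to the gauge-transformed generators. First I would establish the separation of the spectral problem. For a right eigenstate $|\tau\rangle$, consider its components $\langle\beta,h_1,\dots,h_{\mathsf{N}}|\tau\rangle$ in the left SoV basis and evaluate the eigenvalue equation at the zeros $\zeta_a^{(h)}$ of $\mathcal{B}_-(\lambda|\beta)$. Acting with $\mathcal{T}$ to the right gives $\tau(\zeta_a^{(h)})\langle\beta,\mathbf{h}|\tau\rangle$, while acting to the left via (\ref{T-triangu-L}) the $\mathcal{B}_-$ term drops out (by definition the $\zeta_a^{(h)}$ are its zeros) and, up to the gauge relabelling, the two $\mathcal{A}_-$ terms act as the shift operators $T_a^{\mp\varphi_a}$ of (\ref{Left-A-action}), with coefficients $\mathsf{a}_+(\zeta_a^{(h)})\mathsf{A}_-(\zeta_a^{(h)})=\text{\textsc{a}}(\zeta_a^{(h)})$ and its $\lambda\to 1/\lambda$ partner. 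Since each shift moves only the single index $h_a$, and because $u_n^p=v_n^p=1$ makes the dependence on each $h_a$ periodic modulo $p$, the eigenvalue equation reduces to one decoupled cyclic three-term recursion per site.

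Next I would recast this recursion in matrix form. Writing the wavefunction as $\prod_a q_{\tau,a}^{(h_a)}\prod_{b<a}(X_a^{(h_a)}-X_b^{(h_b)})$ as in (\ref{OpenCyeigenT-r-D}), the Vandermonde weight absorbs exactly the cross-site factors produced by the SoV measure, so that the three-term recursion in each $h_a$ becomes the homogeneous linear system (\ref{OpenCyt-Q-relation}), namely $D_\tau(\zeta_a^{(0)})(q_{\tau,a}^{(0)},\dots,q_{\tau,a}^{(p-1)})^t=0$ with the cyclic tridiagonal matrix (\ref{D-matrix}). The quantum-determinant condition satisfied by $\text{\textsc{a}}$, equivalently (\ref{B-q-detU_-exp}), is what guarantees the consistency of the off-diagonal products, i.e.\ that this is a genuine discrete analogue of Baxter's $T$-$Q$ equation. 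Existence of a nontrivial solution forces $\det D_\tau(\zeta_a^{(0)})=0$ for every $a$, which is (\ref{OpenCyI-Functional-eq}); together with the interpolation form (\ref{set-tau}) coming from the preceding Lemma, this proves that every $\tau\in\Sigma_{\mathcal{T}}$ lies in the claimed set. The left-eigenstate statement \textsf{II)} is obtained by the mirror computation: project on the right SoV basis, act with the representation (\ref{T-triangu-R}) and the $\mathcal{D}_-$ shift (\ref{Right-D-action}), producing the same system with $\text{\textsc{a}}$ replaced by $\text{\textsc{d}}=\mathsf{d}_+\mathsf{D}_-$ of (\ref{412}), i.e.\ the matrix $\hat{D}_\tau$.

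For the converse and for completeness I would reverse the construction. Given any $\tau$ of the form (\ref{set-tau}) satisfying (\ref{OpenCyI-Functional-eq}), for almost all parameters each kernel is one-dimensional, so the $q_{\tau,a}^{(h_a)}$ are fixed up to scale; substituting the ansatz (\ref{OpenCyeigenT-r-D}) into $\mathcal{T}(\zeta_a^{(h)})|\tau\rangle$ and running the separation argument backwards shows $|\tau\rangle$ is a nonzero common eigenvector with the prescribed eigenvalue. A counting argument as in \cite{MNP2017} — the discrete system being $\mathsf{N}$ polynomial equations of degree $p$ in the $\mathsf{N}$ unknowns $\tau(\zeta_a^{(0)})$, the remaining terms in (\ref{set-tau}) being fixed by quantum determinants and $\tau_\infty$ — shows it has exactly $p^{\mathsf{N}}=\dim\mathcal{H}$ solutions, so the right eigenstates span $\mathcal{H}$. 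Finally, diagonalizability and simplicity follow by pairing the right states (\ref{OpenCyeigenT-r-D}) with the left states (\ref{OpenCyeigenT-l-D}): the SoV decomposition of the identity turns the overlap into a product over sites of $p\times p$ determinants of generalized Vandermonde type, which for almost all values of the boundary-bulk parameters is nonzero precisely when the two eigenvalues coincide, giving biorthogonality of a complete family and hence a similarity transformation diagonalizing $\mathcal{T}(\lambda)$ with simple spectrum.

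I would expect the genuinely hard points to be the converse direction and the completeness/simplicity count rather than the (essentially algebraic) separation step. Establishing that the $\mathsf{N}$ coupled determinant conditions have exactly $p^{\mathsf{N}}$ admissible solutions, and that the biorthogonality Gram determinant is non-degenerate, are the statements that truly require the ``for almost all values of the parameters'' hypothesis; these are where genericity of the boundary-bulk data is indispensable and where the careful Vandermonde and quantum-determinant identities (Appendix C) do the real work.
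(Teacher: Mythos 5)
Your separation argument --- projecting the eigenvalue equation on the SoV basis, killing the $\mathcal{B}_-$ term at its zeros, using the shift actions $(\ref{Left-A-action})$--$(\ref{Right-D-action})$ to obtain decoupled cyclic three-term recursions, hence the homogeneous systems $(\ref{OpenCyt-Q-relation})$ and the determinant conditions $(\ref{OpenCyI-Functional-eq})$ --- is exactly the route the paper takes: the paper reduces this part to the proof of Theorem 5.1 of \cite{MNP2017}, which is precisely that computation. What you do not address, and what is in fact the only piece of that reduction the present paper must prove anew, is that the reduction remains legitimate once the gauge is fixed: after imposing $(\ref{guage-alpha})$ the ratio $\alpha/\beta$ is no longer free but tied to the outer boundary parameters, so one must verify that the zeros $\text{\textsc{b}}_{-,n}(\beta)$ also avoid the outer-boundary values, i.e. the inequalities $(\ref{SoV-T-simple})$, $\text{\textsc{b}}_{-,n}^{2}(\beta)\neq q^{1-2h}\alpha_{+}^{\pm 2}$ and $\text{\textsc{b}}_{-,n}^{2}(\beta)\neq -q^{1-2h}\beta_{+}^{\pm 2}$, on top of $(\ref{SOV-cond1})$--$(\ref{SOV-cond2})$. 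The paper proves this by a Laurent-polynomiality argument, exhibiting the special parameters of Theorem \ref{B-special-P-diago} where the pseudo-spectrum is known explicitly; your blanket appeal to ``almost all parameters'' silently assumes exactly this point.

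The genuine gap is in diagonalizability and completeness. Biorthogonality of left and right eigenstates together with one-dimensionality of the kernels of $D_{\tau}(\zeta_a^{(0)})$ gives simplicity of each eigenspace, but it does not give that the eigenvectors span $\mathcal{H}$; for that you need the number of distinct admissible solutions of the discrete system to equal $p^{\mathsf{N}}$. Your claimed count (``$\mathsf{N}$ polynomial equations of degree $p$ in $\mathsf{N}$ unknowns, hence exactly $p^{\mathsf{N}}$ solutions'') is only a Bezout-type bound: it yields at most $p^{\mathsf{N}}$ isolated solutions, and exactly $p^{\mathsf{N}}$ only when counted with multiplicity and including points at infinity --- and multiple roots are precisely the situation in which eigenvectors can fail to span. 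Moreover, no such counting argument exists in \cite{MNP2017} to be imported: the paper explicitly flags diagonalizability as a statement going beyond Theorem 5.1 of that work. The paper closes the gap by a completely different device: it chooses a special representation ($c_n=-b_n^{*}$, $d_n=-a_n^{*}$, $\alpha_n^{*}\beta_n=a_n^{*}b_n$, diagonal $K$-matrices with $|\zeta_{\pm}|=1$) for which $\mathcal{T}^{\dag}(\lambda)=\mathcal{T}(1/\lambda^{*})$, so the transfer matrix is normal and hence diagonalizable there; then the determinant of the overlaps $\langle e_i|\tau_j\rangle$ with a fixed basis is nonzero at that point, and, being a rational function of the bulk-boundary parameters, it is nonzero for almost all parameters. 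To repair your proof you must either supply an actual count of distinct solutions or replace the counting step by an argument of this normality-plus-rationality type.
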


\begin{proof}
The Theorem \ref{B-Pseudo-diago} implies that for almost all the values of
the gauge-boundary-bulk parameters the conditions $\left( \ref{SOV-cond1}%
\right) $-$\left( \ref{SOV-cond2}\right) $ hold. Here, we need to prove
also that for almost all the values of the boundary-bulk parameters we have,%
\begin{equation}
\text{\textsc{b}}_{-,n}^{2}(\beta )\neq q^{1-2h}\alpha _{+}^{\pm 2},\text{\ 
\textsc{b}}_{-,n}^{2}(\beta )\neq -q^{1-2h}\beta _{+}^{\pm 2},\text{ \ }%
\forall h\in \{1,...,p-1\},n\in \{1,...,\mathsf{N}\},  \label{SoV-T-simple}
\end{equation}%
once we set the ratio $\alpha /\beta $ as in $\left( \ref{guage-alpha}\right) $%
. Let us first observe that $\mathcal{B}_{-}(\lambda |\beta )$ is a Laurent
polynomial in $\alpha ,$ $\beta $, the inner boundary parameters and the
bulk parameters. So that by $\left( \ref{guage-alpha}\right) $, the one
parameter family $\mathcal{B}_{-}(\lambda |\beta )$ becomes Laurent
polynomial in the outer boundary parameters too. Consequently, to prove that 
$\left( \ref{SoV-T-simple}\right) $ is satisfied for almost all the values
of the boundary-bulk parameters it is enough to prove that we can find some
values of these parameters for which $\left( \ref{SoV-T-simple}\right) $ is
satisfied. Indeed, we can chose arbitrary boundary-bulk parameters
satisfying the following inequalities:%
\begin{equation}
\mu _{+,n}^{p}\neq \alpha _{+}^{\pm p}\text{ and }\mu _{+,n}^{p}\neq -\beta
_{+}^{\pm p},\text{ \ }\forall n\in \{1,...,\mathsf{N}\},
\end{equation}%
together with those in $\left( \ref{Special-B-simple}\right) $ and $\left( %
\ref{Special-B-simple2}\right) $ and impose the $\mathsf{N}$ conditions $%
\left( \ref{Condi-bulk-quasi-L}\right) $. Under these conditions, Theorem %
\ref{B-Pseudo-diago} implies the pseudo-diagonalizability of $\mathcal{B}%
_{-}(\lambda |\beta )$ and fixes the spectrum of its zeros \textsc{b}$%
_{-,n}(\beta )$ by $\left( \ref{Spectrum-Zeros-B}\right) $; so that the
inequality $\left( \ref{SoV-T-simple}\right) $ is satisfied.

As we have proven that for almost all the values of the boundary-bulk
parameters the inequalities $\left( \ref{SoV-T-simple}\right) $, $\left( \ref%
{Special-B-simple}\right) $ and $\left( \ref{Special-B-simple2}\right) $
hold, to prove this theorem we have just to follow the same proof given in
the non-gauged case, i.e. the proof of Theorem 5.1 of our previous paper. 

Let us comment that with respect to this last theorem here we are stating also the diagonalizability of the transfer matrix for almost any value of the parameters of the representation. This last statement can be proven as it follows.
Let us consider the following special representation, where the bulk parameters satisfy:
\begin{equation}
c_n=-b_n^* \ \ ; \ \ d_n=-a_n^* \ \ \text{and} \ \ \ \alpha_n^* \beta_n = a_n^* b_n
\end{equation}
and where the boundary matrices are diagonal, $K_{a,-}(\lambda)=K_a(\lambda;\zeta_-,0,0)$ and \\
$K_{a,+}(\lambda)=K_a(q \lambda;\zeta_+,0,0)$ (see \eqref{Kmatuno}-\eqref{Kmatuno2}), with the associated boundary parameters satisfying moreover $|\zeta_-|=|\zeta_+|=1$. The $*$ operation is the complex conjugation. A simple direct calculation made for example in \cite{OpenCyGN12} leads to the following Hermitian conjugate of the monodromy matrix \eqref{24}:
\begin{equation}
M_{a}^{\dag}(\lambda)  = \sigma_a^y M_{a}(\lambda^*)\sigma_a^y
\end{equation}
where $\sigma_a^y$ denotes the Pauli matrix.

From this relation, and using the specific inner boundary matrix introduced, one can compute the Hermitian conjugate of the boundary monodromy matrix \eqref{BMM}:
\begin{equation}
{\cal{U}}_{a,-}^{\dag}(\lambda) = {\cal{U}}_{a,-}^{t_a}(1/\lambda^*)
\end{equation}
Then, from the definition of the boundary transfer matrix, and for the special choice of representation here chosen, we can show:
\begin{equation}
{\cal{T}}^{\dag}(\lambda) = {\cal{T}}(1/\lambda^*) 
\end{equation}
Thus for this special representation the boundary transfer matrix is normal. Then it follows that the determinant of the $p^\mathsf{N}\times p^\mathsf{N}$ matrix of elements $\langle e_{i}|\tau_{j}\rangle $, where $\bra{e_{i}}$ is the generic element of a given basis of covectors and  $\ket{\tau_{j}}$ is the generic transfer matrix eigenvector, is non zero. 

Noticing that this determinant is a fractional function of the bulk and boundary parameters, non zero for the special choice of the parameters above defined, it follows that it is non zero for almost every choice of the parameters. Which concludes the proof.
\end{proof}

It is also interesting to remark that we can obtain the coefficients of a
left transfer matrix eigenstates in terms of those of the right one. The
following lemma defines this characterization and can be proven as in the standard case \cite{MNP2017}:

\begin{lemma}
Let $\tau (\lambda )\in \Sigma _{\mathcal{T}}$ then it holds:%
\begin{equation}
\frac{\hat{q}_{\tau ,a}^{(h)}}{\hat{q}_{\tau ,a}^{(h-1)}}=\frac{\text{%
\textsc{a}}(1/\zeta _{a}^{(h-1)})}{\text{\textsc{d}}(1/\zeta _{a}^{(h-1)})}%
\frac{q_{\tau ,a}^{(h)}}{q_{\tau ,a}^{(h-1)}}.
\label{419}
\end{equation}
\end{lemma}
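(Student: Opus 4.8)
The plan is to reduce the statement to a comparison of the two cyclic linear systems defining the right and left coefficients. Writing $\zeta_a^{(k)}=q^{k}\zeta_a^{(0)}$, the condition $D_\tau(\zeta_a^{(0)})(q_{\tau,a}^{(0)},\dots,q_{\tau,a}^{(p-1)})^{t}=0$ is, read row by row, the three-term (discrete Baxter) recursion
\[
\tau(\zeta_a^{(k)})\,q_{\tau,a}^{(k)}=\text{\textsc{a}}(1/\zeta_a^{(k)})\,q_{\tau,a}^{(k+1)}+\text{\textsc{a}}(\zeta_a^{(k)})\,q_{\tau,a}^{(k-1)},
\]
with indices modulo $p$, while the left system $(\hat q_{\tau,a}^{(0)},\dots,\hat q_{\tau,a}^{(p-1)})(\hat D_\tau(\zeta_a^{(0)}))^{t_0}=0$ is the very same recursion with $\text{\textsc{a}}$ replaced throughout by $\text{\textsc{d}}$. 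Both kernels are one-dimensional, since $\mathcal{B}_-(\lambda|\beta)$ has simple pseudo-spectrum and the eigenvalue $\tau$ is shared by the left and right eigenstates; it therefore suffices to produce one nonzero solution of the $\text{\textsc{d}}$-recursion out of $q_{\tau,a}^{(h)}$.

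First I would establish the single scalar identity on which everything rests, namely
\[
\text{\textsc{a}}(1/\mu)\,\text{\textsc{a}}(q\mu)=\text{\textsc{d}}(1/\mu)\,\text{\textsc{d}}(q\mu).
\]
This is a quantum-determinant relation: inserting $\text{\textsc{a}}=\mathsf{a}_{+}\mathsf{A}_{-}$, $\text{\textsc{d}}=\mathsf{d}_{+}\mathsf{D}_{-}$ and using $\mathsf{D}_{-}(\lambda)=k(\lambda)\mathsf{A}_{-}(q/\lambda)$ together with the boundary quantum determinant \eqref{B-q-detU_-exp}, the common factor $\mathsf{A}_{-}(q^{1/2}\lambda)\mathsf{A}_{-}(q^{1/2}/\lambda)$ drops out of both sides. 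What remains is the elementary check $\mathsf{a}_{+}(\nu)\mathsf{a}_{+}(q/\nu)=\mathsf{d}_{+}(\nu)\mathsf{d}_{+}(q/\nu)$ and $k(\nu)k(q/\nu)=1$, both immediate from the explicit expressions in \eqref{premierA} and \eqref{deuxiemeA}.

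With this identity I would set $\hat q_{\tau,a}^{(h)}=f_a^{(h)}q_{\tau,a}^{(h)}$, fixing the gauge factor by $f_a^{(h)}/f_a^{(h-1)}=\text{\textsc{a}}(1/\zeta_a^{(h-1)})/\text{\textsc{d}}(1/\zeta_a^{(h-1)})$, which is exactly the ratio asserted in the lemma. Substituting into the $k$-th row of the $\text{\textsc{d}}$-recursion and dividing by $f_a^{(k)}$, the coefficient of $q_{\tau,a}^{(k+1)}$ collapses to $\text{\textsc{a}}(1/\zeta_a^{(k)})$ by the definition of $f_a^{(k+1)}/f_a^{(k)}$, while the coefficient of $q_{\tau,a}^{(k-1)}$ collapses to $\text{\textsc{a}}(\zeta_a^{(k)})$ precisely by the identity above applied at $\mu=\zeta_a^{(k-1)}$ (so that $q\mu=\zeta_a^{(k)}$). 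Thus the left recursion is carried term by term onto the right recursion, which $q_{\tau,a}^{(h)}$ already solves; one-dimensionality of the kernel then forces the gauged vector to be proportional to $\hat q_{\tau,a}^{(h)}$, and the stated ratio follows.

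The delicate point, which I expect to be the main obstacle, is the cyclic closure. Because $p$ is odd, the product of the gauge factors around the full orbit $\{\zeta_a^{(0)},\dots,\zeta_a^{(p-1)}\}$ is not $1$: the orbit averages of $\text{\textsc{a}}$ and $\text{\textsc{d}}$ differ by the sign $(-1)^{p}=-1$ arising from $\mathsf{a}_{+}/\mathsf{d}_{+}=-g_{+}(\lambda)/g_{+}(q/\lambda)$, so the two wrap-around matrix entries must be matched with care. The way to control this is to keep explicit track of the shift $\beta\to\beta/q^{2}$ between the right basis $|\beta,h_1,\dots,h_{\mathsf N}\rangle$ and the left basis $\langle h_1,\dots,h_{\mathsf N},\beta/q^{2}|$ used in \eqref{OpenCyeigenT-r-D}--\eqref{OpenCyeigenT-l-D}: this shift displaces the left orbit by one step relative to the right one and reabsorbs the stray sign, so that the intertwining—and hence the claimed relation—holds on the whole fundamental range $h\in\{1,\dots,p-1\}$, exactly as in the corresponding lemma of \cite{MNP2017}.
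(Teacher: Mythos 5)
Your reduction of the lemma to a comparison of the two cyclic tridiagonal systems is a reasonable strategy, and the local half of your argument is sound. The identity $\text{\textsc{a}}(1/\mu)\,\text{\textsc{a}}(q\mu)=\text{\textsc{d}}(1/\mu)\,\text{\textsc{d}}(q\mu)$ is correct (in fact the quantum determinant is not needed: the relation $\mathsf{D}_-(\lambda)=k(\lambda)\mathsf{A}_-(q/\lambda)$ from \eqref{premierA} exchanges the two points $1/\mu$ and $q\mu$, so $\mathsf{A}_-(1/\mu)\mathsf{A}_-(q\mu)$ is a common factor of both sides, and what remains is exactly your two elementary checks on $\mathsf{a}_+,\mathsf{d}_+$ and $k$). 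With the gauge factor $f_a^{(h)}/f_a^{(h-1)}=\text{\textsc{a}}(1/\zeta_a^{(h-1)})/\text{\textsc{d}}(1/\zeta_a^{(h-1)})$, every row of the $\text{\textsc{d}}$-recursion that does not wrap around the orbit is indeed carried onto the corresponding row of the $\text{\textsc{a}}$-recursion.

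The genuine gap sits exactly where you flag it, and your proposed fix does not close it. The two wrap-around rows ($k=0$ and $k=p-1$) reduce, by the key identity and $q^p=1$, to the single condition $\text{\textsc{d}}(\zeta_a^{(0)})\,f_a^{(p-1)}=\text{\textsc{a}}(\zeta_a^{(0)})$, which, rewriting $f_a^{(p-1)}$ through the key identity, is the global closure identity
\begin{equation*}
\prod_{h=0}^{p-1}\text{\textsc{a}}(\zeta_a^{(h)})\;=\;\prod_{h=0}^{p-1}\text{\textsc{d}}(\zeta_a^{(h)}).
\end{equation*}
Only if this holds does your gauged vector $f_a^{(h)}q_{\tau,a}^{(h)}$ lie in the kernel of $\hat D_\tau(\zeta_a^{(0)})$, so that one-dimensionality can give \eqref{419}. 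This closure is not a formal consequence of the local identity. Indeed, since $\prod_{h=0}^{p-1}\text{\textsc{d}}(q^h\zeta)=-\prod_{h=0}^{p-1}\text{\textsc{a}}(q^h/\zeta)$ (the $(-1)^p$ sign you correctly trace to $\mathsf{a}_+/\mathsf{d}_+=-g_+(\lambda)/g_+(q/\lambda)$, the cyclic products of the remaining prefactors being $1$), the closure is equivalent to the cancellation $\prod_h\text{\textsc{a}}(\zeta_a^{(h)})+\prod_h\text{\textsc{a}}(1/\zeta_a^{(h)})=0$ of the two cyclic terms in the expansion of $\det_p D_\tau(\zeta_a^{(0)})$, i.e.\ precisely to the mutual compatibility of the right and left systems of Theorem \ref{discrete-SoV-ch} — which is the content of the lemma itself. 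A concrete warning that no purely algebraic manipulation can produce it: under the conditions \eqref{Condi-bulk-quasi-L}, used in the paper's own proofs, one has $\text{\textsc{a}}(\zeta_a^{(0)})=0$ while $\text{\textsc{d}}(\zeta_a^{(h)})\neq 0$ for every $h$ (by \eqref{Special-B-simple2} and \eqref{SoV-T-simple}), so one side of the closure identity vanishes and the other does not; the identity is a parameter-dependent statement, valid at most for almost all parameters, and needs a genuine proof.

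Finally, the mechanism you invoke to repair the seam cannot work even in principle: the shift $\beta\to\beta/q^2$ distinguishes the left SoV basis from the right one, but the matrices $D_\tau(\zeta_a^{(0)})$ and $\hat D_\tau(\zeta_a^{(0)})$ of Theorem \ref{discrete-SoV-ch} contain no gauge label at all; both linear systems are written on the same spectral lattice $\zeta_a^{(h)}=\text{\textsc{b}}_{-,a}(\beta)q^{h}$, and relabelling the gauge parameter of the basis vectors does not displace this lattice, so there is no ``shift of the orbit by one step'' available to reabsorb the sign. This is why the paper proves the lemma along the route of \cite{MNP2017}: there both the $\text{\textsc{a}}$- and the $\text{\textsc{d}}$-systems are derived directly from the left and right actions of $\mathcal{T}(\lambda)$ on the SoV bases, and the compatibility of the wrap-around rows comes out of the operator construction instead of having to be imposed a posteriori on the two linear systems.
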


Let us introduce a class of left and right states, the so-called separate
states, characterised by the following type of decompositions in the left
and right separate basis: 
\begin{align}
\langle \alpha |& \equiv \sum_{h_{1},...,h_{\mathsf{N}}=0}^{p-1}\prod_{a=1}^{%
\mathsf{N}}\alpha _{a}^{(h_{a})}\prod_{1\leq b<a\leq \mathsf{N}%
}(X_{a}^{(h_{a})}-X_{b}^{(h_{b})})\langle \beta/q^2, h_{1},...,h_{\mathsf{N}}|
\label{Separate-left-SoV} \\
|\beta \rangle & =\sum_{h_{1},...,h_{\mathsf{N}}=0}^{p-1}\prod_{a=1}^{%
\mathsf{N}}\beta _{a}^{(h_{a})}\prod_{1\leq b<a\leq \mathsf{N}%
}(X_{a}^{(h_{a})}-X_{b}^{(h_{b})})|\beta,h_{1},...,h_{\mathsf{N}}\rangle 
\label{Separate-right-SoV}
\end{align}%
where the coefficients $\alpha _{a}^{(h_{a})}$ and $\beta _{a}^{(h_{a})}$ are arbitrary complex numbers, meaning that the coefficients of these separate states have a factorised form in these basis. ($X_{a}^{(h_{a})}$ defined in \eqref{313}).

These separate states are interesting at least for two reasons:  the
eigenstates of the boundary transfer matrix are special separate states, and they admit a simple determinant scalar
product, as it is stated in the next proposition:
\begin{proposition}
Let us take an arbitrary separate left state $\langle \alpha |$ and an arbitrary separate right state $|\beta \rangle $. Then it holds:
\begin{equation}
\langle \alpha |\beta \rangle =det\ \mathcal{M}^{\left(\alpha ,\beta \right) }
 \label{T2-Sov-Sc-p00}
\end{equation}
where the elements of the size N matrix $\mathcal{M}^{\left(\alpha ,\beta \right) }$ are given by:
\begin{equation}
\forall (a,b) \in [1,N]^2,\ \  \mathcal{M}_{a,b}^{\left( \alpha,\beta \right) }\equiv \sum_{h=0}^{p-1}\alpha _{a}^{(h)}\beta
_{a}^{(h)}(X_{a}^{(h)})^{(b-1)} \label{T2-Sov-Sc-p1}
\end{equation}
\end{proposition}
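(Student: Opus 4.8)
The plan is to expand the scalar product $\langle\alpha|\beta\rangle$ by inserting the two SoV decompositions \eqref{Separate-left-SoV} and \eqref{Separate-right-SoV} and then to use the known overlaps between the left and right separate bases, namely
\begin{equation}
\langle \beta ,h_{1},...,h_{\mathsf{N}}|\beta q^{2},k_{1},...,k_{\mathsf{N}}\rangle =\prod_{1\leq a\leq \mathsf{N}}\delta _{h_{a},k_{a}}\prod_{1\leq b<a\leq \mathsf{N}}\frac{1}{X_{a}^{(h_{a})}-X_{b}^{(h_{b})}}
\label{overlap-recall}
\end{equation}
from Theorem \ref{B-Pseudo-diago}. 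Here I must be careful with the gauge labels: the left separate state in \eqref{Separate-left-SoV} is built on $\langle \beta/q^{2},h_{1},\dots,h_{\mathsf{N}}|$ and the right one on $|\beta,k_{1},\dots,k_{\mathsf{N}}\rangle$, so the relevant overlap is exactly the one recorded in \eqref{overlap-recall} (with the roles of $\beta$ and $\beta q^{2}$ matched by the shift $\beta\to\beta/q^{2}$). The Kronecker deltas collapse the double sum over $(h_a)$ and $(k_a)$ to a single sum over one multi-index, leaving
\begin{equation}
\langle\alpha|\beta\rangle=\sum_{h_{1},...,h_{\mathsf{N}}=0}^{p-1}\prod_{a=1}^{\mathsf{N}}\alpha_{a}^{(h_{a})}\beta_{a}^{(h_{a})}\frac{\prod_{1\leq b<a\leq\mathsf{N}}(X_{a}^{(h_{a})}-X_{b}^{(h_{b})})^{2}}{\prod_{1\leq b<a\leq\mathsf{N}}(X_{a}^{(h_{a})}-X_{b}^{(h_{b})})}.
\end{equation}
The two Vandermonde-type products from the left and right decompositions combine with the single inverse Vandermonde from \eqref{overlap-recall}, so one factor survives and the expression reduces to a single weighted Vandermonde sum.

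After this simplification the statement becomes a purely combinatorial identity: one must show that
\begin{equation}
\sum_{h_{1},...,h_{\mathsf{N}}=0}^{p-1}\prod_{a=1}^{\mathsf{N}}\alpha_{a}^{(h_{a})}\beta_{a}^{(h_{a})}\prod_{1\leq b<a\leq\mathsf{N}}(X_{a}^{(h_{a})}-X_{b}^{(h_{b})})=\det\mathcal{M}^{(\alpha,\beta)},
\end{equation}
with $\mathcal{M}^{(\alpha,\beta)}$ the $\mathsf{N}\times\mathsf{N}$ matrix of \eqref{T2-Sov-Sc-p1}. The key step is to recognize the surviving product $\prod_{b<a}(X_{a}^{(h_{a})}-X_{b}^{(h_{b})})$ as a Vandermonde determinant in the variables $X_{a}^{(h_{a})}$, i.e. $\det\big(\,(X_{a}^{(h_{a})})^{b-1}\,\big)_{a,b=1}^{\mathsf{N}}$. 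I would substitute this determinant, write it as a signed sum over permutations, and then interchange the order of the permutation sum with the multi-index sum over $(h_a)$. Because the weight $\prod_{a}\alpha_{a}^{(h_{a})}\beta_{a}^{(h_{a})}$ factorizes across the sites $a$, each site-sum $\sum_{h_a}\alpha_{a}^{(h_a)}\beta_{a}^{(h_a)}(X_{a}^{(h_a)})^{\sigma(a)-1}$ decouples and reproduces precisely the matrix entry $\mathcal{M}_{a,\sigma(a)}^{(\alpha,\beta)}$; resumming over $\sigma$ with the signs then yields $\det\mathcal{M}^{(\alpha,\beta)}$ by the Leibniz formula.

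**Main obstacle.**
The genuinely delicate point is bookkeeping rather than any deep estimate: I must verify that the three Vandermonde factors (the squared one produced by multiplying the left and right decompositions, against the single inverse one from the overlap \eqref{overlap-recall}) cancel to leave exactly one copy of $\prod_{b<a}(X_{a}^{(h_{a})}-X_{b}^{(h_{b})})$, and that the gauge shift $\beta\to\beta/q^{2}$ aligns the bases correctly so that \eqref{overlap-recall} applies with its stated normalization and no stray factor of $\textsc{n}_{\beta}$ or $\textsc{n}_{\beta/q^{2}}$ survives. One should also confirm that the $X_{a}^{(h_a)}$ appearing as interpolation nodes are distinct for distinct $h_a$ at fixed $a$ (guaranteed by the simple-pseudo-spectrum condition of Theorem \ref{B-Pseudo-diago}), so that the Vandermonde interpretation is legitimate. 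Once these alignments are checked, the Leibniz expansion is mechanical and the determinant formula \eqref{T2-Sov-Sc-p00}–\eqref{T2-Sov-Sc-p1} follows immediately.
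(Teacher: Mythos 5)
Your proposal is correct and is essentially the paper's own argument: the paper only remarks that ``one can see a Vandermonde determinant when computing the scalar product'', and your computation --- collapsing the double sum via the orthogonality relation of Theorem \ref{B-Pseudo-diago} with the shift $\beta\rightarrow\beta/q^{2}$, cancelling one Vandermonde factor of $\prod_{1\leq b<a\leq\mathsf{N}}(X_{a}^{(h_{a})}-X_{b}^{(h_{b})})$ against the inverse one from the overlap, and expanding the surviving factor by the Leibniz formula so that each factorized site sum reproduces $\mathcal{M}_{a,\sigma(a)}^{(\alpha,\beta)}$ --- is exactly the detailed version of that sketch. Your gauge-label bookkeeping (pairing $\langle\beta/q^{2},h_{1},\dots,h_{\mathsf{N}}|$ with $|\beta,k_{1},\dots,k_{\mathsf{N}}\rangle$, normalizations already absorbed into the basis states) is the correct alignment, so there is nothing to add.
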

The proof is quite straightforward, it is based on the fact that one can see a Vandermonde determinant when computing the scalar product. One of the main corollary is the orthogonality of two eigenstates $\langle \tau |$ and $|\tau^{\prime }\rangle $  of the boundary transfer matrix associated to two different eigenvalues $\tau (\lambda )$ and $\tau ^{\prime }(\lambda )$:
\begin{equation}
 \langle \tau |\tau^{\prime }\rangle=0
\end{equation}

The computation of such scalar products is the very first step towards the dynamics, several further steps being required to reach this characterization for the models associated to cyclic representations of the 6-vertex reflection algebra: the reconstruction of the local operators in separate variables, the identification of the ground state, the homogeneous and the thermodynamic limit. For example a rewriting of the determinant representations for the form factors obtained from separation of variable will be necessary to overcome the standard problems  related to the homogeneous limit. This problem has been addressed and solved for the XXX spin 1/2 chain, linking the separation of variable type determinants  with Izergin's, Slavnov's and Gaudin's type determinants \cite{OpenCyKitMNT15,KitMNT16}. \\ 

\section{Functional equation characterizing the $\mathcal{T}$-spectrum}
The purpose of this section is to characterize the spectrum by functional relations analogous to Baxter's T-Q equation. To begin with, we first need the following property. 
\begin{lemma}
Let $\tau (\lambda )$ be a function of $\lambda $\ invariant under the
transformation $\lambda \rightarrow 1/\lambda $ and $\lambda \rightarrow
-\lambda $ then det$_{p}D_{\tau }(\lambda )$ (from \eqref{D-matrix}) is a function of 
\begin{equation}
Z=\lambda ^{2p}+\frac{1}{\lambda ^{2p}},
\label{defZ}
\end{equation}%
i.e. it is a function of $\lambda ^{p}$ invariant under the transformations $\lambda ^{p}\rightarrow
1/\lambda ^{p}$ and $\lambda \rightarrow -\lambda $. Moreover, if $\tau
(\lambda )$ is a Laurent polynomial of degree $\mathsf{N}+2$ in $\Lambda =\lambda ^{2}+\frac{1}{\lambda ^{2}}$
then det$_{p} D_{\tau }(\lambda )$ is a Laurent polynomial of degree $\mathsf{%
N}+2$ in $Z$.
\end{lemma}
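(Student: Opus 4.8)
The plan is to prove both assertions by realizing $\det_{p}D_{\tau}(\lambda)$ as a rational function of $\lambda$ that is invariant under the three transformations $\lambda\to q\lambda$, $\lambda\to 1/\lambda$ and $\lambda\to-\lambda$, and then to control its poles and its growth at $\lambda\to\infty$. The structural fact underlying everything is that, by \eqref{def-p}, $q^{p}=1$ with $q$ a primitive $p$-th root of unity, so that the matrix \eqref{D-matrix} is a genuine cyclic (periodic) tridiagonal matrix: indexing rows and columns by $\mathbb{Z}/p\mathbb{Z}$, its diagonal entry in row $k$ is $\tau(q^{k}\lambda)$, its right-neighbour entry $(k,k+1)$ is $-\text{\textsc{a}}(1/(q^{k}\lambda))$ and its left-neighbour entry $(k,k-1)$ is $-\text{\textsc{a}}(q^{k}\lambda)$, the two corners being the cyclic wrap-arounds of these two patterns (with $\text{\textsc{a}}$ from \eqref{45}).

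First I would establish the three invariances. Invariance under $\lambda\to q\lambda$: since $q^{p}=1$, the shift $\lambda\mapsto q\lambda$ carries every entry labelled $k$ to the one labelled $k+1$, so $D_{\tau}(q\lambda)=P\,D_{\tau}(\lambda)\,P^{-1}$ with $P$ the cyclic permutation $k\mapsto k+1$, whence the determinant is unchanged. Invariance under $\lambda\to 1/\lambda$: using only $\tau(1/\mu)=\tau(\mu)$ and $q^{p}=1$ one checks $\tau(q^{k}/\lambda)=\tau(q^{-k}\lambda)$, $\text{\textsc{a}}(1/(q^{k}/\lambda))=\text{\textsc{a}}(q^{-k}\lambda)$ and $\text{\textsc{a}}(q^{k}/\lambda)=\text{\textsc{a}}(1/(q^{-k}\lambda))$, which say precisely that the reflection $\sigma:k\mapsto -k$ sends the entries of $D_{\tau}(1/\lambda)$ to those of $D_{\tau}(\lambda)$, diagonal to diagonal and right-neighbour to left-neighbour; hence $D_{\tau}(1/\lambda)=P_{\sigma}D_{\tau}(\lambda)P_{\sigma}^{-1}$ and again the determinant is preserved. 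Invariance under $\lambda\to-\lambda$ is immediate once one reads off from \eqref{premierA}--\eqref{deuxiemeA} that $\text{\textsc{a}}(\lambda)$ is even in $\lambda$ (both $\mathsf{a}_{+}$ and $\mathsf{A}_{-}$ are, the two factors $(-1)^{\mathsf{N}}$ produced by $a$ and $d$ cancelling): then $D_{\tau}(-\lambda)=D_{\tau}(\lambda)$ entrywise. Combining the three: invariance under the primitive root $\lambda\to q\lambda$ forces $\det_{p}D_{\tau}$ to be a function of $\mu:=\lambda^{p}$; then $\lambda\to-\lambda$ gives $\mu\mapsto-\mu$ (as $p$ is odd), so it is a function of $\mu^{2}=\lambda^{2p}$; and $\lambda\to 1/\lambda$ gives $\mu\mapsto 1/\mu$, so finally it depends only on $\mu^{2}+\mu^{-2}=\lambda^{2p}+\lambda^{-2p}=Z$ from \eqref{defZ}. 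This settles the first assertion.

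For the degree I would argue asymptotically. As $\lambda\to\infty$ the hypothesis that $\tau$ has degree $\mathsf{N}+2$ in $\Lambda$ gives $\tau(\lambda)\sim\lambda^{2(\mathsf{N}+2)}$, while \eqref{premierA}--\eqref{deuxiemeA} yield $\text{\textsc{a}}(\lambda)\sim\lambda^{2\mathsf{N}+4}$ at $\lambda\to\infty$ and $\text{\textsc{a}}(\lambda)\sim\lambda^{-(2\mathsf{N}+4)}$ at $\lambda\to 0$. A direct inspection then shows that each non-vanishing contribution to the determinant — the diagonal product, the two full-cycle products $\prod_{k}\text{\textsc{a}}(q^{k}\lambda)$ and $\prod_{k}\text{\textsc{a}}(1/(q^{k}\lambda))$, and the adjacent-transposition terms — behaves like $\lambda^{2p(\mathsf{N}+2)}$, so $\det_{p}D_{\tau}=O(\lambda^{2p(\mathsf{N}+2)})$, with the mirror-image behaviour at $\lambda\to 0$ by the inversion symmetry. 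Since $Z^{\mathsf{N}+2}$ is exactly the function of $Z$ with this leading growth, $\det_{p}D_{\tau}$ is of degree at most $\mathsf{N}+2$ in $Z$, the coefficient of $Z^{\mathsf{N}+2}$ being nonzero for generic parameters.

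It remains to show the a priori rational function $\det_{p}D_{\tau}$ of $Z$ has no finite poles. The only finite poles of the entries come from the factor $1/(\lambda^{2}-1/\lambda^{2})$ in $\mathsf{a}_{+}$, so $\text{\textsc{a}}$ has simple poles exactly at $\lambda\in\{\pm1,\pm i\}$ and the entries $\text{\textsc{a}}(q^{k}\lambda^{\pm1})$ are singular only on the orbit $\{q^{-k}\zeta:\zeta^{4}=1\}$; a short computation using $q^{p}=1$ and $p$ odd shows every such point satisfies $\lambda^{2p}\in\{1,-1\}$, i.e. maps to $Z\in\{2,-2\}$, so these are the only candidate poles in $Z$. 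The main obstacle, and the decisive step, is a parity argument that removes them: at any point $\lambda_{0}$ of these two fibres one checks that the singular entries all lie in a single row $k_{0}$, each a simple pole, so expanding the determinant along that row shows $\det_{p}D_{\tau}$ has at most a simple pole in $\lambda$ there; on the other hand $Z-2=(\lambda^{p}-\lambda^{-p})^{2}$ and $Z+2=(\lambda^{p}+\lambda^{-p})^{2}$ vanish to second order at these points, so any pole of $\det_{p}D_{\tau}$ at $Z=\pm2$ would force an even-order pole $\ge 2$ in $\lambda$. This incompatibility forces the absence of poles, and hence polynomiality, without any residue computation. The harder, alternative route — cancelling the residues by hand through the reflection symmetry together with the quantum determinant relation satisfied by $\text{\textsc{a}}$ — is what one wants to avoid, so the essential work is to make the confinement of singularities to single rows and the double-zero count fully rigorous.
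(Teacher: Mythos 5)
Your proof is correct, but its decisive step is genuinely different from the paper's. For the first claim the two arguments agree in substance: your conjugation relations $D_{\tau}(q\lambda)=P\,D_{\tau}(\lambda)\,P^{-1}$, $D_{\tau}(1/\lambda)=P_{\sigma}D_{\tau}(\lambda)P_{\sigma}^{-1}$ and $D_{\tau}(-\lambda)=D_{\tau}(\lambda)$ are exactly the symmetry argument that the paper imports by citing Lemma 5.2 of \cite{MNP2017}. The divergence is in how the possible poles over $Z=\pm 2$ are excluded. The paper proceeds by explicit cancellation: it expands $\det_{p}D_{\tau}(\lambda q^{1/2})$ along the central row after regularizing through $\text{\textsc{x}}(\lambda)=(\lambda^{2}-1/\lambda^{2})\,\text{\textsc{a}}(\lambda)$, and then uses the minor identity $\det_{p-1}D_{\tau,(p+1)/2,(p+1)/2+1}(\lambda q^{1/2})=\det_{p-1}D_{\tau,(p+1)/2,(p+1)/2-1}(q^{1/2}/\lambda)$, so that the numerator of the two singular terms vanishes precisely where the common denominator $\lambda^{2}-1/\lambda^{2}$ does; finiteness at $\lambda\to\pm1,\pm i$ then follows, one point per fiber being enough by the symmetries. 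You avoid any cancellation: the unique singular row caps the pole order in $\lambda$ at one, while a pole of the $Z$-rational function at the branch points $Z=\pm2$ would pull back to a pole of even order at least two, since $Z\mp2=(\lambda^{p}\mp\lambda^{-p})^{2}$ has double zeros on those fibers; the two bounds are incompatible, so there is no pole. Both routes rest on the same confinement fact --- two distinct rows cannot be singular at the same $\lambda_{0}$, because $q^{k-j}$ would then be a fourth root of unity of odd order, hence $1$, as $\gcd(4,p)=1$ --- and both conclude with the same growth estimate, which strictly speaking yields degree at most $\mathsf{N}+2$ in $Z$ (equality is generic; the leading coefficient is computed in Proposition \ref{General F-Eq} and can vanish, cf.\ the one-constraint case). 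What your ramification-parity argument buys is economy and robustness: no minor identities, no limits, and it transfers verbatim to any cyclic tridiagonal matrix of this shape. What the paper's explicit row manipulation buys is reusability of the computation itself: the same regularized-row technique later proves the stronger vanishing statements $\det_{p}D_{\tau}(\pm i^{a}q^{1/2+h})=0$ for $\tau\in\Sigma_{\mathcal{T}}$ inside the proof of Proposition \ref{General F-Eq}, and drives the cofactor factorizations of Lemma \ref{Cofactor-prop}, where mere finiteness would not suffice. Finally, the two verifications you flag as remaining are indeed the only ones, and both close in one line: $\gcd(4,p)=1$ for the single-row confinement, and $\frac{d}{d\lambda}\left(\lambda^{p}\mp\lambda^{-p}\right)=p\lambda^{-1}\left(\lambda^{p}\pm\lambda^{-p}\right)\neq0$ on the relevant fibers for the double-zero count, so they do not constitute a gap.
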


\begin{proof}
The first part of this lemma about the dependence w.r.t. $Z$ of det$%
_{p}D_{\tau }(\lambda )$ has been already proven in Lemma 5.2 of our
previous paper \cite{MNP2017} while the second part of this lemma can be proven following
the proof given in Proposition 6.1 of the same paper. To adapt this
proof here, let us observe that the matrix $D_{\tau }(i^{a}q^{h+1/2})$ for $%
a\in \{0,1\}$ and $h\in \{0,...,p-1\}$ contains one row with two divergent
elements, i.e. $-$\textsc{a}$(\pm 1)$ and $-$\textsc{a}$(\pm i)$,
respectively for $a=0$ and $a=1$. Nevertheless the determinants det$%
_{p}D_{\tau }(i^{a}q^{h+1/2})$ are all finites for any $a\in \{0,1\}$ and $%
h\in \{0,...,p-1\}$ if $\tau (i^{b}q^{k+1/2})$ are finite for any $b\in
\{0,1\}$ and $k\in \{0,...,p-1\}$. Indeed, by the symmetries $\lambda
^{p}\rightarrow 1/\lambda ^{p}$ and $\lambda \rightarrow -\lambda $ all the
determinants det$_{p}D_{\tau }(q^{h+1/2})$ coincide as well as all the
determinants det$_{p}D_{\tau }(iq^{h+1/2})$. So that we have to prove our
statement for one value of $q^{h+1/2}$ and one value of $iq^{h+1/2}$. Now,
we can use the expansion of the determinant w.r.t. the central row: 
\begin{align}
\text{det}_{p} D_{\tau }(\lambda q^{1/2})& =\tau (\lambda )\text{det}%
_{p-1}D_{\tau ,(p+1)/2,(p+1)/2}(\lambda q^{1/2})+\frac{\text{\textsc{x}}%
(\lambda )\text{det}_{p-1}D_{\tau ,(p+1)/2,(p+1)/2-1}(\lambda q^{1/2})}{%
\lambda ^{2}-1/\lambda ^{2}}  \notag \\
& -\frac{\text{\textsc{x}}(1/\lambda )\text{det}_{p-1}D_{\tau
,(p+1)/2,(p+1)/2+1}(\lambda q^{1/2})}{\lambda ^{2}-1/\lambda ^{2}},
\end{align}%
where%
\begin{equation}
\text{\textsc{x}}(\lambda )=\left( \lambda ^{2}-\frac{1}{\lambda ^{2}}%
\right) \text{\textsc{a}}(\lambda ),
\end{equation}%
and $D_{\tau ,i,j}(\lambda )$ denotes the $(p-1)\times (p-1)$ matrix
obtained from $D_{\tau }(\lambda )$ removing the row $i$ and the column $j$.
From the identity:%
\begin{equation}
\text{det}_{p-1}D_{\tau ,(p+1)/2,(p+1)/2+1}(\lambda q^{1/2})=\text{det}%
_{p-1}D_{\tau ,(p+1)/2,(p+1)/2-1}(q^{1/2}/\lambda ),
\end{equation}%
and the regularity of these two determinants for $\lambda \rightarrow \pm 1$
and $\lambda \rightarrow \pm i$, it follows that det$_{p}D_{\tau
}(i^{a}q^{1/2})$ are finites too for $a\in \{0,1\}$. Now, our statement
about the Laurent polynomiality of degree $\mathsf{N}+2$ of det$_{p}D_{\tau
}(\lambda )$ w.r.t. $Z$ follows from the symmetries and from the fact that $%
\tau (\lambda )$ and \textsc{x}$(\lambda )$ are Laurent polynomials in $%
\lambda $ of degree $2\mathsf{N}+4$.
\end{proof}

Let us introduce the following notations:%
\begin{eqnarray}
\text{\textsc{a}}_{\infty } &=&\lim_{\lambda \rightarrow +\infty }\lambda
^{-2(\mathsf{N}+2)}\text{\textsc{a}}(\lambda )=\frac{(-1)^{\mathsf{N}%
+1}\kappa _{+}\kappa _{-}\alpha _{-}\beta _{-}\alpha _{+}\prod_{n=1}^{%
\mathsf{N}}b_{n}c_{n}}{q^{3+\mathsf{N}}\beta _{+}\left( \zeta _{+}-1/\zeta
_{+}\right) \left( \zeta _{-}-1/\zeta _{-}\right) }, \label{55a}\\
\text{\textsc{a}}_{0} &=&\lim_{\lambda \rightarrow 0}\lambda ^{2(\mathsf{N}%
+2)}\text{\textsc{a}}(\lambda )=\frac{(-1)^{\mathsf{N}+1}q^{3+\mathsf{N}%
}\kappa _{+}\kappa _{-}\beta _{+}\prod_{n=1}^{\mathsf{N}}a_{n}d_{n}}{\alpha
_{-}\beta _{-}\alpha _{+}\left( \zeta _{+}-1/\zeta _{+}\right) \left( \zeta
_{-}-1/\zeta _{-}\right) }, 
\end{eqnarray}%
and%
\begin{equation}
F(\lambda )=\prod_{b=1}^{2\mathsf{N}}\left( \frac{\lambda ^{p}}{\left( \zeta
_{b}^{(0)}\right) ^{p}}-\frac{\left( \zeta _{b}^{(0)}\right) ^{p}}{\lambda
^{p}}\right) ,
\label{55c}
\end{equation}

where we recall that $\zeta_-,\kappa_-,\tau_-,\alpha_-,\beta_-,\zeta_+,\kappa_+,\tau_+,\alpha_+$ and $\beta_+$ are the boundary parameters (see \eqref{Kmatuno},\eqref{Kmatuno2} and \eqref{Def-alfa-beta}), while $a_n,b_n,c_n$ and $d_n$ are the bulk parameters, see \eqref{laxopppp}. Then the following results hold:

\begin{proposition}
\label{General F-Eq}For almost all the values of the boundary-bulk
parameters, $\mathcal{T}(\lambda )$ has simple spectrum and $\tau (\lambda )$
of the form $(\ref{set-tau})$ is an element of $\Sigma _{\mathcal{T}}$ (the set of the eigenvalues of $\mathcal{T}(\lambda )$) if
and only if det$_{p}D_{\tau }(\lambda )$ is a Laurent polynomial of degree $%
\mathsf{N}+2$ in the variable $Z$ (see \eqref{defZ}) which satisfies the following functional
equation:%
\begin{equation}
\text{det}_{p}D_{\tau }(\lambda )=F(\lambda )\left( \lambda ^{2p}-\frac{1}{%
\lambda ^{2p}}\right) ^{2}\prod_{k=0}^{p-1}(\tau _{\infty }-(q^{k}\text{%
\textsc{a}}_{\infty }+q^{-k}\text{\textsc{a}}_{0})).  \label{Func-EQ-1}
\end{equation}
\end{proposition}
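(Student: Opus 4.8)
The plan is to pin down $\det_p D_\tau(\lambda)$ completely as a Laurent polynomial in $Z$ (see \eqref{defZ}) and then read off both implications from the discrete characterisation of Theorem~\ref{discrete-SoV-ch}. By the preceding Lemma, since any $\tau$ of the form \eqref{set-tau} is a Laurent polynomial of degree $\mathsf{N}+2$ in $\Lambda$, the function $\det_p D_\tau(\lambda)$ is a Laurent polynomial of degree $\mathsf{N}+2$ in $Z$, and such a polynomial is fixed by its $\mathsf{N}+2$ roots together with its leading coefficient. First I would dispose of the easy (converse) direction: if $\det_p D_\tau$ obeys \eqref{Func-EQ-1}, then evaluating at $\lambda=\zeta_a^{(0)}$ and using that the $b=a$ factor of $F(\lambda)$ in \eqref{55c} vanishes gives $\det_p D_\tau(\zeta_a^{(0)})=0$ for all $a$, so $\tau\in\Sigma_{\mathcal T}$ by \eqref{OpenCyI-Functional-eq}.

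For the forward direction, assume $\tau\in\Sigma_{\mathcal T}$. Theorem~\ref{discrete-SoV-ch} gives $\det_p D_\tau(\zeta_a^{(0)})=0$, i.e. $\det_p D_\tau$ vanishes at the $\mathsf{N}$ values $Z=(\zeta_a^{(0)})^{2p}+(\zeta_a^{(0)})^{-2p}$. Since the $2\mathsf{N}$ zeros $\zeta_b^{(0)}$ pair up as $\zeta_{a+\mathsf{N}}^{(0)}=1/\zeta_a^{(0)}$, a direct rearrangement shows $F(\lambda)=\prod_{a=1}^{\mathsf{N}}(Z-(\zeta_a^{(0)})^{2p}-(\zeta_a^{(0)})^{-2p})$, a degree-$\mathsf{N}$ polynomial in $Z$ vanishing exactly at these points. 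Hence $F$ divides $\det_p D_\tau$ and we may write $\det_p D_\tau(\lambda)=F(\lambda)\,G(Z)$ with $\deg_Z G=2$; it then remains to identify the three coefficients of $G$.

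The leading coefficient I would extract from the $\lambda\to\infty$ asymptotics. In row $j$ the entries $\tau(q^j\lambda)$, $-\text{\textsc{a}}(1/(q^j\lambda))$ and $-\text{\textsc{a}}(q^j\lambda)$ all behave as $(q^j\lambda)^{2(\mathsf{N}+2)}$ times the constants $\tau_\infty$, $-\text{\textsc{a}}_0$ and $-\text{\textsc{a}}_\infty$ respectively (see \eqref{55a}); factoring $(q^j\lambda)^{2(\mathsf{N}+2)}$ out of each row and using $q^p=1$ reduces the leading part of $\det_p D_\tau$ to $\lambda^{2p(\mathsf{N}+2)}$ times the determinant of the constant cyclic (circulant) matrix with diagonal $\tau_\infty$, superdiagonal $-\text{\textsc{a}}_0$ and subdiagonal $-\text{\textsc{a}}_\infty$. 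Its eigenvalues are $\tau_\infty-\text{\textsc{a}}_0\,\omega^{k}-\text{\textsc{a}}_\infty\,\omega^{-k}$ with $\omega=e^{2\pi i/p}$, and since the powers $q^{k}$ exhaust the $p$-th roots of unity the product over $k$ equals $\prod_{k=0}^{p-1}(\tau_\infty-(q^{k}\text{\textsc{a}}_\infty+q^{-k}\text{\textsc{a}}_0))$. This is exactly the constant $C$ multiplying the right-hand side of \eqref{Func-EQ-1}, so the leading coefficient of $G$ is $C$ and only the two roots of $G$ remain to be located.

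These two roots are the hard part, and I expect them to be the main obstacle. They must sit at $Z=\pm2$, i.e. $\det_p D_\tau(\lambda)=0$ whenever $\lambda^{2p}=\pm1$, which is precisely the statement that $(\lambda^{2p}-\lambda^{-2p})^2=Z^2-4$ divides $\det_p D_\tau/F$. To see this I would rewrite the cyclic tridiagonal determinant through the $2\times2$ monodromy $\mathcal{M}(\lambda)=T_{p-1}\cdots T_0$ of the associated three-term recursion, obtaining $\det_p D_\tau=\pm\,\overline{\text{\textsc{a}}}(1/\lambda)\,\det(\mathcal{M}-\mathbb{I})$ with $\det\mathcal{M}=\overline{\text{\textsc{a}}}(\lambda)/\overline{\text{\textsc{a}}}(1/\lambda)$ and $\overline{\text{\textsc{a}}}(\mu)=\prod_{j=0}^{p-1}\text{\textsc{a}}(q^{j}\mu)$. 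Because $\text{\textsc{a}}$ is even, $\overline{\text{\textsc{a}}}$ is a function of $\mu^{2p}$, whence $\det\mathcal{M}=1$ at $\lambda^{2p}=\pm1$ and the vanishing reduces to $\tr\mathcal{M}=2$ there. Establishing this last equality is the genuine computation: it is not forced by the $\lambda\to1/\lambda$ symmetry alone but requires the explicit value of the averages $\overline{\text{\textsc{a}}}$ — the ``inhomogeneous term'' computed in our SoV construction — together with the quantum-determinant relation satisfied by $\text{\textsc{a}}$, and it follows the lines of Proposition~6.1 of \cite{MNP2017}. With the two roots $Z=\pm2$ and the leading coefficient $C$ in hand, $G(Z)=C(Z^2-4)=C(\lambda^{2p}-\lambda^{-2p})^2$, which together with $\det_p D_\tau=F\,G$ is exactly \eqref{Func-EQ-1}.
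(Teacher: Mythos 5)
Your skeleton is the same as the paper's proof: the converse direction follows by evaluating \eqref{Func-EQ-1} at $\lambda =\zeta _{a}^{(0)}$, where $F$ vanishes, and invoking \eqref{OpenCyI-Functional-eq}; the rearrangement $F(\lambda )=\prod_{a=1}^{\mathsf{N}}\left( Z-(\zeta _{a}^{(0)})^{2p}-(\zeta _{a}^{(0)})^{-2p}\right) $ correctly converts the $\mathsf{N}$ SoV zeros into a degree-$\mathsf{N}$ factor in $Z$; and your circulant computation of the leading coefficient is equivalent to the paper's, which identifies $\Delta _{\infty }$ as a degree-$p$ polynomial in $\tau _{\infty }$ with roots $q^{k}\text{\textsc{a}}_{\infty }+q^{-k}\text{\textsc{a}}_{0}$ (the row factors $x^{j}$ drop out of the determinant since $x^{p(p-1)/2}=q^{(\mathsf{N}+2)p(p-1)}=1$).

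The genuine gap is exactly the step you yourself flag as ``the genuine computation'' and then defer: the vanishing of $\text{det}_{p}D_{\tau }(\lambda )$ at $Z=\pm 2$, i.e.\ at $\lambda =\pm i^{a}q^{h+1/2}$, is never established, so the forward direction is incomplete precisely at its core. Moreover, the proposed reduction is unsound as stated, because $\text{\textsc{a}}(\lambda )$ has poles at $\lambda =\pm 1,\pm i$ (coming from the factor $\mathsf{a}_{+}$) and zeros at $\lambda =\pm i^{a}q^{-1/2}$: at the relevant points one row of $D_{\tau }$ contains two divergent entries, the averages $\overline{\text{\textsc{a}}}(\lambda ^{\pm 1})=\prod_{j}\text{\textsc{a}}(q^{j}\lambda ^{\pm 1})$ contain simultaneously a divergent and a vanishing factor, and both your factorization $\text{det}_{p}D_{\tau }=\pm \,\overline{\text{\textsc{a}}}(1/\lambda )\det (\mathcal{M}-\mathbb{I})$ and the assertion that $\det \mathcal{M}=1$ there ``because $\overline{\text{\textsc{a}}}$ is a function of $\lambda ^{2p}$'' amount to evaluating functions at points where individual factors blow up. Making this rigorous requires a regularization --- which is exactly what the paper supplies by passing to $\bar{D}_{\tau }$, i.e.\ replacing the first row by the sum of the first and last rows divided by $(\lambda ^{2}-1/\lambda ^{2})$ and multiplying row $(p+1)/2$ by $(\lambda ^{2}-1/\lambda ^{2})$ --- and the resulting vanishing is not a structural property of cyclic tridiagonal determinants: it holds only because the interpolation form \eqref{set-tau} forces $\tau (\pm i^{a}q^{1/2})=\text{\textsc{a}}(\pm i^{a}q^{1/2})$ while $\text{\textsc{a}}(\pm i^{a}q^{-1/2})=0$, and these identities are what make the regularized rows linearly dependent. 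Without carrying out that limit computation (in your language, proving the finiteness and the required value of $\tr \mathcal{M}$ at these points), your argument only yields $\text{det}_{p}D_{\tau }=F(\lambda )\,G(Z)$ with $G$ quadratic of known leading coefficient, which is strictly weaker than \eqref{Func-EQ-1}.
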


\begin{proof}
The SoV characterization of the spectrum implies that $\tau (\lambda )\in
\Sigma _{\mathcal{T}}$ if and only if it holds:%
\begin{equation}
\text{det}_{p}D_{\tau }(\zeta _{a}^{(0)})=0,\text{ }\forall a\in \{1,...,%
\mathsf{N}\},
\end{equation}%
and $\tau (\lambda )$ has the form $(\ref{set-tau})$. In the previous lemma
we have shown that det$_{p}D_{\tau }(\lambda )$ is a Laurent polynomial of
degree $\mathsf{N}+2$ in $Z$, here we show that from $\tau (\lambda )$ of
form $(\ref{set-tau})$ it follows the identities:%
\begin{equation}
\lim_{\lambda \rightarrow \pm 1,\pm i}\text{det}_{p}D_{\tau }(\lambda
q^{1/2+h})=0\text{ \ \ }\forall h\in \{0,...,p-1\}.
\end{equation}
For the symmetry it is enough to consider the above limit in the case $h=0$.
Let us denote with $\bar{D}_{\tau }(\lambda q^{1/2})$ the matrix whose first
row is the sum of the first and the last row of $D_{\tau }(\lambda q^{1/2})$
divided for $(\lambda ^{2}-1/\lambda ^{2})$ and whose row $\left( p+1\right)
/2$ is the row $\left( p+1\right) /2$ of $D_{\tau }(\lambda q^{1/2})$
multiplied for $(\lambda ^{2}-1/\lambda ^{2})$ while all the others rows of $%
\bar{D}_{\tau }(\lambda q^{1/2})$ and $D_{\tau }(\lambda q^{1/2})$ coincide.
Clearly it holds:%
\begin{equation}
\text{det}_{p}\bar{D}_{\tau }(\lambda q^{1/2})=\text{det}_{p}D_{\tau
}(\lambda q^{1/2}),
\end{equation}%
so that we can compute the limits directly for det$_{p}\bar{D}_{\tau
}(\lambda q^{1/2})$. The interesting point is that now all the rows of the
matrix $\bar{D}_{\tau }(\lambda q^{1/2})$ are finites in the limits $\lambda
\rightarrow \pm 1,\pm i$, this is a consequence of the identities:%
\begin{equation}
\tau (\pm i^{a}q^{1/2})=\text{\textsc{a}}(\pm i^{a}q^{1/2}),\text{ \ \ 
\textsc{a}}(\pm i^{a}q^{-1/2})=0,\text{ }\forall a\in \{0,1\}.
\end{equation}%
Explicitly, we have that the nonzero elements of the rows $1$, $\left(
p+1\right) /2$ and $p$ are:%
\begin{eqnarray}
&&\left[ \bar{D}_{\tau }(\pm i^{a}q^{1/2})\right] _{1,1}\left. =\right.
r_{a,\pm },\text{ }\left[ \bar{D}_{\tau }(\pm i^{a}q^{1/2})\right]
_{1,2}\left. =\right. s_{a,\pm }, \\
&&\left[ \bar{D}_{\tau }(\pm i^{a}q^{1/2})\right] _{1,p-1}\left. =\right.
-s_{a,\pm },\text{ }\left[ \bar{D}_{\tau }(\pm i^{a}q^{1/2})\right]
_{1,p}\left. =\right. -r_{a,\pm }, \\
&&\left[ \bar{D}_{\tau }(\pm i^{a}q^{1/2})\right] _{(p+1)/2,(p+1)/2-1}\left.
=\right. -\text{\textsc{x}}(\pm i^{a}),\text{ } \\
&&\left[ \bar{D}_{\tau }(\pm i^{a}q^{1/2})\right] _{(p+1)/2,(p+1)/2+1}\left.
=\right. \text{\textsc{x}}(\pm i^{a}), \\
&&\left[ \bar{D}_{\tau }(\pm i^{a}q^{1/2})\right] _{p,1}\left. =\right.
-\tau (\pm i^{a}q^{1/2}),\text{ }\left[ \bar{D}_{\tau }(\pm i^{a}q^{1/2})%
\right] _{p,p}\left. =\right. \tau (\pm i^{a}q^{1/2}),
\end{eqnarray}%
where we have defined:%
\begin{equation}
r_{a,\pm }=(-1)^{a}\lim_{\lambda \rightarrow \pm 1}\frac{\tau
(i^{a}q^{1/2}\lambda )-\text{\textsc{a}}(i^{a}q^{1/2}/\lambda )}{\lambda
^{2}-1/\lambda ^{2}},\text{ \ }s_{a,\pm }=(-1)^{a}\lim_{\lambda \rightarrow
\pm 1}\frac{\text{\textsc{a}}(i^{-a}q^{-1/2}/\lambda )}{\lambda
^{2}-1/\lambda ^{2}}.
\end{equation}%
The remaining rows of $\bar{D}_{\tau }(\pm i^{a}q^{1/2})$ produce the
tridiagonal part of this matrix. Then, it is possible to prove that this
matrix has linear dependent rows; so that det$_{p}\bar{D}_{\tau }(\pm
i^{a}q^{1/2})=0$. Finally, we can compute the following asymptotic formulae:%
\begin{align}
\Delta _{\infty }& \equiv \lim_{\lambda \rightarrow \infty }\lambda ^{-2p(%
\mathsf{N}+2)}\text{det}_{p}D_{\tau }(\lambda )=\text{det}_{p}\left[
\lim_{\lambda \rightarrow \infty }\lambda ^{-2(\mathsf{N}+2)}D_{\tau
}(\lambda )\right] \\
& =\lim_{\lambda \rightarrow 0}\lambda ^{2p(\mathsf{N}+2)}\text{det}%
_{p}D_{\tau }(\lambda )=\text{det}_{p}\left[ \lim_{\lambda \rightarrow
0}\lambda ^{2(\mathsf{N}+2)}D_{\tau }(\lambda )\right] ^{t} \\
& =\text{det}_{p}%
\begin{pmatrix}
\tau _{\infty } & -\text{\textsc{a}}_{0} & 0 & \cdots & 0 & -\text{\textsc{a}%
}_{\infty } \\ 
-x\text{\textsc{a}}_{\infty } & x\tau _{\infty } & -x\text{\textsc{a}}_{0} & 
0 & \cdots & 0 \\ 
0 & -x^{2}\text{\textsc{a}}_{\infty } & x^{2}\tau _{\infty } & -x^{2}\text{%
\textsc{a}}_{0} & \ddots & \vdots \\ 
\vdots &  & \ddots & \ddots & 0 & 0 \\ 
0 & \ldots & 0 & -x^{2l-1}\text{\textsc{a}}_{\infty } & x^{2l-1}\tau
_{\infty } & -x^{2l-1}\text{\textsc{a}}_{0} \\ 
-x^{2l}\text{\textsc{a}}_{0} & 0 & \ldots & 0 & -x^{2l}\text{\textsc{a}}%
_{\infty } & x^{2l}\tau _{\infty }%
\end{pmatrix}%
,
\end{align}%
where we have denoted with $^{t}$ the transpose of the matrix and $x=q^{2(%
\mathsf{N}+2)}$. We have that $\Delta _{\infty }$ is a degree $p$ polynomial
in $\tau _{\infty }$ whose zeros are known from the identities:%
\begin{equation}
\left. \Delta _{\infty }\right\vert _{\tau _{\infty }=q^{k}\text{\textsc{a}}%
_{\infty }+q^{-k}\text{\textsc{a}}_{0}}=0\text{ \ }\forall k\in
\{0,...,p-1\},
\end{equation}%
so that we get:%
\begin{equation}
\Delta _{\infty }=\prod_{k=0}^{p-1}(\tau _{\infty }-(q^{k}\text{\textsc{a}}%
_{\infty }+q^{-k}\text{\textsc{a}}_{0})).
\end{equation}%
This means that we have determined det$_{p}D_{\tau }(\lambda )$ in $\mathsf{N%
}+2$ different values of $Z$ together with the asymptotic for $Z\rightarrow
\infty $. From which the characterization $(\ref{Func-EQ-1})$ trivially
follows.
\end{proof}

The discrete characterization of the spectrum given in Theorem \ref%
{discrete-SoV-ch} can be reformulated in terms of Baxter's type
T-Q functional equations and the eigenstates admit an algebraic Bethe ansatz
like reformulation, as we show in the next theorem. These type of
reformulations of the spectrum holds for several models once they admit SoV
description, see for example \cite%
{OpenCyNT-10,OpenCyN-10,OpenCyGN12,OpenCyKitMN14,OpenCyNicT15-2,OpenCyLevNT15,OpenCyNicT15}%
.

In the following we denote with $Q(\lambda )$ a polynomial in $\Lambda=\lambda^2+\frac{1}{\lambda^2}$ of
degree $\mathsf{N}_{Q}$ of the form:%
\begin{equation}
Q(\lambda )=\prod_{b=1}^{\mathsf{N}_{Q}}\left( \Lambda -\Lambda _{b}\right) .
\label{Q-form}
\end{equation}

\begin{theorem}
For almost all the values of the boundary-bulk parameters such that:%
\begin{equation}
\tau _{\infty }\neq q^{-k}\textsc{a}_{\infty }+q^{k}\textsc{a}_{0}\text{ \ }%
\forall k\in \{0,...,p-1\},  \label{Most-general.case}
\end{equation}%
$\tau (\lambda )\in \Sigma _{\mathcal{T}}$ (the set of the eigenvalues of $\mathcal{T}(\lambda)$) if and only if $\tau (\lambda )$
is an entire function and there exists and is unique a polynomial $Q(\lambda
)$ of the form $\left( \ref{Q-form}\right) $ with $\mathsf{N}_{Q}=\left(
p-1\right) \mathsf{N}$, satisfying the following functional equation:%
\begin{equation}
\tau (\lambda )Q(\lambda )=\text{\textsc{a}}(\lambda )Q(\lambda /q)+\text{%
\textsc{a}}(1/\lambda )Q(\lambda q)+\left[ \tau _{\infty }-(q^{-\mathsf{N}%
_{Q}}\text{\textsc{a}}_{\infty }+q^{\mathsf{N}_{Q}}\text{\textsc{a}}_{0})%
\right] \left( \Lambda ^{2}-X^{2}\right) F(\lambda ),  \label{Inho-Baxter-EQ}
\end{equation}%
and the conditions:%
\begin{equation}
(Q(\zeta _{a}^{\left( 0\right) }),...,Q(\zeta _{a}^{\left( p-1\right)
}))\neq (0,...,0)\text{ \ \ }\forall a\in \{1,...,\mathsf{N}\}\text{.}
\label{Q-condition}
\end{equation}
We recall that $\textsc{a}_{\infty },\textsc{a}_{0}$ and $F(\lambda)$ are defined in \eqref{55a}-\eqref{55c} and that $X=q+1/q$ \eqref{defX}.
\end{theorem}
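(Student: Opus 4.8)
The plan is to deduce this functional reformulation from the two characterizations already in hand: the discrete SoV conditions of Theorem \ref{discrete-SoV-ch} and the determinant functional equation of Proposition \ref{General F-Eq}. The pivotal remark is that, since $q^{p}=1$, the objects appearing in $\left( \ref{Inho-Baxter-EQ}\right) $ are cyclic along the orbit $\{q^{j}\lambda \}_{j=0}^{p-1}$: one has $Q(q^{p}\lambda )=Q(\lambda )$ and $F(q^{j}\lambda )=F(\lambda )$, because $F$ in $\left( \ref{55c}\right) $ depends only on $\lambda ^{p}$. Writing $\left( \ref{Inho-Baxter-EQ}\right) $ at the $p$ points $q^{j}\lambda $ therefore produces exactly the cyclic linear system
\begin{equation}
D_{\tau }(\lambda )\,\vec{Q}(\lambda )=\vec{\mathcal{I}}(\lambda ),\qquad \vec{Q}(\lambda )=\left( Q(\lambda ),\dots ,Q(q^{p-1}\lambda )\right) ^{t},
\end{equation}
with $D_{\tau }$ the matrix $\left( \ref{D-matrix}\right) $ and inhomogeneous vector $\vec{\mathcal{I}}(\lambda )=c\,F(\lambda )\,\vec{v}(\lambda )$, where $c=\tau _{\infty }-(q^{-\mathsf{N}_{Q}}\text{\textsc{a}}_{\infty }+q^{\mathsf{N}_{Q}}\text{\textsc{a}}_{0})$ and the $j$-th entry of $\vec{v}$ is $(\lambda ^{2}q^{2j}+q^{-2j}/\lambda ^{2})^{2}-X^{2}$. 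This identity is the bridge between the functional equation and the matrix $D_{\tau }$, and it is where the cyclicity $q^{p}=1$ is used.

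For the implication assuming a $Q$ of the stated form exists, I would evaluate the system at the separate-variable points $\zeta _{a}^{(0)}$. Every $\zeta _{a}^{(0)}$ is a zero of $F$ and $F$ is invariant under $\lambda \rightarrow q\lambda $, so $\vec{\mathcal{I}}(\zeta _{a}^{(0)})=0$ and the system collapses to $D_{\tau }(\zeta _{a}^{(0)})\,(Q(\zeta _{a}^{(0)}),\dots ,Q(\zeta _{a}^{(p-1)}))^{t}=0$. The non-triviality $\left( \ref{Q-condition}\right) $ then forces $\det _{p}D_{\tau }(\zeta _{a}^{(0)})=0$ for every $a$. Moreover, the symmetries $\lambda \rightarrow \pm \lambda ,1/\lambda $ of $\text{\textsc{a}}$, $Q$ and $F$ transfer to $\tau $ through $\left( \ref{Inho-Baxter-EQ}\right) $ (using $Q(1/\lambda )=Q(\lambda )$ and the reciprocal-closedness of the set $\{\zeta _{b}^{(0)}\}$), and matching the $\lambda \rightarrow \infty ,0$ asymptotics fixes the degree of $\tau $ in $\Lambda $ to be $\mathsf{N}+2$; together with the hypothesis that $\tau $ is entire this puts $\tau $ in the interpolation form $\left( \ref{set-tau}\right) $. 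Theorem \ref{discrete-SoV-ch} then gives $\tau \in \Sigma _{\mathcal{T}}$.

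For the converse, given $\tau \in \Sigma _{\mathcal{T}}$ Theorem \ref{discrete-SoV-ch} supplies $\det _{p}D_{\tau }(\zeta _{a}^{(0)})=0$ with one-dimensional kernels $(q_{\tau ,a}^{(0)},\dots ,q_{\tau ,a}^{(p-1)})$, while Proposition \ref{General F-Eq} evaluates $\det _{p}D_{\tau }(\lambda )$ as in $\left( \ref{Func-EQ-1}\right) $. I would then define $Q$ by Cramer's rule applied to the cyclic system,
\begin{equation}
Q(\lambda )=\frac{c\,\det _{p}\widetilde{D}_{\tau }(\lambda )}{(\lambda ^{2p}-\lambda ^{-2p})^{2}\,\Delta _{\infty }},
\end{equation}
where $\widetilde{D}_{\tau }$ is $D_{\tau }$ with its first column replaced by $\vec{v}$; the factor $F(\lambda )$ cancels between numerator and $\left( \ref{Func-EQ-1}\right) $, and condition $\left( \ref{Most-general.case}\right) $ guarantees both $\Delta _{\infty }=\prod_{k=0}^{p-1}(\tau _{\infty }-(q^{k}\text{\textsc{a}}_{\infty }+q^{-k}\text{\textsc{a}}_{0}))\neq 0$ and $c\neq 0$ (using $\mathsf{N}_{Q}=(p-1)\mathsf{N}\equiv -\mathsf{N}\ (\mathrm{mod}\ p)$). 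The cyclic covariance $D_{\tau }(q\lambda )=SD_{\tau }(\lambda )S^{-1}$ under the shift $S$, together with $\vec{\mathcal{I}}(q\lambda )=S\vec{\mathcal{I}}(\lambda )$, ensures that this single function reproduces all $p$ Cramer components, so that the whole system, hence $\left( \ref{Inho-Baxter-EQ}\right) $, holds; and the non-triviality of the kernels yields $\left( \ref{Q-condition}\right) $.

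The step I expect to be the main obstacle is proving that this $Q$ is genuinely a Laurent polynomial in $\Lambda $ of the monic form $\left( \ref{Q-form}\right) $ with exactly $\mathsf{N}_{Q}=(p-1)\mathsf{N}$, i.e. that $(\lambda ^{2p}-\lambda ^{-2p})^{2}$ divides $\det _{p}\widetilde{D}_{\tau }(\lambda )$ and that the degree in $Z$ matches. The zeros of this denominator lie at the $4p$ points $\pm q^{h+1/2}$ and $\pm iq^{h+1/2}$, precisely where Proposition \ref{General F-Eq} already produced double zeros of $\det _{p}D_{\tau }$; I would show the column-modified determinant inherits these double zeros by the same vanishing/row-dependence argument (and the regularity identities $\tau (\pm i^{a}q^{1/2})=\text{\textsc{a}}(\pm i^{a}q^{1/2})$, $\text{\textsc{a}}(\pm i^{a}q^{-1/2})=0$) used there, so that $Q$ is regular. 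The degree count and the specific value of $c$ then follow from matching the $\lambda \rightarrow \infty ,0$ asymptotics, the choice $\mathsf{N}_{Q}=(p-1)\mathsf{N}$ being exactly what balances the three terms of $\left( \ref{Inho-Baxter-EQ}\right) $ at leading order. Finally, uniqueness comes from simplicity of the spectrum: two monic solutions of degree $(p-1)\mathsf{N}$ would differ by a lower-degree solution of the homogeneous T-Q equation, which is excluded; the genericity and simple-spectrum clauses are inherited from Theorem \ref{discrete-SoV-ch} and Proposition \ref{General F-Eq}.
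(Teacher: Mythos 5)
Your strategy coincides with the paper's: you recast \eqref{Inho-Baxter-EQ} as the cyclic linear system governed by $D_{\tau}(\lambda)$ (using $q^{p}=1$ and the $q$-periodicity of $F$), you get the forward implication by evaluating at $\zeta_{a}^{(0)}$ where $F$ vanishes, and you get the converse by Cramer's rule combined with \eqref{Func-EQ-1}, the shift covariance of $D_{\tau}$, and the symmetry $X_{0}(\lambda)=X_{0}(1/\lambda)$, with uniqueness from invertibility of $D_{\tau}(\lambda)$ away from finitely many points. However, two steps of your converse direction have genuine gaps. The first concerns the polynomiality of $Q$. You correctly identify the divisibility of the column-modified determinant by $\left(\lambda^{2p}-\lambda^{-2p}\right)^{2}$ as the crux, but your proposed fix --- ``the same vanishing/row-dependence argument'' --- cannot work: exhibiting linearly dependent rows shows the determinant vanishes at the $4p$ points $\pm i^{a}q^{h+1/2}$, not that its derivative does, so it only yields simple zeros. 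For $\det_{p}D_{\tau}$ itself the square came for free, because that determinant is a function of $Z$ and a simple zero at $Z=\pm 2$ is automatically a double zero in $\lambda^{2p}$; but $\det_{p}D_{\tau}^{(1)}$ is only a function of $\Lambda$ (the replaced column destroys the $\lambda\rightarrow q\lambda$ symmetry), and the $4p$ points collapse to just $p+1$ distinct values of $\Lambda$, so simple vanishing yields a factor of degree $p+1$ in $\Lambda$, whereas $\left(\lambda^{2p}-\lambda^{-2p}\right)^{2}=Z^{2}-4$ has degree $2p$ in $\Lambda$. The paper closes this deficit of degree $p-1$ by a bootstrap absent from your proposal: $R_{\tau}=\det_{p}D_{\tau}^{(1)}$ itself satisfies an inhomogeneous T-Q-type functional equation, from which one iteratively derives the vanishing of $P_{\tau}=R_{\tau}/(\lambda^{2p}-\lambda^{-2p})$ at $\pm i^{a}q^{\pm 1}$, then at $\pm i^{a}$, then at $\pm i^{a}q^{\pm 2b}$ for $b\leq (p-3)/2$, and finally at $\pm i^{a}q^{\pm 1/2}$; only then is the factorization \eqref{Full-zero structure} established and $Q$ a genuine polynomial of the form \eqref{Q-form}.

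The second gap is condition \eqref{Q-condition}: your claim that ``the non-triviality of the kernels yields \eqref{Q-condition}'' is a non sequitur. At $\lambda=\zeta_{a}^{(0)}$ the system becomes homogeneous, so the vector $\left(Q(\zeta_{a}^{(0)}),\dots,Q(\zeta_{a}^{(p-1)})\right)$ lies in the one-dimensional kernel of $D_{\tau}(\zeta_{a}^{(0)})$; but nothing in the Cramer construction prevents it from being the \emph{zero} element of that kernel, i.e.\ the polynomial $Q$ could a priori vanish on the whole orbit $\{\zeta_{a}^{(h)}\}_{h}$. The paper needs a separate, substantial argument here: since $Q$ depends continuously on the boundary-bulk parameters, it suffices to exhibit one parameter choice where the vector is nonzero; this is done by imposing \eqref{Condi-bulk-quasi-L}, degenerating the bulk parameters via $\mu_{n_{k},-}=1/(q^{1+k}\mu_{a,+})$ and then $\mu_{a,-}\rightarrow q^{1-p}/\mu_{a,+}$ so that the coefficient \textsc{a} vanishes along the orbit, reducing $Q(\zeta_{a}^{(\bar{h})})$ to the explicit determinant \eqref{For-1}, which is shown to be nonzero by a row combination and an asymptotic analysis in $\mu_{a,+}$. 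Without this piece (or some replacement for it) the constructed $Q$ is not known to satisfy \eqref{Q-condition}, and the converse implication is not established.
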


\begin{proof}
Let us prove first that if it exists a $Q(\lambda )$ of the form $\left( \ref%
{Q-form}\right) $ with $\mathsf{N}_{Q}=\left( p-1\right) \mathsf{N}$
satisfying $\left( \ref{Q-condition}\right) $ and $\left( \ref%
{Inho-Baxter-EQ}\right) $ with $\tau (\lambda )$ an entire function, then $\tau (\lambda )\in \Sigma _{\mathcal{T}}$. The r.h.s of the equation $%
\left( \ref{Inho-Baxter-EQ}\right) $ is a Laurent polynomial in $\lambda $
as we have:%
\begin{equation}
\text{\textsc{a}}(\lambda )Q(\lambda /q)+\text{\textsc{a}}(1/\lambda
)Q(\lambda q)=\frac{\text{\textsc{x}}(\lambda )Q(q/\lambda )-\text{\textsc{x}%
}(1/\lambda )Q(\lambda q)}{\lambda ^{2}-1/\lambda ^{2}}
\end{equation}%
which is finite in the limits $\lambda \rightarrow \pm 1,$ $\lambda
\rightarrow \pm i$. So that the r.h.s. of $\left( \ref{Inho-Baxter-EQ}%
\right) $ is a polynomial of degree $p\mathsf{N}+2$ in $\Lambda $, as it is
invariant w.r.t. the transformations $\lambda \rightarrow -\lambda $ and $%
\lambda \rightarrow 1/\lambda $. Then, the assumption that $\tau (\lambda )$
is entire in $\lambda $ implies by the equation $\left( \ref{Inho-Baxter-EQ}%
\right) $ that $\tau (\lambda )$ is a polynomial in $\Lambda $ of the form $(%
\ref{set-tau})$ and that it satisfies the equations:%
\begin{equation}
\text{det}_{p}D_{\tau }(\zeta_{a}^{(0)})=0,\text{ }\forall a\in \{1,...,%
\mathsf{N}\},
\end{equation}%
thanks to $\left( \ref{Inho-Baxter-EQ}\right) $ and $\left( \ref{Q-condition}%
\right) $, so that we obtain by SoV characterization $\tau (\lambda )\in
\Sigma _{\mathcal{T}}$.

Let us now prove the reverse statement, i.e. we assume $\tau (\lambda )\in
\Sigma _{\mathcal{T}}$\ and we prove that there exists $Q(\lambda )$ of the
form $\left( \ref{Q-form}\right) $ with degree $\mathsf{N}_{Q}=\left(
p-1\right) \mathsf{N}$ satisfying $\left( \ref{Q-condition}\right) $ and $%
\left( \ref{Inho-Baxter-EQ}\right) $. Let us consider the system of
equations:%
\begin{equation}
D_{\tau }(\lambda )\left( 
\begin{array}{l}
X_{0}(\lambda ) \\ 
X_{1}(\lambda ) \\ 
\vdots \\ 
\vdots \\ 
X_{p-1}(\lambda )%
\end{array}%
\right) _{p\times 1}=\left[ \tau _{\infty }-(q^{-\mathsf{N}_{Q}}\text{%
\textsc{a}}_{\infty }+q^{\mathsf{N}_{Q}}\text{\textsc{a}}_{0})\right]
F(\lambda )\left( 
\begin{array}{l}
\Lambda _{0}^{2}-X^{2} \\ 
\Lambda _{1}^{2}-X^{2} \\ 
\vdots \\ 
\vdots \\ 
\Lambda _{p-1}^{2}-X^{2}%
\end{array}%
\right) _{p\times 1},  \label{Sys-eq-cyclic}
\end{equation}%
where we have used the notations:%
\begin{equation}
\Lambda _{i}=q^{2i}\lambda ^{2}+\frac{1}{q^{2i}\lambda ^{2}}.
\end{equation}%
From the condition $\tau (\lambda )\in \Sigma _{\mathcal{T}}$ and the
assumption of general values of the boundary-bulk parameters $\left( \ref%
{Most-general.case}\right) $, we know that det$_{p}D_{\tau }(\lambda )$ is a
non-zero polynomial, so defining:%
\begin{equation}
Z_{det_{p}D_{\tau }}=\left\{ \pm i^{a}q^{h+1/2},\pm \zeta _{n}^{\left(
h\right) }\text{ }\forall a\in \left\{ 0,1\right\} ,n\in \{1,...,2\mathsf{N}%
\},h\in \left\{ 0,...,p-1\right\} \right\} ,
\end{equation}%
we can solve the previous system of equations for any value of $\lambda \in 
\mathbb{C}\backslash Z_{det_{p}D_{\tau }}$ by the Cramer's rule:%
\begin{equation}
X_{i}(\lambda )=\frac{\tau _{\infty }-(q^{-\mathsf{N}_{Q}}\text{\textsc{a}}%
_{\infty }+q^{\mathsf{N}_{Q}}\text{\textsc{a}}_{0})}{\left( Z^{2}-4\right)
\prod_{k=0}^{p-1}\left[ \tau _{\infty }-(q^{k}\text{\textsc{a}}_{\infty
}+q^{-k}\text{\textsc{a}}_{0})\right] }\text{det}_{p}D_{\tau
}^{(i+1)}(\lambda ),  \label{Def-X_i}
\end{equation}%
where $D_{\tau }^{(i)}(\lambda )$ is the $p\times p$ matrix obtained
replacing the column $i$ by the column at the r.h.s. of $\left( \ref%
{Sys-eq-cyclic}\right) $. Let us now rewrite the system of equation $\left( %
\ref{Sys-eq-cyclic}\right) $ bringing the first element in the last one for
the two column vectors:%
\begin{equation}
\tilde{D}_{\tau }(\lambda )\left( 
\begin{array}{l}
X_{1}(\lambda ) \\ 
X_{2}(\lambda ) \\ 
\vdots \\ 
X_{p-1}(\lambda ) \\ 
X_{0}(\lambda )%
\end{array}%
\right) _{p\times 1}=\left[ \tau _{\infty }-(q^{-\mathsf{N}_{Q}}\text{%
\textsc{a}}_{\infty }+q^{\mathsf{N}_{Q}}\text{\textsc{a}}_{0})\right]
F(\lambda )\left( 
\begin{array}{l}
\Lambda _{1}^{2}-X^{2} \\ 
\Lambda _{2}^{2}-X^{2} \\ 
\vdots \\ 
\Lambda _{p-1}^{2}-X^{2} \\ 
\Lambda _{0}^{2}-X^{2}%
\end{array}%
\right) _{p\times 1},
\end{equation}%
where it is easy to see that $\tilde{D}_{\tau }(\lambda )=D_{\tau }(\lambda
q)$. Rescaling now the argument of the functions, we can rewrite it as it
follows:%
\begin{equation}
D_{\tau }(\lambda )\left( 
\begin{array}{l}
X_{1}(\lambda /q) \\ 
X_{2}(\lambda /q) \\ 
\vdots \\ 
X_{p-1}(\lambda /q) \\ 
X_{0}(\lambda /q)%
\end{array}%
\right) _{p\times 1}=\left[ \tau _{\infty }-(q^{-\mathsf{N}_{Q}}\text{%
\textsc{a}}_{\infty }+q^{\mathsf{N}_{Q}}\text{\textsc{a}}_{0})\right]
F(\lambda )\left( 
\begin{array}{l}
\Lambda _{0}^{2}-X^{2} \\ 
\Lambda _{1}^{2}-X^{2} \\ 
\vdots \\ 
\Lambda _{p-2}^{2}-X^{2} \\ 
\Lambda _{p-1}^{2}-X^{2}%
\end{array}%
\right) _{p\times 1},
\end{equation}%
so that it must hold:%
\begin{equation}
X_{i+1}(\lambda /q)=X_{i}(\lambda )\text{ \ }\forall \lambda \in \mathbb{C}%
\backslash Z_{det_{p}D_{\tau }}\text{, }i\in \left\{ 0,...,p-1\right\}
\end{equation}%
where we have used the notation $X_{p}(\lambda )\equiv X_{0}(\lambda )$, or
equivalently:%
\begin{equation}
X_{a}(\lambda )=X_{0}(\lambda q^{a})\text{ \ }\forall \lambda \in \mathbb{C}%
\backslash Z_{det_{p}D_{\tau }}\text{, }a\in \left\{ 1,...,p-1\right\} .
\end{equation}%
Let us observe now that, from their definition, $X_{a}(\lambda )$ are continuous
functions of $\lambda $ so the above equation must be indeed satisfied for
any value of $\lambda \in \mathbb{C}$. Moreover, from the identity:%
\begin{equation}
\text{det}_{p}D_{\tau }^{(1)}(\lambda )=\text{det}_{p}D_{\tau
}^{(1)}(1/\lambda ),
\end{equation}%
which we can prove by some simple exchange of rows and columns, and from the
fact that: 
\begin{equation}
\forall i\in \{0...p-1\},\ \lambda \rightarrow 1/\lambda \ \Rightarrow \
\Lambda _{i}\rightarrow \Lambda _{p-i}
\end{equation}%
we get the symmetry:%
\begin{equation}
X_{0}(\lambda )=X_{0}(1/\lambda ),
\end{equation}%
which together with the symmetry $X_{0}(\lambda )=X_{0}(-\lambda )$ implies
that $X_{0}(\lambda )$ is a function of $\Lambda $.

By using this last result we can rewrite the first equation of the system $%
\left( \ref{Sys-eq-cyclic}\right) $ as it follows $\forall \lambda \in \mathbb{C}\ $:%
\begin{equation}
\tau (\lambda )X_{0}(\lambda )-\text{%
\textsc{a}}(\lambda )X_{0}(\lambda /q)-\text{\textsc{a}}(1/\lambda
)X_{0}(\lambda q)=\left[ \tau _{\infty }-(q^{-\mathsf{N}_{Q}}\text{\textsc{a}%
}_{\infty }+q^{\mathsf{N}_{Q}}\text{\textsc{a}}_{0})\right] \left( \Lambda
^{2}-X^{2}\right) F(\lambda ).
\end{equation}

Let us now prove that det$_{p}D_{\tau }^{(1)}(\lambda )$ is indeed a
polynomial of degree $(p-1)\mathsf{N}+2p$ in $\Lambda $. Note that in the
following when we refer to a row $k\in \mathbb{Z}$ what we mean\ is the row $%
k^{\prime }\in \left\{ 1,...,p\right\} $ with $k^{\prime }=k$ mod$\,p$. In
the row $\bar{h}=(p+1)/2+h$ of $D_{\tau }^{(1)}(\pm i^{a}q^{1/2-h}\lambda )$
at least one of the three non-zero elements is diverging under the limit $%
\lambda \rightarrow \pm 1,\pm i$. We can proceed as done in the previous
theorem, we define the matrix $\bar{D}_{\tau ,h}^{(1)}(\lambda )$ as the
matrix with all the rows coinciding with those of $D_{\tau }^{(1)}(\lambda )$
except the row $h+1$, which is obtained by summing the row $h$ and $h+1$ of $%
D_{\tau }^{(1)}(\lambda )$ and dividing them by $((i^{a}q^{h-1/2}\lambda
)^{2}-1/(i^{a}q^{h-1/2}\lambda )^{2})$, and the row $\bar{h}$, obtained
multiplying the row $\bar{h}$ of $D_{\tau }^{(1)}(\lambda )$ by $%
((i^{a}q^{h-1/2}\lambda )^{2}-1/(i^{a}q^{h-1/2}\lambda )^{2})$. Clearly we have:%
\begin{equation}
\text{det}_{p}\bar{D}_{\tau ,h}^{(1)}(\lambda )=\text{det}_{p}D{}_{\tau
}^{(1)}(\lambda ),
\end{equation}%
and the interesting point is that now all the rows of the matrix $D_{\tau
,h}^{(1)}(\pm i^{a}q^{1/2-h}\lambda )$ are finite in the limits $\lambda
\rightarrow \pm 1,\pm i$. We have that the nonzero elements of the rows $h$, 
$h+1$ and $\bar{h}$ of $\bar{D}_{\tau ,h}^{(1)}(\pm i^{a}q^{1/2-h})$ reads:%
\begin{eqnarray}
&&\left[ \bar{D}_{\tau ,h}^{(1)}(\pm i^{a}q^{1/2-h})\right] _{h+1,h-1}\left.
=\right. -s_{a,\pm }(1-\delta _{h-1,1})+\delta _{h-1,1}\omega _{a}, \\
&&\left[ \bar{D}_{\tau ,h}^{(1)}(\pm i^{a}q^{1/2-h})\right] _{h+1,h}\left.
=\right. -r_{a,\pm }(1-\delta _{h,1})+\delta _{h,1}\omega _{a}, \\
&&\left[ \bar{D}_{\tau ,h}^{(1)}(\pm i^{a}q^{1/2-h})\right] _{h+1,h+1}\left.
=\right. r_{a,\pm }(1-\delta _{h+1,1})+\delta _{h+1,1}\omega _{a}, \\
&&\left[ \bar{D}_{\tau ,h}^{(1)}(\pm i^{a}q^{1/2-h})\right] _{h+1,h+2}\left.
=\right. s_{a,\pm }(1-\delta _{h+2,1})+\delta _{h+2,1}\omega _{a}, \\
&&\left[ \bar{D}_{\tau ,h}^{(1)}(\pm i^{a}q^{1/2-h})\right] _{\bar{h},\bar{h}%
-1}\left. =\right. -\text{\textsc{x}}(\pm i^{a})(-1)^{a}(1-\delta _{\bar{h}%
-1,1}),\text{ } \\
&&\left[ \bar{D}_{\tau ,h}^{(1)}(\pm i^{a}q^{1/2-h})\right] _{\bar{h},\bar{h}%
+1}\left. =\right. \text{\textsc{x}}(\pm i^{a})(-1)^{a}(1-\delta _{\bar{h}%
+1,1}), \\
&&\left[ \bar{D}_{\tau ,h}^{(1)}(\pm i^{a}q^{1/2-h})\right] _{h,h-1}\left.
=\right. \tau (\pm i^{a}q^{1/2})(1-\delta _{h+1,1}), \\
&&\left[ \bar{D}_{\tau ,h}^{(1)}(\pm i^{a}q^{1/2-h})\right] _{h,h+1}\left.
=\right. -\tau (\pm i^{a}q^{1/2})(1-\delta _{h+1,1}),
\end{eqnarray}%
where we have defined:%
\begin{equation}
\omega _{a}=(-1)^{a}\left( q^{2}-1/q^{2}\right) .
\end{equation}%
The remaining rows of $\bar{D}_{\tau ,h}^{(1)}(\pm i^{a}q^{1/2-h})$ produce
the tridiagonal part of this matrix. It is possible to prove than that for
any $h\in \left\{ 0,...,p-1\right\} $ the matrix $\bar{D}_{\tau
,h}^{(1)}(\pm i^{a}q^{1/2-h})$ has linear dependent rows; so that 
det$_{p}D_{\tau }^{(1)}(\pm i^{a}q^{1/2-h})=0$ and the following
factorization holds:%
\begin{equation}
\text{det}_{p}D_{\tau }^{(1)}(\lambda )=\left( \lambda ^{2p}-\frac{1}{%
\lambda ^{2p}}\right) P_{\tau }\left( \lambda \right) .
\end{equation}%
Here $P_{\tau }\left( \lambda \right) $ is a Laurent polynomial of degree $%
2(p-1)\mathsf{N}+2p$ in $\lambda $, with the following odd parity:%
\begin{equation}
P_{\tau }(1/\lambda )=-P_{\tau }\left( \lambda \right) ,
\end{equation}%
being det$_{p}D_{\tau }^{(1)}(\lambda )$ a polynomial of degree $(p-1)%
\mathsf{N}+2p$ in $\Lambda $. Here, we want to prove that in fact:%
\begin{equation}
\text{det}_{p}D_{\tau }^{(1)}(\lambda )=\left( \lambda ^{2p}-\frac{1}{%
\lambda ^{2p}}\right) ^{2}\bar{Q}_{\tau }\left( \lambda \right) ,
\label{Full-zero structure}
\end{equation}%
where $\bar{Q}_{\tau }\left( \lambda \right) $ is a polynomial of degree $%
(p-1)\mathsf{N}$ in $\Lambda $. In order to do so we write down the equation:%
\begin{align}
\tau (\lambda )R_{\tau }(\lambda )& =\text{\textsc{a}}(\lambda )R_{\tau
}(\lambda /q)+\text{\textsc{a}}(1/\lambda )R_{\tau }(\lambda q)  \notag \\
& +\left( Z^{2}-4\right) \left( \Lambda ^{2}-X^{2}\right) \prod_{k=0}^{p-1} 
\left[ \tau _{\infty }-(q^{k}\text{\textsc{a}}_{\infty }+q^{-k}\text{\textsc{%
a}}_{0})\right] F(\lambda ),
\end{align}%
where for convenience we have denoted $R_{\tau }(\lambda )=$det$_{p}D_{\tau
}^{(1)}(\lambda )$, and we recall $Z=\lambda^{2p}+\frac{1}{\lambda^{2p}}$. The above equation is a direct consequence of the
equation satisfied by $X_{0}(\lambda )$ and of the definition of this last
function in terms of det$_{p}D_{\tau }^{(1)}(\lambda )$. Now let us consider
the following limit on the above equation $\lambda \rightarrow \pm i^{a}$
with $a\in \left\{ 0,1\right\} $:%
\begin{align}
\tau (\pm i^{a})R_{\tau }(\pm i^{a})& =\frac{1}{\pm 2i^{a}}\frac{d\text{%
\textsc{x}}}{d\lambda }(\pm i^{a})\left( R_{\tau }(\pm i^{a}/q)-R_{\tau
}(\pm i^{a}q)\right)  \notag \\
& +\text{\textsc{x}}(\pm i^{a})\lim_{\lambda \rightarrow \pm i^{a}}\left[ 
\frac{R_{\tau }(\lambda /q)}{\lambda ^{2}-1/\lambda ^{2}}-\frac{R_{\tau
}(\lambda q)}{\lambda ^{2}-1/\lambda ^{2}}\right] ,
\end{align}%
now by using the known identities:%
\begin{eqnarray}
R_{\tau }(\pm i^{a}) &=&R_{\tau }(\pm i^{a}/q)=R_{\tau }(\pm i^{a}q)=0, \\
\frac{R_{\tau }(\lambda /q)}{\lambda ^{2}-1/\lambda ^{2}} &=&\frac{R_{\tau
}(q/\lambda )}{\lambda ^{2}-1/\lambda ^{2}},
\end{eqnarray}%
we get:
\begin{equation}
\lim_{\lambda \rightarrow \pm i^{a}}\frac{R_{\tau }(\lambda /q)}{\lambda
^{2}-1/\lambda ^{2}}=-\lim_{\lambda \rightarrow \pm i^{a}}\frac{R_{\tau
}(\lambda q)}{\lambda ^{2}-1/\lambda ^{2}},  \label{Zero-deriv+-q}
\end{equation}
and so being \textsc{x}$(\pm i^{a})\neq 0$
\begin{equation}
\lim_{\lambda \rightarrow \pm i^{a}}\frac{R_{\tau }(\lambda /q)}{\lambda
^{2}-1/\lambda ^{2}}=0.  
\end{equation} 
These results imply the identities:%
\begin{equation}
P_{\tau }(\pm i^{a}/q)=-P_{\tau }(\pm i^{a}q)=0.  \label{Zero-deriv+-q+}
\end{equation}%
We can now write the functional equation for $P_{\tau }(\lambda )$:%
\begin{align}
\tau (\lambda )P_{\tau }(\lambda )& =\text{\textsc{a}}(\lambda )P_{\tau
}(\lambda /q)+\text{\textsc{a}}(1/\lambda )P_{\tau }(\lambda q)  \notag \\
& +\left( \lambda ^{2p}-\frac{1}{\lambda ^{2p}}\right) \left( \Lambda
^{2}-X^{2}\right) \prod_{k=0}^{p-1}\left[ \tau _{\infty }-(q^{k}\text{%
\textsc{a}}_{\infty }+q^{-k}\text{\textsc{a}}_{0})\right] F(\lambda ).
\end{align}%
Taking the limit $\lambda \rightarrow \pm i^{a}$ with $a\in \left\{
0,1\right\} $, we obtain:%
\begin{align}
\tau (\pm i^{a})P_{\tau }(\pm i^{a})& =\frac{1}{\pm 2i^{a}}\frac{d\text{%
\textsc{x}}}{d\lambda }(\pm i^{a})\left( P_{\tau }(\pm i^{a}/q)-P_{\tau
}(\pm i^{a}q)\right)  \notag \\
& +\text{\textsc{x}}(\pm i^{a})\lim_{\lambda \rightarrow \pm i^{a}}\left[ 
\frac{P_{\tau }(\lambda /q)}{\lambda ^{2}-1/\lambda ^{2}}-\frac{P_{\tau
}(\lambda q)}{\lambda ^{2}-1/\lambda ^{2}}\right] ,
\end{align}%
so that using the previous result $\left( \ref{Zero-deriv+-q+}\right) $ and
the identity:%
\begin{equation}
\lim_{\lambda \rightarrow \pm i^{a}}\frac{P_{\tau }(\lambda /q)}{\lambda
^{2}-1/\lambda ^{2}}=\lim_{\lambda \rightarrow \pm i^{a}}\frac{P_{\tau
}(\lambda q)}{\lambda ^{2}-1/\lambda ^{2}}
\end{equation}%
we obtain%
\begin{equation}
P_{\tau }(\pm i^{a})=0,
\end{equation}%
being $\tau (\pm i^{a})\neq 0$. Let us now compute the functional equation
for $P_{\tau }(\lambda )$ in the points $\lambda =\pm i^{a}q^{\epsilon }$
for $a\in \left\{ 0,1\right\} $, $\epsilon \in \left\{ -1,1\right\} $, we
obtain:%
\begin{eqnarray}
\tau (\pm i^{a}q)P_{\tau }(\pm i^{a}q) &=&\text{\textsc{a}}(\pm
i^{a}q)P_{\tau }(\pm i^{a})+\text{\textsc{a}}(\pm i^{a}/q)P_{\tau }(\pm
i^{a}q^{2}), \\
\tau (\pm i^{a}/q)P_{\tau }(\pm i^{a}/q) &=&\text{\textsc{a}}(\pm
i^{a}/q)P_{\tau }(\pm i^{a}/q^{2})+\text{\textsc{a}}(\pm i^{a}q)P_{\tau
}(\pm i^{a}),
\end{eqnarray}%
implying:%
\begin{equation}
P_{\tau }(\pm i^{a}/q^{2})=-P_{\tau }(\pm i^{a}q^{2})=0,
\end{equation}%
being \textsc{a}$(\pm i^{a}q^{\epsilon })\neq 0$ for $a,\epsilon \in \left\{
0,1\right\} $. We can iterate these computations for $\lambda =\pm
i^{a}q^{b\epsilon }$ for any $a\in \left\{ 0,1\right\} $, $\epsilon \in
\left\{ -1,1\right\} $ and $b\in \left\{ 2,...,\left( p-3\right) /2\right\} $
obtaining that:%
\begin{equation}
P_{\tau }(\pm i^{a}/q^{2b})=-P_{\tau }(\pm i^{a}q^{2b})=0,\text{ for any\ }%
b\in \left\{ 1,...,\left( p-3\right) /2\right\} .
\end{equation}%
In the cases $\lambda =\pm i^{a}q^{\pm 1/2}$ as \textsc{a}$(\pm
i^{a}/q^{1/2})=0$ the functional equation for $P_{\tau }(\lambda )$ give us:%
\begin{equation}
\tau (\pm i^{a}q^{\pm 1/2})P_{\tau }(\pm i^{a}q^{\pm 1/2})=\text{\textsc{a}}%
(\pm i^{a}q^{1/2})P_{\tau }(\pm i^{a}q^{\mp 1/2}),
\end{equation}%
which being $P_{\tau }(\pm i^{a}q^{1/2})=-P_{\tau }(\pm i^{a}q^{-1/2})$ and $%
\tau (\pm i^{a}q^{\pm 1/2})=$\textsc{a}$(\pm i^{a}q^{1/2})\neq 0$ implies
the identity:%
\begin{equation}
P_{\tau }(\pm i^{a}q^{1/2})=-P_{\tau }(\pm i^{a}q^{-1/2})=0,
\label{Final-zero}
\end{equation}%
so that the factorization $\left( \ref{Full-zero structure}\right) $ is
proven and we get that:%
\begin{equation}
X_{0}(\lambda )=\frac{\tau _{\infty }-(q^{-\mathsf{N}_{Q}}\text{\textsc{a}}%
_{\infty }+q^{\mathsf{N}_{Q}}\text{\textsc{a}}_{0})}{\prod_{k=0}^{p-1}\left[
\tau _{\infty }-(q^{k}\text{\textsc{a}}_{\infty }+q^{-k}\text{\textsc{a}}%
_{0})\right] }\bar{Q}_{\tau }(\lambda ),
\end{equation}%
is a polynomial of degree $\mathsf{N}_{Q}=\left( p-1\right) \mathsf{N}$ in $%
\Lambda $ which has the form $\left( \ref{Q-form}\right) $. This follows by
taking the asymptotic of its functional equation so that we can fix:%
\begin{equation}
Q(\lambda )\equiv X_{0}(\lambda ),
\end{equation}%
hence giving a constructive proof of the existence of the polynomial $Q$-function
solution of the equation $\left( \ref{Inho-Baxter-EQ}\right) $. The fact
that it is unique is shown observing that if $\hat{Q}(\lambda )$ is another polynomial
solution then:%
\begin{equation}
D_{\tau }(\lambda )\left( 
\begin{array}{l}
Q(\lambda )-\hat{Q}(\lambda ) \\ 
Q(\lambda q)-\hat{Q}(\lambda q) \\ 
\vdots \\ 
\vdots \\ 
Q(\lambda q^{p-1})-\hat{Q}(\lambda q^{p-1})%
\end{array}%
\right) _{p\times 1}=\left( 
\begin{array}{l}
0 \\ 
0 \\ 
\vdots \\ 
\vdots \\ 
0%
\end{array}%
\right) _{p\times 1},
\end{equation}%
from which it follows $Q(\lambda )\equiv \hat{Q}(\lambda )$ as $D_{\tau
}(\lambda )$ is invertible for any $\lambda \in \mathbb{C}\backslash
Z_{det_{p}D_{\tau }}$.

Finally, let us show that $Q(\lambda )$ satisfies the condition $\left( \ref%
{Q-condition}\right) $. By the definition $\left( \ref{Def-X_i}\right) $, $%
Q(\lambda )$ is a continuous function of the boundary-bulk parameters, then
it is enough to prove this statement for some value of these parameters to
show that it holds for almost all the values of these parameters.

Let us impose the condition $\left( \ref{Condi-bulk-quasi-L}\right) $, where
the ratio $\beta /\alpha $ is fixed by $\left( \ref{guage-alpha}\right) $,
then the following identities are satisfied:%
\begin{equation}
\text{\textsc{a}}(\zeta _{a}^{(0)})=0\text{ \ }\forall a\in \{1,...,\mathsf{N%
}\},
\end{equation}%
and the SoV characterization of the transfer matrix spectrum holds for any
value of the boundary-bulk parameters satisfying the inequalities $\left( %
\ref{Special-B-simple}\right) $-$\left( \ref{Special-B-simple2}\right) $. So in particular if we impose:%
\begin{equation}
\mu _{n_{k},-}=1/(q^{1+k}\mu _{a,+})\text{ \ \ }\forall k\in \{1,...,p-1\}%
\text{,}
\end{equation}%
for some $n_{k}\in \{1,...,\mathsf{N}\}\backslash \{a\}$ once we have chosen any $a\in \{1,...,\mathsf{N}\}$. Under these conditions it holds:%
\begin{equation}
\text{\textsc{a}}(\zeta _{a}^{(k)})=0,\text{\ }\forall k\in \{1,...,p-1\},
\end{equation}%
and the SoV representation implies the following centrality condition:%
\begin{equation}
\prod_{k=0}^{p-1}\mathcal{T}(\zeta _{a}^{(k)})=\prod_{k=0}^{p-1}\text{%
\textsc{a}}(1/\zeta _{a}^{(k)}),
\end{equation}%
from which in particular follows:%
\begin{equation}
\prod_{k=0}^{p-1}\tau (\zeta _{a}^{(k)})=\prod_{k=0}^{p-1}\text{\textsc{a}}%
(1/\zeta _{a}^{(k)}).
\end{equation}%
Let us remark now that the r.h.s and the l.h.s of the above equation are
continuous w.r.t. the boundary-bulk parameters so that the above identity
holds also if we take the special limit $\mu _{a,-}\rightarrow q^{1-p}/\mu
_{a,+}$ for which it holds \textsc{a}$(1/\zeta _{a}^{(p-1)})=0$ and so we
get:%
\begin{equation}
\exists !\bar{h}\in \{0,...,p-1\}:\tau (\zeta _{a}^{(\bar{h})})=0.
\end{equation}%
By definition of the function $Q(\lambda )$ under these conditions and limit
on the bulk parameters we get:%
\begin{equation}
Q(\zeta _{a}^{(\bar{h})})\propto \text{det}_{p}%
\begin{pmatrix}
W_{a,\bar{h}} & -\text{\textsc{a}}{}(1/\zeta _{a}^{(\bar{h})}) & 0 & \cdots
& 0 & 0 \\ 
W_{a,\bar{h}+1} & \tau (\zeta _{a}^{(\bar{h}+1)}) & -\text{\textsc{a}}%
(1/\zeta _{a}^{(\bar{h}+1)}) & 0 & \cdots & 0 \\ 
W_{a,\bar{h}+2} & {\quad }0 & \tau (\zeta _{a}^{(\bar{h}+2)}) & -\text{%
\textsc{a}}(1/\zeta _{a}^{(\bar{h}+1)}) &  & \vdots \\ 
\vdots &  & \cdots &  &  & \vdots \\ 
W_{a,p-1} & 0 & \cdots 0 & \tau (\zeta _{a}^{(p-1)}) & 0\cdots & \vdots \\ 
\vdots &  &  &  & \ddots {\qquad } & 0 \\ 
W_{a,\bar{h}} & \ldots & 0 & 0 & \tau (\zeta _{a}^{(\bar{h}-2)}) & -\text{%
\textsc{a}}(1/\zeta _{a}^{(\bar{h}-2)}) \\ 
W_{a,\bar{h}} & 0 & \ldots & 0 & 0 & \tau (\zeta _{a}^{(\bar{h}-1)})%
\end{pmatrix}%
,  \label{For-1}
\end{equation}%
where we have defined:%
\begin{equation}
W_{a,k}=\left( (\zeta _{a}^{(k)})^{2}+1/(\zeta _{a}^{(k)})^{2}\right)
^{2}-X^{2}.
\end{equation}%
Now replacing the first row $R_{1}$ with the following linear combination of
rows:%
\begin{equation}
\bar{R}_{1}=R_{1}+\sum_{i=0}^{p-2-\bar{h}}\prod_{j=0}^{i}\frac{\text{\textsc{%
a}}{}(1/\zeta _{a}^{(\bar{h}+j)})}{\tau (\zeta _{a}^{(\bar{h}+j+1)})}R_{2+i},
\end{equation}%
we get%
\begin{equation}
\bar{R}_{1}=%
\begin{pmatrix}
\bar{W}_{a,\bar{h}} & 0 & \cdots & 0 & 0%
\end{pmatrix}%
_{1\times p}
\end{equation}%
where:%
\begin{equation}
\bar{W}_{a,\bar{h}}=W_{a,\bar{h}}+\sum_{i=0}^{p-2-\bar{h}}\prod_{j=0}^{i}%
\frac{\text{\textsc{a}}{}(1/\zeta _{a}^{(\bar{h}+j)})}{\tau (\zeta _{a}^{(%
\bar{h}+j+1)})}W_{a,\bar{h}+1+i}
\end{equation}%
and so: 
\begin{equation}
Q(\zeta _{a}^{(\bar{h})})=\bar{W}_{a,\bar{h}}\prod_{k\neq \bar{h}%
,k=0}^{p-1}\tau (\zeta _{a}^{(k)})\neq 0,
\end{equation}%
for generic values of the boundary-bulk parameters. Indeed, as the $W_{a,%
\bar{h}+1+i}$ are functions only of the bulk parameter $\mu _{a,+}$ while
the ratios \textsc{a}${}(1/\zeta _{a}^{(\bar{h}+j)})/\tau (\zeta _{a}^{(\bar{%
h}+j+1)})$ are functions of both the boundary and the bulk parameters then
we can prove that $\bar{W}_{a,\bar{h}}\neq 0$. Explicitly we can compute the
asymptotic of $\bar{W}_{a,\bar{h}}$ in the limit $\mu _{a,+}\rightarrow
\infty $, by using the know asymptotic of the transfer matrix, therefore showing
that it is non-zero for general values of boundary-bulk parameters.
\end{proof}

In the previous theorem we have excluded the boundary-bulk one-constraint
cases leading to an identically zero det$D_{\tau }(\lambda )$ for any $\tau
(\lambda )\in \Sigma _{\mathcal{T}}$, these specific cases are considered in
the next theorem.

\begin{theorem}
Let us assume that there exists $k \in \{0,...,p-1\}$ such that it holds:%
\begin{equation}
\tau _{\infty }=q^{-k}\text{\textsc{a}}_{\infty }+q^{k}\text{\textsc{a}}_{0},
\label{one-constraint-case}
\end{equation}%
then, for almost all the values of the boundary-bulk parameters, $\tau
(\lambda )\in \Sigma _{\mathcal{T}}$ (the set of the eigenvalues of $\mathcal{T}(\lambda)$) if and only if $\tau (\lambda )$ is
an entire function and there exists and is unique a polynomial $Q(\lambda )$
of the form $\left( \ref{Q-form}\right) $ with $\mathsf{N}_{Q}\leq \left(
p-1\right) (\mathsf{N}+1)$ and $\mathsf{N}_{Q}=k$ mod$\,p$, satisfying the
following homogeneous Baxter equation:%
\begin{equation}
\tau (\lambda )Q(\lambda )=\text{\textsc{a}}(\lambda )Q(\lambda /q)+\text{%
\textsc{a}}(1/\lambda )Q(\lambda q),  \label{ho-Baxter-EQ}
\end{equation}%
and the conditions:%
\begin{equation}
(Q(\zeta _{a}^{\left( 0\right) }),...,Q(\zeta _{a}^{\left( p-1\right)
}))\neq (0,...,0)\text{ \ \ }\forall a\in \{1,...,\mathsf{N}\}\text{.}
\label{Q-condition2}
\end{equation}
\end{theorem}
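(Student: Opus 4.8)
The plan is to mirror the proof of the previous (inhomogeneous) theorem, the single structural novelty being that under \eqref{one-constraint-case} the inhomogeneous term of \eqref{Inho-Baxter-EQ} disappears. Indeed, choosing the degree in the residue class $\mathsf{N}_Q\equiv k\pmod p$ gives $q^{-\mathsf{N}_Q}\text{\textsc{a}}_\infty+q^{\mathsf{N}_Q}\text{\textsc{a}}_0=q^{-k}\text{\textsc{a}}_\infty+q^{k}\text{\textsc{a}}_0=\tau_\infty$, so the bracket multiplying $\left(\Lambda^2-X^2\right)F(\lambda)$ vanishes and \eqref{Inho-Baxter-EQ} degenerates to \eqref{ho-Baxter-EQ}. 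The bridge between the functional equation and the discrete characterisation is that, using $q^{p}=1$ and reading off the rows of $D_\tau(\lambda)$ in \eqref{D-matrix}, the vector $(Q(\lambda),Q(q\lambda),\dots,Q(q^{p-1}\lambda))^{t}$ lies in $\ker D_\tau(\lambda)$ if and only if $Q$ solves \eqref{ho-Baxter-EQ}; this identity is used in both directions.

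For sufficiency I would assume an entire $\tau$ and a polynomial $Q$ of the form \eqref{Q-form} satisfying \eqref{ho-Baxter-EQ} and \eqref{Q-condition2}. Writing $\tau(\lambda)=[\text{\textsc{a}}(\lambda)Q(\lambda/q)+\text{\textsc{a}}(1/\lambda)Q(\lambda q)]/Q(\lambda)$, the finiteness of the numerator at $\lambda\to\pm1,\pm i$ together with the assumed entirety of $\tau$ forces $\tau$ to be a Laurent polynomial, even and $\lambda\to1/\lambda$ invariant, of degree $\mathsf{N}+2$ in $\Lambda$, i.e. of the form \eqref{set-tau}, by the same reasoning as in the previous theorem. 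Evaluating \eqref{ho-Baxter-EQ} at $\lambda=\zeta_a^{(j)}$, $j=0,\dots,p-1$, reproduces exactly the $p$ rows of $D_\tau(\zeta_a^{(0)})\,(Q(\zeta_a^{(h)}))_{h}=0$; by \eqref{Q-condition2} this kernel vector is non-zero, hence $\det{}_pD_\tau(\zeta_a^{(0)})=0$ for every $a$, and Theorem \ref{discrete-SoV-ch} gives $\tau\in\Sigma_\mathcal{T}$.

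For necessity I would start from $\tau\in\Sigma_\mathcal{T}$, so $\det{}_pD_\tau(\zeta_a^{(0)})=0$ by Theorem \ref{discrete-SoV-ch} and $\tau$ is of the form \eqref{set-tau}. Combining this with Proposition \ref{General F-Eq}: under \eqref{one-constraint-case} one factor of $\prod_{k'=0}^{p-1}(\tau_\infty-(q^{k'}\text{\textsc{a}}_\infty+q^{-k'}\text{\textsc{a}}_0))$ in \eqref{Func-EQ-1} vanishes, so $\det{}_pD_\tau(\lambda)\equiv0$ identically. Hence $D_\tau(\lambda)$ is singular for every $\lambda$ and, for generic parameters, of rank $p-1$, so its kernel is one-dimensional and spanned by a non-trivial column of $\mathrm{adj}\,D_\tau(\lambda)$, whose entries are $(p-1)\times(p-1)$ minors, hence Laurent polynomials in $\lambda$; call them $X_0(\lambda),\dots,X_{p-1}(\lambda)$. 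The cyclic conjugation $D_\tau(q\lambda)=\Pi\,D_\tau(\lambda)\,\Pi^{-1}$ by the shift matrix $\Pi$ (the same shift exploited for the system \eqref{Sys-eq-cyclic}) forces $X_a(\lambda)=X_0(\lambda q^{a})$, and the symmetries $\lambda\to1/\lambda$, $\lambda\to-\lambda$ then show $X_0$ is a function of $\Lambda$. Exactly as in the factorisation of $\det{}_pD_\tau^{(1)}$ in the previous proof, one extracts from $X_0$ the spurious zeros at $\lambda=\pm i^{a}q^{\pm1/2},\pm i^{a}q^{\pm2b}$, i.e. a factor $(\lambda^{2p}-\lambda^{-2p})^2$, leaving a genuine polynomial in $\Lambda$; its degree is bounded by $(p-1)(\mathsf{N}+1)$ from the minor count, and the leading large-$\lambda$ asymptotics of \eqref{ho-Baxter-EQ} together with \eqref{one-constraint-case} fix the class $\mathsf{N}_Q\equiv k\pmod p$. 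Setting $Q\equiv X_0$ produces a solution, and uniqueness follows as before: two solutions $Q,\hat Q$ yield a vector $(Q(\lambda q^{j})-\hat Q(\lambda q^{j}))_{j}$ in $\ker D_\tau(\lambda)$, which must vanish wherever $D_\tau$ is invertible and hence identically.

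The last and most delicate step, exactly as in the previous theorem, is verifying \eqref{Q-condition2}. Since $Q$ depends rationally on the boundary–bulk parameters it suffices to exhibit one admissible choice, so I would impose the quasi-triangular specialisation used at the end of the previous proof, under which $\text{\textsc{a}}(\zeta_a^{(k)})=0$ for $k=1,\dots,p-1$, use the resulting centrality relation $\prod_{k=0}^{p-1}\tau(\zeta_a^{(k)})=\prod_{k=0}^{p-1}\text{\textsc{a}}(1/\zeta_a^{(k)})$ to locate a unique $\bar h$ with $\tau(\zeta_a^{(\bar h)})=0$, and then evaluate $Q(\zeta_a^{(\bar h)})$ by the same cofactor and row-reduction argument that produced \eqref{For-1}. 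I expect this genericity check — controlling $\bar W_{a,\bar h}$ in the limit $\mu_{a,+}\to\infty$ and excluding an accidental common zero of all the $Q(\zeta_a^{(h)})$ — to be the main obstacle, being the technically heaviest part already in the inhomogeneous case; everything else is a faithful transcription with the inhomogeneous term switched off and Cramer's rule replaced by the adjugate construction.
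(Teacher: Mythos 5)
Your sufficiency argument is correct and essentially coincides with the paper's: the paper deduces $\text{det}_{p}D_{\tau }(\lambda )\equiv 0$ from the non-vanishing of the kernel vector at all but finitely many $\lambda $ and invokes Proposition \ref{General F-Eq}, while you evaluate at the points $\zeta _{a}^{(j)}$ and invoke Theorem \ref{discrete-SoV-ch}; both routes are legitimate. The necessity direction, however, breaks down at its central step. The conjugation $D_{\tau }(q\lambda )=\Pi D_{\tau }(\lambda )\Pi ^{-1}$ does \emph{not} force $X_{a}(\lambda )=X_{0}(\lambda q^{a})$ for the entries of an adjugate column: since $\text{det}_{p}D_{\tau }\equiv 0$, the kernel is defined only up to a $\lambda $-dependent scalar, and the conjugation only implies that the kernel vector and its cyclic shift with rescaled argument are proportional, with a proportionality factor that is a priori a non-constant rational function of $\lambda $. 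Equivalently, the adjugate column has entries $\text{\textsc{C}}_{1,1}(\lambda ),\dots ,\text{\textsc{C}}_{1,p}(\lambda )$, and the exact symmetry \eqref{Sym-1} only relates cofactors with \emph{both} indices shifted, $\text{\textsc{C}}_{i+h,j+h}(\lambda )=\text{\textsc{C}}_{i,j}(\lambda q^{h})$; it never identifies $\text{\textsc{C}}_{1,2}$ with an argument shift of $\text{\textsc{C}}_{1,1}$. In the inhomogeneous case the analogous step worked because Cramer's rule gave the \emph{unique} solution of \eqref{Sys-eq-cyclic} and the covariance of the inhomogeneous term pinned the normalization; that mechanism is unavailable here. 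Closing this gap is exactly what the paper's proof does: the rank-one property of the cofactor matrix gives \eqref{Inter-step}, $\text{\textsc{C}}_{1,1}(\lambda )\text{\textsc{C}}_{1,1}(\lambda q)=\text{\textsc{C}}_{1,2}(\lambda )\text{\textsc{C}}_{1,p}(q\lambda )$, then (following Lemma 5 of \cite{OpenCyN-10}) common zeros of the reduced cofactors are factored out to obtain the proportionality \eqref{proportionality-cofactor} with unknown constants, leading to the deformed equation \eqref{deform-BAX} with coefficient $x_{1,1}y_{1,1}$; only the evaluation at $\lambda =q^{1/2}$ shows $x_{1,1}y_{1,1}=1$, i.e. that the genuine homogeneous equation \eqref{ho-Baxter-EQ} holds. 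Without these steps your construction produces at best a Baxter-type equation deformed by an undetermined factor.

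A second, related failure is your uniqueness argument: "the difference vector must vanish wherever $D_{\tau }$ is invertible" is vacuous here because, under \eqref{one-constraint-case}, $D_{\tau }(\lambda )$ is singular for \emph{every} $\lambda $ --- that is the defining feature of this case. Uniqueness is in fact genuinely delicate: if $Q$ solves \eqref{ho-Baxter-EQ} then so does $Q(\lambda )R(Z)$ for any polynomial $R$ in $Z=\lambda ^{2p}+\lambda ^{-2p}$, since $Z$ is invariant under $\lambda \rightarrow q\lambda $ and is itself a monic degree-$p$ polynomial in $\Lambda $; hence uniqueness can only come from the degree window $\mathsf{N}_{Q}\leq (p-1)(\mathsf{N}+1)$ combined with the congruence $\mathsf{N}_{Q}=k$ mod$\,p$, not from invertibility of $D_{\tau }$. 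Your plan for verifying \eqref{Q-condition2} by the quasi-triangular specialization is reasonable (it even goes beyond what the paper writes out for this theorem), but the two points above constitute the substantive content of the necessity proof and are missing from your proposal.
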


\begin{proof}
First let us assume that $\tau (\lambda )$ and $Q(\lambda )$ satisfies the
homogeneous Baxter equation with $\tau (\lambda )$ entire function and $%
Q(\lambda )$ polynomial of the form $\left( \ref{Q-form}\right) $ with $%
\mathsf{N}_{Q}\leq \left( p-1\right) (\mathsf{N}+1)$ and $\mathsf{N}_{Q}=k$
mod$\,p$, then from this same equation it follows that $\tau (\lambda )$ is
a polynomial of the form $(\ref{set-tau})$. Moreover, for any fixed $\lambda
\in \mathbb{C}$ we can construct the following homogeneous system of
equations:%
\begin{equation}
D_{\tau }(\lambda )\left( 
\begin{array}{l}
Q(\lambda ) \\ 
Q(\lambda q) \\ 
\vdots \\ 
\vdots \\ 
Q(\lambda q^{p-1})%
\end{array}%
\right) _{p\times 1}=\left( 
\begin{array}{l}
0 \\ 
0 \\ 
\vdots \\ 
\vdots \\ 
0%
\end{array}%
\right) _{p\times 1},
\end{equation}%
which is satisfied as a consequence of the Baxter equation. Finally, being $%
(Q(\lambda ),...,Q(\lambda q^{p-1}))$\ non-zero for any $\lambda \in \mathbb{%
C}$, up to at most a finite number of values, we get:%
\begin{equation}
\text{det}D_{\tau }(\lambda )=0\text{ \ \ }\forall \lambda \in \mathbb{C}
\end{equation}%
so that Proposition \ref{General F-Eq} implies $\tau (\lambda )\in \Sigma _{%
\mathcal{T}}$.

To prove the reverse statement we use the results of the Lemma \ref%
{Cofactor-prop} on the matrix $D_{\tau }(\lambda )$ and on its cofactors:%
\begin{equation}
\text{\textsc{C}}_{i,j}(\lambda )=(-1)^{i+j}\text{det}_{p-1}D_{\tau
,i,j}(\lambda )\text{.}  \label{cofactor-def}
\end{equation}%
We take now $\tau (\lambda )\in \Sigma _{\mathcal{T}}$ from which it holds:%
\begin{equation}
\text{det}D_{\tau }(\lambda )=0\text{ \ \ }\forall \lambda \in \mathbb{C},
\end{equation}%
and so by Lemma \ref{Cofactor-prop} it follows that rank$D_{\tau }(\lambda
)=p-1$ for any $\lambda \in \mathbb{C}\backslash K$, where $K$ is a finite
set of complex numbers if not empty. Then the matrix composed of the
cofactors of the matrix $D_{\tau }(\lambda )$ has rank $1$ for any $\lambda
\in \mathbb{C}\backslash K$. This just means the proportionality:%
\begin{equation}
\text{\textsc{V}}_{i}(\lambda )=\text{\textsc{a}}_{i,j}(\lambda )\text{%
\textsc{V}}_{j}(\lambda )\text{ }\forall \lambda \in \mathbb{C}\backslash
K,\forall i,j\in \{1,...,p\}
\end{equation}%
where we have defined:%
\begin{equation}
\text{\textsc{V}}_{i}(\lambda )\equiv (\text{\textsc{C}}_{i,1}(\lambda ),%
\text{\textsc{C}}_{i,2}(\lambda ),...,\text{\textsc{C}}_{i,p}(\lambda ))%
\text{ }\forall \lambda \in \mathbb{C}\backslash K,\forall i\in \{1,...,p\}
\end{equation}%
and \textsc{a}$_{i,j}(\lambda )$ are some functions such that: 
\begin{equation}
\text{\textsc{a}}_{i,j}(\lambda )\neq 0\text{ and finite for any }\lambda
\in \mathbb{C}\backslash \left\{ K\cup K_{0}\cup K_{i}\cup K_{j}\right\}
\end{equation}%
where $K_{0}$ is the set of the $p$-roots of unit and%
\begin{equation}
K_{a}\equiv \left\{ x\in \mathbb{C}:\text{\textsc{V}}_{a}(x)\equiv
(0,...,0)\right\} \text{ }\forall a\in \{1,...,p\}
\end{equation}%
such sets are finite if not empty, being the elements of the vectors $%
(\Lambda ^{p}-X^{p})$\textsc{V}$_{i}(\lambda )$ Laurent polynomials. The
above identities in particular imply: 
\begin{equation}
\text{\textsc{a}}_{1,2}(\lambda )\text{\textsc{C}}_{1,1}(\lambda )\text{%
\textsc{C}}_{2,2}(\lambda )=\text{\textsc{a}}_{1,2}(\lambda )\text{\textsc{C}%
}_{1,2}(\lambda )\text{\textsc{C}}_{2,1}(\lambda )\text{ }\forall \lambda
\in \mathbb{C}\backslash K  \label{proportionality}
\end{equation}%
so that for any $\lambda \in \mathbb{C}\backslash \left\{ K\cup K_{0}\cup
K_{i}\cup K_{j}\right\} $ it holds:%
\begin{equation}
\text{\textsc{C}}_{1,1}(\lambda )\text{\textsc{C}}_{2,2}(\lambda )=\text{%
\textsc{C}}_{1,2}(\lambda )\text{\textsc{C}}_{2,1}(\lambda ).
\label{Cofactor-connection}
\end{equation}%
Hence it holds for any $\lambda \in \mathbb{C}$ using continutiy
properties of the cofactors, being $\left\{ K\cup K_{0}\cup K_{i}\cup
K_{j}\right\} $ a finite set of values. Similarly, the fact that the
vectorial condition\ $D(\lambda )$\textsc{V}$_{1}(\lambda )= \ $\b{0} holds
true for any $\lambda \in \mathbb{C}\backslash K$ implies that it is indeed
satisfied for any $\lambda \in \mathbb{C}$. Here, we write explicitly the
first element of this vectorial condition:%
\begin{equation}
\tau (\lambda )\text{\textsc{C}}_{1,1}(\lambda )=\text{\textsc{a}}(\lambda )%
\text{\textsc{C}}_{1,p}(\lambda )+\text{\textsc{a}}(1/\lambda )\text{\textsc{%
C}}_{1,2}(\lambda ),  \label{Bax-eq}
\end{equation}%
together with the rewriting of (\ref{Cofactor-connection}) by using the
identity (\ref{Sym-1}):%
\begin{equation}
\text{\textsc{C}}_{1,1}(\lambda )\text{\textsc{C}}_{1,1}(\lambda q)=\text{%
\textsc{C}}_{1,2}(\lambda )\text{\textsc{C}}_{1,p}(q\lambda ).
\label{Inter-step}
\end{equation}%
Once we recall that \textsc{C}$_{1,1}(\lambda )$, \textsc{C}$_{1,2}(\lambda
) $ and \textsc{C}$_{1,p}(\lambda )$ are Laurent polynomial in $\lambda $
satisfying the factorizations $\left( \ref{F11}\right) $, $\left( \ref{F12}%
\right) $ and $\left( \ref{F1p}\right) $, respectively, it follows that the
above two equations holds as well as if written in terms of the functions $%
\widehat{\text{\textsc{C}}}_{1,1}(\lambda )$, $\widehat{\text{\textsc{C}}}%
_{1,2}(\lambda )$ and $\widehat{\text{\textsc{C}}}_{1,p}(\lambda )$.

Similarly to what has been done in the Lemma 5 of the
paper \cite{OpenCyN-10}, we can show that the two above equations for $%
\widehat{\text{\textsc{C}}}_{1,1}(\lambda )$, $\widehat{\text{\textsc{C}}}%
_{1,2}(\lambda )$ and $\widehat{\text{\textsc{C}}}_{1,p}(\lambda )$ and
their symmetry properties $\left( \ref{Sym-3}\right) $ imply that if $%
\widehat{\text{\textsc{C}}}_{1,1}(\lambda )$ has a common zero with $%
\widehat{\text{\textsc{C}}}_{1,2}(\lambda )$ then this is also a zero of $%
\widehat{\text{\textsc{C}}}_{1p}(\lambda )$ and also the inverse of such
zero is a common zero of these polynomials. Moreover, the same statement
holds exchanging $\widehat{\text{\textsc{C}}}_{1,2}(\lambda )$ with $%
\widehat{\text{\textsc{C}}}_{1,p}(\lambda )$. So we can denote with \textsc{c%
}$_{1,1}\overline{\text{\textsc{C}}}_{1,1}(\lambda )$, \textsc{c}$_{1,2}%
\overline{\text{\textsc{C}}}_{1,2}(\lambda )$ and \textsc{c}$_{1,p}\overline{%
\text{\textsc{C}}}_{1,p}(\lambda )$ the polynomials obtained simplifying the
common factors in $\widehat{\text{\textsc{C}}}_{1,1}(\lambda )$, $\widehat{%
\text{\textsc{C}}}_{1,2}(\lambda )$ and $\widehat{\text{\textsc{C}}}%
_{1,p}(\lambda )$. Then, they have to satisfy the relations: 
\begin{equation}
\overline{\text{\textsc{C}}}_{1,p}(\lambda )=y_{1,1}\text{$\overline{\text{%
\textsc{C}}}$}_{1,1}(\lambda q^{-1}),\text{ \ \ $\overline{\text{\textsc{C}}}
$}_{1,2}(\lambda )=y_{1,1}^{-1}\text{$\overline{\text{\textsc{C}}}$}%
_{1,1}(\lambda q)  \label{proportionality-cofactor}
\end{equation}%
and defined $x_{1,1}\equiv $\textsc{c}$_{1,1}/$\textsc{c}$_{1,2}=$\textsc{c}$%
_{1,p}/$\textsc{c}$_{1,1}$, we obtain the following Baxter equation in the
polynomial $\overline{\text{\textsc{C}}}_{1,1}(\lambda )$: 
\begin{equation}
t(\lambda )\text{$\overline{\text{\textsc{C}}}$}_{1,1}(\lambda )=\left(
x_{1,1}y_{1,1}\right) \text{\textsc{a}}(\lambda )\text{$\overline{\text{%
\textsc{C}}}$}_{1,1}(\lambda q^{-1})+\left( 1/(x_{1,1}y_{1,1})\right) \text{%
\textsc{a}}(1/\lambda )\text{$\overline{\text{\textsc{C}}}$}_{1,1}(\lambda
q),  \label{deform-BAX}
\end{equation}%
and computing the above equation in $\lambda =q^{1/2}$ we get:%
\begin{equation}
t(q^{1/2})\text{$\overline{\text{\textsc{C}}}$}_{1,1}(q^{1/2})=\left(
x_{1,1}y_{1,1}\right) \text{\textsc{a}}(q^{1/2})\text{$\overline{\text{%
\textsc{C}}}$}_{1,1}(q^{-1/2}),
\end{equation}%
from which it follows $x_{1,1}y_{1,1}=1$ once we recall that \textsc{C}$%
_{1,1}(q^{1/2})\neq 0$ and that $\overline{\text{\textsc{C}}}$$%
_{1,1}(\lambda )$ is even under $\lambda \rightarrow 1/\lambda $. So, we can
define: 
\begin{equation}
Q(\lambda )\equiv \text{$\overline{\text{\textsc{C}}}$}_{1,1}(\lambda ),
\end{equation}%
a polynomial in $\Lambda $ of maximal degree $\left( p-1\right) (\mathsf{N}%
+1)$, which satisfies the homogeneous Baxter equation as required.
\end{proof}

Let us introduce now the following states:%
\begin{eqnarray}
\left\langle \beta ,\omega \right\vert &=&\sum_{h_{1},...,h_{\mathsf{N}%
}=0}^{p-1}\prod_{a=1}^{\mathsf{N}}\prod_{k_{a}=0}^{h_{a}-1}\frac{\text{%
\textsc{a}}(1/\zeta _{a}^{(k_{a})})}{\text{\textsc{d}}(1/\zeta
_{a}^{(k_{a})})}\prod_{1\leq b<a\leq \mathsf{N}%
}(X_{a}^{(h_{a})}-X_{b}^{(h_{b})})\left\langle \beta ,h_{1},...,h_{\mathsf{N}%
}\right\vert , \\
|\beta ,\bar{\omega}\rangle &=&\sum_{h_{1},...,h_{\mathsf{N}%
}=0}^{p-1}\prod_{1\leq b<a\leq \mathsf{N}}(X_{a}^{(h_{a})}-X_{b}^{(h_{b})})|%
\beta ,h_{1},...,h_{\mathsf{N}}\rangle ,\label{Ref-Sta-R}
\end{eqnarray}%
(see \eqref{45}, \eqref{412} and \eqref{419}) and the following renormalization of the $\mathcal{B}_{-}$-operator family%
\begin{equation}
\mathcal{\hat{B}}_{-}(\lambda |\beta )=\frac{\mathcal{B}_{-}(\lambda |\beta
)T_{\beta }^{2}}{(\lambda ^{2}/q-q/\lambda ^{2})\text{\textsc{b}}_{-}(\beta )%
},
\end{equation}%
which is a degree $\mathsf{N}$ polynomial in $\Lambda=\lambda^2+\frac{1}{\lambda^2}$, and where $T_\beta$ is simply a shift on the gauge parameter $\beta$ (see \eqref{betashifteuh}). As first remarked
in the papers \cite{DerKM03,OpenCyDerKM03-2}, from the polynomial
characterization of the $Q$-function and the SoV characterization it follows
the Bethe-like rewriting of the transfer matrix eigenstates stated in the
following\footnote{One should remark that the logic that lead us to the ABA rewriting of the transfer matrix eigenstates is completely different from the one underling the algebraic Bethe ansatz. We get it by rewriting the original SoV form and this allows us to identity the non-trivial state that takes a role similar to a reference state. Note however that it has properties rather different from an ABA reference state as in general it is not an eigenstate of the transfer matrix! For simpler models, for which such a reference state can be naturally guessed, one can also follow the ABA logic i.e. to make an ansatz on the form of the ABA states and then to compute the action of the transfer matrix on these states deriving the Bethe equations by putting to zero the so-called unwanted terms. This is what it has been done in the paper \cite{Belliard2015c} for the quantum spin 1/2 chains.}:

\begin{corollary}
The left and right transfer matrix eigenstates associated to $\tau (\lambda
)\in \Sigma _{\mathcal{T}}$ admit the following Bethe ansatz like
representations:%
\begin{equation}
\left\langle \tau \right\vert =\left\langle \beta ,\omega \right\vert
\prod_{b=1}^{\mathsf{N}_{Q}}\mathcal{\hat{B}}_{-}(\lambda _{b}|\beta ),\text{
\ \ }|\tau \rangle =\prod_{b=1}^{\mathsf{N}_{Q}}\mathcal{\hat{B}}%
_{-}(\lambda _{b}|\beta )|\beta ,\bar{\omega}\rangle ,
\label{Bethe-like-eigenstates}
\end{equation}%
where the $\lambda _{b}$ (fixed up the symmetry $\lambda _{b}\rightarrow
-\lambda _{b},$ $\lambda _{b}\rightarrow 1/\lambda _{b}$) for $b\in \{1,...,%
\mathsf{N}_{Q}\}$ are the zeros of $Q(\lambda )$ and we have imposed the
condition $\left( \ref{guage-alpha}\right) $ on the gauge parameters.
\end{corollary}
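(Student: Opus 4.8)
The plan is to establish the right-eigenstate identity in \eqref{Bethe-like-eigenstates} first, and then obtain the left one by the same computation supplemented with the Lemma leading to \eqref{419}; since eigenstates are fixed only up to an overall scalar, every $\mathbf{h}$-independent constant may be freely discarded. The whole statement reduces to two ingredients: that the renormalized operator $\mathcal{\hat{B}}_{-}(\lambda |\beta )$ acts diagonally on the right SoV basis $\{|\beta ,h_{1},...,h_{\mathsf{N}}\rangle \}$, and that the coefficients $q_{\tau ,a}^{(h_{a})}$ of the eigenstate $|\tau \rangle $ in \eqref{OpenCyeigenT-r-D} are, site by site, proportional to the values of the $Q$-function on the shift orbit, i.e. $q_{\tau ,a}^{(h)}\propto Q(\zeta _{a}^{(h)})$ with the gauge choice \eqref{guage-alpha}.

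For the diagonal action I would divide the pseudo-eigenvalue \eqref{pseudo-eigenvalue} by the normalization $(\lambda ^{2}/q-q/\lambda ^{2})\text{\textsc{b}}_{-}(\beta )$ built into $\mathcal{\hat{B}}_{-}$. Writing $y_{a}=\text{\textsc{b}}_{-,a}(\beta )q^{h_{a}}$ and recalling $X_{a}^{(h_{a})}=y_{a}^{2}+1/y_{a}^{2}$ from \eqref{313}, each of the $\mathsf{N}$ factors collapses through the elementary identity
\begin{equation}
\Big(\frac{\lambda }{y_{a}}-\frac{y_{a}}{\lambda }\Big)\Big(\lambda y_{a}-\frac{1}{\lambda y_{a}}\Big)=\Big(\lambda ^{2}+\frac{1}{\lambda ^{2}}\Big)-\Big(y_{a}^{2}+\frac{1}{y_{a}^{2}}\Big)=\Lambda -X_{a}^{(h_{a})}.
\end{equation}
Since the shift $T_{\beta }$ on the gauge parameter is tuned so that $\mathcal{\hat{B}}_{-}(\lambda |\beta )$ preserves the $\beta $-sector, this yields
\begin{equation}
\mathcal{\hat{B}}_{-}(\lambda |\beta )|\beta ,h_{1},...,h_{\mathsf{N}}\rangle =\Big(\prod_{a=1}^{\mathsf{N}}(\Lambda -X_{a}^{(h_{a})})\Big)|\beta ,h_{1},...,h_{\mathsf{N}}\rangle ,
\end{equation}
so the product $\prod_{b=1}^{\mathsf{N}_{Q}}\mathcal{\hat{B}}_{-}(\lambda _{b}|\beta )$ is well defined and, evaluated at the zeros $\lambda _{b}$ of $Q$ (where $\Lambda (\lambda _{b})=\Lambda _{b}$ is a root of $Q$), multiplies $|\beta ,h_{1},...,h_{\mathsf{N}}\rangle $ by $\prod_{a=1}^{\mathsf{N}}\prod_{b=1}^{\mathsf{N}_{Q}}(\Lambda _{b}-X_{a}^{(h_{a})})=(-1)^{\mathsf{N}_{Q}\mathsf{N}}\prod_{a=1}^{\mathsf{N}}Q(\zeta _{a}^{(h_{a})})$, using $\Lambda (\zeta _{a}^{(h_{a})})=X_{a}^{(h_{a})}$ and the factorized form \eqref{Q-form}.

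The second ingredient is the crux. I would show that the column $(Q(\zeta _{a}^{(0)}),...,Q(\zeta _{a}^{(p-1)}))^{t}$ lies in $\ker D_{\tau }(\zeta _{a}^{(0)})$. Indeed, applying $D_{\tau }(\lambda )$ to $(Q(\lambda ),Q(\lambda q),...,Q(\lambda q^{p-1}))^{t}$ reproduces row by row the Baxter relation \eqref{Inho-Baxter-EQ} (or \eqref{ho-Baxter-EQ} in the homogeneous case), giving exactly the right-hand side of the system \eqref{Sys-eq-cyclic}; here one uses that $q^{p}=1$ makes $F(\lambda )$ in \eqref{55c} invariant under $\lambda \to q\lambda $. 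Since $\zeta _{a}^{(0)}$ is a zero of $F$, the inhomogeneous term vanishes and $D_{\tau }(\zeta _{a}^{(0)})(Q(\zeta _{a}^{(0)}),...,Q(\zeta _{a}^{(p-1)}))^{t}=0$, where $\zeta _{a}^{(h)}=\zeta _{a}^{(0)}q^{h}$ for $a\le \mathsf{N}$. Because $\ker D_{\tau }(\zeta _{a}^{(0)})$ is one-dimensional under the simple-spectrum/SoV conditions of Theorems \ref{B-Pseudo-diago} and \ref{discrete-SoV-ch}, comparison with the defining system \eqref{OpenCyt-Q-relation} gives $q_{\tau ,a}^{(h)}=c_{a}\,Q(\zeta _{a}^{(h)})$ for some $a$-dependent constant $c_{a}$. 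Substituting this and the previous paragraph into $|\beta ,\bar{\omega}\rangle $ from \eqref{Ref-Sta-R} (which is $|\tau\rangle$ with all coefficients set to $1$) reproduces \eqref{OpenCyeigenT-r-D} up to the overall $\mathbf{h}$-independent factor $(-1)^{\mathsf{N}_{Q}\mathsf{N}}\prod_{a}c_{a}^{-1}$, establishing the right identity.

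For the left eigenstate the diagonal-action computation is identical, the left SoV action merely shifting the gauge so that $\langle \beta ,\omega |\prod_{b}\mathcal{\hat{B}}_{-}(\lambda _{b}|\beta )$ lands in the $\beta /q^{2}$-sector matching \eqref{OpenCyeigenT-l-D}. The only new point is the extra factor $\prod_{k_{a}=0}^{h_{a}-1}\text{\textsc{a}}(1/\zeta _{a}^{(k_{a})})/\text{\textsc{d}}(1/\zeta _{a}^{(k_{a})})$ carried by the reference state $\langle \beta ,\omega |$: telescoping the Lemma relation \eqref{419} gives $\hat{q}_{\tau ,a}^{(h)}\propto Q(\zeta _{a}^{(h)})\prod_{k=0}^{h-1}\text{\textsc{a}}(1/\zeta _{a}^{(k)})/\text{\textsc{d}}(1/\zeta _{a}^{(k)})$, which is precisely what the computation produces, again up to an $\mathbf{h}$-independent constant. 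I expect the main obstacle to be exactly this second ingredient: justifying $\dim \ker D_{\tau }(\zeta _{a}^{(0)})=1$ (rank $p-1$) so that the proportionality to the $Q$-orbit is forced, together with the careful bookkeeping of the gauge-sector shifts and of the $\text{\textsc{a}}/\text{\textsc{d}}$ factors in the left case; the remaining algebra is routine.
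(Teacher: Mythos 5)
Your proposal is correct and follows essentially the same route as the paper: diagonal action of $\mathcal{\hat{B}}_{-}(\lambda|\beta)$ on the SoV basis via $\bigl(\frac{\lambda}{y}-\frac{y}{\lambda}\bigr)\bigl(\lambda y-\frac{1}{\lambda y}\bigr)=\Lambda-X$, identification $q_{\tau,a}^{(h)}\propto Q(\zeta_a^{(h)})$ from the T-Q equation (the paper takes this from the preceding theorem, where the kernel membership at $F(\zeta_a^{(0)})=0$ and uniqueness up to normalization are already established, while you rederive it explicitly), and matching the resulting coefficients $(-1)^{\mathsf{N}\mathsf{N}_Q}\prod_a Q(\zeta_a^{(h_a)})$ with those of \eqref{OpenCyeigenT-r-D} up to an $\mathbf{h}$-independent scalar. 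Your treatment of the left eigenstate via telescoping \eqref{419} correctly fills in what the paper dismisses with ``similarly one shows.''
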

\begin{proof}

These identities follow from the polynomiality of the $Q$-functions, which
implies the following identity:%
\begin{align}
\prod_{a=1}^{\mathsf{N}}q_{\tau ,a}^{(h_{a})}& =\prod_{a=1}^{\mathsf{N}%
}Q(\zeta _{a}^{(h)})=\prod_{a=1}^{\mathsf{N}}\prod_{b=1}^{\mathsf{N}%
_{Q}}((\zeta _{a}^{(h)})^{2}+1/(\zeta _{a}^{(h)})^{2}-\Lambda _{b})  \notag
\\
& =\left( -1\right) ^{\mathsf{N}\text{\thinspace }\mathsf{N}%
_{Q}}\prod_{a=1}^{\mathsf{N}}\prod_{b=1}^{\mathsf{N}_{Q}}(\Lambda
_{b}-((\zeta _{a}^{(h)})^{2}+1/(\zeta _{a}^{(h)})^{2}))=\left( -1\right) ^{%
\mathsf{N}\text{\thinspace }\mathsf{N}_{Q}}\prod_{b=1}^{\mathsf{N}_{Q}}\text{%
\textsc{\^{b}}}_{\text{\textbf{h}}}(\lambda _{b})
\end{align}%
where the \textsc{\^{b}}$_{\text{\textbf{h}}}(\lambda _{b})$ is the
eigenvalue of the operator $\mathcal{\hat{B}}_{-}(\lambda |\beta )$ and the 
\begin{equation}
\Lambda _{b}=\lambda _{b}^{2}+1/\lambda _{b}^{2}
\end{equation}%
are the zeros of the $Q$-function as defined in $\left( \ref{Q-form}\right) $%
. Now we have just to do the action of the monomial:%
\begin{equation}
\prod_{b=1}^{\mathsf{N}_{Q}}\mathcal{\hat{B}}_{-}(\lambda _{b}|\beta )
\end{equation}%
on the right state $\left( \ref{Ref-Sta-R}\right) $ and use that by
definition:%
\begin{equation}
\prod_{b=1}^{\mathsf{N}_{Q}}\mathcal{\hat{B}}_{-}(\lambda _{b}|\beta )|\beta
,h_{1},...,h_{\mathsf{N}}\rangle =|\beta ,h_{1},...,h_{\mathsf{N}}\rangle
\prod_{b=1}^{\mathsf{N}_{Q}}\text{\textsc{\^{b}}}_{\text{\textbf{h}}%
}(\lambda _{b})
\end{equation}%
to prove that the vector in $\left( \ref{Bethe-like-eigenstates}\right) $
coincides, up to the sign, with the vector $\left( \ref{OpenCyeigenT-r-D}%
\right) $ and so it is the corresponding transfer matrix eigenvector;
similarly one shows that the covector in  $\left( \ref%
{Bethe-like-eigenstates}\right) $ coincides with the covector $\left( \ref%
{OpenCyeigenT-l-D}\right) $.

\end{proof}

\section{Conclusions}

In this second article we have shown how to implement the SoV method to characterize the transfer matrix spectrum for integrable models associated to the Bazhanov-Stroganov quantum Lax operator and to the most general integrable  boundary conditions. For that purpose it was necessary to perform a gauge transformation so as to recast the problem in a form similar to the one studied in our first article, i.e., such that one of the boundary $K$-matrices becomes triangular after the gauge transformation. Let us stress that the separate basis was designed again as the (pseudo)-eigenvector basis of some gauged operator of the reflection algebra having simple spectrum. What remains to be done is the construction of integrable local cyclic Hamiltonian having appropriate boundary conditions and commuting with the boundary transfer matrices considered here. This amounts to use trace identities involving the fundamental $R$-matrix acting in the tensor product of two cyclic representations \cite{OpenCyBS90,Tarasov-1992,Tarasov-1993} and to construct the associated $K$-matrices, hence also acting in these cyclic representations. The reflection equations will have to be written for arbitrary choices (and mixing) of the spin-1/2 and cyclic representations. Correspondingly, there will be compatibility conditions between the different $K$-matrices acting in these two different representations. We will address this question in a  forthcoming article \cite{MNP2018b}.

\section*{Acknowledgements}
J. M. M. and G. N. are supported by CNRS and ENS de Lyon;  B. P. is supported by ENS de Lyon and ENS Cachan.



\appendix


\section*{Appendices}

\section{Gauge transformed Yang-Baxter algebra}

\subsection{Gauge transformed Yang-Baxter generators}

For arbitrary complex parameters $\alpha $ and $\beta $ let us introduce the
following two matrices:%
\begin{equation}
G(\lambda |\alpha ,\beta )\equiv \left( 
\begin{array}{cc}
1/(\alpha \beta \lambda ) & \beta /(\alpha \lambda ) \\ 
1 & 1%
\end{array}%
\right) ,\text{ \ }\bar{G}(\lambda |\gamma )\equiv \left( 
\begin{array}{cc}
1/(\lambda \gamma ) & 0 \\ 
1 & 1%
\end{array}%
\right) ,
\end{equation}%
and their inverses:%
\begin{equation}
G^{-1}(\lambda |\alpha ,\beta )=\frac{\alpha \lambda }{\beta -1/\beta }%
\left( 
\begin{array}{cc}
-1 & \beta /(\alpha \lambda ) \\ 
1 & -1/(\alpha \beta \lambda )%
\end{array}%
\right) ,\text{ \ }\bar{G}^{-1}(\lambda |\gamma )=\left( 
\begin{array}{cc}
\lambda \gamma & 0 \\ 
-\lambda \gamma & 1%
\end{array}%
\right) .
\end{equation}%
Now we can construct the gauge transformed bulk monodromy matrix (see \eqref{24}):%
\begin{equation}
M(\lambda |\alpha ,\beta ,\gamma )=G^{-1}(\lambda q^{1/2}|\alpha ,\beta
)\,M(\lambda )\bar{G}(\lambda q^{1/2}|\gamma ) \\
=\left( 
\begin{array}{ll}
A(\lambda |\alpha ,\beta ,\gamma ) & B(\lambda |\alpha ,\beta ) \\ 
C(\lambda |\alpha ,\beta ,\gamma ) & D(\lambda |\alpha ,\beta )%
\end{array}%
\right) ,
\end{equation}%
and, in a similar way, we can define (see \eqref{Inverse-M}): 
\begin{equation}
\hat{M}(\lambda |\alpha ,\beta ,\gamma )=\bar{G}^{-1}(q^{1/2}/\lambda
|\gamma )\,\hat{M}(\lambda )G(q^{1/2}/\lambda |\alpha ,\beta ) \\
=\left( 
\begin{array}{ll}
\bar{A}(\lambda |\alpha ,\beta ,\gamma ) & \bar{B}(\lambda |\alpha ,\beta
,\gamma ) \\ 
\bar{C}(\lambda |\alpha ,\beta ,\gamma ) & \bar{D}(\lambda |\alpha ,\beta
,\gamma )%
\end{array}%
\right) .
\end{equation}%
The definition here chosen of these gauge transformations differ w.r.t. that
used previously in the literature on one hand for the particular choice of
the right transformation in $M(\lambda |\alpha ,\beta ,\gamma )$ and, on the
other hand, as the parameters on the left and the right transformation are a
priori independent. It is simple to prove by direct computations that:%
\begin{eqnarray}
\bar{D}(\lambda |\alpha ,\beta ,\gamma ) &=&f(\alpha ,\beta ,\gamma
)A(1/\lambda |\alpha ,\beta ,\gamma ),\text{ \ \ }\bar{B}(\lambda |\alpha
,\beta ,\gamma )=-f(\alpha ,\beta ,\gamma )B(1/\lambda |\alpha ,\beta ), \\
\bar{C}(\lambda |\alpha ,\beta ,\gamma ) &=&-f(\alpha ,\beta ,\gamma
)C(1/\lambda |\alpha ,\beta ,\gamma ),\text{ \ \ }\bar{A}(\lambda |\alpha
,\beta ,\gamma )=f(\alpha ,\beta ,\gamma )D(1/\lambda |\alpha ,\beta ),
\end{eqnarray}%
where:%
\begin{equation}
f(\alpha ,\beta ,\gamma )=\left( -1\right) ^{N}\frac{\gamma (1-\beta ^{2})}{%
\alpha \beta }.
\end{equation}%
Moreover, the identity:%
\begin{equation}
\text{det}_{q}M(\lambda )=\left( -1\right) ^{N}M(\lambda q^{1/2})\hat{M}%
(q^{1/2}/\lambda ),
\end{equation}%
and the corollaries:%
\begin{eqnarray}
\text{det}_{q}M(\lambda ) &=&\left( -1\right) ^{N}M(\lambda q^{1/2}|\alpha
,\beta ,\gamma )\hat{M}(q^{1/2}/\lambda |\alpha q,\beta ,\gamma q) \\
&=&\left( -1\right) ^{N}\hat{M}(q^{1/2}/\lambda |\alpha q,\beta ,\gamma
q)M(\lambda q^{1/2}|\alpha ,\beta ,\gamma ),
\end{eqnarray}%
imply the following two equivalent expressions of the quantum determinant by
the gauge transformed generators:%
\begin{align}
\text{det}_{q}M(\lambda )& = & & \frac{\gamma (1-q^{2}\beta ^{2})\left[
A(\lambda q^{1/2}|\alpha ,\beta ,\gamma )D(\lambda /q^{1/2}|\alpha ,\beta
q)-B(\lambda q^{1/2}|\alpha ,\beta )C(\lambda /q^{1/2}|\alpha ,\beta
q,\gamma q)\right] }{\alpha \beta } \\
& = & & \frac{\gamma (q^{2}-\beta ^{2})\left[ D(\lambda q^{1/2}|\alpha
,\beta )A(\lambda /q^{1/2}|\alpha ,\beta /q,\gamma q)-C(\lambda
q^{1/2}|\alpha ,\beta ,\gamma )B(\lambda /q^{1/2}|\alpha ,\beta /q)\right] }{%
\alpha \beta },
\end{align}%
plus other two equivalent rewriting. The gauge transformed Yang-Baxter
generators are of special interest as they define a closed set of
commutation relations:%
\begin{align}
A(\lambda |\alpha ,\beta ,\gamma )A(\mu |\alpha ,\beta /q,\gamma /q)& =A(\mu
|\alpha ,\beta ,\gamma )A(\lambda |\alpha ,\beta /q,\gamma /q)
\label{Comm-dyn-1} \\
B(\lambda |\alpha ,\beta )B(\mu |\alpha ,\beta /q)& =B(\mu |\alpha ,\beta
)B(\lambda |\alpha ,\beta /q) \\
C(\lambda |\alpha ,\beta ,\gamma )C(\mu |\alpha ,\beta q,\gamma /q)& =C(\mu
|\alpha ,\beta ,\gamma )C(\lambda |\alpha ,\beta q,\gamma /q) \\
D(\lambda |\alpha ,\beta )D(\mu |\alpha ,\beta /q)& =D(\mu |\alpha ,\beta
)D(\lambda |\alpha ,\beta /q) \\
A(\lambda |\alpha ,\beta ,\gamma )B(\mu |\alpha ,\beta /q)& =\frac{q(\lambda
/\mu -\mu /\lambda )}{\lambda q/\mu -\mu /q\lambda }B(\mu |\alpha ,\beta
)A(\lambda |\alpha ,\beta /q,\gamma q)  \notag \\
& +\frac{(q-1/q)\mu /\lambda }{\lambda q/\mu -\mu /q\lambda }A(\mu |\alpha
,\beta ,\gamma )B(\lambda |\alpha ,\beta /q)  \label{Comm-dyn-2} \\
B(\lambda |\alpha ,\beta )A(\mu |\alpha ,\beta /q,\gamma q)& =\frac{\lambda
/\mu -\mu /\lambda }{q(\lambda q/\mu -\mu /q\lambda )}A(\mu |\alpha ,\beta
,\gamma )B(\lambda |\alpha ,\beta /q)  \notag \\
& +\frac{(q-1/q)\lambda /\mu }{\lambda q/\mu -\mu /q\lambda }B(\mu |\alpha
,\beta )A(\lambda |\alpha ,\beta /q,\gamma q),  \label{Comm-dyn-3}
\end{align}%
We can prove these commutation relations by direct computations using the
properties of the gauge transformations and their action on the Yang-Baxter
equation.

\subsection{Pseudo-reference state for the gauge transformed Yang-Baxter
algebra}

In the following, we want to study the conditions for which a nonzero state
identically annihilated by the action of the operator family $A(\lambda
|\alpha ,\beta ,\gamma )$ exists:%
\begin{equation}
\langle \Omega ,\alpha ,\beta ,\gamma |A(\lambda |\alpha ,\beta ,\gamma )=0.
\end{equation}%
It is an easy consequence of the gauge transformed Yang-Baxter commutation
relations that under the condition that this state exists and is unique then
it is a pseudo-reference state for the gauge transformed Yang-Baxter
algebra, i.e. it holds:%
\begin{align}
\langle \Omega ,\alpha ,\beta ,\gamma |A(\lambda |\alpha ,\beta ,\gamma )&
=0,\text{ \ }\langle \Omega ,\alpha ,\beta ,\gamma |B(\lambda |\alpha ,\beta
)=b(\lambda |\alpha ,\beta )\langle \Omega ,\alpha ,\beta /q,\gamma q|,\text{
}  \label{Gauged reference-1} \\
\langle \Omega ,\alpha ,\beta /q,\gamma q|C(\lambda |\alpha ,\beta q,\gamma
q)& =c(\lambda |\alpha ,\beta q)\langle \Omega ,\alpha ,\beta ,\gamma |,%
\text{ \ }\langle \Omega ,\alpha ,\beta ,\gamma |D(\lambda |\alpha ,\beta
)\neq 0,  \label{Gauged reference-2}
\end{align}%
with:%
\begin{equation}
b(\lambda q^{1/2}|\alpha ,\beta )c(\lambda q^{-1/2}|\alpha ,\beta q)=-\text{%
det}_{q}M(\lambda ).
\end{equation}%
Here, we show that we can construct such a pseudo-reference state if and
only if we impose at least $N+1$ constraints on the bulk and gauge
parameters.

Let us start our analysis looking to the local conditions to be imposed, in
order to do so let us define the local gauge transformed bulk operators:%
\begin{equation}
\left( 
\begin{array}{ll}
A_{n}(\lambda |\alpha ,\beta ,\gamma ) & B_{n}(\lambda |\alpha ,\beta ) \\ 
C_{n}(\lambda |\alpha ,\beta ,\gamma ) & D_{n}(\lambda |\alpha ,\beta )%
\end{array}%
\right) _{0}=G_{0}^{-1}(\lambda q^{1/2}|\alpha ,\beta )\,L_{0,n}(\lambda
q^{-1/2})\bar{G}_{0}(\lambda q^{1/2}|\gamma )
\end{equation}%
and let us introduce the parameters:%
\begin{eqnarray}
z_{n}^{\left( \epsilon _{n},k_{n}\right) } &=&-q^{k_{n}+1/2}\left[ \left( 
\frac{c_{n}^{p}+d_{n}^{p}}{a_{n}^{p}+b_{n}^{p}}\right) ^{1/p}\frac{a_{n}}{%
c_{n}}\right] ^{(1+\epsilon _{n})/2}\frac{\alpha _{n}}{a_{n}}, \\
w_{n}^{\left( \epsilon _{n},k_{n}\right) } &=&-q^{1/2-k_{n}}\left[ \left( 
\frac{a_{n}^{p}+b_{n}^{p}}{c_{n}^{p}+d_{n}^{p}}\right) ^{1/p}\frac{d_{n}}{%
b_{n}}\right] ^{(1+\epsilon _{n})/2}\frac{\beta _{n}}{d_{n}},
\end{eqnarray}%
where $n\in \left\{ 1,...,N\right\} ,$ $\epsilon _{n}=\pm 1,$ $k_{n}\in
\left\{ 0,...,p-1\right\} $. Let us denote with:%
\begin{equation}
\langle h_{n},n|v_{n}=q^{h_{n}}\langle h_{n},n|,\text{ \ \ }%
v_{n}|h_{n},n\rangle =|h_{n},n\rangle q^{h_{n}},
\end{equation}%
the left and right eigenbasis of the operators $v_{n}$.

\begin{lemma}
Let us assume that:%
\begin{equation}
a_{n}^{p}+b_{n}^{p}\neq 0,\text{ \ \ \ }c_{n}^{p}+d_{n}^{p}\neq 0,
\label{dis-equalities}
\end{equation}%
then the non-zero left state annihilated by the local operator $%
A_{n}(\lambda |\alpha ,\beta ,\gamma )$ exists if and only if we impose the
following two constraints on the gauge parameters:%
\begin{equation}
\beta /\alpha =w_{n}^{\left( \epsilon _{n},k_{n}\right) },\text{ \ \ }\gamma
=z_{n}^{\left( \epsilon _{n},k_{n}\right) },  \label{Two-gauge-constraint}
\end{equation}%
for some fixed $\epsilon _{n}=\pm 1$ and $k_{n}\in \left\{ 0,...,p-1\right\} 
$, moreover this state is uniquely defined by:%
\begin{equation}
\langle \Omega _{n,\alpha ,\beta ,\gamma
}|=\sum_{h_{n}=0}^{p-1}q^{h_{n}(k_{n}+1)}\left[ \prod_{r_{n}=1}^{h_{n}}\frac{%
a_{n}q^{r_{n}-1/2}+b_{n}q^{1/2-r_{n}}}{c_{n}q^{r_{n}-1/2}+d_{n}q^{1/2-r_{n}}}%
\left( \frac{c_{n}^{p}+d_{n}^{p}}{a_{n}^{p}+b_{n}^{p}}\right) ^{1/p}\right]
^{(1+\epsilon _{n})/2}\langle h_{n},n|.  \label{Gen-REF-Local}
\end{equation}%
Similarly, if the condition $\left( \ref{dis-equalities}\right) $ holds the
non-zero right state annihilated by the local operator $A_{n}(\lambda
|\alpha ,\beta ,\gamma )$ exists if and only if we impose the following two
constraints on the gauge parameters:%
\begin{equation}
\beta /\alpha =w_{n}^{\left( \epsilon _{n},k_{n}\right) },\text{ \ \ }\gamma
=z_{n}^{\left( \epsilon _{n},k_{n}-2\right) },
\end{equation}%
for some fixed $\epsilon _{n}=\pm 1$ and $k_{n}\in \left\{ 0,...,p-1\right\} 
$, moreover this state is uniquely defined by:%
\begin{equation}
|\Omega _{n,\alpha ,\beta ,\gamma }\rangle
=\sum_{h_{n}=0}^{p-1}|h_{n},n\rangle q^{-h_{n}k_{n}}\left[
\prod_{r_{n}=1}^{h_{n}}\frac{c_{n}q^{r_{n}-1/2}+d_{n}q^{1/2-r_{n}}}{%
a_{n}q^{r_{n}-1/2}+b_{n}q^{1/2-r_{n}}}\left( \frac{a_{n}^{p}+b_{n}^{p}}{%
c_{n}^{p}+d_{n}^{p}}\right) ^{1/p}\right] ^{(1+\epsilon _{n})/2}.
\end{equation}%
These are pseudo-eigenstates of the operator $B_{n}(\lambda |\alpha ,\beta )$%
:%
\begin{eqnarray}
\langle \Omega _{n,\alpha ,\beta ,\gamma }|B_{n}(\lambda |\alpha ,\beta )
&=&b_{n}(\lambda |\alpha ,\beta )\langle \Omega _{n,\alpha ,\beta /q,\gamma
q}|, \\
B_{n}(\lambda |\alpha ,\beta )|\Omega _{n,\alpha ,\beta ,\gamma }\rangle
&=&qb_{n}(\lambda q|\alpha ,\beta )|\Omega _{n,\alpha ,\beta /q,\gamma
q}\rangle ,
\end{eqnarray}%
where:%
\begin{equation}
b_{n}(\lambda |\alpha ,\beta )=\frac{\beta ^{2}\gamma _{n}}{(1-\beta
^{2})\mu _{n,\epsilon _{n}}}\left( \frac{\lambda }{q^{1/2}\mu _{n,\epsilon
_{n}}}-\frac{q^{1/2}\mu _{n,\epsilon _{n}}}{\lambda }\right) .
\end{equation}
\end{lemma}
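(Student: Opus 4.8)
The plan is to reduce the existence of $\langle\Omega_{n,\alpha,\beta,\gamma}|$ to an explicit first-order recursion on its components in the eigenbasis of $v_n$, and then to extract the two advertised constraints from the compatibility and the cyclic closure of that recursion. First I would perform the $2\times2$ matrix product defining the local gauge-transformed operators; writing $\nu=\lambda q^{1/2}$, one finds that $A_n(\lambda|\alpha,\beta,\gamma)$ contains only the powers $\nu^{\pm1}$,
\begin{equation}
A_n(\lambda|\alpha,\beta,\gamma)=\frac{1}{\beta-1/\beta}\left(\nu\,P_+ +\frac{1}{\nu}\,P_-\right),
\end{equation}
where $P_+$ is built from $v_n$, $v_n^{-1}$ and $u_n(q^{-1/2}a_n v_n+q^{1/2}b_n v_n^{-1})$, and $P_-$ from $v_n$, $v_n^{-1}$ and $u_n^{-1}(q^{1/2}c_n v_n+q^{-1/2}d_n v_n^{-1})$. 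Since the two coefficients are $\lambda$-independent, the requirement $\langle\Omega_{n,\alpha,\beta,\gamma}|A_n(\lambda|\alpha,\beta,\gamma)=0$ for all $\lambda$ is equivalent to the pair $\langle\Omega_{n,\alpha,\beta,\gamma}|P_+=0$ and $\langle\Omega_{n,\alpha,\beta,\gamma}|P_-=0$.

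Next I would set $\langle\Omega_{n,\alpha,\beta,\gamma}|=\sum_{h=0}^{p-1}c_h\langle h,n|$ and use $\langle h,n|v_n=q^h\langle h,n|$ and $\langle h,n|u_n^{\pm1}=\langle h\pm1,n|$. As each of $P_\pm$ is diagonal in $v_n$ plus a single Weyl shift, the two conditions collapse to two first-order recursions $c_h/c_{h-1}=R_+(h)$ and $c_h/c_{h-1}=R_-(h)$, with $R_+(h)$ proportional to $(a_n q^{h-1/2}+b_n q^{1/2-h})/(\alpha\alpha_n\gamma^{-1}q^h+\beta\delta_n q^{-h})$ and $R_-(h)$ the corresponding ratio in $c_n,d_n$. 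The first requirement is $R_+\equiv R_-$: cross-multiplying and substituting $\gamma_n=a_nc_n/\alpha_n$, $\delta_n=b_nd_n/\beta_n$, the coefficients of $q^{\pm2h}$ cancel identically, so compatibility reduces to a single scalar relation among $\gamma$, $\beta/\alpha$ and the bulk parameters.

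The decisive input is the cyclic closure. Because $u_n^p=v_n^p=1$ and $q^p=1$ (from \eqref{def-p}, $p'$ being even), the recursion must close over one period, $\prod_{h=1}^{p}R_+(h)=1$. Evaluating this product I would invoke the root-of-unity identity $\prod_{r=1}^{p}(a_n q^{r-1/2}+b_n q^{1/2-r})\propto a_n^p+b_n^p$ and its $c_n,d_n$ counterpart; this is exactly where the hypotheses $a_n^p+b_n^p\neq0$ and $c_n^p+d_n^p\neq0$ enter, guaranteeing the products are finite and nonzero and producing the $p$-th root $\big((c_n^p+d_n^p)/(a_n^p+b_n^p)\big)^{1/p}$. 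Solving the compatibility relation together with the closure condition for the pair $(\gamma,\beta/\alpha)$ then reproduces precisely the discrete families $\gamma=z_n^{(\epsilon_n,k_n)}$, $\beta/\alpha=w_n^{(\epsilon_n,k_n)}$; the sign $\epsilon_n=\pm1$ distinguishes the two solution branches (one giving a constant ratio, the other the full product) and $k_n\in\{0,\dots,p-1\}$ the residual $q$-phase left by the $p$-th root. I would then close this part by checking that the displayed formula for $\langle\Omega_{n,\alpha,\beta,\gamma}|$ solves the recursion, its consecutive-coefficient ratio telescoping to $R_+(h)$.

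The right pseudo-reference state is the mirror computation, annihilating $A_n$ on kets rather than bras; the shift direction reverses and yields $\gamma=z_n^{(\epsilon_n,k_n-2)}$. For the pseudo-eigenvector property under $B_n$ I would either invoke the gauge-transformed Yang--Baxter relation \eqref{Gauged reference-1}, which abstractly forces a covector annihilated by $A_n$ to be a pseudo-eigenvector of $B_n$ with the shift $\beta\to\beta/q$, $\gamma\to\gamma q$, or compute directly from $B_n(\lambda|\alpha,\beta)=(\beta-1/\beta)^{-1}(-\alpha\nu L^{12}+\beta L^{22})$ and read off the eigenvalue $b_n(\lambda|\alpha,\beta)$. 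I expect the main obstacle to be the bookkeeping of the root-of-unity products in the closure condition and the correct matching of the discrete labels $\epsilon_n,k_n$ to the branches of the $p$-th roots; the remainder is linear algebra and telescoping.
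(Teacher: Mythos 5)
Your proposal is correct and follows essentially the same route as the paper's proof: expanding the putative state in the $v_n$-eigenbasis, splitting the annihilation condition into the $\lambda^{\pm 1}$ coefficients (the paper's $C_h^{\pm}=0$ conditions, your $R_{\pm}$ recursions), and fixing $(\gamma,\beta/\alpha)$ from compatibility plus cyclic closure, with the two branches and the $q$-phase giving $\epsilon_n$ and $k_n$. Your explicit identification of the $q^{\pm 2h}$ cancellations via $\gamma_n=a_nc_n/\alpha_n$, $\delta_n=b_nd_n/\beta_n$ and of the root-of-unity product identity simply makes precise the steps the paper asserts as "easy to show," and your direct computation of the $B_n$ action matches the paper's.
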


\begin{proof}
The lemma is proven by direct construction. Let us introduce a state: 
\begin{equation}
\langle \Omega _{n,\alpha ,\beta ,\gamma }|=\sum_{h=0}^{p-1}c_{h}(n,\alpha
,\beta ,\gamma )\langle h,n|,
\end{equation}%
and look for the conditions to be imposed on $c_{h}(n,\alpha ,\beta ,\gamma
) $ in order to satisfy the equation:%
\begin{equation}
\langle \Omega _{n,\alpha ,\beta ,\gamma }|A_{n}(\lambda |\alpha ,\beta
,\gamma )=0,\text{ \ }\forall \lambda \in \mathbb{C}.
\label{Nilpotent condition}
\end{equation}%
By the definition of $A_{n}(\lambda |\alpha ,\beta ,\gamma )$ it is easy to
verify that we have:%
\begin{equation}
\langle \Omega _{n,\alpha ,\beta ,\gamma }|A_{n}(\lambda |\alpha ,\beta
,\gamma )=\left( -\lambda \sum_{h=0}^{p-1}C_{h}^{+}\langle h,n|+\frac{1}{%
\lambda }\sum_{h=0}^{p-1}C_{h}^{-}\langle h,n|\right) \frac{\beta ^{2}}{%
1-\beta ^{2}},
\end{equation}%
where:%
\begin{eqnarray}
C_{h}^{+} &=&q^{-1/2}c_{h}(\alpha _{n}q^{h}+\left( \beta \gamma /\alpha
\right) \delta _{n}q^{-h})+c_{h-1}\gamma
q^{1/2}(a_{n}q^{h-1/2}+b_{n}q^{1/2-h}), \\
C_{h}^{-} &=&q^{1/2}c_{h}(\beta _{n}q^{-h}+\left( \beta \gamma /\alpha
\right) \gamma _{n}q^{h})+\left( \beta /\alpha \right)
q^{-1/2}c_{h+1}(c_{n}q^{h+1/2}+d_{n}q^{-1/2-h}),
\end{eqnarray}%
and we omit to write explicitly the dependence on $n,\alpha ,\beta ,\gamma $ in $c_{h}$ when
it is not misleading. So that we get the following system of equations:%
\begin{equation}
C_{h}^{+}=0\text{, \ \ \ }C_{h}^{-}=0\text{ \ \ \ }\forall h\in \left\{
0,...,p-1\right\} .  \label{Nilpotence}
\end{equation}%
As we have assumed that the bulk parameters are generic and  satisfy $\left( %
\ref{dis-equalities}\right) $, the equations $C_{h}^{+}=0$ fix the
values of the ratios $E_{h}\equiv c_{h-1}/c_{h}$ and the equations $C_{h}^{-}=0$ fix
the value of ratios $F_{h}\equiv c_{h+1}/c_{h}$ for any $h\in \{0,...,p-1\}$ and
one has to impose the compatibility of these values:%
\begin{equation}
E_{h}=1/F_{h-1}\text{ }\forall h\in \left\{ 0,...,p-1\right\} ,
\end{equation}%
together with the cyclicity condition:%
\begin{equation}
\prod_{h=0}^{p-1}E_{h}=1.
\end{equation}%
Then it is easy to show that the only solution of this system of equation is
obtained fixing the two gauge parameters by $\left( \ref%
{Two-gauge-constraint}\right) $ which correspondingly fixes the form of the
state $\left( \ref{Gen-REF-Local}\right) $.

Let us compute now the action of the operator $B_{n}(\lambda |\alpha ,\beta
) $ on this state; by definition it holds:%
\begin{equation}
B_{n}(\lambda |\alpha ,\beta )=\frac{\beta ^{2}D_{n}(\lambda )-\alpha \beta
\lambda q^{1/2}B_{n}}{\beta ^{2}-1}
\end{equation}%
so that:%
\begin{eqnarray}
&&\langle \Omega _{n,\alpha ,\beta ,\gamma }|B_{n}(\lambda |\alpha ,\beta
)|h_{n},n\rangle =  \notag \\
&&\hspace{1cm}\frac{\beta ^{2}}{1-\beta ^{2}}\left\{ \lambda \left[ c_{h_{n}}%
\frac{q^{-1/2}\delta _{n}}{q^{h_{n}}}+\frac{\alpha }{\beta }%
c_{h_{n}-1}(a_{n}q^{h_{n}-1/2}+b_{n}q^{1/2-h_{n}})\right] -c_{h_{n}}\frac{%
q^{1/2}\gamma _{n}q^{h_{n}}}{\lambda }\right\}  \notag \\
&=&\frac{\beta ^{2}c_{h_{n}}q^{h_{n}}}{\beta ^{2}-1}\left\{ \frac{\lambda }{%
q^{1/2}}\frac{\alpha }{\beta \gamma }\alpha _{n}+\frac{q^{1/2}\gamma _{n}}{%
\lambda }\right\},
\end{eqnarray}%
where to get the third line we used the identity $C_{h_{n}}^{+}=0$. Now
remarking that:%
\begin{equation}
c_{h_{n}}(n,\alpha ,\beta ,\gamma )q^{h_{n}}=c_{h_{n}}(n,\alpha ,\beta
/q,\gamma q),
\end{equation}%
as the effect of $q^{h_{n}}$ is to bring $k_{n}$ to $k_{n}+1$ in the
state $\langle \Omega _{n,\alpha ,\beta ,\gamma }|$, this, for the gauge
choice $\left( \ref{Two-gauge-constraint}\right) $, being equivalent to the
above redefinitions of the gauge parameters. So that we get:%
\begin{equation}
\langle \Omega _{n,\alpha ,\beta ,\gamma }|B_{n}(\lambda |\alpha ,\beta
)=b_{n}(\lambda |\alpha ,\beta )\langle \Omega _{n,\alpha ,\beta /q,\gamma
q}|
\end{equation}%
and so:%
\begin{eqnarray}
b_{n}(\lambda |\alpha ,\beta ) &=&\frac{\beta ^{2}}{\beta ^{2}-1}\left\{ 
\frac{\lambda }{q^{1/2}}\frac{\alpha }{\beta \gamma }\alpha _{n}+\frac{%
q^{1/2}\gamma _{n}}{\lambda }\right\} \\
&=&\frac{\beta ^{2}}{\beta ^{2}-1}\left\{ 
\begin{array}{c}
\frac{q^{1/2}\gamma _{n}}{\lambda }+\frac{\lambda }{q^{1/2}}q^{-1}\frac{%
a_{n}d_{n}}{\beta _{n}}\text{ \ \ for }\epsilon _{n}=-1 \\ 
\frac{q^{1/2}\gamma _{n}}{\lambda }+\frac{\lambda }{q^{1/2}}q^{-1}\frac{%
c_{n}b_{n}}{\beta _{n}}\text{ \ \ for }\epsilon _{n}=1%
\end{array}%
\right. .
\end{eqnarray}%
Similarly, one can prove our statements for the right state and the action
on it of $B_{n}(\lambda |\alpha ,\beta )$.
\end{proof}

Let us remark that if the condition $(\ref{dis-equalities})$ are not
satisfied we can still derive the left and right local reference states
imposing some case dependent condition on the gauge parameters; here for
simplicity we have chosen to omit the description of these cases.

\begin{proposition}
Let us assume that for any $n\in \left\{ 1,...,\mathsf{N}\right\} $ the
conditions $(\ref{dis-equalities})$ is satisfied then the non-zero left
state annihilated by the operator family $A(\lambda |\alpha ,\beta ,\gamma )$
exists if and only if we impose the following $\mathsf{N}+1$ constraints on
the bulk and gauge parameters:%
\begin{equation}
\gamma =z_{1}^{\left( \epsilon _{1},k_{1}\right) },\text{ \ \ }\beta /\alpha
=w_{N}^{\left( \epsilon _{N},k_{N}\right) },\text{ \ \ }w_{n}^{\left(
\epsilon _{n},k_{n}\right) }=1/z_{n+1}^{\left( \epsilon
_{n+1},k_{n+1}\right) }\text{ \ }\forall n\in \left\{ 1,...,\mathsf{N}%
-1\right\} ,  \label{Conditions-nilpotent-g}
\end{equation}%
for fixed $\mathsf{N}$-tuples of $\epsilon _{n}=\pm 1$ and $k_{n}\in \left\{
0,...,p-1\right\} $, moreover it is uniquely defined by:%
\begin{equation}
\langle \Omega ,\alpha ,\beta ,\gamma
|=\sum_{h_{1},...,h_{N}=0}^{p-1}\prod_{n=1}^{N}q^{h_{n}(k_{n}+1)}\left[
\prod_{r_{n}=1}^{h_{n}}\frac{a_{n}q^{r_{n}-1/2}+b_{n}q^{1/2-r_{n}}}{%
c_{n}q^{r_{n}-1/2}+d_{n}q^{1/2-r_{n}}}\left( \frac{a_{n}^{p}+d_{n}^{p}}{%
a_{n}^{p}+b_{n}^{p}}\right) ^{1/p}\right] ^{(1+\epsilon
_{n})/2}\bigotimes_{n=1}^{N}\langle h_{n},n|.  \label{Left-ref}
\end{equation}%
Under the same condition the non-zero right state annihilated by the
operator family $A(\lambda |\alpha ,\beta ,\gamma )$ exists if and only if
we impose the following $\mathsf{N}+1$ constraints on the bulk and gauge
parameters:%
\begin{equation}
\gamma =z_{1}^{\left( \epsilon _{1},k_{1}-2\right) },\text{ \ \ }\beta
/\alpha =w_{N}^{\left( \epsilon _{N},k_{N}\right) },\text{ \ \ }%
w_{n}^{\left( \epsilon _{n},k_{n}\right) }=1/z_{n+1}^{\left( \epsilon
_{n+1},k_{n+1}-2\right) }\text{ \ }\forall n\in \left\{ 1,...,N-1\right\} ,
\label{Conditions-nilpotent-g-R}
\end{equation}%
for fixed $\mathsf{N}$-tuples of $\epsilon _{n}=\pm 1$ and $k_{n}\in \left\{
0,...,p-1\right\} $, moreover it is uniquely defined by:%
\begin{equation}
|\Omega ,\alpha ,\beta ,\gamma \rangle
=\sum_{h_{1},...,h_{N}=0}^{p-1}\prod_{n=1}^{\mathsf{N}}q^{-h_{n}k_{n}}\left[
\prod_{r_{n}=1}^{h_{n}}\frac{c_{n}q^{r_{n}-1/2}+d_{n}q^{1/2-r_{n}}}{%
a_{n}q^{r_{n}-1/2}+b_{n}q^{1/2-r_{n}}}\left( \frac{a_{n}^{p}+b_{n}^{p}}{%
c_{n}^{p}+d_{n}^{p}}\right) ^{1/p}\right] ^{(1+\epsilon
_{n})/2}\bigotimes_{n=1}^{\mathsf{N}}|h_{n},n\rangle .  \label{Right-ref}
\end{equation}%
These are pseudo-eigenstates of $B(\lambda |\alpha ,\beta )$:%
\begin{eqnarray}
\langle \Omega ,\alpha ,\beta ,\gamma |B(\lambda |\alpha ,\beta )
&=&b(\lambda |\alpha ,\beta )\langle \Omega ,\alpha ,\beta /q,\gamma q|, \\
B(\lambda |\alpha ,\beta )|\Omega ,\alpha ,\beta ,\gamma \rangle &=&|\Omega
,\alpha ,\beta q,\gamma /q\rangle q^{N}b(\lambda q|\alpha ,\beta ),
\end{eqnarray}%
with:%
\begin{equation}
b(\lambda |\alpha ,\beta )=\prod_{n=1}^{\mathsf{N}}b_{n}(\lambda |\alpha
,\beta ).
\end{equation}
\end{proposition}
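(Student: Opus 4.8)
The plan is to assemble the global (co)vector from the local reference states of the preceding Lemma and to establish the annihilation $\langle \Omega ,\al ,\be ,\ga |A(\la |\al ,\be ,\ga )=0$ by induction on the number of sites $\mathsf{N}$, exploiting the fact that $M(\la )=L_{a,\mathsf{N}}(\la q^{-1/2})\cdots L_{a,1}(\la q^{-1/2})$ is an ordered product of local Lax operators. Writing the gauge-transformed $A$-operator as an auxiliary-space contraction $A(\la |\al ,\be ,\ga )=\langle e_{1}|G^{-1}(\la q^{1/2}|\al ,\be )\,L_{a,\mathsf{N}}\cdots L_{a,1}\,\bar G(\la q^{1/2}|\ga )|e_{1}\rangle$, I would peel off one site at a time, inserting at each bond an intermediate gauge matrix and using the co-multiplication rule for the gauge-transformed $A$-operator (schematically $A=(L)_{11}A'+(L)_{12}C'$ in the gauge variables) to reduce the $\mathsf{N}$-site statement to the $(\mathsf{N}-1)$-site one. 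The base case $\mathsf{N}=1$ is precisely the Lemma already proven.

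The key point is that the product ansatz $\langle \Omega ,\al ,\be ,\ga |=\bigotimes_{n}\langle \Omega _{n}|$ decouples the global nilpotency system into the local systems $C^{\pm}_{h_{n}}=0$ treated in the Lemma, up to a coupling between neighbouring sites carried by the auxiliary-space gauge vectors (equivalently by the shift operators $v_{n}$). At site $1$ the right gauge vector $\bar G(\cdot |\ga )|e_{1}\rangle$ forces, through the Lemma, $\ga =z_{1}^{(\ep_{1},k_{1})}$ together with $\be /\al =w_{1}^{(\ep_{1},k_{1})}$; the surviving auxiliary direction produced after contracting $L_{a,1}$ with $\langle \Omega _{1}|$ must then coincide with the left gauge datum required for the local annihilation at site $2$, which imposes $w_{1}^{(\ep_{1},k_{1})}=1/z_{2}^{(\ep_{2},k_{2})}$, and so on up the chain, producing the intermediate gluing conditions in \eqref{Conditions-nilpotent-g}; the top bond closes with $\be /\al =w_{\mathsf{N}}^{(\ep_{\mathsf{N}},k_{\mathsf{N}})}$. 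Solving each decoupled local system with the coefficients of \eqref{Gen-REF-Local} and taking the tensor product gives exactly \eqref{Left-ref}, establishing existence and the explicit form; the right-state statement \eqref{Right-ref} follows by running the same argument from the opposite end of the chain, which is why \eqref{Conditions-nilpotent-g-R} carries the shift $k_{n}\to k_{n}-2$ inherited from the right-handed version of the Lemma.

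For necessity and uniqueness I would argue that any nonzero $\langle \Omega |$ annihilated by the whole family $A(\la |\al ,\be ,\ga )$ must, once expanded in the product eigenbasis of the $v_{n}$, have each local factor annihilated by the corresponding local operator; the local Lemma then fixes each factor uniquely and forces the $\mathsf{N}+1$ constraints, so that the global state is unique up to normalisation. The pseudo-eigenstate property under $B(\la |\al ,\be )$ with $b(\la |\al ,\be )=\prod_{n}b_{n}(\la |\al ,\be )$ then follows from the general pseudo-reference relations recorded before the Lemma together with the observation that the telescoping transmits the local factor $b_{n}$ at each site, each accompanied by the shift $\be \to \be /q$, $\ga \to \ga q$ already seen locally. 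I expect the main obstacle to be the bookkeeping of the telescoping across the bonds: because $G(\la |\al ,\be )$ and $\bar G(\la |\ga )$ are genuinely different gauge matrices, the cancellation at each bond is not automatic and must be controlled through the explicit local structure, and it is precisely this control that both generates the matching (gluing) conditions and dictates the simultaneous shifts $\be /q$, $\ga q$ of the gauge parameters.
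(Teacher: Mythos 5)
Your global strategy --- peel off sites with intermediate gauge matrices, invoke the one-site Lemma, and read the constraints as gluing conditions --- is indeed the paper's strategy, but the step you explicitly defer (``the cancellation at each bond \dots must be controlled through the explicit local structure'') is precisely the heart of the proof, and as written your induction does not close. The co-multiplication you invoke, schematically $A=(L)_{11}A'+(L)_{12}C'$, reads in the paper's notation
\[
{A}_{\mathsf{N},\dots ,1}(\lambda |\alpha ,\beta ,\gamma )={A}_{\mathsf{N},\dots ,2}(\lambda |\alpha ,\beta ,x_{1},y_{1})\,{A}_{1}(\lambda |x_{1},y_{1},\gamma )+{B}_{\mathsf{N},\dots ,2}(\lambda |\alpha ,\beta ,x_{1},y_{1})\,{C}_{1}(\lambda |x_{1},y_{1},\gamma ),
\]
with a \emph{two-parameter} intermediate gauge matrix $G(\lambda q^{1/2}|x_{1},y_{1})$ inserted at the bond (not a $\bar G$). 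A tensor product of local states, each annihilated by its local gauged $A$-operator, kills the first term but says nothing about the cross term ${B}_{\mathsf{N},\dots ,2}\,{C}_{1}$: generically no state is annihilated simultaneously by an $A$-type and a $C$-type (or $B$-type) operator, so the telescoping stalls there. What closes the induction in the paper are the identities
\[
B(\lambda |\alpha ,\beta ,x,y)=A(\lambda |\alpha ,\beta ,x/y),\qquad
C(\lambda |x,y,\gamma )=\tfrac{x^{2}y^{2}-1}{1-y^{2}}\,A(\lambda |1,1/xy,\gamma ),\qquad
A(\lambda |\alpha ,\beta ,x,y)=-A(\lambda |\alpha ,\beta ,xy),
\]
together with $A(\lambda |x,y,\gamma )=\tfrac{x^{2}-y^{2}}{1-y^{2}}A(\lambda |1,y/x,\gamma )$, which recast the whole decomposition as a sum of two products of $A$-type operators \emph{only}:
\[
{A}_{\mathsf{N},\dots ,1}(\lambda |\alpha ,\beta ,\gamma )
=\tfrac{y_{1}^{2}-x_{1}^{2}}{1-y_{1}^{2}}\,{A}_{\mathsf{N},\dots ,2}(\lambda |\alpha ,\beta ,x_{1}y_{1})\,{A}_{1}(\lambda |1,y_{1}/x_{1},\gamma )
+\tfrac{x_{1}^{2}y_{1}^{2}-1}{1-y_{1}^{2}}\,{A}_{\mathsf{N},\dots ,2}(\lambda |\alpha ,\beta ,x_{1}/y_{1})\,{A}_{1}(\lambda |1,1/x_{1}y_{1},\gamma ).
\]
Only now does the product ansatz work: choosing the site-$1$ factor in the kernel of ${A}_{1}(\lambda |1,y_{1}/x_{1},\gamma )$ (Lemma: $y_{1}/x_{1}=w_{1}^{(\epsilon _{1},k_{1})}$, $\gamma =z_{1}^{(\epsilon _{1},k_{1})}$) kills the first term, and choosing the subchain factor in the kernel of ${A}_{\mathsf{N},\dots ,2}(\lambda |\alpha ,\beta ,x_{1}/y_{1})$ kills the second; since the subchain's new ``$\gamma$'' is $x_{1}/y_{1}=1/w_{1}^{(\epsilon _{1},k_{1})}$, iterating the Lemma produces exactly the gluing conditions $w_{n}^{(\epsilon _{n},k_{n})}=1/z_{n+1}^{(\epsilon _{n+1},k_{n+1})}$ of \eqref{Conditions-nilpotent-g}, closed at the top by $\beta /\alpha =w_{\mathsf{N}}^{(\epsilon _{\mathsf{N}},k_{\mathsf{N}})}$. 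These identities are not bookkeeping; they are the missing idea, and without them (or an equivalent triangularization) the cross term is fatal.

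There is a second gap in your necessity/uniqueness argument: from $\langle \Omega |A(\lambda |\alpha ,\beta ,\gamma )=0$ you cannot conclude, by expanding $\langle \Omega |$ in the product eigenbasis of the $v_{n}$, that ``each local factor'' is annihilated by the corresponding local operator, because a generic element of the kernel is not a product state and has no local factors to speak of. The paper argues necessity by counting instead: writing $A(\lambda |\alpha ,\beta ,\gamma )=\sum_{n=0}^{\mathsf{N}}\lambda ^{2n-\mathsf{N}}A_{n}(\alpha ,\beta ,\gamma )$, the $\mathsf{N}+1$ operator coefficients $A_{n}$ are invertible for generic parameters, so a nontrivial common kernel requires at least $\mathsf{N}+1$ constraints; the construction above exhibits $\mathsf{N}+1$ sufficient ones, and uniqueness of the state then descends from the uniqueness statement of the one-site Lemma through the same subchain decomposition.
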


\begin{proof}
The operator family $A(\lambda |\alpha ,\beta ,\gamma )$ is a degree $%
\mathsf{N}$ Laurent polynomial of the form:%
\begin{equation}
A(\lambda |\alpha ,\beta ,\gamma )=\sum_{n=0}^{\mathsf{N}}\lambda
^{2n-N}A_{n}(\alpha ,\beta ,\gamma )
\end{equation}%
where the $A_{n}(\alpha ,\beta ,\gamma )$ are operators, for example we
write explicitly:%
\begin{align}
A_{0}(\alpha ,\beta ,\gamma )& =\frac{-\alpha \prod_{n=1}^{\mathsf{N}}\beta
_{n}v_{n}^{-1}+\beta \gamma \prod_{n=1}^{\mathsf{N}}\gamma
_{n}v_{n}+q^{-1/2}\beta \sum_{a=1}^{\mathsf{N}}\left( \prod_{n=a+1}^{\mathsf{%
N}}\beta _{n}v_{n}^{-1}\right) C_{a}\prod_{n=1}^{a-1}\gamma _{n}v_{n}}{%
\left( \beta -1/\beta \right) \gamma }, \\
A_{N}(\alpha ,\beta ,\gamma )& =\frac{-\alpha \prod_{n=1}^{\mathsf{N}}\alpha
_{n}v_{n}+\beta \gamma \prod_{n=1}^{\mathsf{N}}\delta
_{n}v_{n}^{-1}-q^{1/2}\alpha \gamma \sum_{a=1}^{\mathsf{N}}\left(
\prod_{n=a+1}^{\mathsf{N}}\alpha _{n}v_{n}\right)
B_{a}\prod_{n=1}^{a-1}\delta _{n}v_{n}^{-1}}{\left( \beta -1/\beta \right)
\gamma }.
\end{align}%
For general values of the parameters these are invertible operators so that
we have to impose at least $\mathsf{N}+1$ constraints to have that their
common kernel is at least one dimensional. We can find the set of
constraints by using induction and decomposing $A(\lambda |\alpha ,\beta
,\gamma )$ in terms of gauged operators on two subchains one of $\mathsf{N}%
-1 $ sites and one of $1$ site. The most general decomposition reads:%
\begin{equation}
{A}_{\mathsf{N},...,1}(\lambda |\alpha ,\beta ,\gamma )={A}_{\mathsf{N}%
,...,2}(\lambda |\alpha ,\beta ,x_{1},y_{1}){A}_{1}(\lambda
|x_{1},y_{1},\gamma )+{B}_{\mathsf{N},...,2}(\lambda |\alpha ,\beta
,x_{1},y_{1}){C}_{1}(\lambda |x_{1},y_{1},\gamma )
\end{equation}%
where we have defined:%
\begin{equation}
M(\lambda |\alpha ,\beta ,x,y)=G^{-1}(\lambda q^{1/2}|\alpha ,\beta
)\,M(\lambda )G(\lambda q^{1/2}|x,y) \\
=\left( 
\begin{array}{ll}
A(\lambda |\alpha ,\beta ,x,y) & B(\lambda |\alpha ,\beta ,x,y) \\ 
C(\lambda |\alpha ,\beta ,x,y) & D(\lambda |\alpha ,\beta ,x,y)%
\end{array}%
\right) ,
\end{equation}%
and we have explicitly pointed out in the subscripts the quantum sites to
which the operator are referred. The following identities holds:%
\begin{eqnarray}
A(\lambda |\alpha ,\beta ,x,y) &=&-A(\lambda |\alpha ,\beta ,xy), \\
B(\lambda |\alpha ,\beta ,x,y) &=&A(\lambda |\alpha ,\beta ,x/y), \\
C(\lambda |x,y,\gamma ) &=&\frac{x^{2}y^{2}-1}{1-y^{2}}A(\lambda
|1,1/xy,\gamma ), \\
A(\lambda |x,y,\gamma ) &=&\frac{x^{2}-y^{2}}{1-y^{2}}A(\lambda
|1,y/x,\gamma ),
\end{eqnarray}%
from which it follows:%
\begin{align}
{A}_{\mathsf{N},...,1}(\lambda |\alpha ,\beta ,\gamma )=& \frac{\left(
y_{1}^{2}-x_{1}^{2}\right) {A}_{\mathsf{N},...,2}(\lambda |\alpha ,\beta
,x_{1}y_{1}){A}_{1}(\lambda |1,y_{1}/x_{1},\gamma )}{1-y_{1}^{2}}  \notag \\
+& \frac{(x_{1}^{2}y_{1}^{2}-1){A}_{\mathsf{N},...,2}(\lambda |\alpha ,\beta
,x_{1}/y_{1}){A}_{1}(\lambda |1,1/x_{1}y_{1},\gamma )}{1-y_{1}^{2}}.
\end{align}%
Then ${A}_{\mathsf{N},...,1}(\lambda |\alpha ,\beta ,\gamma )$ admits a
non-zero state annihilated by its action once we impose that it is true for ${A}_{\mathsf{N},...,2}(\lambda |\alpha ,\beta ,x_{1}y_{1})$ and ${A%
}_{1}(\lambda |1,1/x_{1}y_{1},\gamma )$ or for ${A}_{1}(\lambda
|1,y_{1}/x_{1},\gamma )$ and ${A}_{\mathsf{N},...,2}(\lambda |\alpha ,\beta
,x_{1}/y_{1})$, and this state is given by the tensor product of the ones on
the two subchains. As the parameters $x_{1}$ and $y_{1}$ are arbitrary in
fact these two conditions are equivalents and so we can chose just one of
them. So let us say we ask the second one and we repeat the same argument
for ${A}_{\mathsf{N},...,2}(\lambda |\alpha ,\beta ,x_{1}/y_{1})$, i.e. ${A}%
_{\mathsf{N},...,2}(\lambda |\alpha ,\beta ,x_{1}/y_{1})$ admits such a
state if ${A}_{2}(\lambda |1,y_{2}/x_{2},x_{1}/y_{1})$ and ${A}_{\mathsf{N}%
,...,3}(\lambda |\alpha ,\beta ,x_{2}/y_{2})$ do. So on by induction we get
that the existence condition is equivalent to the existence conditions for
the following $\mathsf{N}$ local operators:%
\begin{equation}
{A}_{n}(\lambda |1,y_{n}/x_{n},x_{n-1}/y_{n-1})\text{ for any }n\in \left\{
1,...,N\right\} ,  \label{One-site-A}
\end{equation}%
where we have denoted%
\begin{equation}
y_{N}/x_{N}=\beta /\alpha ,\text{ \ }x_{0}/y_{0}=\gamma ,
\end{equation}%
while the $y_{n}/x_{n}$ for any $n\in \left\{ 1,...,\mathsf{N}-1\right\} $
are free parameters to be used to satisfy the existence condition for the
local operators ${A}_{n}(\lambda |1,y_{n}/x_{n},x_{n-1}/y_{n-1})$. From the
previous lemma for ${A}_{n}(\lambda |1,s_{n},r_{n})$, the existence
condition is equivalent to:%
\begin{equation}
s_{n}=w_{n}^{\left( \epsilon _{n}\right) }\text{ \ and \ }%
r_{n}=z_{n}^{\left( \epsilon _{n}\right) }
\end{equation}%
for any $\epsilon _{n}=\pm 1$ and the right state annihilated by ${A}%
_{n}(\lambda |1,s_{n},r_{n})$ reads:%
\begin{equation}
\langle \Omega _{n},s_{n},r_{n}|=\sum_{h_{n}=0}^{p-1}q^{h_{n}(k_{n}+1)}\left[
\prod_{k_{n}=1}^{h_{n}}\frac{a_{n}q^{k_{n}-1/2}+b_{n}q^{1/2-k_{n}}}{%
c_{n}q^{k_{n}-1/2}+d_{n}q^{1/2-k_{n}}}\left( \frac{c_{n}^{p}+d_{n}^{p}}{%
a_{n}^{p}+b_{n}^{p}}\right) ^{1/p}\right] ^{(1+\epsilon _{n})/2}\langle
h_{n},n|.  \label{local reference states}
\end{equation}%
From this it is clear that the existence conditions of such a state for ${A}%
_{\mathsf{N},...,1}(\lambda |\alpha ,\beta ,\gamma )$ coincides with the
simultaneous existence for the $\mathsf{N}$ local operators $\left( \ref{One-site-A}%
\right) $ and that the state is just the tensor product of the states $%
\left( \ref{local reference states}\right) $ so that our proposition is
proven. Similarly, we can prove the statement for the right state and using
the previous lemma we can prove our statement on the action of the operator $%
B(\lambda |\alpha ,\beta )$ on these states.
\end{proof}

\section{Gauge transformed Reflection algebra}

\subsection{Gauge transformed boundary operators}

The gauged two-row monodromy matrix can be defined as it follows: 
\begin{equation}
\mathcal{U}_{-}(\lambda |\alpha ,\beta )\equiv \frac{q^{1/2}}{\lambda }%
G^{-1}(\lambda q^{1/2}|\alpha ,\beta )\,\mathcal{U}_{-}(\lambda
)\,G(q^{1/2}/\lambda |\alpha ,\beta )=\left( 
\begin{array}{ll}
\mathcal{A}_{-}(\lambda |\alpha ,\beta q^{2}) & \mathcal{B}_{-}(\lambda
|\alpha ,\beta ) \\ 
\,\mathcal{C}_{-}(\lambda |\alpha ,\beta q^{2}) & \mathcal{D}_{-}(\lambda
|\alpha ,\beta )%
\end{array}%
\right) .  \label{U-gauge}
\end{equation}%
Note that one can expand this last gauged monodromy matrix in terms of the
gauged bulk ones. Moreover, $\mathcal{U}_{-}(\lambda |\alpha ,\beta )$ does
not depend on the internal gauge parameter $\gamma $, so we are free to
chose it at will. The following decompositions hold: 
\begin{align}
\left( 
\begin{array}{l}
\mathcal{A}_{-}(\lambda |\alpha ,\beta q^{2}) \\ 
\,\mathcal{C}_{-}(\lambda |\alpha ,\beta q^{2})%
\end{array}%
\right) \left. =\right. & M(\lambda |\alpha ,\beta ,\gamma )\bar{K}%
_{-}(\lambda |\gamma )\left( 
\begin{array}{l}
\bar{A}(\lambda |\alpha ,\beta q,\gamma q) \\ 
\bar{C}(\lambda |\alpha ,\beta q,\gamma q)%
\end{array}%
\right) \\
\left( 
\begin{array}{l}
\mathcal{B}_{-}(\lambda |\alpha ,\beta ) \\ 
\mathcal{D}_{-}(\lambda |\alpha ,\beta )%
\end{array}%
\right) \left. =\right. & M(\lambda |\alpha ,\beta ,\gamma )\bar{K}%
_{-}(\lambda |\gamma )\left( 
\begin{array}{l}
\bar{B}(\lambda |\alpha ,\beta /q,\gamma q) \\ 
\bar{D}(\lambda |\alpha ,\beta /q,\gamma q)%
\end{array}%
\right) ,
\end{align}%
where 
\begin{equation}
\bar{K}_{-}(\lambda |\gamma )=\frac{q^{1/2}}{\lambda }\bar{G}^{-1}(\lambda
q^{1/2}|\gamma )\,\,K_{-}(\lambda )\,\bar{G}(q^{1/2}/\lambda |\gamma q).
\end{equation}%
Explicitly, for $\mathcal{B}_{-}(\lambda |\alpha ,\beta )$, it holds:%
\begin{align}
\frac{\mathcal{B}_{-}(\lambda |\alpha ,\beta )}{f(\alpha ,\beta /q,\gamma q)}%
& =\bar{K}_{-}(\lambda |\gamma )_{12}A(\lambda |\alpha ,\beta ,\gamma
)A(1/\lambda |\alpha ,\beta /q,\gamma q)-\bar{K}_{-}(\lambda |\gamma
)_{11}A(\lambda |\alpha ,\beta ,\gamma )B(1/\lambda |\alpha ,\beta /q) 
\notag \\
& +\bar{K}_{-}(\lambda |\gamma )_{21}B(\lambda |\alpha ,\beta )B(1/\lambda
|\alpha ,\beta /q)-\bar{K}_{-}(\lambda |\gamma )_{22}B(\lambda |\alpha
,\beta )A(1/\lambda |\alpha ,\beta /q,\gamma q).  \label{B_gauge-decompo}
\end{align}%
where:%
\begin{align}
\bar{K}_{-}(\lambda |\gamma )_{11}& =\frac{\lambda ^{2}(\zeta
_{-}/q+q^{2}\gamma \kappa _{-}e^{\tau _{-}})-q^{2}\gamma \kappa _{-}e^{\tau
_{-}}/\left( \zeta _{-}\lambda ^{2}\right) -1/\zeta _{-}}{\zeta _{-}-1/\zeta
_{-}}, \\
\bar{K}_{-}(\lambda |\gamma )_{22}& =\frac{(\zeta _{-}+q\gamma \kappa
_{-}e^{\tau _{-}})/\lambda ^{2}-\gamma \lambda ^{2}-1/\zeta _{-}}{\zeta
_{-}-1/\zeta _{-}}, \\
\bar{K}_{-}(\lambda |\gamma )_{12}& =\frac{q\gamma \kappa _{-}e^{\tau
_{-}}(\lambda ^{2}/q-q/\lambda ^{2})}{\zeta _{-}-1/\zeta _{-}} \\
\bar{K}_{-}(\lambda |\gamma )_{21}& =\frac{(q/\lambda ^{2}-\lambda ^{2}/q)%
\left[ q\gamma e^{\tau _{-}}\kappa _{-}+\zeta _{-}-\kappa _{-}/(q\gamma
e^{\tau _{-}})\right] }{\zeta _{-}-1/\zeta _{-}}.
\end{align}%
Then it holds: 
\begin{equation}
\bar{K}_{-}(\lambda |\gamma )_{21}\equiv 0,\text{ }\forall \lambda \in 
\mathbb{C}\text{ \ \ for }\gamma =\gamma _{\epsilon }
\end{equation}%
for $\epsilon =\pm $ and%
\begin{equation}
\gamma _{\epsilon }=\frac{-\zeta _{-}+\epsilon \sqrt{\zeta _{-}^{2}+4\kappa
_{-}^{2}}}{2qe^{\tau _{-}}\kappa _{-}}.  \label{K21-zeros}
\end{equation}%
These gauge transformed boundary operators satisfies the following gauge
deformed reflection algebra.

\begin{proposition}
The gauge transformed boundary operators satisfy the following commutation
relations:%
\begin{align}
\mathcal{B}_{-}(\lambda _{2}|\beta )\mathcal{B}_{-}(\lambda _{1}|\beta
/q^{2})& =\mathcal{B}_{-}(\lambda _{1}|\beta )\mathcal{B}_{-}(\lambda
_{2}|\beta /q^{2}),  \label{B-comm} \\
\mathcal{A}_{-}(\lambda _{2}|\beta q^{2})\mathcal{B}_{-}(\lambda _{1}|\beta
)& =\frac{(\lambda _{1}q/\lambda _{2}-\lambda _{2}/q\lambda _{1})(\lambda
_{1}\lambda _{2}/q-q/\lambda _{1}\lambda _{2})}{(\lambda _{1}/\lambda
_{2}-\lambda _{2}/\lambda _{1})(\lambda _{1}\lambda _{2}-1/\lambda
_{1}\lambda _{2})}\mathcal{B}_{-}(\lambda _{1}|\beta )\mathcal{A}%
_{-}(\lambda _{2}|\beta )  \notag \\
& +\frac{(\lambda _{1}\lambda _{2}/q-q/\lambda _{1}\lambda _{2})(\lambda
_{1}\beta /q\lambda _{2}-\lambda _{2}q/\beta \lambda _{1})(q-1/q)}{(\lambda
_{1}/\lambda _{2}-\lambda _{2}/\lambda _{1})(\lambda _{1}\lambda
_{2}-1/\lambda _{1}\lambda _{2})(\beta /q-q/\beta )}\mathcal{B}_{-}(\lambda
_{2}|\beta )\mathcal{A}_{-}(\lambda _{1}|\beta )  \notag \\
& +\frac{(\lambda _{1}\lambda _{2}/\beta -\beta /\lambda _{1}\lambda
_{2})(q-1/q)}{(\lambda _{1}\lambda _{2}-1/\lambda _{1}\lambda _{2})(\beta
/q-q/\beta )}\mathcal{B}_{-}(\lambda _{2}|\beta )\mathcal{D}_{-}(\lambda
_{1}|\beta ), \\
\mathcal{B}_{-}(\lambda _{1}|\beta )\mathcal{D}_{-}(\lambda _{2}|\beta )& =%
\frac{(\lambda _{1}q/\lambda _{2}-\lambda _{2}/q\lambda _{1})(\lambda
_{1}\lambda _{2}/q-q/\lambda _{1}\lambda _{2})}{(\lambda _{1}/\lambda
_{2}-\lambda _{2}/\lambda _{1})(\lambda _{1}\lambda _{2}-1/\lambda
_{1}\lambda _{2})}\mathcal{D}_{-}(\lambda _{2}|\beta q^{2})\mathcal{B}%
_{-}(\lambda _{1}|\beta )  \notag \\
& -\frac{(\lambda _{1}\lambda _{2}/q-q/\lambda _{1}\lambda _{2})(\lambda
_{2}\beta q/\lambda _{1}-\lambda _{1}/\lambda _{2}\beta q)(q-1/q)}{(\lambda
_{1}/\lambda _{2}-\lambda _{2}/\lambda _{1})(\lambda _{1}\lambda
_{2}-1/\lambda _{1}\lambda _{2})(\beta q-1/\beta q)}\mathcal{D}_{-}(\lambda
_{1}|\beta q^{2})\mathcal{B}_{-}(\lambda _{2}|\beta )  \notag \\
& -\frac{(\lambda _{1}\lambda _{2}\beta -1/\lambda _{1}\lambda _{2}\beta
)(q-1/q)}{(\lambda _{1}\lambda _{2}-1/\lambda _{1}\lambda _{2})(\beta
q-1/q\beta )}\mathcal{A}_{-}(\lambda _{1}|\beta q^{2})\mathcal{B}%
_{-}(\lambda _{2}|\beta ),
\end{align}%
and%
\begin{align}
& \mathcal{A}_{-}(\lambda _{1}|\beta q^{2})\mathcal{A}_{-}(\lambda
_{2}|\beta q^{2})-\frac{(\lambda _{1}\lambda _{2}/\beta -1/\lambda
_{1}\lambda _{2})(q-1/q)}{(\lambda _{1}\lambda _{2}-1/\lambda _{1}\lambda
_{2})(\beta /q-q/\beta )}\mathcal{B}_{-}(\lambda _{1}|\beta )\mathcal{C}%
_{-}(\lambda _{2}|\beta q^{2})\left. =\right.  \notag \\
& \text{\ \ \ \ \ \ \ \ \ \ \ \ \ \ \ \ \ \ \ \ \ \ \ \ \ \ \ \ }\mathcal{A}%
_{-}(\lambda _{2}|\beta q^{2})\mathcal{A}_{-}(\lambda _{1}|\beta q^{2})-%
\frac{(\lambda _{1}\lambda _{2}/\beta -1/\lambda _{1}\lambda _{2})(q-1/q)}{%
(\lambda _{1}\lambda _{2}-1/\lambda _{1}\lambda _{2})(\beta /q-q/\beta )}%
\mathcal{B}_{-}(\lambda _{2}|\beta )\mathcal{C}_{-}(\lambda _{1}|\beta
q^{2}).
\end{align}%
Similar commutation relations involving $\mathcal{C}_{-}(\lambda |\beta )$
can be written by using the following $\beta $-symmetries:%
\begin{equation}
\mathcal{B}_{-}(\lambda |\beta )=\mathcal{C}_{-}(\lambda |q^{2}/\beta ),%
\text{ \ \ }\mathcal{A}_{-}(\lambda |\beta )=\mathcal{D}_{-}(\lambda
|q^{2}/\beta ).  \label{B-to-C-identity}
\end{equation}%
Moreover, these gauge transformed operators satisfy the following parity
properties:%
\begin{align}
\mathcal{A}_{-}(\lambda |\beta )& =-\frac{(q-1/q)(\lambda ^{2}q/\beta -\beta
/\lambda ^{2}q)}{(\beta /q^{2}-q^{2}/\beta )(\lambda ^{2}-1/\lambda ^{2})}%
\mathcal{D}_{-}(\lambda |\beta )+\frac{(\beta /q-q/\beta )(\lambda
^{2}/q-q/\lambda ^{2})}{(\beta /q^{2}-q^{2}/\beta )(\lambda ^{2}-1/\lambda
^{2})}\mathcal{D}_{-}(1/\lambda |\beta ),  \label{A-Sym} \\
\mathcal{D}_{-}(\lambda |\beta )& =\frac{(q-1/q)(\lambda ^{2}\beta
/q-q/\lambda ^{2}\beta )}{(\beta /q^{2}-q^{2}/\beta )(\lambda ^{2}-1/\lambda
^{2})}\mathcal{A}_{-}(\lambda |\beta )+\frac{(\beta /q-q/\beta )(\lambda
^{2}/q-q/\lambda ^{2})}{(\beta -1/\beta )(\lambda ^{2}-1/\lambda ^{2})}%
\mathcal{A}_{-}(1/\lambda |\beta ), \\
\mathcal{B}_{-}(1/\lambda |\beta )& =-\frac{(\lambda ^{2}q-1/q\lambda ^{2})}{%
(\lambda ^{2}/q-q/\lambda ^{2})}\mathcal{B}_{-}(\lambda |\beta )\text{ },%
\text{ \ \ \ \ }\mathcal{C}_{-}(1/\lambda |\beta )=-\frac{(\lambda
^{2}q-1/q\lambda ^{2})}{(\lambda ^{2}/q-q/\lambda ^{2})}\mathcal{C}%
_{-}(\lambda |\beta ).  \label{B-C-Sym}
\end{align}
\end{proposition}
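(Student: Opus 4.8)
The plan is to read off all three families of relations from the bulk reflection equation \eqref{bYB} by conjugating it with the gauge matrices and exploiting the covariance of the $R$-matrix \eqref{Rlsg} under the gauge transformation. Since, by \eqref{U-gauge}, the gauged two-row operators are precisely the entries of $G^{-1}(\lambda q^{1/2}|\alpha,\beta)\,\mathcal{U}_{-}(\lambda)\,G(q^{1/2}/\lambda|\alpha,\beta)$, each commutation relation among them is a single matrix entry, in $V_1\otimes V_2$, of a gauge-conjugated version of \eqref{bYB}. The content of the proposition is thus the assertion that this conjugation closes consistently, with the gauge parameter $\beta$ shifting by integer powers of $q^2$ between neighbouring factors; equivalently one may derive the same relations from the bulk decomposition \eqref{B_gauge-decompo} together with the gauged Yang-Baxter relations \eqref{Comm-dyn-1}--\eqref{Comm-dyn-3}, which is the route for which the appendix has already been set up.

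The crux is to establish the vertex-IRF (``dynamical'') intertwining relations obeyed by $R_{12}(\lambda/\mu)$ and the matrices $G$. Concretely I would prove two identities of the schematic form
\[
R_{12}(\lambda/\mu)\,G_1(\lambda|\alpha,\beta)\,G_2(\mu|\alpha,\beta)=G_2(\mu|\alpha,\beta')\,G_1(\lambda|\alpha,\beta')\,\bar R_{12}(\lambda/\mu),
\]
together with its counterpart carrying the crossed spectral argument $\lambda\mu/q$, where the primed gauge data differ from the unprimed ones by shifts $\beta\to\beta/q^2$ (respectively $\beta q^2$) and $\bar R$ is the $R$-matrix rewritten in the gauged basis. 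These are purely $2\times2$ identities on $V_1\otimes V_2$ and are representation independent, so they follow from a finite direct computation; this is exactly the step where the cyclic Bazhanov-Stroganov structure plays no role, which is why the resulting relations coincide in form with the spin-$1/2$ case. I expect this to be the main obstacle: among the four $G$-factors flanking $\mathcal{U}_{1,-}$ and $\mathcal{U}_{2,-}$ in \eqref{bYB} one must track the precise $\beta$- and $\lambda$-shifts so that the inner factors telescope and only the gauged operators together with $\bar R$ survive. Once the shifts are pinned down, reading off the appropriate entries of the conjugated reflection equation yields \eqref{B-comm} and the subsequent $\mathcal{A}$--$\mathcal{B}$, $\mathcal{B}$--$\mathcal{D}$ and $\mathcal{A}$--$\mathcal{A}$ relations.

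The $\beta$-symmetries \eqref{B-to-C-identity} are obtained separately and much more cheaply. One checks the exact gauge identity $G(\lambda|\alpha,1/\beta)=G(\lambda|\alpha,\beta)\,\sigma^x$, where $\sigma^x$ is the permutation on $V_a$; since $(\sigma^x)^2=\mathbb{I}$ this gives at once
\[
\mathcal{U}_{-}(\lambda|\alpha,1/\beta)=\sigma^x\,\mathcal{U}_{-}(\lambda|\alpha,\beta)\,\sigma^x,
\]
i.e. conjugation by $\sigma^x$ interchanges the diagonal and the off-diagonal entries of \eqref{U-gauge}. Comparing entries and translating back to the notation of the proposition yields $\mathcal{B}_{-}(\lambda|\beta)=\mathcal{C}_{-}(\lambda|q^2/\beta)$ and $\mathcal{A}_{-}(\lambda|\beta)=\mathcal{D}_{-}(\lambda|q^2/\beta)$, and these in turn transcribe the $\mathcal{C}$-relations from the $\mathcal{B}$-relations already obtained.

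Finally, the parity relations \eqref{A-Sym}--\eqref{B-C-Sym} follow from the inversion structure built into the boundary monodromy matrix through \eqref{Inverse-M}. The relation $\hat M_a(\lambda)=(-1)^{\mathsf N}\sigma^y_a M_a^{t_a}(1/\lambda)\sigma^y_a$, combined with the explicit $\lambda\to1/\lambda$ behaviour of the inner $K$-matrix \eqref{Kmatuno2}, forces a definite inversion law on $\mathcal{U}_{a,-}(\lambda)$; and because the two gauge factors in \eqref{U-gauge} are evaluated at $\lambda q^{1/2}$ and $q^{1/2}/\lambda$, they simply swap arguments under $\lambda\to1/\lambda$. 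Carrying this through reduces, as before, to a short linear-algebra computation with the explicit $2\times2$ gauge and $K$-matrices, which turns the inversion law into the linear relations between the gauged operators at $\lambda$ and $1/\lambda$ asserted in \eqref{A-Sym}--\eqref{B-C-Sym}.
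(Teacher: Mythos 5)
Your proposal is correct, but it takes a different (and much longer) route than the paper, whose entire proof consists of the remark that the gauged operators are fixed scalar linear combinations of the reflection-algebra generators, so that all the stated relations are representation-independent consequences of \eqref{bYB} and therefore coincide, after rewriting trigonometric functions as Laurent polynomials, with the relations already derived for the spin-$1/2$ XXZ chain in \cite{OpenCyFalKN14}. What you propose is essentially to redo the computation of that reference (itself going back to \cite{OpenCyFanHSY96}): conjugating the reflection equation by the gauge matrices and commuting them through the two $R$-matrices via vertex-IRF intertwining relations. This buys self-containedness, and your observation that the Bazhanov--Stroganov structure plays no role is exactly the paper's representation-independence argument. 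Two remarks on the details. First, your $\sigma^x$ identity is exact: $G(\lambda|\alpha,1/\beta)=G(\lambda|\alpha,\beta)\sigma^x$ gives $\mathcal{U}_-(\lambda|\alpha,1/\beta)=\sigma^x\,\mathcal{U}_-(\lambda|\alpha,\beta)\,\sigma^x$, and comparing entries of \eqref{U-gauge} yields \eqref{B-to-C-identity}; this is a genuinely clean derivation of the $\beta$-symmetries. Second, the intertwining identity you write is too symmetric as stated: the dynamical parameter must shift asymmetrically between the two auxiliary spaces (and the dressed matrix $\bar R$ itself depends on $\beta$), and verifying that these shifts close consistently between the $R$-matrix with argument $\lambda/\mu$ and the one with argument $\lambda\mu/q$ is where all the actual work lies --- precisely the bookkeeping carried out in \cite{OpenCyFalKN14,OpenCyFanHSY96}. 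So your sketch is a sound plan for reproducing the computation the paper imports by citation, rather than a gap-free standalone proof or a genuinely new argument.
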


\begin{proof}
Both the commutation relations and the parity properties here presented
coincide with those derived in \cite{OpenCyFalKN14} for the case of the XXZ spin 1/2
quantum chain with general integrable boundaries. This is the case as they
are clearly representation independent. Here we are just writing them in a
Laurent polynomial form instead of a trigonometric form.
\end{proof}

\subsection{Representation of the gauge transformed Reflection algebra}

In the bulk of the paper we have anticipated that for almost all the values
of the boundary, bulk and gauge parameters the operator family $\mathcal{B}%
_{-}(\lambda |\beta )$ is pseudo-diagonalizable. We will show this statement
in the last subsection of this appendix, but for now we want to write explicitly the
representation of the other gauge transformed boundary operator families in
the left and right basis formed out of the pseudo-eigenstates of $\mathcal{B}%
_{-}(\lambda |\beta )$.

\begin{theorem}
The action of the reflection algebra generator $\mathcal{A}_{-}(\lambda
|\beta q^{2})$ on the generic state $\langle \beta ,\mathbf{h}|$ is given by
the following expression:%
\begin{align}
\langle \beta ,\mathbf{h}|\mathcal{A}_{-}(\lambda |\beta q^{2})&
=\sum_{a=1}^{2\mathsf{N}}\frac{\zeta _{a}^{(h_{a})}\left( \lambda
^{2}/q-q/\lambda ^{2}\right) (\lambda \zeta _{a}^{(h_{a})}-1/(\lambda \zeta
_{a}^{(h_{a})}))\mathsf{A}_{-}(\zeta _{a}^{(h_{a})})}{\lambda \left( (\zeta
_{a}^{(h_{a})})^{2}/q-q/(\zeta _{a}^{(h_{a})})^{2}\right) \left( (\zeta
_{a}^{(h_{a})})^{2}-1/(\zeta _{a}^{(h_{a})})^{2}\right) }\prod_{\substack{ %
b=1  \\ b\neq a\text{ mod}\mathsf{N}}}^{\mathsf{N}}\frac{\Lambda
-X_{b}^{(h_{b})}}{X_{a}^{(h_{a})}-X_{b}^{(h_{b})}}  \notag \\
& \times \langle \beta ,\mathbf{h}|T_{a}^{-\varphi _{a}}+(-1)^{\mathsf{N}}%
\frac{q^{1/2}}{2\lambda }\text{det}_{q}M(1)(\frac{\lambda }{q^{1/2}}+\frac{%
q^{1/2}}{\lambda })\prod_{b=1}^{\mathsf{N}}\frac{\Lambda -X_{b}^{(h_{b})}}{%
X-X_{b}^{(h_{b})}}\langle \beta ,\mathbf{h}|  \notag \\
& +(-1)^{\mathsf{N}+1}\frac{iq^{1/2}}{2\lambda }\frac{\zeta _{-}+1/\zeta _{-}%
}{\zeta _{-}-1/\zeta _{-}}\text{det}_{q}M(i)(\frac{\lambda }{q^{1/2}}-\frac{%
q^{1/2}}{\lambda })\prod_{b=1}^{\mathsf{N}}\frac{\Lambda -X_{b}^{(h_{b})}}{%
X+X_{b}^{(h_{b})}}\langle \beta ,\mathbf{h}|  \notag \\
& +q\left( \lambda ^{2}/q-q/\lambda ^{2}\right) \prod_{b=1}^{\mathsf{N}%
}(\Lambda -X_{b}^{(h_{b})})\langle \beta ,\mathbf{h}|\mathcal{A}_{-}^{\infty
}(\beta q^{2}),  \label{L-SOV A-}
\end{align}%
where%
\begin{align}
\langle \beta ,\mathbf{h}|\mathcal{A}_{-}^{\infty }(\beta q^{2})& =\frac{1}{%
q(1-\beta ^{2})}\left[ \sum_{a=1}^{2\mathsf{N}}\frac{\prod 
_{\substack{ b=1  \\ b\neq a\text{ mod}\mathsf{N}}}^{\mathsf{N}} 
(X_{a}^{(h_{a})}-X_{b}^{(h_{b})})^{-1} \mathsf{A}_{-}(\zeta
_{a}^{(h_{a})})\langle \beta ,\mathbf{h}|T_{a}^{-\varphi _{a}}}{\left(
(\zeta _{a}^{(h_{a})})^{2}/q-q/(\zeta _{a}^{(h_{a})})^{2}\right) \left(
(\zeta _{a}^{(h_{a})})^{2}-1/(\zeta _{a}^{(h_{a})})^{2}\right) }\right.  \notag \\
& \left. +(-1)^{\mathsf{N}}\left( \frac{1}{2}\text{det}_{q}M(1)\prod_{b=1}^{%
\mathsf{N}}\frac{1}{X-X_{b}^{(h_{b})}}+\frac{i}{2}\frac{\zeta _{-}+1/\zeta
_{-}}{\zeta _{-}-1/\zeta _{-}}\text{det}_{q}M(i)\prod_{b=1}^{\mathsf{N}}%
\frac{1}{X+X_{b}^{(h_{b})}}\right) \langle \beta ,\mathbf{h}|\right]  \notag
\\
& +\frac{\kappa _{-}}{q(\zeta _{-}-1/\zeta _{-})}\left( \frac{\prod_{a=1}^{%
\mathsf{N}}\gamma _{a}\delta _{a}}{q\alpha e^{\tau _{-}}}-q\alpha e^{\tau
_{-}}\prod_{a=1}^{\mathsf{N}}\alpha _{a}\beta _{a}\right) \langle \beta ,%
\mathbf{h}|,
\end{align}%
and%
\begin{equation}
\langle \beta ,h_{1},...,h_{a},...,h_{\mathsf{N}}|T_{a}^{\pm }=\langle \beta
,h_{1},...,h_{a}\pm 1,...,h_{\mathsf{N}}|,
\end{equation}%
once the parameter $\alpha $ has been fixed by \eqref{guage-alpha}.
\smallskip
\end{theorem}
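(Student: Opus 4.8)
The plan is to reconstruct the covector-valued function $\lambda\mapsto\langle\beta,\mathbf{h}|\mathcal{A}_-(\lambda|\beta q^2)$ by interpolation, in direct analogy with the scalar eigenvalue interpolation \eqref{set-tau}. First I would pin down the functional class: from the defining combination \eqref{A-gauge-def} together with the Laurent-polynomial nature of the boundary generators, $\mathcal{A}_-(\lambda|\beta q^2)$ is an even Laurent polynomial in $\lambda$ of controlled degree (with a finite $\lambda\to\infty$ leading term). The decisive structural point is that, \emph{unlike} the transfer-matrix eigenvalue, it is not symmetric under $\lambda\to1/\lambda$: the parity relation \eqref{A-Sym} shows that inversion mixes $\mathcal{A}_-$ with $\mathcal{D}_-$. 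Consequently the $2\mathsf{N}$ zeros $\zeta_a^{(h_a)}$ of $\mathcal{B}_-(\lambda|\beta)$, which fall into the $\mathsf{N}$ inversion-related pairs $\zeta_a^{(h_a)}=1/\zeta_{a+\mathsf{N}}^{(h_a)}$, must be kept as $2\mathsf{N}$ genuine interpolation nodes in $\lambda$ (not $\mathsf{N}$ nodes in $\Lambda$), each carrying its own shift $T_a^{-\varphi_a}$ with $\varphi_a=\pm1$.

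Next I would assemble the evaluation data. At every zero $\zeta_a^{(h_a)}$ the action is supplied directly by Theorem \ref{B-Pseudo-diago}, equation \eqref{Left-A-action}:
\[
\langle\beta,\mathbf{h}|\mathcal{A}_-(\zeta_a^{(h_a)}|\beta q^2)=\mathsf{A}_-(\zeta_a^{(h_a)})\langle\beta,\mathbf{h}|T_a^{-\varphi_a},
\]
which yields the $2\mathsf{N}$ shift-operator contributions of the claimed formula. The $\lambda\to\infty$ leading coefficient defines $\mathcal{A}_-^\infty(\beta q^2)$; I would compute it by expanding the gauge-transformed monodromy $M(\lambda|\alpha,\beta,\gamma)$ and the inner matrix $\bar K_-$ at large $\lambda$, giving the closed expression with the bulk products $\prod_a\alpha_a\beta_a$, $\prod_a\gamma_a\delta_a$ and the explicit $\alpha,\tau_-,\kappa_-$ dependence.

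The two remaining nodes are the special points $\lambda=q^{1/2}$ and $\lambda=iq^{1/2}$, where $\Lambda$ takes the values $X$ and $-X$ (recall $X=q+1/q$). These are exactly the zeros of $\mathcal{B}_-(\lambda|\beta)$ produced by the prefactor $(\lambda^2/q-q/\lambda^2)$ in \eqref{pseudo-eigenvalue}. Using the boundary quantum determinant \eqref{B-q-detU_-exp} together with the reduction of the boundary monodromy at these points, the action there is proportional to $\text{det}_qM(1)$ and $\text{det}_qM(i)$; the inversion-asymmetry is what splits these into the two distinct $\lambda$-prefactors $\tfrac12(\lambda/q^{1/2}+q^{1/2}/\lambda)$ and $\tfrac12(\lambda/q^{1/2}-q^{1/2}/\lambda)$ seen in the statement.

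Finally I would glue the pieces together by Lagrange interpolation over the $\mathsf{N}$ nodes $X_b^{(h_b)}$ augmented by the two special nodes $\pm X$, with the leading term fixed by $\mathcal{A}_-^\infty$. The products $\prod_{b\neq a}(\Lambda-X_b^{(h_b)})/(X_a^{(h_a)}-X_b^{(h_b)})$ and $\prod_b(\Lambda-X_b^{(h_b)})/(X\mp X_b^{(h_b)})$ are precisely the corresponding Lagrange basis polynomials, while the overall rational prefactors in $\lambda$ (carrying the single factor $\zeta_a^{(h_a)}$ and the inversion-odd combination $\lambda\zeta_a^{(h_a)}-1/(\lambda\zeta_a^{(h_a)})$) supply the non-symmetric part and make each term vanish at the inversion-partner node. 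The hard part will be bookkeeping this inversion-asymmetry consistently: because \eqref{A-Sym} forbids reusing the fully symmetric interpolation of \eqref{set-tau}, one must verify the even/odd split of the prefactors and, in particular, pin down the relative normalizations of the two quantum-determinant terms by matching against the known actions at $\lambda=q^{1/2}$ and $\lambda=iq^{1/2}$.
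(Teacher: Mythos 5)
Your overall skeleton matches the paper's strategy: a Laurent-polynomial interpolation in $\lambda$ (the paper writes $\mathcal{A}_{-}(\lambda |\beta )=\sum_{a=0}^{2\mathsf{N}+2}\lambda ^{2(a-(\mathsf{N}+1))}\mathcal{A}_{a}(\beta )$, so $2\mathsf{N}+3$ unknown operator coefficients), with the $2\mathsf{N}$ nodes $\zeta_a^{(h_a)}$ supplied by \eqref{Left-A-action}, and the two special points $\lambda=q^{1/2},iq^{1/2}$ handled through the representation-independent identities $\mathcal{U}_{-}(q^{1/2})=(-1)^{\mathsf{N}}\text{det}_{q}M(1)\,I_{0}$ and $\mathcal{U}_{-}(iq^{1/2})=-i\frac{\zeta _{-}+1/\zeta _{-}}{\zeta_{-}-1/\zeta _{-}}\text{det}_{q}M(i)\,\sigma _{0}^{z}$. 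The genuine gap is in how you determine the remaining unknown, $\mathcal{A}_{-}^{\infty }(\beta q^{2})$: you claim a direct large-$\lambda$ expansion of the gauged monodromy "gives the closed expression with the bulk products". It cannot. By \eqref{A-gauge-def}, the leading coefficient is
\[
\mathcal{A}_{-}^{\infty }(\beta q^{2})=\left[ -\mathcal{A}_{-}^{\infty}/(\beta q^{1/2})-\alpha q\mathcal{B}_{-}^{\infty }+\mathcal{C}_{-}^{\infty}/(\alpha q)\right] /(\beta -1/\beta ),
\]
and while the ungauged off-diagonal asymptotics $\mathcal{B}_{-}^{\infty },\mathcal{C}_{-}^{\infty }$ are central scalars, the ungauged $\mathcal{A}_{-}^{\infty }$ is a genuine non-central operator whose action on the SoV covectors is not a priori known. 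This is visible in the statement you are trying to prove: $\langle \beta ,\mathbf{h}|\mathcal{A}_{-}^{\infty }(\beta q^{2})$ is not proportional to $\langle \beta ,\mathbf{h}|$ but contains the shift terms $T_{a}^{-\varphi _{a}}$, so no closed expression in bulk and boundary parameters alone can be correct.

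The mechanism your proposal is missing is the coupling of the two asymptotic ends. From the parity relations \eqref{A-Sym}--\eqref{B-C-Sym} one gets $\mathcal{A}_{-}^{\infty ,0}=q^{\mp 1}\mathcal{D}_{-}^{0,\infty }$ and $\mathcal{B}_{-}^{0}=-q^{2}\mathcal{B}_{-}^{\infty }$, $\mathcal{C}_{-}^{0}=-q^{2}\mathcal{C}_{-}^{\infty }$, so that in the particular combination $\beta q\,\mathcal{A}_{-}^{\infty }(\beta q^{2})+\mathcal{A}_{-}^{0}(\beta q^{2})/(\beta q)$ the unknown operators cancel and only central quantities survive; this is the paper's main identity \eqref{main-identity}. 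The interpolation formula itself, read off at $\lambda \rightarrow 0$, then expresses $\mathcal{A}_{-}^{0}(\beta q^{2})$ through $\mathcal{A}_{-}^{\infty }(\beta q^{2})$ and the node data, and substituting this back into \eqref{main-identity} and solving produces exactly the claimed expression for $\langle \beta ,\mathbf{h}|\mathcal{A}_{-}^{\infty }(\beta q^{2})$, including its shift terms and the prefactor $1/(q(1-\beta ^{2}))$. Without this (or an equivalent way to eliminate the unknown ungauged asymptotic), your construction has $2\mathsf{N}+3$ unknown coefficients but only $2\mathsf{N}+2$ usable evaluations ($2\mathsf{N}$ zeros plus the two special points, their negatives giving nothing new by evenness), and the interpolation does not close.
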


\begin{proof}[Proof]
The following interpolation formula:%
\begin{align}
\langle \beta ,\mathbf{h}|\mathcal{A}_{-}(\lambda |\beta q^{2})&
=\sum_{a=1}^{2\mathsf{N}}\frac{\zeta _{a}^{(h_{a})}\left( \lambda
^{2}/q-q/\lambda ^{2}\right) \mathsf{A}_{-}(\zeta _{a}^{(h_{a})})}{\lambda
\left( (\zeta _{a}^{(h_{a})})^{2}/q-q/(\zeta _{a}^{(h_{a})})^{2}\right) }%
\prod_{\substack{ b=1  \\ b\neq a\text{ }}}^{2\mathsf{N}}\frac{\lambda
/\zeta _{b}^{(h_{b})}-\zeta _{b}^{(h_{b})}/\lambda }{\zeta
_{a}^{(h_{a})}/\zeta _{b}^{(h_{b})}-\zeta _{b}^{(h_{b})}/\zeta _{a}^{(h_{a})}%
}  \notag \\
& \times \langle \beta ,\mathbf{h}|T_{a}^{-\varphi _{a}}+(-1)^{\mathsf{N}}%
\frac{q^{1/2}}{2\lambda }\text{det}_{q}M(1)(\frac{\lambda }{q^{1/2}}+\frac{%
q^{1/2}}{\lambda })\prod_{b=1}^{\mathsf{N}}\frac{\lambda /\zeta
_{b}^{(h_{b})}-\zeta _{b}^{(h_{b})}/\lambda }{q^{1/2}/\zeta
_{b}^{(h_{b})}-\zeta _{b}^{(h_{b})}/q^{1/2}}\langle \beta ,\mathbf{h}| 
\notag \\
& +(-1)^{\mathsf{N}+1}\frac{iq^{1/2}}{2\lambda }\frac{\zeta _{-}+1/\zeta _{-}%
}{\zeta _{-}-1/\zeta _{-}}\text{det}_{q}M(i)(\frac{\lambda }{q^{1/2}}-\frac{%
q^{1/2}}{\lambda })\prod_{b=1}^{\mathsf{N}}\frac{\lambda /\zeta
_{b}^{(h_{b})}-\zeta _{b}^{(h_{b})}/\lambda }{q^{1/2}/\zeta
_{b}^{(h_{b})}-\zeta _{b}^{(h_{b})}/q^{1/2}}\langle \beta ,\mathbf{h}| 
\notag \\
& +q\left( \lambda ^{2}/q-q/\lambda ^{2}\right) \prod_{b=1}^{\mathsf{N}%
}(\Lambda -X_{b}^{(h_{b})})\langle \beta ,\mathbf{h}|\mathcal{A}_{-}^{\infty
}(\beta q^{2}),
\end{align}%
where we have defined:%
\begin{equation}
\mathcal{A}_{-}^{\infty ,0}(\beta q^{2})=\lim_{\lambda \rightarrow \infty
,0}\lambda ^{\mp 2(\mathsf{N}+1)}\mathcal{A}_{-}(\lambda |\beta q^{2}),
\end{equation}%
is a direct consequence of the functional dependence with respect to $%
\lambda $:

\begin{equation}
\mathcal{A}_{-}(\lambda |\beta )=\sum_{a=0}^{2\mathsf{N}+2}\lambda ^{2\left(
a-(\mathsf{N}+1\right) )}\mathcal{A}_{a}(\beta )
\end{equation}%
and of the identities:%
\begin{equation}
\mathcal{U}_{-}(q^{1/2})=(-1)^{\mathsf{N}}\text{det}_{q}M(1)\text{ }I_{0},%
\text{ \ \ }\mathcal{U}_{-}(iq^{1/2})=-i\frac{\zeta _{-}+1/\zeta _{-}}{\zeta
_{-}-1/\zeta _{-}}\text{det}_{q}M(i)\text{ }\sigma _{0}^{z},
\end{equation}%
which are representation independent. Instead the asymptotic operators $%
\mathcal{A}_{-}^{\infty ,0}(\beta q^{2})$ depend on the representation and
we can compute them observing that using the definition $(\ref{A-gauge-def})$
of $\mathcal{A}_{-}(\lambda |\beta q^{2})$ it holds:%
\begin{eqnarray}
\mathcal{A}_{-}^{\infty }(\beta q^{2}) &=&\left[ -\mathcal{A}_{-}^{\infty
}/\beta q^{1/2}-\alpha q\mathcal{B}_{-}^{\infty }+\mathcal{C}_{-}^{\infty
}/\alpha q\right] /(\beta -1/\beta ) \\
\mathcal{A}_{-}^{0}(\beta q^{2}) &=&\left[ \beta q^{1/2}\mathcal{D}%
_{-}^{0}-\alpha q\mathcal{B}_{-}^{0}+\mathcal{C}_{-}^{0}/\alpha q\right]
/(\beta -1/\beta )
\end{eqnarray}%
where the $\mathcal{A}_{-}^{\infty ,0},$ $\mathcal{D}_{-}^{\infty ,0},$ $%
\mathcal{B}_{-}^{\infty ,0}$ and $\mathcal{C}_{-}^{\infty ,0}$ are the
asymptotic limits of the ungauged elements of $\mathcal{U}_{-}(\lambda )$.
The identities:%
\begin{equation}
\mathcal{A}_{-}^{\infty ,0}=q^{\mp 1}\mathcal{D}_{-}^{0,\infty },\text{ }%
\mathcal{B}_{-}^{0}=-q^{2}\mathcal{B}_{-}^{\infty },\text{ }\mathcal{C}%
_{-}^{0}=-q^{2}\mathcal{C}_{-}^{\infty },
\end{equation}%
following from (\ref{A-Sym})-(\ref{B-C-Sym}), and%
\begin{equation}
\mathcal{B}_{-}^{\infty }=\frac{\kappa _{-}e^{\tau _{-}}\prod_{a=1}^{\mathsf{%
N}}\alpha _{a}\beta _{a}}{q(\zeta _{-}-1/\zeta _{-})},\text{ }\mathcal{C}%
_{-}^{\infty }=\frac{\kappa _{-}e^{-\tau _{-}}\prod_{a=1}^{\mathsf{N}}\gamma
_{a}\delta _{a}}{q(\zeta _{-}-1/\zeta _{-})},
\end{equation}%
imply the main identity:%
\begin{equation}
\beta q\mathcal{A}_{-}^{\infty }(\beta q^{2})+\mathcal{A}_{-}^{0}(\beta
q^{2})/(\beta q)=\frac{\kappa _{-}(\beta -1/\beta )}{\zeta _{-}-1/\zeta _{-}}%
\left[ \frac{\prod_{a=1}^{\mathsf{N}}\gamma _{a}\delta _{a}}{q\alpha e^{\tau
_{-}}}-q\alpha e^{\tau _{-}}\prod_{a=1}^{\mathsf{N}}\alpha _{a}\beta _{a}%
\right] .  \label{main-identity}
\end{equation}%
This identity allows to compute these asymptotic operators once we use the
interpolation formula to write $\mathcal{A}_{-}^{0}(\beta q^{2})$ in terms
of $\mathcal{A}_{-}^{\infty }(\beta q^{2})$ as it follows:%
\begin{align}
\langle \beta ,\mathbf{h}|\mathcal{A}_{-}^{0}(\beta q^{2})& =\sum_{a=1}^{2%
\mathsf{N}}\frac{q\mathsf{A}_{-}(\zeta _{a}^{(h_{a})})}{\left( (\zeta
_{a}^{(h_{a})})^{2}/q-q/(\zeta _{a}^{(h_{a})})^{2}\right) \left( (\zeta
_{a}^{(h_{a})})^{2}-1/(\zeta _{a}^{(h_{a})})^{2}\right) }\prod_{\substack{ %
b=1  \\ b\neq a\text{ mod}\mathsf{N}}}^{\mathsf{N}}\frac{1}{%
X_{a}^{(h_{a})}-X_{b}^{(h_{b})}}  \notag \\
& \times \langle \beta ,\mathbf{h}|T_{a}^{-\varphi _{a}}+(-1)^{\mathsf{N}}%
\frac{q}{2}\text{det}_{q}M(1)\prod_{b=1}^{\mathsf{N}}\frac{1}{%
X-X_{b}^{(h_{b})}}\langle \beta ,\mathbf{h}|  \notag \\
& +(-1)^{\mathsf{N}}\frac{iq}{2}\frac{\zeta _{-}+1/\zeta _{-}}{\zeta
_{-}-1/\zeta _{-}}\text{det}_{q}M(i)\prod_{b=1}^{\mathsf{N}}\frac{1}{%
X+X_{b}^{(h_{b})}}\langle \beta ,\mathbf{h}|  \notag \\
& -q^{2}\langle \beta ,\mathbf{h}|\mathcal{A}_{-}^{\infty }(\beta q^{2}).
\end{align}
\end{proof}

Similarly, the following theorem characterizes the right SoV representation
of the gauged cyclic reflection algebra:

\begin{theorem}
The action of the reflection algebra generators $\mathcal{D}_{-}(\lambda
|\beta )$ on the generic state $|\beta ,\mathbf{h}\rangle $, can be written
as it follows:%
\begin{align}
\mathcal{D}_{-}(\lambda |\beta )|\beta ,\mathbf{h}\rangle & =\sum_{a=1}^{2%
\mathsf{N}}T_{a}^{-\varphi _{a}}|\beta ,\mathbf{h}\rangle \frac{\zeta
_{a}^{(h_{a})}\left( \lambda ^{2}/q-q/\lambda ^{2}\right) (\lambda \zeta
_{a}^{(h_{a})}-1/(\lambda \zeta _{a}^{(h_{a})}))\mathsf{D}_{-}(\zeta
_{a}^{(h_{a})})}{\lambda \left( (\zeta _{a}^{(h_{a})})^{2}/q-q/(\zeta
_{a}^{(h_{a})})^{2}\right) \left( (\zeta _{a}^{(h_{a})})^{2}-1/(\zeta
_{a}^{(h_{a})})^{2}\right) } \cdot  \notag \\
\cdot  \prod_{\substack{ b=1  \\ b\neq a\text{ mod}%
\mathsf{N}}}^{\mathsf{N}}\frac{\Lambda -X_{b}^{(h_{b})}}{%
X_{a}^{(h_{a})}-X_{b}^{(h_{b})}}& +|\beta ,\mathbf{h}\rangle (-1)^{\mathsf{N}}\frac{q^{1/2}}{2\lambda }\text{%
det}_{q}M(1)(\frac{\lambda }{q^{1/2}}+\frac{q^{1/2}}{\lambda })\prod_{b=1}^{%
\mathsf{N}}\frac{\Lambda -X_{b}^{(h_{b})}}{X-X_{b}^{(h_{b})}}  \notag \\
& +|\beta ,\mathbf{h}\rangle (-1)^{\mathsf{N}}\frac{iq^{1/2}}{2\lambda }%
\frac{\zeta _{-}+1/\zeta _{-}}{\zeta _{-}-1/\zeta _{-}}\text{det}_{q}M(i)(%
\frac{\lambda }{q^{1/2}}-\frac{q^{1/2}}{\lambda })\prod_{b=1}^{\mathsf{N}}%
\frac{\Lambda -X_{b}^{(h_{b})}}{X+X_{b}^{(h_{b})}}  \notag \\
& +q\left( \lambda ^{2}/q-q/\lambda ^{2}\right) \prod_{b=1}^{\mathsf{N}%
}(\Lambda -X_{b}^{(h_{b})})\mathcal{D}_{-}^{\infty }(\beta )|\beta ,\mathbf{h%
}\rangle ,  \label{R-SOV D-}
\end{align}%
where:%
\begin{align}
\mathcal{D}_{-}^{\infty }(\beta )& =\frac{\beta }{q(\beta -1/\beta )}\left[
\sum_{a=1}^{2\mathsf{N}}\frac{\mathsf{D}_{-}(\zeta
_{a}^{(h_{a})})T_{a}^{-\varphi _{a}}|\beta ,\mathbf{h}\rangle }{\left(
(\zeta _{a}^{(h_{a})})^{2}/q-q/(\zeta _{a}^{(h_{a})})^{2}\right) \left(
(\zeta _{a}^{(h_{a})})^{2}-1/(\zeta _{a}^{(h_{a})})^{2}\right) }\prod 
_{\substack{ b=1  \\ b\neq a\text{ mod}\mathsf{N}}}^{\mathsf{N}}\frac{1}{%
X_{a}^{(h_{a})}-X_{b}^{(h_{b})}}\right.  \notag \\
& \left. +(-1)^{\mathsf{N}}\left( \frac{1}{2}\text{det}_{q}M(1)\prod_{b=1}^{%
\mathsf{N}}\frac{1}{X-X_{b}^{(h_{b})}}-\frac{i}{2}\frac{\zeta _{-}+1/\zeta
_{-}}{\zeta _{-}-1/\zeta _{-}}\text{det}_{q}M(i)\prod_{b=1}^{\mathsf{N}}%
\frac{1}{X+X_{b}^{(h_{b})}}\right) |\beta ,\mathbf{h}\rangle \right]  \notag
\\
& +\frac{\kappa _{-}[q\alpha e^{\tau _{-}}\prod_{a=1}^{\mathsf{N}}\alpha
_{a}\beta _{a}-\frac{\prod_{a=1}^{\mathsf{N}}\gamma _{a}\delta _{a}}{q\alpha
e^{\tau _{-}}}]}{q(\zeta _{-}-1/\zeta _{-})}|\beta ,\mathbf{h}\rangle .
\end{align}%
and%
\begin{equation}
T_{a}^{\pm }|\beta ,h_{1},...,h_{a},...,h_{\mathsf{N}}\rangle =|\beta
,h_{1},...,h_{a}\pm 1,...,h_{\mathsf{N}}\rangle .
\end{equation}
\end{theorem}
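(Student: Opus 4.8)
The plan is to mirror, step by step, the proof just given for the left action of $\mathcal{A}_-(\lambda|\beta q^2)$ in (\ref{L-SOV A-}), working instead with the right pseudo-eigenbasis $|\beta,\mathbf{h}\rangle$ and the operator $\mathcal{D}_-(\lambda|\beta)$. The starting observation is that $\mathcal{D}_-(\lambda|\beta)$, like $\mathcal{A}_-(\lambda|\beta q^2)$, is a Laurent polynomial of the form $\mathcal{D}_-(\lambda|\beta)=\sum_{a=0}^{2\mathsf{N}+2}\lambda^{2(a-(\mathsf{N}+1))}\mathcal{D}_a(\beta)$, so that its action on $|\beta,\mathbf{h}\rangle$ is completely determined by its values at $2\mathsf{N}+2$ distinct points together with the two asymptotic coefficients $\mathcal{D}_-^{\infty,0}(\beta)=\lim_{\lambda\to\infty,0}\lambda^{\mp 2(\mathsf{N}+1)}\mathcal{D}_-(\lambda|\beta)$.

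First I would feed in the known data. The shift action (\ref{Right-D-action}) supplies the value at each of the $2\mathsf{N}$ zeros $\zeta_a^{(h_a)}$ of $\mathcal{B}_-(\lambda|\beta)$, namely $\mathcal{D}_-(\zeta_a^{(h_a)}|\beta)|\beta,\mathbf{h}\rangle = T_a^{-\varphi_a}|\beta,\mathbf{h}\rangle\mathsf{D}_-(\zeta_a^{(h_a)})$. The two remaining points are $\lambda=q^{1/2}$ and $\lambda=iq^{1/2}$, where the same representation-independent identities for $\mathcal{U}_-(q^{1/2})$ and $\mathcal{U}_-(iq^{1/2})$ used in the $\mathcal{A}_-$ proof (expressing them through $\det_q M(1)$ and $\det_q M(i)$) fix $\mathcal{D}_-$ through its gauge definition. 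Assembling these into the Lagrange interpolation formula in the variable $\Lambda=\lambda^2+1/\lambda^2$, invariant under $\lambda\to1/\lambda$ and $\lambda\to-\lambda$, produces a raw expression whose products run over all $2\mathsf{N}$ zeros.

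The second step is the representation-dependent ingredient: determining $\mathcal{D}_-^{\infty}(\beta)$ explicitly. Using the definition of $\mathcal{D}_-(\lambda|\beta)$ as a linear combination of the ungauged generators, I would express $\mathcal{D}_-^{\infty}(\beta)$ and $\mathcal{D}_-^0(\beta)$ in terms of the ungauged asymptotics $\mathcal{A}_-^{\infty,0},\mathcal{B}_-^{\infty,0},\mathcal{C}_-^{\infty,0},\mathcal{D}_-^{\infty,0}$, then substitute the parity relations $\mathcal{A}_-^{\infty,0}=q^{\mp1}\mathcal{D}_-^{0,\infty}$, $\mathcal{B}_-^0=-q^2\mathcal{B}_-^\infty$, $\mathcal{C}_-^0=-q^2\mathcal{C}_-^\infty$ coming from (\ref{A-Sym})--(\ref{B-C-Sym}), together with the explicit forms of $\mathcal{B}_-^\infty,\mathcal{C}_-^\infty$. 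This yields a main identity of the same shape as (\ref{main-identity}), relating $\beta q\,\mathcal{D}_-^{\infty}(\beta)+\mathcal{D}_-^0(\beta)/(\beta q)$ to the boundary-bulk data; combining it with the interpolation formula rewritten for $\mathcal{D}_-^0(\beta)$ in terms of $\mathcal{D}_-^\infty(\beta)$ closes the self-consistent system and isolates $\mathcal{D}_-^\infty(\beta)$.

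Finally, the reduction of the $2\mathsf{N}$-fold products to the $\mathsf{N}$-fold products displayed in (\ref{R-SOV D-}) follows from the reciprocal-pair structure of the zeros: since $\zeta_n^{(h)}=(\text{\textsc{b}}_{-,n}(\beta)q^h)^{\varphi_n}$ with $\varphi_a=1-2\theta(a-\mathsf{N})$, the $2\mathsf{N}$ zeros organize into $\mathsf{N}$ pairs related by $\lambda\to1/\lambda$, each pair collapsing under evenness in $\Lambda$ to a single factor $\Lambda-X_b^{(h_b)}$ with $X_b^{(h_b)}$ as in (\ref{313}). The main obstacle I anticipate is not the interpolation bookkeeping but the careful computation of $\mathcal{D}_-^\infty(\beta)$, since closing the system through the main identity requires tracking the precise powers of $q$ and $\beta$ and the signs inherited from the parity relations — exactly where the parallel $\mathcal{A}_-$ computation was most delicate. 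As a consistency cross-check I would also exploit the $\beta$-symmetry $\mathcal{A}_-(\lambda|\beta)=\mathcal{D}_-(\lambda|q^2/\beta)$ from (\ref{B-to-C-identity}), combined with the duality between the left and right bases, to transport (\ref{L-SOV A-}) onto (\ref{R-SOV D-}).
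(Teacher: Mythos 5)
Your proposal follows essentially the same route as the paper's own proof: a Lagrange-type interpolation for the Laurent polynomial $\mathcal{D}_-(\lambda|\beta)$, with the values at the zeros $\zeta_a^{(h_a)}$ supplied by the shift action, the values at $\lambda=q^{1/2},iq^{1/2}$ by the representation-independent identities for $\mathcal{U}_-(q^{1/2})$ and $\mathcal{U}_-(iq^{1/2})$, and the asymptotic operator $\mathcal{D}_-^\infty(\beta)$ determined by expressing $\mathcal{D}_-^{\infty,0}(\beta)$ through the ungauged asymptotics, deriving a main identity analogous to \eqref{main-identity} (the paper's \eqref{main-identity-D}, whose precise $q,\beta$ weights differ slightly from the $\mathcal{A}_-$ case, as you anticipate), and closing the system with the interpolation expression for $\mathcal{D}_-^0(\beta)$ in terms of $\mathcal{D}_-^\infty(\beta)$. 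This matches the paper's argument, so the proposal is correct.
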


\begin{proof}[Proof]
The following interpolation formula is derived as in the previous theorem
using the polynomiality of the operator family $\mathcal{D}_{-}(\lambda
|\beta )$:%
\begin{equation}
\mathcal{D}_{-}(\lambda |\beta )=\sum_{a=0}^{2\mathsf{N}+2}\lambda ^{\left(
2a-(2\mathsf{N}+2)\right) }\mathcal{D}_{a}(\beta ),
\end{equation}%
we have just to compute the asymptotic operator $\mathcal{D}_{-}^{\infty
}(\beta )$. The following identities:%
\begin{eqnarray}
\mathcal{D}_{-}^{\infty }(\beta ) &=&\left[ \beta \mathcal{A}_{-}^{\infty
}/q^{1/2}+\alpha q\mathcal{B}_{-}^{\infty }-\mathcal{C}_{-}^{\infty }/\alpha
q\right] /(\beta -1/\beta ) \\
\mathcal{D}_{-}^{0}(\beta ) &=&-\left[ q^{3/2}\mathcal{A}_{-}^{\infty
}/\beta +q^{2}(\alpha q\mathcal{B}_{-}^{\infty }-\mathcal{C}_{-}^{\infty
}/\alpha q)\right] /(\beta -1/\beta )
\end{eqnarray}%
trivially follows by the definition of the operator family $\mathcal{D}%
_{-}(\lambda |\beta )$, from which we get:%
\begin{equation}
q\mathcal{D}_{-}^{\infty }(\beta )/\beta +\beta \mathcal{D}_{-}^{0}(\beta
)/q=\frac{\kappa _{-}(\beta -1/\beta )}{\zeta _{-}-1/\zeta _{-}}\left[ \frac{%
\prod_{a=1}^{\mathsf{N}}\gamma _{a}\delta _{a}}{q\alpha e^{\tau _{-}}}%
-q\alpha e^{\tau _{-}}\prod_{a=1}^{\mathsf{N}}\alpha _{a}\beta _{a}\right] ,
\label{main-identity-D}
\end{equation}%
while from the interpolation formula we get:%
\begin{align}
\mathcal{D}_{-}^{0}(\beta )& =\sum_{a=1}^{2\mathsf{N}}\frac{q\mathsf{D}%
_{-}(\zeta _{a}^{(h_{a})})T_{a}^{-\varphi _{a}}|\beta ,\mathbf{h}\rangle }{%
\left( (\zeta _{a}^{(h_{a})})^{2}/q-q/(\zeta _{a}^{(h_{a})})^{2}\right)
\left( (\zeta _{a}^{(h_{a})})^{2}-1/(\zeta _{a}^{(h_{a})})^{2}\right) }\prod 
_{\substack{ b=1  \\ b\neq a\text{ mod}\mathsf{N}}}^{\mathsf{N}}\frac{1}{%
X_{a}^{(h_{a})}-X_{b}^{(h_{b})}}  \notag \\
& +(-1)^{\mathsf{N}}\left( \frac{q}{2}\text{det}_{q}M(1)\prod_{b=1}^{\mathsf{%
N}}\frac{1}{X-X_{b}^{(h_{b})}}-\frac{iq}{2}\frac{\zeta _{-}+1/\zeta _{-}}{%
\zeta _{-}-1/\zeta _{-}}\text{det}_{q}M(i)\prod_{b=1}^{\mathsf{N}}\frac{1}{%
X+X_{b}^{(h_{b})}}\right) |\beta ,\mathbf{h}\rangle  \notag \\
& -q^{2}\mathcal{D}_{-}^{\infty }(\beta )|\beta ,\mathbf{h}\rangle .
\end{align}%
from which the statement of the theorem follows easily .
\end{proof}

\subsection{SoV spectral decomposition of the identity}

The Theorem \ref{B-Pseudo-diago} states the pseudo-diagonalizability of $%
\mathcal{B}_{-}(\lambda |\beta )$ for almost all the values of the
boundary-bulk-gauge parameters, so that for almost all the values of these
parameters the left and right states $\langle \beta ,\mathbf{h|}$ and $%
|\beta ,\mathbf{k}\rangle $ are well defined nonzero left and right states
describing a left and right basis in the space of the representation.

We can now defines the following $p^{\mathsf{N}}\times p^{\mathsf{N}}$
matrices $U^{(L,\beta )}$ and $U^{(R,\beta q^{2})}$ defining the change of
basis from the original left and right basis:%
\begin{equation}
\underline{\langle \text{\textbf{h}}|}\equiv \otimes _{n=1}^{\mathsf{N}%
}\langle h_{n},n|\text{ \ \ \ \ and \ \ \ }\underline{|\text{\textbf{h}}%
\rangle }\equiv \otimes _{n=1}^{\mathsf{N}}|h_{n},n\rangle ,
\end{equation}%
composed by $v_{n}$-eigenstates, to the left and right pseudo-eigenbasis of $%
\mathcal{B}_{-}(\lambda |\beta )$:%
\begin{equation}
\langle \beta ,\mathbf{h|}=\underline{\langle \text{\textbf{h}}|}U^{(L,\beta
)}=\sum_{i=1}^{p^{\mathsf{N}}}U_{\varkappa \left( \text{\textbf{h}}\right)
,i}^{(L,\beta )}\underline{\langle \varkappa ^{-1}\left( i\right) |}\text{ \
\ and\ \ \ }|\beta q^{2},\text{\textbf{h}}\rangle =U^{(R,\beta q^{2})}%
\underline{|\text{\textbf{h}}\rangle }=\sum_{i=1}^{p^{\mathsf{N}%
}}U_{i,\varkappa \left( \text{\textbf{h}}\right) }^{(R,\beta q^{2})}%
\underline{|\varkappa ^{-1}\left( i\right) \rangle },
\end{equation}%
where $\varkappa $ is an isomorphism between the sets $\{0,...,p-1\}^{\mathsf{N}%
}$ and $\{1,...,p^{\mathsf{N}}\}$ defined by: 
\begin{equation}
\varkappa :\text{\textbf{h}}\in \{0,...,p-1\}^{\mathsf{N}}\rightarrow
\varkappa \left( \text{\textbf{h}}\right) \equiv 1+\sum_{a=1}^{\mathsf{N}%
}p^{(a-1)}h_{a}\in \{1,...,p^{\mathsf{N}}\}.
\end{equation}%
It follows from the pseudo-diagonalizability of $\mathcal{B}_{-}(\lambda
|\beta )$ that the $p^{\mathsf{N}}\times p^{\mathsf{N}}$ square matrices $%
U^{(L,\beta )}$ and $U^{(R,\beta q^{2})}$\ are invertible matrices for which
it holds:%
\begin{equation}
U^{(L,\beta )}\mathcal{B}_{-}(\lambda |\beta )=\Delta _{\mathcal{B}%
_{-}}(\lambda |\beta )U^{(L,\beta )},\text{ \ \ }\mathcal{B}_{-}(\lambda
|\beta )U^{(R,\beta q^{2})}=U^{(R,\beta q^{2})}\Delta _{\mathcal{B}%
_{-}}(\lambda |\beta ),
\end{equation}%
where $\Delta _{\mathcal{B}_{-}}(\lambda |\beta )$ is the $p^{\mathsf{N}%
}\times p^{\mathsf{N}}$ diagonal matrix defined by: 
\begin{equation}
\left( \Delta _{\mathcal{B}_{-}}(\lambda |\beta )\right) _{i,j}\equiv \delta
_{i,j}\text{\textsc{b}}_{\varkappa ^{-1}\left( i\right) }(\lambda |\beta )%
\text{ \ }\forall i,j\in \{1,...,p^{\mathsf{N}}\}.
\end{equation}%
We can prove that it holds, with the same notation as in Theorem \ref%
{B-Pseudo-diago}:

\begin{proposition}
\label{Spectral-SoV}For almost all the values of the boundary-bulk-gauge
parameters it holds:%
\begin{equation}
\langle \Omega _{\beta }|\Omega _{\beta q^{2}}\rangle \neq 0,
\label{Nonzero-norm}
\end{equation}%
so that fixed the normalization factor: 
\begin{equation}
\text{\textsc{n}}_{\beta }=\left( \prod_{1\leq b<a\leq \mathsf{N}}\left(
X_{a}^{\left( p-1\right) }-X_{b}^{\left( p-1\right) }\right) \langle \Omega
_{\beta }|\Omega _{\beta q^{2}}\rangle \right) ^{1/2},  \label{Normaliz-defa}
\end{equation}%
in the left and right pseudo-eigenstates, then the $p^{\mathsf{N}}\times p^{%
\mathsf{N}}$ matrix $M\equiv U^{(L,\beta )}U^{(R,\beta q^{2})}$ is the
following invertible diagonal matrix:%
\begin{equation}
M_{\varkappa \left( \text{\textbf{h}}\right) \varkappa \left( \text{\textbf{k%
}}\right) }=\langle \beta ,\text{\textbf{h}}|\beta q^{2},\text{\textbf{k}}%
\rangle =\prod_{1\leq a\leq \mathsf{N}}\delta _{h_{a},k_{a}}\prod_{1\leq
b<a\leq \mathsf{N}}\frac{1}{X_{a}^{(h_{a})}-X_{b}^{(h_{b})}},  \label{T2M_jj}
\end{equation}%
from which the following spectral decomposition of the identity $\mathbb{I}$
\ follows:%
\begin{equation}
\mathbb{I}\equiv \sum_{h_{1},...,h_{\mathsf{N}}=0}^{p-1}\prod_{1\leq b<a\leq 
\mathsf{N}}(X_{a}^{(h_{a})}-X_{a}^{(h_{a})})|\beta q^{2},h_{1},...,h_{%
\mathsf{N}}\rangle \langle \beta ,h_{1},...,h_{\mathsf{N}}|.
\end{equation}
\end{proposition}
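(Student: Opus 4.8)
The plan is to deduce the three assertions of the Proposition---non-vanishing of $\langle\Omega_\beta|\Omega_{\beta q^2}\rangle$, the diagonal Gram matrix $(\ref{T2M_jj})$, and the resulting resolution of the identity---from the pseudo-diagonalizability of $\mathcal{B}_-(\lambda|\beta)$ established just above, together with the pseudo-eigenvalue relations $(\ref{L/R-Pseudo-eigenstate})$ and the shift actions $(\ref{Left-A-action})$--$(\ref{Right-D-action})$. Since $\mathcal{B}_-(\lambda|\beta)$ has simple pseudo-spectrum, its left and right pseudo-eigenstates span the representation space, so the change-of-basis matrices $U^{(L,\beta)}$ and $U^{(R,\beta q^2)}$ are invertible; hence $M=U^{(L,\beta)}U^{(R,\beta q^2)}$ is invertible and, by duality of the $v_n$-eigenbasis, its entries are exactly the overlaps $M_{\varkappa(\mathbf{h})\varkappa(\mathbf{k})}=\langle\beta,\mathbf{h}|\beta q^2,\mathbf{k}\rangle$. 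It then remains to check that the normalization $\text{\textsc{n}}_\beta$ in $(\ref{Normaliz-defa})$ is well defined, i.e. $\langle\Omega_\beta|\Omega_{\beta q^2}\rangle\neq0$, and to compute these overlaps explicitly.

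For the non-vanishing I would use that $\langle\Omega_\beta|$ and $|\Omega_{\beta q^2}\rangle$ are the reflection-algebra reference states built in the previous subsection from the bulk reference states $(\ref{Left-ref})$ and $(\ref{Right-ref})$, with the internal gauge $\gamma$ fixed by $(\ref{K21-zeros})$ so that $\bar{K}_-$ is triangular. Both are tensor products over the sites in the $v_n$-eigenbasis, so their overlap factorizes as $\prod_{n=1}^{\mathsf{N}}\big(\sum_{h_n=0}^{p-1}c_n(h_n)\,c'_n(h_n)\big)$, a product of local sums of products of $q$-powers and parameter ratios. This is a nonzero rational function of the bulk, boundary and gauge parameters: it suffices to exhibit one admissible point where each local factor is non-zero, after which analyticity forces $\langle\Omega_\beta|\Omega_{\beta q^2}\rangle\neq0$ for almost all values.

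The core step is the overlap formula. For the off-diagonal vanishing I would sandwich $\mathcal{B}_-(\lambda|\beta q^2)$, using $\langle\beta q^2,\mathbf{h}|\mathcal{B}_-(\lambda|\beta q^2)=\text{\textsc{b}}_{\mathbf{h}}(\lambda|\beta q^2)\langle\beta,\mathbf{h}|$ on the left and $\mathcal{B}_-(\lambda|\beta q^2)|\beta q^2,\mathbf{k}\rangle=|\beta q^4,\mathbf{k}\rangle\,\text{\textsc{b}}_{\mathbf{k}}(\lambda|\beta q^2)$ on the right, which yields
\begin{equation}
\text{\textsc{b}}_{\mathbf{h}}(\lambda|\beta q^2)\,\langle\beta,\mathbf{h}|\beta q^2,\mathbf{k}\rangle=\text{\textsc{b}}_{\mathbf{k}}(\lambda|\beta q^2)\,\langle\beta q^2,\mathbf{h}|\beta q^4,\mathbf{k}\rangle\qquad\forall\lambda .
\end{equation}
When $\mathbf{h}\neq\mathbf{k}$, simplicity of the pseudo-spectrum guarantees that $\text{\textsc{b}}_{\mathbf{k}}(\cdot|\beta q^2)$ has a zero in $\lambda$ which is not a zero of $\text{\textsc{b}}_{\mathbf{h}}(\cdot|\beta q^2)$; evaluating the relation there forces $\langle\beta,\mathbf{h}|\beta q^2,\mathbf{k}\rangle=0$. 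For the diagonal entries I would peel off the outermost $\mathcal{D}_-$ in $(\ref{Def-right-P-B-basis})$, rewrite it through the parity relations $(\ref{A-Sym})$--$(\ref{B-C-Sym})$ in terms of the $\mathcal{A}_-$-operators, and evaluate its left action on $\langle\beta,\mathbf{h}|$ with $(\ref{Left-A-action})$; this produces a recursion in $\mathbf{h}$ whose ratios reassemble $\prod_{1\le b<a\le\mathsf{N}}1/(X_a^{(h_a)}-X_b^{(h_b)})$, the choice of $\text{\textsc{n}}_\beta$ in $(\ref{Normaliz-defa})$ being made precisely so that the recursion bottoms out correctly at $\mathbf{h}=(p-1,\dots,p-1)$. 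The main obstacle is exactly this bookkeeping: tracking the gauge shifts $\beta\to\beta q^{\pm2}$ and the cyclic identification of the $h_a$ as the shift operators are reordered, so that the Vandermonde-type factor assembles with the correct sign and normalization.

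Finally, once $M$ is identified with the invertible diagonal matrix $(\ref{T2M_jj})$, the resolution of the identity is immediate: writing
\begin{equation}
\mathbb{I}=\sum_{\mathbf{h},\mathbf{k}}|\beta q^2,\mathbf{h}\rangle\,(M^{-1})_{\varkappa(\mathbf{h})\varkappa(\mathbf{k})}\,\langle\beta,\mathbf{k}|
\end{equation}
and using that $M^{-1}$ is diagonal with entries $\prod_{1\le b<a\le\mathsf{N}}(X_a^{(h_a)}-X_b^{(h_b)})$, the double sum collapses to the single sum over $\mathbf{h}$ stated in the Proposition.
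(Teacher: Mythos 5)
Your handling of three of the four assertions matches the paper's proof: the diagonality of $M$ (sandwich $\mathcal{B}_-(\lambda|\beta q^2)$ and evaluate at a zero of $\text{\textsc{b}}_{\mathbf{k}}(\cdot|\beta q^2)$ that simplicity prevents from being a zero of $\text{\textsc{b}}_{\mathbf{h}}(\cdot|\beta q^2)$), the diagonal entries (the recursion in $\mathbf{h}$ generated by the $\mathcal{A}_-$/$\mathcal{D}_-$ shift actions combined with the parity relations $(\ref{A-Sym})$--$(\ref{B-C-Sym})$, anchored at $\mathbf{h}=(p-1,\dots,p-1)$ by the normalization), and the collapse of the double sum in the resolution of the identity. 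The genuine gap is your argument for the first assertion, $\langle\Omega_\beta|\Omega_{\beta q^2}\rangle\neq 0$. You propose to evaluate the overlap explicitly from the tensor-product formulas $(\ref{Left-ref})$--$(\ref{Right-ref})$ and extend by continuity, but this fails twice over. First, those explicit formulas exist only when the $\mathsf{N}+1$ constraints $(\ref{Conditions-nilpotent-g})$, resp. $(\ref{Condi-bulk-quasi-L})$, are imposed on the bulk-gauge parameters, i.e. on a proper subvariety; for generic parameters the pseudo-vacua are produced only by the abstract existence argument of Appendix B.4 and are defined only up to scale, so ``$\langle\Omega_\beta|\Omega_{\beta q^2}\rangle$ as an analytic function of the parameters'' is not even a well-posed object to which a density argument can be applied. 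Second, even at those special points the overlap does not factorize into local sums: by Theorem \ref{B-special-P-diago} the right pseudo-vacuum attached to $\text{\textsc{b}}_{\mathbf{0}}$ is $|\Omega_\beta\rangle=\prod_{n=1}^{\mathsf{N}}\prod_{r_n=1}^{p-1}\mathcal{D}_-(\text{\textsc{b}}_{-,n}(\beta)/q^{r_n}|\beta)\,|\bar\Omega_\beta\rangle$, and only $|\bar\Omega_\beta\rangle$ (associated with $\text{\textsc{b}}_{\mathbf{1}}$) is a product state, so the claimed factorization of $\langle\Omega_\beta|\Omega_{\beta q^2}\rangle$ is simply false.

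The irony is that the correct argument---which is the paper's---is already contained in your own first paragraph. Theorem \ref{B-Pseudo-diago} guarantees that the left and right pseudo-eigenstates form bases for \emph{any} nonzero choice of the normalization constants (take them equal to $1$ to avoid circularity with $(\ref{Normaliz-defa})$), so $U^{(L,\beta)}$ and $U^{(R,\beta q^2)}$ are invertible and hence $M=U^{(L,\beta)}U^{(R,\beta q^2)}$ is invertible; your off-diagonal computation shows $M$ is diagonal; an invertible diagonal matrix has all diagonal entries nonzero. With unit normalization the definitions $(\ref{Def-left-P-B-basis})$--$(\ref{Def-right-P-B-basis})$ give $\langle\beta,\mathbf{0}|=\langle\Omega_\beta|$ and $|\beta q^2,\mathbf{0}\rangle=|\Omega_{\beta q^2}\rangle$, so the entry at $\mathbf{h}=\mathbf{k}=\mathbf{0}$ yields $\langle\Omega_\beta|\Omega_{\beta q^2}\rangle\neq0$ immediately (the paper reads it off the $\mathbf{h}=(p-1,\dots,p-1)$ entry, which equals $\langle\Omega_\beta|\Omega_{\beta q^2}\rangle/\text{\textsc{n}}_\beta^{2}$). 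The logical order should therefore be: diagonality and invertibility first, non-vanishing as a corollary, and only then fix $\text{\textsc{n}}_\beta$ by $(\ref{Normaliz-defa})$ and run your diagonal recursion; no explicit evaluation of the vacuum overlap is needed, nor is one available.
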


\begin{proof}
The following identity holds:%
\begin{equation}
\text{\textsc{b}}_{\text{\textbf{h}}}(\lambda |\beta )\langle \beta /q^{2},%
\mathbf{h}|\beta ,\mathbf{k}\rangle =\langle \beta ,\mathbf{h}|\mathcal{B}%
_{-}(\lambda |\beta )|\beta ,\mathbf{k}\rangle =\text{\textsc{b}}_{\text{%
\textbf{k}}}(\lambda |\beta )\langle \beta ,\mathbf{h}|\beta q^{2},\mathbf{k}%
\rangle .
\end{equation}%
From it follows the fact that the matrix $M$ is diagonal:%
\begin{equation}
\langle \beta /q^{2},\mathbf{h}|\beta ,\mathbf{k}\rangle =\langle \beta ,%
\mathbf{h}|\beta q^{2},\mathbf{k}\rangle =0\text{ \ }\forall \mathbf{h}\neq 
\mathbf{k\in }\{0,...,p-1\}^{\mathsf{N}}  \label{orthogonality-pseudo-states}
\end{equation}%
as there exists at least a $n\in \{1,...,\mathsf{N}\}$\ such that $h_{n}\neq
k_{n}$ and then:%
\begin{equation}
\text{\textsc{b}}_{\text{\textbf{h}}}(\zeta _{n}^{(k_{n})}|\beta )\neq
0,\quad \text{\textsc{b}}_{\text{\textbf{k}}}(\zeta _{n}^{(k_{n})}|\beta )=0.
\end{equation}%
Moreover, independently from the choice of the nonzero normalization factor 
\textsc{n}$_{\beta }$, the Theorem \ref{B-Pseudo-diago} implies that the
matrices $U^{(L,\beta )}$ and $U^{(R,\beta q^{2})}$ are invertible for
almost all the values of the boundary-bulk-gauge parameters so that the same
must be true for the diagonal matrix $M$, i.e. it must holds:%
\begin{equation}
M_{\varkappa \left( \mathbf{h}\right) \varkappa \left( \mathbf{h}\right)
}=\langle \mathbf{h,}\beta |\beta q^{2},\mathbf{h}\rangle \neq 0\text{ \ }%
\forall \mathbf{h\in }\{0,...,p-1\}^{\mathsf{N}},
\end{equation}%
which implies $\left( \ref{Nonzero-norm}\right) $ for $\mathbf{p-1}\equiv
(p-1,...,p-1)$ being:%
\begin{equation}
M_{\varkappa \left( \mathbf{p-1}\right) \varkappa \left( \mathbf{p-1}\right)
}=\frac{\langle \Omega _{\beta }|\Omega _{\beta q^{2}}\rangle }{\text{%
\textsc{n}}_{\beta }^{2}},
\end{equation}%
and we can define the normalization factor according to $\left( \ref%
{Normaliz-defa}\right) $. The computation now of the remaining diagonal
matrix elements $M_{\varkappa \left( \mathbf{h}\right) \varkappa \left( 
\mathbf{h}\right) }$ for $\mathbf{h}\neq \mathbf{p-1}$ can be done in a
standard way by computing the matrix elements:%
\begin{equation}
\theta _{a,h_{a}}(\beta )\equiv \langle \beta ,h_{1},...,h_{a},...,h_{%
\mathsf{N}}|\mathcal{A}_{-}(\zeta _{a}^{\left( h_{a}+1\right) }|\beta
q^{2})|\beta q^{2},h_{1},...,h_{a}+1,...,h_{\mathsf{N}}\rangle ,
\end{equation}%
where $h_{a}\in \{0,...,p-1\},$ $a\in \{1,...,\mathsf{N}\}$. Using the left
action of the operator $\mathcal{A}_{-}(\zeta _{a}^{\left( h_{a}+1\right)
}|\beta q^{2})$ we get:%
\begin{align}
\theta _{a,h_{a}}(\beta )& =\left[ \frac{1}{q\left( \zeta
_{a}^{(h_{a})}\right) ^{2}}-\frac{\left( \zeta _{a}^{(h_{a})2}q-1/q\zeta
_{a}^{(h_{a})2}\right) }{\beta (\beta -1/\beta )}\right] \prod_{\substack{ %
b=1  \\ b\neq a}}^{\mathsf{N}}\frac{X_{a}^{(h_{a}+1)}-X_{b}^{(h_{b})}}{%
X_{a}^{(h_{a})}-X_{b}^{(h_{b})}}  \notag \\
& \times \frac{\mathsf{A}_{-}(1/\zeta _{a}^{(h_{a})})(q-1/q)}{\left( (\zeta
_{a}^{(h_{a})})^{2}-1/(\zeta _{a}^{(h_{a})})^{2}\right) }\left. \langle
\beta ,h_{1},...,h_{a}+1,...,h_{\mathsf{N}}|\beta ,h_{1},...,h_{a}+1,...,h_{%
\mathsf{N}}\rangle \right. \\
& =\frac{\mathsf{A}_{-}(1/\zeta _{a}^{(h_{a})})(1/q-q)((\zeta
_{a}^{(h_{a})})^{2}q/\beta -\beta /q(\zeta _{a}^{(h_{a})})^{2})}{\left(
(\zeta _{a}^{(h_{a})})^{2}-1/(\zeta _{a}^{(h_{a})})^{2}\right) (\beta
-1/\beta )}\prod_{\substack{ b=1  \\ b\neq a}}^{\mathsf{N}}\frac{%
X_{a}^{(h_{a}+1)}-X_{b}^{(h_{b})}}{X_{a}^{(h_{a})}-X_{b}^{(h_{b})}}  \notag
\\
& \times \left. \langle \beta ,h_{1},...,h_{a}+1,...,h_{\mathsf{N}}|\beta
,h_{1},...,h_{a}+1,...,h_{\mathsf{N}}\rangle \right.
\end{align}%
while using the decomposition (\ref{A-Sym}) and the fact that:%
\begin{equation}
\langle \beta ,h_{1},...,h_{a},...,h_{\mathsf{N}}|\mathcal{D}_{-}(1/\zeta
_{a}^{\left( h_{a}+1\right) }|\beta q^{2})|\beta
q^{2},h_{1},...,h_{a}+1,...,h_{\mathsf{N}}\rangle =0
\end{equation}%
it holds:%
\begin{align}
\theta _{a,h_{a}}(\beta )& =\frac{\mathsf{A}_{-}(1/\zeta
_{a}^{(h_{a})})(1/q-q)((\zeta _{a}^{(h_{a})})^{2}q/\beta -\beta /q(\zeta
_{a}^{(h_{a})})^{2})k_{a}^{\left( h_{a}+1\right) }}{\left( (\zeta
_{a}^{(h_{a}+1)})^{2}-1/(\zeta _{a}^{(h_{a}+1)})^{2}\right) (\beta -1/\beta )%
}  \notag \\
& \times \langle \beta ,h_{1},...,h_{a},...,h_{\mathsf{N}}|\beta
q^{2},h_{1},...,h_{a},...,h_{\mathsf{N}}\rangle .
\end{align}%
These results lead to the identity:%
\begin{equation}
\frac{\langle \beta ,h_{1},...,h_{a}+1,...,h_{\mathsf{N}}|\beta
q^{2},h_{1},...,h_{a}+1,...,h_{\mathsf{N}}\rangle }{\langle \beta
,h_{1},...,h_{a},...,h_{\mathsf{N}}|\beta q^{2},h_{1},...,h_{a},...,h_{%
\mathsf{N}}\rangle }=\prod_{\substack{ b=1  \\ b\neq a}}^{\mathsf{N}}\frac{%
X_{a}^{(h_{a})}-X_{b}^{(h_{b})}}{X_{a}^{(h_{a}+1)}-X_{b}^{(h_{b})}},
\label{T2F1}
\end{equation}%
from which one can prove:%
\begin{equation}
\frac{\langle \beta ,h_{1},...,h_{a},...,h_{\mathsf{N}}|\beta
q^{2},h_{1},...,h_{a},...,h_{\mathsf{N}}\rangle }{\langle \beta
,p-1,...,p-1|\beta q^{2},p-1,...,p-1\rangle }=\prod_{1\leq b<a\leq \mathsf{N}%
}\frac{X_{a}^{\left( p-1\right) }-X_{b}^{\left( p-1\right) }}{X_{a}^{\left(
h_{a}\right) }-X_{b}^{\left( h_{b}\right) }}.  \label{T2F2}
\end{equation}%
This proves the proposition being by our choice of normalization:%
\begin{equation}
\langle \beta ,p-1,...,p-1|\beta q^{2},p-1,...,p-1\rangle =\prod_{1\leq
b<a\leq \mathsf{N}}\frac{1}{X_{a}^{\left( p-1\right) }-X_{b}^{\left(
p-1\right) }}.
\end{equation}
\end{proof}

\subsection{Proof of pseudo diagonalizability and simplicity of $\mathcal{B}%
_{-}(\protect\lambda |\protect\beta )$}

We prove the pseudo-digonalizability and pseudo-simplicity of $\mathcal{B}%
_{-}(\lambda |\beta )$ in two steps. We first consider some special
representation for which such statement is proven by direct computation then
we use this result to prove our statement for general representations.

\subsubsection{Pseudo diagonalizability and simplicity of $\mathcal{B}_{-}(%
\protect\lambda |\protect\beta )$: special representations}

The following theorem holds:

\begin{theorem}
\label{B-special-P-diago}Let us assume that the conditions on the bulk-gauge
parameters:%
\begin{equation}
\beta =\alpha w_{\mathsf{N}}^{\left( \epsilon _{\mathsf{N}},k_{\mathsf{N}%
}\right) },\text{ }z_{n+1}^{\left( \epsilon _{n+1},k_{n+1}\right)
}=1/w_{n}^{\left( \epsilon _{n},k_{n}\right) }\text{ \ \ }\forall n\in
\left\{ 1,...,\mathsf{N}-1\right\} ,  \label{Condi-bulk-quasi-L}
\end{equation}%
are satisfied for fixed $\mathsf{N}$-tuples of $\epsilon _{n}=\pm 1$ and $%
k_{n}\in \left\{ 0,...,p-1\right\} $ and that the conditions $(\ref%
{dis-equalities})$ hold together with the following ones:%
\begin{equation}
\left( z_{1}^{\left( \epsilon _{1},k_{1}\right) }\right) ^{p}\neq (-\zeta
_{-}+\epsilon _{0}\sqrt{\zeta _{-}^{2}+4\kappa _{-}^{2}})^{p}/(2qe^{\tau
_{-}}\kappa _{-})^{p},  \label{Special-B-simple}
\end{equation}%
and%
\begin{equation}
\mu _{n,\epsilon _{n}}^{2p}\neq \pm 1,\text{ }\mu _{n,\epsilon
_{n}}^{2p}\neq \alpha _{-}^{2p\epsilon },\text{\ }\mu _{n,\epsilon
_{n}}^{2p}\neq -\beta _{-}^{2p\epsilon },\text{\ }\mu _{n,+}^{2p}\neq \mu
_{m,-}^{2p\epsilon },\text{\ }\mu _{n,\epsilon _{n}}^{p}\neq \mu
_{m,\epsilon _{n}}^{p},  \label{Special-B-simple2}
\end{equation}%
for any $\epsilon _{0}=\pm 1$ and $n,m\in \{1,...,\mathsf{N}\}$, then the
operator family $\mathcal{B}_{-}(\lambda |\beta )$ has simple
pseudo-spectrum characterized by:%
\begin{align}
\text{\textsc{b}}_{-,n}(\beta )& =\mu _{n,\epsilon _{n}}q^{1/2}\text{ }%
\forall n\in \{1,...,\mathsf{N}\},\text{ i.e. independent w.r.t. }\beta ,%
\text{\ }  \label{Spectrum-Zeros-B} \\
\text{\textsc{b}}_{-}(\beta )& =f(\alpha ,\beta /q,z_{1}^{\left( \epsilon
_{1},k_{1}+1\right) })\left( \frac{\beta ^{4}/q^{2}}{(1-\beta ^{2})(1-\left(
\beta /q\right) ^{2})}\right) ^{\mathsf{N}}(-1)^{\mathsf{N}}\prod_{n=1}^{%
\mathsf{N}}\frac{\gamma _{n}^{2}}{\mu _{n,\epsilon _{n}}^{2}}  \notag \\
& \times \left[ z_{1}^{\left( \epsilon _{1},k_{1}+1\right) }e^{\tau
_{-}}\kappa _{-}+\zeta _{-}-\kappa _{-}/(z_{1}^{\left( \epsilon
_{1},k_{1}+1\right) }e^{\tau _{-}})\right] /(\zeta _{-}-1/\zeta _{-}),
\end{align}%
and the left pseudo-eigenbasis characterized by the formulae $(\ref%
{Def-left-P-B-basis})$ by fixing: 
\begin{equation}
\langle \Omega _{\beta }|=\sum_{h_{1},...,h_{N}=0}^{p-1}\bigotimes_{n=1}^{%
\mathsf{N}}q^{h_{n}(k_{n}+1)}\left[ \prod_{k_{n}=1}^{h_{n}}\frac{%
a_{n}q^{k_{n}-1/2}+b_{n}q^{1/2-k_{n}}}{c_{n}q^{k_{n}-1/2}+d_{n}q^{1/2-k_{n}}}%
\left( \frac{c_{n}^{p}+d_{n}^{p}}{a_{n}^{p}+b_{n}^{p}}\right) ^{1/p}\right]
^{(1+\epsilon _{n})/2}\langle h_{n},n|.  \label{Left-P-B-eigenS}
\end{equation}%
Similarly, let us assume that the conditions $(\ref{dis-equalities})$ and $(%
\ref{Special-B-simple})$-$(\ref{Special-B-simple2})$ are satisfied together
with:%
\begin{equation}
\beta =\alpha w_{\mathsf{N}}^{\left( \epsilon _{\mathsf{N}},k_{\mathsf{N}%
}\right) },\text{ }z_{n+1}^{\left( \epsilon _{n+1},k_{n+1}-2\right)
}=1/w_{n}^{\left( \epsilon _{n},k_{n}\right) }\text{ \ \ }\forall n\in
\left\{ 1,...,\mathsf{N}-1\right\} ,  \label{Gauge-fix-local}
\end{equation}%
for fixed $\mathsf{N}$-tuples of $\epsilon _{n}=\pm 1$ and $k_{n}\in \left\{
0,...,p-1\right\} $, then the operator family $\mathcal{B}_{-}(\lambda
|\beta )$ has simple pseudo-spectrum characterized by fixing :%
\begin{align}
\text{\textsc{b}}_{-,n}(\beta )& =\mu _{n,\epsilon _{n}}q^{-1/2}\text{ }%
\forall n\in \{1,...,\mathsf{N}\},\text{ i.e. independent w.r.t. }\beta ,%
\text{\ }  \label{Spectrum-Zeros-B2} \\
\text{\textsc{b}}_{-}(\beta )& =f(\alpha ,\beta /q,z_{1}^{\left( \epsilon
_{1},k_{1}-1\right) })\left( \frac{\beta ^{4}}{(1-\beta ^{2})(1-\left( \beta
/q\right) ^{2})}\right) ^{\mathsf{N}}(-1)^{\mathsf{N}}\prod_{n=1}^{\mathsf{N}%
}\frac{\gamma _{n}^{2}}{\mu _{n,\epsilon _{n}}^{2}}  \notag \\
& \times \left[ z_{1}^{\left( \epsilon _{1},k_{1}-1\right) }e^{\tau
_{-}}\kappa _{-}+\zeta _{-}-\kappa _{-}/(z_{1}^{\left( \epsilon
_{1},k_{1}-1\right) }e^{\tau _{-}})\right] /(\zeta _{-}-1/\zeta _{-}),
\end{align}%
and right pseudo-eigenbasis characterized by the formulae $(\ref%
{Def-right-P-B-basis})$ by fixing: 
\begin{equation}
|\Omega _{\beta }\rangle =\prod_{n=1}^{\mathsf{N}}\prod_{r_{n}=1}^{p-1}D_{-}(%
\text{\textsc{b}}_{-,n}(\beta )/q^{r_{n}}|\beta )|\bar{\Omega}_{\beta
}\rangle 
\end{equation}%
with%
\begin{equation}
|\bar{\Omega}_{\beta }\rangle =\left( \frac{\beta }{q}\right) ^{\mathsf{N}%
}\sum_{h_{1},...,h_{N}=0}^{p-1}\prod_{n=1}^{\mathsf{N}}q^{-h_{n}k_{n}}\left[
\prod_{r_{n}=1}^{h_{n}}\frac{c_{n}q^{r_{n}-1/2}+d_{n}q^{1/2-r_{n}}}{%
a_{n}q^{r_{n}-1/2}+b_{n}q^{1/2-r_{n}}}\left( \frac{a_{n}^{p}+b_{n}^{p}}{%
c_{n}^{p}+d_{n}^{p}}\right) ^{1/p}\right] ^{(1+\epsilon
_{n})/2}\bigotimes_{n=1}^{N}|h_{n},n\rangle .  \label{Right-P-B-eigenS}
\end{equation}
\end{theorem}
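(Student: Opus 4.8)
The plan is to establish the statement by \emph{direct construction}, exploiting the factorisation of $\mathcal{B}_{-}(\lambda |\beta )$ into gauge transformed bulk Yang--Baxter generators given in $(\ref{B_gauge-decompo})$. Under the conditions $(\ref{Condi-bulk-quasi-L})$ (which coincide with the existence conditions $(\ref{Conditions-nilpotent-g})$ after rewriting the recursion $z_{n+1}=1/w_{n}$) together with $(\ref{dis-equalities})$, the Proposition of Appendix A guarantees the existence and uniqueness up to normalisation of the left state $\langle \Omega _{\beta }|\equiv \langle \Omega ,\alpha ,\beta ,\gamma |$ annihilated by $A(\lambda |\alpha ,\beta ,\gamma )$, whose explicit form is $(\ref{Left-P-B-eigenS})$; here the internal gauge is fixed to $\gamma =z_{1}^{(\epsilon _{1},k_{1})}$, which is admissible since $\mathcal{U}_{-}(\lambda |\alpha ,\beta )$ is $\gamma $-independent. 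First I would substitute $\langle \Omega _{\beta }|$ into $(\ref{B_gauge-decompo})$ and determine which of the four terms survive.

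The key computation is this action on $(\ref{B_gauge-decompo})$. The two terms headed on the left by $A(\lambda |\alpha ,\beta ,\gamma )$ vanish immediately by the annihilation property. In the term $-\bar{K}_{22}B(\lambda |\alpha ,\beta )A(1/\lambda |\alpha ,\beta /q,\gamma q)$ one uses the pseudo-reference relation $\langle \Omega ,\alpha ,\beta ,\gamma |B(\lambda |\alpha ,\beta )=b(\lambda |\alpha ,\beta )\langle \Omega ,\alpha ,\beta /q,\gamma q|$ of Appendix A, after which $A(1/\lambda |\alpha ,\beta /q,\gamma q)$ annihilates the shifted reference state; this term therefore vanishes as well. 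Only the $\bar{K}_{21}\,B(\lambda |\alpha ,\beta )B(1/\lambda |\alpha ,\beta /q)$ term survives, and applying the pseudo-reference relation \emph{twice} yields
\begin{equation}
\langle \Omega _{\beta }|\mathcal{B}_{-}(\lambda |\beta )=f(\alpha ,\beta /q,\gamma q)\,\bar{K}_{-}(\lambda |\gamma )_{21}\,b(\lambda |\alpha ,\beta )\,b(1/\lambda |\alpha ,\beta /q)\,\langle \Omega _{\beta /q^{2}}|,
\end{equation}
which is precisely the pseudo-eigenvalue equation of Theorem \ref{B-Pseudo-diago} with the shift $\beta \rightarrow \beta /q^{2}$, so that $\text{\textsc{b}}_{\mathbf{0}}(\lambda |\beta )$ is the scalar factor on the right-hand side.

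To match $(\ref{pseudo-eigenvalue})$ and extract the stated data I would factor the three ingredients. The block $\bar{K}_{-}(\lambda |\gamma )_{21}$ supplies the overall $(\lambda ^{2}/q-q/\lambda ^{2})$ factor and, through $z_{1}^{(\epsilon _{1},k_{1}+1)}=q\gamma $, the $\gamma $-dependent bracket appearing in $(\ref{Spectrum-Zeros-B})$; condition $(\ref{Special-B-simple})$ (equivalently $\gamma ^{p}\neq \gamma _{\epsilon }^{p}$ with $\gamma _{\epsilon }$ the zeros $(\ref{K21-zeros})$ of $\bar{K}_{21}$) ensures this bracket, hence $\text{\textsc{b}}_{-}(\beta )$, is nonzero. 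The product $b(\lambda |\alpha ,\beta )b(1/\lambda |\alpha ,\beta /q)=\prod_{n}b_{n}(\lambda |\alpha ,\beta )b_{n}(1/\lambda |\alpha ,\beta /q)$ provides the $\lambda $-dependence — each $b_{n}$ vanishing at $\lambda =q^{1/2}\mu _{n,\epsilon _{n}}$ forces $\text{\textsc{b}}_{-,n}(\beta )=q^{1/2}\mu _{n,\epsilon _{n}}$, independent of $\beta $ as claimed — and, through the constant prefactors of the $b_{n}$, the factor $\big(\beta ^{4}/q^{2}/[(1-\beta ^{2})(1-(\beta /q)^{2})]\big)^{\mathsf{N}}\prod_{n}\gamma _{n}^{2}/\mu _{n,\epsilon _{n}}^{2}$; collecting these with $f(\alpha ,\beta /q,\gamma q)$ reproduces $\text{\textsc{b}}_{-}(\beta )$.

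Finally, simplicity follows because the inequalities $(\ref{Special-B-simple2})$ are exactly the $p$-th power translations — using $\text{\textsc{b}}_{-,n}^{2}(\beta )=q\mu _{n,\epsilon _{n}}^{2}$ and $q^{2p}=1$ — of the conditions $(\ref{SOV-cond1})$--$(\ref{SOV-cond2})$ and of $\text{\textsc{b}}_{-,m}^{p}\neq \text{\textsc{b}}_{-,n}^{p}$, $\text{\textsc{b}}_{-,n}^{2p}\neq \pm 1$ demanded in Theorem \ref{B-Pseudo-diago}; they guarantee that the $X_{b}^{(h_{b})}$ are pairwise distinct, so the shifted states $(\ref{Def-left-P-B-basis})$ generated from $\langle \Omega _{\beta }|$ form a pseudo-eigenbasis with simple spectrum. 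The right statement is entirely analogous: one replaces $\langle \Omega _{\beta }|$ by the right pseudo-reference state existing under $(\ref{Gauge-fix-local})$ and invokes the $\beta $-symmetry $(\ref{B-to-C-identity})$, the only changes being the index shifts and the replacement $\mu _{n,\epsilon _{n}}q^{1/2}\rightarrow \mu _{n,\epsilon _{n}}q^{-1/2}$ recorded in $(\ref{Spectrum-Zeros-B2})$. The main obstacle will be the careful bookkeeping of the gauge-parameter shifts $\beta \rightarrow \beta /q$, $\gamma \rightarrow \gamma q$ through $(\ref{B_gauge-decompo})$ — in particular confirming that the shifted state is still annihilated by $A(1/\lambda |\alpha ,\beta /q,\gamma q)$, which holds because the local existence conditions are cyclic in the indices $k_{n}$ (so the shift amounts to $k_{n}\rightarrow k_{n}+1$) — and the precise matching of the scalar prefactor with the stated $\text{\textsc{b}}_{-}(\beta )$.
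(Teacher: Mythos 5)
Your computation of the pseudo-eigenvalue relation follows essentially the same route as the paper: you identify $\langle\Omega_\beta|$ with the gauge-transformed Yang--Baxter pseudo-reference state $\langle\Omega,\alpha,\beta,\gamma|$ of Appendix A with $\gamma=z_1^{(\epsilon_1,k_1)}$, insert it into the decomposition $(\ref{B_gauge-decompo})$, note that the two terms headed by $A(\lambda|\alpha,\beta,\gamma)$ vanish at once, that the $\bar{K}_{22}$ term vanishes after one application of the $B$-action because $A(1/\lambda|\alpha,\beta/q,\gamma q)$ annihilates the shifted reference state, and that the surviving $\bar{K}_{21}\,B\,B$ term gives $f(\alpha,\beta/q,\gamma q)\,\bar{K}_{-}(\lambda|\gamma)_{21}\,b(\lambda|\alpha,\beta)\,b(1/\lambda|\alpha,\beta/q)\,\langle\Omega_{\beta/q^{2}}|$; the zeros of the $b_n$'s and the constant prefactors then fix $\text{\textsc{b}}_{-,n}(\beta)$ and $\text{\textsc{b}}_{-}(\beta)$ as in $(\ref{Spectrum-Zeros-B})$. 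This part is correct and matches the paper's proof.

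The genuine gap is your final step. The theorem asserts not only the pseudo-eigenvalue data but that the states $(\ref{Def-left-P-B-basis})$ built from $\langle\Omega_\beta|$ form a pseudo-eigen\emph{basis}, and "the $X_b^{(h_b)}$ are pairwise distinct, so the shifted states form a pseudo-eigenbasis" does not deliver this. Two things are missing. First, nonvanishing: the states $\langle\beta,h_1,\dots,h_{\mathsf{N}}|$ are produced by acting with products of $\mathcal{A}_{-}(\cdot|\beta q^2)/\mathsf{A}_{-}(\cdot)$ on $\langle\Omega_\beta|$, and nothing in your argument excludes that some such product annihilates the state; the paper proves this by induction on the $h_j$, using the shift relation $\langle\beta,\mathbf{h}_j^{(0)}|\mathcal{A}_{-}(\zeta_j^{(h_j^{(0)}+1)}|\beta q^2)=\mathsf{A}_{-}(\zeta_j^{(h_j^{(0)}+1)})\langle\beta,\mathbf{h}^{(0)}|$ together with $\mathsf{A}_{-}\neq 0$ at these points. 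Second, linear independence: since $\mathcal{B}_{-}(\lambda|\beta)$ shifts the gauge parameter $\beta\to\beta/q^{2}$, distinct pseudo-eigenvalues do \emph{not} imply independence by the usual eigenvector argument; the paper instead takes a vanishing combination $\sum_{\mathbf{k}}c_{\mathbf{k}}\langle\beta,\mathbf{k}|=0$ and acts with a tailored product $\mathcal{B}_{-,\mathbf{h}}(\beta)$ of operators $\mathcal{B}_{-}(\zeta_n^{(k_n)}|\cdot)$ with $k_n\neq h_n$ and suitably shifted $\beta$'s, which kills every term except the one with coefficient $c_{\mathbf{h}}$, forcing $c_{\mathbf{h}}=0$. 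Note also that you cannot patch this by appealing to Theorem \ref{B-Pseudo-diago}, as your wording suggests: in the paper's logical structure, the present theorem for special representations is the input from which Theorem \ref{B-Pseudo-diago} is subsequently deduced by a continuity argument, so that appeal would be circular.
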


\begin{proof}
The conditions $(\ref{Condi-bulk-quasi-L})$ and the choice of internal gauge
parameter:%
\begin{equation}
\gamma =z_{1}^{\left( \epsilon _{1},k_{1}\right) },
\end{equation}%
imply that the states $(\ref{Left-P-B-eigenS})$ and $(\ref{Right-P-B-eigenS})
$ are annihilated by $A(\lambda |\alpha ,\beta ,\gamma )$ and $%
A(1/\lambda |\alpha ,\beta /q,\gamma q)$ respectively as the following identifications
hold:%
\begin{equation}
\langle \Omega _{\beta }|=\langle \Omega ,\alpha ,\beta ,\gamma |,\text{ \ \ 
}|\bar{\Omega}_{\beta }\rangle =|\Omega ,\alpha ,\beta /q,\gamma q\rangle
\left( \frac{\beta }{q}\right) ^{\mathsf{N}},
\end{equation}%
moreover, it holds:%
\begin{eqnarray}
\langle \Omega _{\beta }|B(\lambda |\alpha ,\beta ) &=&b(\lambda |\alpha
,\beta )\langle \Omega ,\alpha ,\beta /q,\gamma q|, \\
B(\lambda |\alpha ,\beta /q)|\bar{\Omega}_{\beta }\rangle  &=&|\Omega
,\alpha ,\beta ,\gamma \rangle \beta ^{\mathsf{N}}b(\lambda q|\alpha ,\beta
/q),
\end{eqnarray}%
so that it holds:%
\begin{eqnarray}
\langle \Omega _{\beta }|\mathcal{B}_{-}(\lambda |\alpha ,\beta )
&=&f(\alpha ,\beta /q,\gamma q)\bar{K}_{-}(\lambda |\gamma )_{21}\langle
\Omega _{\beta }|B(\lambda |\alpha ,\beta )B(1/\lambda |\alpha ,\beta /q), \\
\mathcal{B}_{-}(\lambda |\alpha ,\beta )|\bar{\Omega}_{\beta }\rangle 
&=&B(\lambda |\alpha ,\beta )B(1/\lambda |\alpha ,\beta /q)|\bar{\Omega}%
_{\beta }\rangle f(\alpha ,\beta /q,\gamma q)\bar{K}_{-}(\lambda |\gamma
)_{21},
\end{eqnarray}%
and consequently:%
\begin{eqnarray}
\langle \Omega _{\beta }|\mathcal{B}_{-}(\lambda |\alpha ,\beta )
&=&f(\alpha ,\beta /q,\gamma q)\bar{K}_{-}(\lambda |\gamma )_{21}b(\lambda
|\alpha ,\beta )b(1/\lambda |\alpha ,\beta /q)\langle \Omega _{\beta
/q^{2}}|, \\
\mathcal{B}_{-}(\lambda |\alpha ,\beta )|\bar{\Omega}_{\beta }\rangle  &=&|%
\bar{\Omega}_{\beta q^{2}}\rangle b(\lambda q|\alpha ,\beta )b(q/\lambda
|\alpha ,\beta /q)f(\alpha ,\beta /q,\gamma q)\bar{K}_{-}(\lambda |\gamma
)_{21},
\end{eqnarray}%
so that for the pseudo-eigenvalue it holds:%
\begin{eqnarray}
\text{\textsc{b}}_{\text{\textbf{0}}}(\lambda |\beta ) &=&f(\alpha ,\beta
/q,\gamma q)\bar{K}_{-}(\lambda |\gamma )_{21}b(\lambda |\alpha ,\beta
)b(1/\lambda |\alpha ,\beta /q), \\
\text{\textsc{b}}_{\text{\textbf{1}}}(\lambda |\beta ) &=&f(\alpha ,\beta
/q,\gamma q)\bar{K}_{-}(\lambda |\gamma )_{21}b(\lambda q|\alpha ,\beta
)b(q/(\lambda )|\alpha ,\beta /q),
\end{eqnarray}%
respectively on the left and the right. This fixes the values of the \textsc{%
b}$_{-}(\beta )$ and \textsc{b}$_{-,a}(\beta )$ to those stated in this
theorem. Note that the condition $(\ref{Special-B-simple})$ implies that:%
\begin{equation}
\text{\textsc{b}}_{-}(\beta /q^{2a})\neq 0\text{ \ }\forall a\in \left\{
0,...,p-1\right\} .
\end{equation}

Let us now prove that the states $(\ref{Def-left-P-B-basis})$ and $(\ref%
{Def-right-P-B-basis})$ are all nonzero states. The reasoning is done
explicitly only for the left case as for the right one we can proceed
similarly. We know by construction that the state $\langle \Omega _{\beta }|$
is nonzero so let us assume by induction that the same is true for the state 
$\langle \beta ,$\textbf{h}$^{(0)}|=\langle \beta ,h_{1}^{(0)},...,h_{%
\mathsf{N}}^{(0)}|$ with $h_{j}^{(0)}\in \{0,...,p-2\}$ and let us show that 
$\langle \beta ,$\textbf{h}$_{j}^{(0)}|=\langle \beta
,h_{1}^{(0)},...,h_{j}^{(0)}+1,...,h_{\mathsf{N}}^{(0)}|$ is nonzero. We
have that:%
\begin{equation}
\langle \beta ,\mathbf{h}_{j}^{(0)}|\mathcal{A}_{-}(\zeta
_{j}^{(h_{j}^{(0)}+1)}|\beta q^{2})=\mathsf{A}_{-}(\zeta
_{j}^{(h_{j}^{(0)}+1)})\langle \beta ,\mathbf{h}^{(0)}|\neq \text{\b{0} \ }\
\forall j\in \{1,...,\mathsf{N}\}
\end{equation}%
so that $\langle \beta ,\mathbf{h}_{j}^{(0)}|$ is nonzero. Using this we can
prove that all the states $\langle \beta ,h_{1}^{(0)}+x_{1},...,h_{\mathsf{N}%
}^{(0)}+x_{\mathsf{N}}|$ with $x_{j}\in \{0,1\}$ for any $j\in \{1,...,%
\mathsf{N}\}$ are nonzero, which just proves the validity of the induction.
Note that the same statements hold if we substitute the given value of $%
\beta $ fixed in $\left( \ref{Gauge-fix-local}\right) $ with any value $\beta /q^{2a}
$ for any $a\in \left\{ 1,...,p-1\right\} $; i.e. we have that $\langle
\Omega _{\beta /q^{2a}}|$ is nonzero and from that we prove similarly the
induction. 

Let us now prove that the sets of left and right states define respectively
a left and a right basis of the linear space of the representation. Let us
consider the linear combination to zero of the left states:%
\begin{equation}
\text{\b{0}}=\sum_{\mathbf{k}\in Z_{p}^{\mathsf{N}}}c_{\text{\textbf{k}}%
}\langle \beta ,\mathbf{k}|,  \label{Linear to zero}
\end{equation}%
and let us act on it with the following product of operator:%
\begin{align}
\mathcal{B}_{-,\text{\textbf{h}}}(\beta )& \equiv \mathcal{B}_{-}(\zeta
_{1}^{(0)}|\beta )\mathcal{B}_{-}(\zeta _{1}^{(1)}|\beta /q^{2})\cdots 
\mathcal{B}_{-}(\zeta _{1}^{(p-1)}|\beta /q^{2(p-2)})  \notag \\
& \times \mathcal{B}_{-}(\zeta _{2}^{(0)}|\beta /q^{2(p-1)})\mathcal{B}%
_{-}(\zeta _{2}^{(1)}|\beta /q^{2p})\cdots \mathcal{B}_{-}(\zeta
_{2}^{(p-1)}|\beta /q^{4(p-1)-2})  \notag \\
& \cdots \times \mathcal{B}_{-}(\zeta _{\mathsf{N}}^{(0)}|\beta /q^{2(%
\mathsf{N}-1)(p-1)})\cdots \mathcal{B}_{-}(\zeta _{2}^{(p-1)}|\beta /q^{2%
\mathsf{N}(p-1)-2}),
\end{align}%
where the generic monomial in it:%
\begin{equation}
\mathcal{B}_{-}(\zeta _{m}^{(0)}|\beta /q^{2(m-1)(p-1)})\mathcal{B}%
_{-}(\zeta _{2}^{(1)}|\beta /q^{2(m-1)(p-1)+2})\cdots \mathcal{B}_{-}(\zeta
_{2}^{(p-1)}|\beta /q^{2m(p-1)-2}),
\end{equation}%
contains only the $p-1$ arguments $\zeta _{m}^{(k_{m})}$ with $k_{m}\in
\{0,...,p-1\}\backslash \{h_{n}\}$ and \textbf{h}$\equiv \{h_{1},...,h_{%
\mathsf{N}}\}$ is a generic element of $Z_{p}^{\mathsf{N}}$. Then, it easily
to understand that it holds:%
\begin{eqnarray}
\text{\b{0}} &=&\left( \sum_{\mathbf{k}\in Z_{p}^{\mathsf{N}}}c_{\text{%
\textbf{k}}}\langle \beta ,\mathbf{k}|\right) \mathcal{B}_{-,\text{\textbf{h}%
}}(\beta )=c_{\text{\textbf{h}}}\langle \beta ,\mathbf{h}|\mathcal{B}_{-,%
\text{\textbf{h}}}(\beta )  \notag \\
&=&c_{\text{\textbf{h}}}\prod_{n\mathbf{=1}}^{\mathsf{N}}\prod_{k_{n}\in
\{0,...,p-1\}\backslash \{h_{n}\}}\text{\textsc{b}}_{\text{\textbf{h}}%
}(\zeta _{n}^{(k_{n})}|\beta /q^{2(n-1)(p-1)+k_{n}^{\prime }})\langle \beta
/q^{2\mathsf{N}(p-1)},\mathbf{h}|,
\end{eqnarray}%
where $k_{n}^{\prime }=k_{n}$ if $k_{n}^{\prime }<h_{n}$ and $k_{n}^{\prime
}=k_{n}-1$ if $h_{n}<k_{n}$. Now the simplicity of the pseudo-spectrum of $%
\mathcal{B}_{-}(\lambda |\beta )$ implies that:%
\begin{equation}
\prod_{n\mathbf{=1}}^{\mathsf{N}}\prod_{k_{n}\in \{0,...,p-1\}\backslash
\{h_{n}\}}\text{\textsc{b}}_{\text{\textbf{h}}}(\zeta _{n}^{(k_{n})}|\beta
/q^{2(n-1)(p-1)+k_{n}^{\prime }})\neq 0,
\end{equation}%
from which we derive:%
\begin{equation}
c_{\text{\textbf{h}}}=0  \label{null-coeff}
\end{equation}%
having already proven:%
\begin{equation}
\langle \beta /q^{2\mathsf{N}(p-1)},\mathbf{h}|\neq \text{\b{0}}.
\end{equation}%
The generality of the chosen \textbf{h}$\in Z_{p}^{\mathsf{N}}$ implies that
the linear combination to zero $\left( \ref{Linear to zero}\right) $ is
satisfied if and only if $\left( \ref{null-coeff}\right) $ holds for any 
\textbf{h}$\in Z_{p}^{\mathsf{N}}$ , that is the left pseudo-eigenstates $%
\langle \beta ,\mathbf{k}|$ are a left basis.
\end{proof}

Note that in the bulk of the paper we have chosen to present the
construction of the SoV-basis starting from a state $|\Omega _{\beta
}\rangle $ associated to the pseudo-eigenvalue \textsc{b}$_{\text{\textbf{0}}%
}(\lambda |\beta )$ just to simplify the simultaneous presentation of the
left and right basis; in fact, we can construct the right basis also
starting from the state $|\bar{\Omega}_{\beta }\rangle $ associated to 
\textsc{b}$_{\text{\textbf{1}}}(\lambda |\beta )$, which is the state
constructed directly here for the considered special representations.

\subsubsection{Pseudo diagonalizability and simplicity of $\mathcal{B}_{-}(%
\protect\lambda |\protect\beta )$: general representations}

In this section we prove the Theorem \ref{B-Pseudo-diago} stating the pseudo
diagonalizability and simplicity of the operator family $\mathcal{B}%
_{-}(\lambda |\beta )$ for almost all the values of the boundary-bulk-gauge
parameters. Let us first prove the following lemma:

\begin{lemma}
There exists at least one left and one right pseudo-eigenstate $|\Omega _{\beta
}\rangle $ and\ $\langle \Omega _{\beta }|$ of the one parameter family of
pseudo-commuting operators $\mathcal{B}_{-}(\lambda |\beta )$ satisfying the
condition $\left( \ref{L/R-Pseudo-eigenstate}\right) $ with
pseudo-eigenvalue \textsc{b}$_{\text{\textbf{0}}}(\lambda |\beta )$
satisfying the conditions $\left( \ref{SOV-cond1}\right) $ and $\left( \ref%
{SOV-cond2}\right) $.
\end{lemma}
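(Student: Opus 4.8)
The plan is to reduce the pseudo-eigenvalue problem $\left(\ref{L/R-Pseudo-eigenstate}\right)$ to a genuine eigenvalue problem by exploiting the finiteness of the gauge orbit, and then to transport the non-degeneracy conditions from the explicitly solvable special representations of Theorem \ref{B-special-P-diago} by a continuity argument. First I would observe that, by $\left(\ref{def-p}\right)$, one has $q^{2p}=1$, so the gauge parameter runs over the finite cyclic orbit $\{\beta q^{2k}\}_{k=0}^{p-1}$ and all the shifted states $\langle \Omega _{\beta /q^{2k}}|$, $|\Omega _{\beta q^{2k}}\rangle$ live in the same representation space $\mathcal{H}$. Composing $p$ successive single-step relations around this orbit turns the problem into one for the ordered product
\begin{equation}
\mathbb{B}_{-}(\{\lambda _{k}\})\equiv \mathcal{B}_{-}(\lambda _{p-1}|\beta q^{2(p-1)})\cdots \mathcal{B}_{-}(\lambda _{1}|\beta q^{2})\,\mathcal{B}_{-}(\lambda _{0}|\beta ),
\end{equation}
which maps the $\beta $-sector of $\mathcal{H}$ into itself, and dually for the left states.

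Second, I would use the pseudo-commutation relation $\left(\ref{B-comm}\right)$ to show that, as the auxiliary spectral parameters $\lambda _{k}$ vary, these cyclic products form a mutually commuting family of operators on the finite-dimensional space $\mathcal{H}$. Since a commuting family of operators on a finite-dimensional complex space always admits a common eigenvector, this provides — as the desired ``general argument'' — at least one right state $|\Omega _{\beta }\rangle $ (and, by the transposed construction, one left state $\langle \Omega _{\beta }|$) that is a common eigenvector of all the $\mathbb{B}_{-}(\{\lambda _{k}\})$.

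Third, I would unwrap this common eigenvector into a genuine pseudo-eigenstate of the one-parameter family $\mathcal{B}_{-}(\lambda |\beta )$. Defining the intermediate states along the orbit by $|\Omega _{\beta q^{2(k+1)}}\rangle \propto \mathcal{B}_{-}(\lambda |\beta q^{2k})|\Omega _{\beta q^{2k}}\rangle $, I would check — using $\left(\ref{B-comm}\right)$ and the spectral-parameter independence forced by the commuting-family property — that these states are well defined (independent of the auxiliary $\lambda $'s) and satisfy the single-step relations $\left(\ref{L/R-Pseudo-eigenstate}\right)$, the cyclic eigenvalue factorizing as $\prod _{k}\text{\textsc{b}}_{0}(\lambda _{k}|\beta q^{2k})$. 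The resulting $\text{\textsc{b}}_{0}(\lambda |\beta )$ then inherits the Laurent-polynomial structure and the $\lambda \to 1/\lambda $, $\lambda \to -\lambda $ symmetries, i.e. it has the prescribed shape $\left(\ref{pseudo-eigenvalue}\right)$.

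Finally, I would secure the inequalities $\left(\ref{SOV-cond1}\right)$--$\left(\ref{SOV-cond2}\right)$ for almost all parameters. The zeros $\text{\textsc{b}}_{-,n}(\beta )$ and the overall factor $\text{\textsc{b}}_{-}(\beta )$ of $\text{\textsc{b}}_{0}$ depend analytically (up to $p$-th roots) on the bulk-boundary-gauge data; for the special representations of Theorem \ref{B-special-P-diago} they are computed explicitly by $\left(\ref{Spectrum-Zeros-B}\right)$ and already satisfy $\left(\ref{SOV-cond1}\right)$--$\left(\ref{SOV-cond2}\right)$. As each such inequality fails only on a proper analytic subvariety of parameter space and holds at that explicit point, it holds for almost all values. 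I expect the main obstacle to lie in the unwrapping of the third step: one must exclude the vanishing of the intermediate linking states and verify that the cyclic eigenvalue genuinely factorizes into a single-$\lambda $ pseudo-eigenvalue of the form $\left(\ref{pseudo-eigenvalue}\right)$ rather than into some degenerate object. This requires the product eigenvalue to be non-zero, whose genericity is again obtained by continuity from the special representations of Theorem \ref{B-special-P-diago}.
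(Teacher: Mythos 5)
Your proposal shares the overall skeleton of the paper's proof --- build a commuting family out of $\mathcal{B}_{-}$, invoke finite dimensionality to obtain a common eigenvector, and import every genericity statement (including the inequalities \eqref{SOV-cond1}--\eqref{SOV-cond2}) by continuity from the special representations of Theorem \ref{B-special-P-diago} --- but the commuting family you choose is genuinely different. The paper never forms products around the gauge orbit: it expands
\begin{equation*}
\mathcal{B}_{-}(\lambda|\beta)=\Bigl(\frac{\lambda^{2}}{q}-\frac{q}{\lambda^{2}}\Bigr)\sum_{a=0}^{\mathsf{N}}\Lambda^{a}\,\bar{\mathcal{B}}_{-,a}(\beta)\,T_{\beta}^{-2},
\end{equation*}
notes that these coefficients (with their shifts, see \eqref{betashifteuh}) commute as a consequence of \eqref{B-comm}, uses the special representations plus continuity to obtain, for almost all parameters, the factorized form of $\mathcal{B}_{-}$ with commuting \emph{invertible} operator zeros $\mathcal{B}_{-,a}(\beta)$, and then takes a common eigenvector of those zeros. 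The payoff of that choice is that the pseudo-eigenstate property is automatic: on a common eigenvector of the operator zeros, the entire $\lambda$-dependence of $\mathcal{B}_{-}(\lambda|\beta)$ collapses to a scalar of the form \eqref{pseudo-eigenvalue}, so there is nothing to ``unwrap''. Your cyclic products $\mathbb{B}_{-}(\{\lambda_{k}\})$ (well defined because the gauge orbit is finite, and mutually commuting because \eqref{B-comm} allows you to permute spectral parameters across adjacent slots, hence across all slots of the $2p$-fold product) avoid any appeal to the operator-zero factorization; but precisely for that reason you inherit the unwrapping problem that the paper's route never meets.

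It is at the unwrapping that your argument, as stated, has a gap. Writing $A(\vec{\mu})=\mathcal{B}_{-}(\mu_{p-1}|\beta q^{2(p-1)})\cdots\mathcal{B}_{-}(\mu_{1}|\beta q^{2})$, your common eigenvector gives
\begin{equation*}
A(\vec{\mu})\,\mathcal{B}_{-}(\lambda|\beta)\,|\Omega\rangle=b(\lambda,\vec{\mu})\,|\Omega\rangle .
\end{equation*}
Non-vanishing of $b(\lambda,\vec{\mu})$ --- which is the condition you propose to secure by continuity --- does \emph{not} imply that $\mathcal{B}_{-}(\lambda|\beta)|\Omega\rangle$ points in a $\lambda$-independent direction: if $A(\vec{\mu})$ has a nontrivial kernel, then $\mathcal{B}_{-}(\lambda|\beta)|\Omega\rangle$ can equal a fixed vector times $b(\lambda,\vec{\mu})$ plus an arbitrary $\lambda$-dependent component lying in $\ker A(\vec{\mu})$, and the single-step relation \eqref{L/R-Pseudo-eigenstate} fails even though the cyclic eigenvalue never vanishes. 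What you actually need is generic \emph{injectivity} of the partial products $A(\vec{\mu})$: then $\mathcal{B}_{-}(\lambda|\beta)|\Omega\rangle=b(\lambda,\vec{\mu})\,A(\vec{\mu})^{-1}|\Omega\rangle$ is pinned to a fixed line, the intermediate states are well defined and nonzero, and the factorization of $b$ into single-step pseudo-eigenvalues of the form \eqref{pseudo-eigenvalue} follows from the parity and inversion symmetries \eqref{B-C-Sym} of $\mathcal{B}_{-}$. This repair stays inside your own toolbox --- $\det\mathcal{B}_{-}(\mu|\beta')$ is polynomial in the parameters and nonzero at the special representations of Theorem \ref{B-special-P-diago} for generic $\mu$, hence nonzero for almost all parameters --- but the quantity whose genericity must be propagated is the invertibility of the partial products, not the non-vanishing of the cyclic eigenvalue.
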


\begin{proof}
The operator family $\mathcal{B}_{-}(\lambda |\beta )$ admits the following
representation:%
\begin{equation}
\mathcal{B}_{-}(\lambda |\beta )=(\frac{\lambda ^{2}}{q}-\frac{q}{\lambda
^{2}})\sum_{a=0}^{\mathsf{N}}\Lambda ^{a}\mathcal{\bar{B}}_{-,a}(\beta
)T_{\beta }^{-2},
\end{equation}%
where the following commutation relations holds:%
\begin{equation}
\mathcal{\bar{B}}_{-,a}(\beta )T_{\beta }=T_{\beta }\mathcal{\bar{B}}%
_{-,a}(\beta q),\text{ \ }\left[ \mathcal{\bar{B}}_{-,a}(\beta ),\mathcal{%
\bar{B}}_{-,b}(\beta )\right] =0\text{ \ }\forall a,b\in \{1,...,\mathsf{N}%
\label{betashifteuh}
\},
\end{equation}%
as a consequence of the commutation relations $\left( \ref{B-comm}\right) $.
The result of the previous section implies that for some special choice of
the boundary-bulk-gauge parameters all the operators $\mathcal{\bar{B}}%
_{-,a,\beta }$ are invertible as $\mathcal{B}_{-}(\lambda |\beta )$ is
pseudo-diagonalizable and it admits the following representation:%
\begin{equation}
\mathcal{B}_{-}(\lambda |\beta )=\text{\textsc{b}}_{-}(\beta )(\frac{\lambda
^{2}}{q}-\frac{q}{\lambda ^{2}})\prod_{a=1}^{\mathsf{N}}(\frac{\lambda }{%
\mathcal{B}_{-,a}(\beta )}-\frac{\mathcal{B}_{-,a}(\beta )}{\lambda }%
)(\lambda B_{-,a}(\beta )-\frac{1}{\lambda \mathcal{B}_{-,a}(\beta )}%
)T_{\beta }^{-2},
\end{equation}%
where the $\mathcal{B}_{-,a}(\beta )$ are commuting and invertible
operators. Then the fact that this operators depend continuously on these
parameters implies that this statement is true for almost any values of
these parameters. This also implies that for almost all the value of the
boundary-bulk-gauge parameters we can use the above representation for $%
\mathcal{B}_{-}(\lambda |\beta )$.

We can now recall that, thanks to the result of the Lemma A.1 of our
previous paper, we can always find a nonzero simultaneous eigenstate of
commuting operators such as the $\mathcal{B}_{-,a}(\beta )$ for any $a\in
\{1,...,\mathsf{N}\}$. This is a pseudo-eigenstate of the operator family $%
\mathcal{B}_{-}(\lambda |\beta )$.

Now, for the same set of representations considered in the previous section
we know that the pseudo-eigenvalues of $\mathcal{B}_{-}(\lambda |\beta )$
satisfy the conditions $\left( \ref{SOV-cond1}\right) $ and $\left( \ref%
{SOV-cond2}\right) $. Then, we can use once again the continuity argument to
argue that the eigenvalues on the common eigenstate still satisfy $\left( %
\ref{SOV-cond1}\right) $ and $\left( \ref{SOV-cond2}\right) $.
\end{proof}

We can now prove the Theorem \ref{B-Pseudo-diago}, by using the results of
the previous sections.

\begin{proof}[Proof of Theorem \protect\ref{B-Pseudo-diago}]
The proof of the pseudo-diagonalizability of $\mathcal{B}_{-}(\lambda |\beta
)$ is a direct consequence of the previous lemma. Indeed, under the
conditions $\left( \ref{SOV-cond1}\right) $ and $\left( \ref{SOV-cond2}%
\right) $ we can prove that all the left and right states are well defined
and nonzero states which are pseudo-eigenstates of $\mathcal{B}_{-}(\lambda
|\beta )$ associated to different pseudo-eigenvalues as a consequence of the
gauge transformed commutation relations. The proof of the fact that the
states $(\ref{Def-left-P-B-basis})$ and $(\ref{Def-right-P-B-basis})$ are
all nonzero is done reproducing the argument presented in the proof of
Theorem $\ref{B-special-P-diago}$.

The statements about the spectral decomposition of the identity of the
theorem have been already given in Proposition \ref{Spectral-SoV}.
\end{proof}

\section{Properties of cofactor}

In this appendix we prove a lemma giving the main properties of the \textit{%
cofactors} of the matrix $D_{\tau }(\lambda ).$

\begin{lemma}
\label{Cofactor-prop}The matrix $D_{\tau }(\lambda )$ has at least rank $p-1$
for any $\lambda \in \mathbb{C}$, up to at most a finite number of values.
The following symmetries:%
\begin{eqnarray}
\text{\textsc{C}}_{i+h,j+h}(\lambda ) &=&\text{\textsc{C}}_{i,j}(\lambda
q^{h})\text{ \ \ }\forall i,j,h\in \{1,...,p\},  \label{Sym-1} \\
\text{\textsc{C}}_{i,j}(-\lambda ) &=&\text{\textsc{C}}_{i,j}(\lambda )\text{%
\ \ \ \ }\forall i,j\in \{1,...,p\},  \label{Sym-2} \\
\text{\textsc{C}}_{1,1}(1/\lambda ) &=&\text{\textsc{C}}_{1,1}(\lambda ),%
\text{ \ \textsc{C}}_{1,2}(1/\lambda )=\text{\textsc{C}}_{1,p}(\lambda ),
\label{Sym-3}
\end{eqnarray}%
hold. Moreover, the cofactors \textsc{C}$_{1,1}(\lambda )$, \textsc{C}$%
_{1,2}(\lambda )$ and \textsc{C}$_{1,p}(\lambda )$ are polynomials in $%
\lambda $ of maximal degree $\left( p-1\right) (2\mathsf{N}+4)$ which admit
the following decomposition:%
\begin{equation}
\text{\textsc{C}}_{1,1}(\lambda )=\widehat{\text{\textsc{C}}}_{1,1}(\lambda
)\left( \lambda ^{2}-\frac{1}{\lambda ^{2}}\right) ^{2}\prod_{\substack{ k=1 
\\ k\neq \left( p-1\right) /2}}^{p-2}\left( \lambda ^{2}q^{1+2k}-\frac{1}{%
\lambda ^{2}q^{1+2k}}\right) ,  \label{F11}
\end{equation}%
where $\widehat{\text{\textsc{C}}}_{1,1}(\lambda )$ is a polynomial in $%
\Lambda $ of degree $\left( p-1\right) (\mathsf{N}+1)$, and:%
\begin{eqnarray}
\text{\textsc{C}}_{1,2}(\lambda ) &=&\widehat{\text{\textsc{C}}}%
_{1,2}(\lambda )\left( \lambda ^{2}-\frac{1}{\lambda ^{2}}\right) ^{2}\prod 
_{\substack{ k=1  \\ k\neq \left( p-1\right) /2}}^{p-2}\left( \lambda
^{2}q^{1+2k}-\frac{1}{\lambda ^{2}q^{1+2k}}\right) ,  \label{F12} \\
\text{\textsc{C}}_{1,p}(\lambda ) &=&\widehat{\text{\textsc{C}}}%
_{1,p}(\lambda )\left( \lambda ^{2}-\frac{1}{\lambda ^{2}}\right) ^{2}\prod 
_{\substack{ k=1  \\ k\neq \left( p-1\right) /2}}^{p-2}\left( \lambda
^{2}q^{1+2k}-\frac{1}{\lambda ^{2}q^{1+2k}}\right) ,  \label{F1p}
\end{eqnarray}%
where $\widehat{\text{\textsc{C}}}_{1,2}(\lambda )$ and $\widehat{\text{%
\textsc{C}}}_{1,p}(\lambda )$ are polynomials of maximal degree $2\left(
p-1\right) (\mathsf{N}+1)$ in $\lambda $.
\end{lemma}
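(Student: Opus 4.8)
The plan is to read off all the asserted properties from the (quasi-)cyclic structure of $D_\tau(\lambda)$ in \eqref{D-matrix}, whose $(i,j)$ entry is $\tau(q^{i-1}\lambda)$ on the diagonal, $-\textsc{a}(1/(q^{i-1}\lambda))$ on the superdiagonal $(i,i+1)$ and $-\textsc{a}(q^{i-1}\lambda)$ on the subdiagonal $(i,i-1)$, indices taken modulo $p$ (so that the two corner entries $(1,p)$, $(p,1)$ obey the same pattern, using $q^{p}=1$ from \eqref{def-p}). First I would dispose of the two elementary symmetries. Since $\tau(\mu)$ and $\textsc{a}(\mu)=\mathsf{a}_+(\mu)\mathsf{A}_-(\mu)$ are even in $\mu$ — for $\tau$ this is \eqref{symmetry-transfer}, and for $\textsc{a}$ it follows because $g_\pm$, $\mathsf{a}_+$ and the $\mathsf{N}$-fold products defining $a,d$ each pick up only an overall $(-1)^{\mathsf{N}}$ under $\mu\to-\mu$ — the whole matrix satisfies $D_\tau(-\lambda)=D_\tau(\lambda)$ entrywise, giving \eqref{Sym-2}. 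For \eqref{Sym-1} I observe that $[D_\tau(q^h\lambda)]_{i,j}=[D_\tau(\lambda)]_{i+h,j+h}$ (mod $p$), i.e. $D_\tau(q^h\lambda)$ is obtained from $D_\tau(\lambda)$ by the simultaneous cyclic shift $i\mapsto i+h$ of rows and columns; deleting row $i$, column $j$ then matches deleting row $i+h$, column $j+h$, and since $(-1)^{i+j}=(-1)^{(i+h)+(j+h)}$ the cofactor signs agree, yielding $\textsc{C}_{i,j}(q^h\lambda)=\textsc{C}_{i+h,j+h}(\lambda)$.

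For the reciprocal symmetry \eqref{Sym-3} I would use the reflection $\phi(i)\equiv 2-i$ (mod $p$), which fixes the index $1$ and sends $2\mapsto p$. Using $\tau(\mu)=\tau(1/\mu)$ one checks entrywise that $[D_\tau(1/\lambda)]_{\phi(i),\phi(j)}=[D_\tau(\lambda)]_{i,j}$, i.e. $D_\tau(1/\lambda)=\Phi\,D_\tau(\lambda)\,\Phi^{-1}$ for the permutation matrix $\Phi$ of $\phi$. Because $\phi$ fixes $1$, conjugation by $\Phi$ restricts to the minor $D_{\tau,1,1}$, so $\textsc{C}_{1,1}(1/\lambda)=\textsc{C}_{1,1}(\lambda)$; because $\phi(2)=p$, the same relabelling turns the $(1,2)$ minor of $D_\tau(1/\lambda)$ into the $(1,p)$ minor of $D_\tau(\lambda)$, and tracking the sign of the induced permutation (together with $(-1)^{1+p}=1$, $p$ odd) gives $\textsc{C}_{1,2}(1/\lambda)=\textsc{C}_{1,p}(\lambda)$.

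The substantive part is the polynomiality, degree and factorisation of $\textsc{C}_{1,1},\textsc{C}_{1,2},\textsc{C}_{1,p}$. Although the entries $\textsc{a}(q^{j}\lambda)$, $\textsc{a}(1/(q^{j}\lambda))$ carry apparent poles (where the denominator $\mu^2-1/\mu^2$ of $\textsc{a}=\textsc{x}/(\mu^2-1/\mu^2)$ vanishes), I would show these minors are finite everywhere, hence Laurent polynomials, by the row-combination device already used in the lemma preceding Proposition \ref{General F-Eq}: at a point where a single row carries the two divergent entries, replacing it by the appropriate finite combination of itself and a neighbouring row leaves the determinant unchanged but removes the pole. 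A top-degree bookkeeping of the determinant expansion — for $\textsc{C}_{1,1}$ dominated by the diagonal product of the $p-1$ entries $\tau(q^{j}\lambda)$ of $\lambda$-degree $2\mathsf{N}+4$ — gives the maximal degree $(p-1)(2\mathsf{N}+4)$ in all three cases, and combined with \eqref{Sym-2}--\eqref{Sym-3} shows $\textsc{C}_{1,1}$ is a polynomial in $\Lambda$. It then remains to extract the explicit zeros: I would evaluate the cofactors at the special arguments $\lambda^2=\pm 1$ and $\lambda^2=\pm q^{-(1+2k)}$ — exactly the $i^aq^{h+1/2}$-type points where a row of the relevant submatrix becomes singular — and, using $\textsc{a}(\pm i^aq^{-1/2})=0$ together with $\tau(\pm i^aq^{1/2})=\textsc{a}(\pm i^aq^{1/2})$ and the shift symmetry \eqref{Sym-1}, exhibit a linear dependence among the rows of each submatrix, exactly as in the $\bar D_\tau$ arguments of the main text. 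This produces simple zeros at $\lambda^2=\pm q^{-(1+2k)}$ for $k\in\{1,\dots,p-2\}\setminus\{(p-1)/2\}$ and a double zero at $\lambda^2=\pm 1$ (the value $k=(p-1)/2$, where $q^{1+2k}=q^{p}=1$, being absorbed into the squared factor), yielding \eqref{F11}--\eqref{F1p}. Dividing out this explicit factor — which one checks is reciprocal-symmetric via the involution $k\mapsto p-1-k$ and has $\Lambda$-degree $p-1$ (equivalently $\lambda$-degree $2(p-1)$) — leaves $\widehat{\textsc{C}}_{1,1}$ of degree $(p-1)(\mathsf{N}+1)$ in $\Lambda$ and $\widehat{\textsc{C}}_{1,2},\widehat{\textsc{C}}_{1,p}$ of $\lambda$-degree $2(p-1)(\mathsf{N}+1)$, matching the stated bounds. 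Finally, the rank assertion follows because $\textsc{C}_{1,1}$ is then a nonzero polynomial — its leading coefficient is the $(1,1)$ minor of the asymptotic matrix in $\tau_\infty,\textsc{a}_\infty,\textsc{a}_0$ computed in Proposition \ref{General F-Eq}, nonzero for generic parameters — so $D_\tau(\lambda)$ has rank $\geq p-1$ off the finite zero set of $\textsc{C}_{1,1}$.

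The main obstacle is this last factorisation step: establishing the precise location and, above all, the multiplicities of the zeros — in particular the double zero at $\lambda^2=\pm 1$ — requires a careful row-dependence analysis of the singular submatrices at each special point together with a matching degree count, which is considerably more delicate for the off-diagonal cofactors $\textsc{C}_{1,2},\textsc{C}_{1,p}$ (which are exchanged, rather than individually fixed, under $\lambda\to1/\lambda$, and hence are not functions of $\Lambda$) than for $\textsc{C}_{1,1}$.
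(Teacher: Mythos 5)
Your treatment of the two elementary symmetries and your overall strategy for the factorisation (evaluate at the special points, exhibit linear dependence of rows, divide out the explicit zeros) coincide with the paper's, but there are two genuine gaps. The first concerns polynomiality. Your blanket claim that "the row-combination device removes the pole" cannot be applied uniformly to the minors: the device requires both the singular row \emph{and} the compensating rows to be present in the submatrix, and for $\text{\textsc{C}}_{1,1}(\lambda)$ at the points $\lambda\to\pm i^{a}q^{h}$ with $h=(p\pm1)/2$ the row one would need to combine and rescale is precisely the deleted first row. This is why the paper splits the analysis into three cases: for $h\notin\{0,(p\pm1)/2\}$ it uses the row-modified matrix $\text{\textsc{M}}_{1,1}^{(h)}$ (finiteness plus linear dependence, giving the simple zeros), for $h=0$ it shows the double zero, but for $h=(p\pm 1)/2$ it must argue differently, expanding $\text{det}_{p-2}D_{(1,p),(1,p)}$ along the singular row and invoking an identity between the two sub-minors $D_{\tau,(1,(p+1)/2,p),(1,(p+1)/2\mp1,p)}$ to cancel the divergence. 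Without this case your cofactors are not even known to be Laurent polynomials, and the degree count and factorisation \eqref{F11}--\eqref{F1p} collapse; this is exactly the obstruction you flag vaguely at the end but do not resolve.

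The second gap is the rank assertion. Your argument rests on the claim that the leading coefficient of $\text{\textsc{C}}_{1,1}(\lambda)$, i.e.\ the $(1,1)$ minor of the asymptotic matrix in $\tau_\infty,\text{\textsc{a}}_\infty,\text{\textsc{a}}_0$, is "nonzero for generic parameters". But this lemma is invoked precisely in Theorem 5.2, where the parameters are \emph{not} generic in this sense: they satisfy the constraint \eqref{one-constraint-case}, under which the full asymptotic determinant $\Delta_\infty$ and indeed $\text{det}_{p}D_{\tau}(\lambda)$ vanish identically. One must therefore actually compute the tridiagonal asymptotic minor under that constraint (it is proportional to $(\text{\textsc{a}}_{\infty}^{p}-\text{\textsc{a}}_{0}^{p})/(q^{-k}\text{\textsc{a}}_{\infty}-q^{k}\text{\textsc{a}}_{0})$, so the argument can be salvaged, but this verification is missing and is the whole point). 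The paper avoids the issue entirely: it evaluates $\text{\textsc{C}}_{1,p}$ at $\lambda=q^{1/2-p}/\mu_{a,+}$, where the corner entry $\text{\textsc{a}}(1/(q^{p-1}\lambda))$ vanishes, the submatrix becomes triangular with the $\tau$-dependent entries all above the diagonal, and hence $\text{\textsc{C}}_{1,p}=\prod_{j=1}^{p-1}\text{\textsc{a}}(\mu_{a,+}q^{j})\neq 0$ \emph{independently of the form of} $\tau(\lambda)$. You would do well to adopt this $\tau$-independent evaluation rather than the asymptotic-coefficient route.
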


\begin{proof}
Let us remark that independently from the explicit form of $\tau (\lambda )$
the following identities hold:%
\begin{equation}
\text{\textsc{C}}_{1,p}(q^{1/2-p}/\mu _{a,+})=\prod_{j=1}^{p-1}\text{\textsc{%
a}}{}(\mu _{a,+}q^{j})\neq 0\text{ \ \ }\forall a\in \{1,...,\mathsf{N}\},
\end{equation}%
so that $\overline{\text{\textsc{C}}}_{1,p}(\lambda )$ is a non-zero
polynomial in $\lambda $ which implies the statement on the rank of $D_{\tau
}(\lambda )$. The proof of the above symmetry properties is standard we just
need to make some exchange of rows and columns to bring the matrix in the
determinant defining the cofactor in the l.h.s into the matrix defining the
cofactor in the r.h.s..

Let us show our statement on the form of \textsc{C}$_{1,1}(\lambda )$. In
order to do so we have to prove that \textsc{C}$_{1,1}(\lambda )$ is finite
in the points\footnote{%
As for $h=0$ the matrix \textsc{M}$_{1,1}(\pm i^{a})$ does not contain any
singular elements.} $\lambda =\pm i^{a}q^{h}$ for any $h\in \{1,...,p-1\}$.
More precisely, in the line $p-h$ there is at least one element of the
matrix \textsc{M}$_{1,1}(\lambda )$ associated to \textsc{C}$_{1,1}(\lambda
) $ which is diverging in the limit $\lambda \rightarrow \pm i^{a}q^{h}$.
Here, we have to distinguish three cases. For the case $h\neq (p\pm 1)/2$, we
can proceed as done in the bulk of the paper. We can define the matrix 
\textsc{M}$_{1,1}^{(h)}(\lambda )$ as the matrix with all the rows
coinciding with those of \textsc{M}$_{1,1}(\lambda )$ except the row $%
(p+1)/2-h$, which is obtained by summing the row $(p-1)/2-h$ and $(p+1)/2-h$
of \textsc{M}$_{1,1}(\lambda )$ and dividing them by $((\lambda
/q^{h})^{2}-(q^{h}/\lambda )^{2})$, and the row $p-h$, obtained multiplying
the row $p-h$ of \textsc{M}$_{1,1}^{(h)}(\lambda )$ by $((\lambda
/q^{h})^{2}-(q^{h}/\lambda )^{2})$. Clearly it holds det$_{p-1}\text{\textsc{%
M}}_{1,1}^{(h)}(\lambda )=\left( -1\right) ^{i+j}$\textsc{C}$_{1,1}(\lambda )
$ and all the rows of the matrix \textsc{M}$_{1,1}^{(h)}(\pm
i^{a}q^{h}\lambda )$ are finite in the limits $\lambda \rightarrow 1$ and
so the same is true for their determinants. In fact, it is possible to show
that these lines are linear dependents in each one of the matrices \textsc{M}%
$_{1,1}^{(h)}(\pm i^{a}q^{h}),$ so that:%
\begin{equation}
\text{det}_{p-1}\text{\textsc{M}}_{1,1}^{(h)}(\pm i^{a}q^{h})=0\text{ \ \ }%
\forall h\in \{1,...,p\}\backslash \text{ }\{0,(p\pm 1)/2\}.
\end{equation}

In the remaining cases, if $h=(p\pm 1)/2$ then the row $(p\pm 1)/2p-h=p$ mod$ 
(p)$ is not contained in \textsc{M}$_{1,1}(\pm i^{a}q^{h})$ so that we cannot
remove here the divergence as we have done before. However, we can proceed
differently, let us explain it in the case $h=(p+1)/2$ as in the other case
we can proceed similarly. In the last row of \textsc{M}$_{1,1}(\pm
i^{a}q^{(p+1)/2}\lambda )$ under the limit $\lambda \rightarrow 1$ the last
element tend to $\tau (i^{a}q^{1/2})$, finite nonzero value, and the next to
last tend to \textsc{a}$(\pm i^{a}q^{-1/2})=0$, all the others on this row
are zero. So that \textsc{C}$_{1,1}(\pm i^{a}q^{(p-1)/2})$ is finite iff det$%
_{p-2}D_{(1,p),(1,p)}(\pm i^{a}q^{(p+1)/2})$ is finite. This is shown using
the following expansion of the determinant:%
\begin{align}
\text{det}_{p-2}D_{(1,p),(1,p)}(\pm i^{a}q^{(p+1)/2}\lambda )& =\tau
(\lambda )\text{det}_{p-3}D_{\tau ,\left( 1,(p+1)/2,p\right) ,\left(
1,(p+1)/2,p\right) }(\pm i^{a}q^{(p+1)/2}\lambda ) \\
& +\frac{\text{\textsc{x}}(\lambda )\text{det}_{p-3}D_{\tau ,\left(
1,(p+1)/2,p\right) ,\left( 1,(p+1)/2-1,p\right) }(\pm
i^{a}q^{(p+1)/2}\lambda )}{\lambda ^{2}-1/\lambda ^{2}} \\
& -\frac{\text{\textsc{x}}(1/\lambda )\text{det}_{p-3}D_{\tau ,\left(
1,(p+1)/2,p\right) ,\left( 1,(p+1)/2+1,p\right) }(\pm
i^{a}q^{(p+1)/2}\lambda )}{\lambda ^{2}-1/\lambda ^{2}},
\end{align}%
and the identity:%
\begin{equation}
\text{det}_{p-3}D_{\tau ,\left( 1,(p+1)/2,p\right) ,\left(
1,(p+1)/2-1,p\right) }(\pm i^{a}q^{(p+1)/2})=\text{det}_{p-3}D_{\tau ,\left(
1,(p+1)/2,p\right) ,\left( 1,(p+1)/2-1,p\right) }(\pm i^{a}q^{(p+1)/2}).
\end{equation}%
Finally, let us remark that in the case $h=0$ the lines $(p-1)/2$ and $%
(p+1)/2$ of \textsc{M}$_{1,1}(\pm i^{a})$ are one the opposite of the other
so that det$_{p-1}$\textsc{M}$_{1,1}(\pm i^{a})=0$. We can so define the
matrix \textsc{M}$_{1,1}^{(0)}(\lambda )$ as the matrix with all the rows
coinciding with those of \textsc{M}$_{1,1}(\lambda )$ except the row $%
(p+1)/2 $, which is obtained by summing the row $(p-1)/2$ and $(p+1)/2$ of 
\textsc{M}$_{1,1}(\lambda )$ and dividing them by $(\lambda ^{2}-1/\lambda
^{2})$, this matrix has finite elements on the row $(p+1)/2$ also in the
limit $\lambda \rightarrow \pm i^{a}$. Similarly to the previous cases one
can show that the rows of \textsc{M}$_{1,1}^{(0)}(\pm i^{a})$ are linear
dependent so that it holds also:%
\begin{equation}
\lim_{\lambda \rightarrow \pm i^{a}}\frac{\text{det}_{p-1}\text{\textsc{M}}%
_{1,1}(\pm i^{a}\lambda )}{\lambda ^{2}-1/\lambda ^{2}}=(-1)^{a}\text{det}%
_{p-1}\text{\textsc{M}}_{1,1}^{(0)}(\pm i^{a})=0,
\end{equation}%
from which our statement on the form of \textsc{C}$_{1,1}(\lambda )$
follows. Similarly, we can prove our statement on \textsc{C}$_{1,p}(\lambda )
$.
\end{proof}


\begin{thebibliography}{999}
{\small 
\vspace{-0.13cm} }


\bibitem{MNP2017} {\small J. M. Maillet, G. Niccoli, B. Pezelier, 
%
\href{https://scipost.org/10.21468/SciPostPhys.2.1.009}{SciPost Physics {\bf 2} (2017) 009.} }

{\small \vspace{-0.13cm} }

\bibitem{Greiner2002} {\small M. Greiner et al,
%
\href{http://dx.doi.org/10.1038/nature00968}{Nature {\bf 419} (2002) 51-54.} }

{\small \vspace{-0.13cm} }

\bibitem{Kinoshita2006} {\small T. Kinoshita, T. Wenger, and D. S. Weiss,
%
\href{http://dx.doi.org/10.1038/nature04693}{Nature {\bf 440} (2006) 900.} }

{\small \vspace{-0.13cm} }

\bibitem{Hofferberth2007} {\small S. Hofferberth, I. Lesanovsky et al,
%
\href{http://dx.doi.org/10.1038/nature06149}{Nature {\bf 449} (2007) 324-327.} }

{\small \vspace{-0.13cm} }

\bibitem{Bloch2008} {\small I. Bloch, J. Dalibard, and W. Zwerger, 
%
\href{https://link.aps.org/doi/10.1103/RevModPhys.80.885}{Rev. Mod. Phys. {\bf 80} (2008) 885.} }

{\small \vspace{-0.13cm} }


\bibitem{Trotzky2012} {\small S. Trotzky, Y.-A. Chen et al,
%
\href{http://dx.doi.org/10.1038/nphys2232}{Nature Phys. {\bf 8} (2012) 325.} }

{\small \vspace{-0.13cm} }


\bibitem{Schneider2012} {\small U. Schneider, L. Hackerm{u}ller et al,
%
\href{http://dx.doi.org/10.1038/nphys2205}{Nature Phys. {\bf 8} (2012) 213�218.} }

{\small \vspace{-0.13cm} }



\bibitem{Fukuhara2013} {\small T. Fukuhara, P. Schau� et al,
%
\href{http://dx.doi.org/10.1038/nature12541}{Nature 502, 76 (2013).} }

{\small \vspace{-0.13cm} }


\bibitem{Gogolin2015} {\small J. Eisert, M. Friesdorf, C. Gogolin,
%
\href{http://dx.doi.org/10.1038/nphys3215}{Nature Phys. {\bf 11} (2015) 124-130.} }

{\small \vspace{-0.13cm} }



\bibitem{Ron2013} {\small J. P. Ronzheimer et al, 
%
\href{https://link.aps.org/doi/10.1103/PhysRevLett.110.205301}{Phys. Rev. Lett. {\bf 101} (2013) 205301.} }

{\small \vspace{-0.13cm} }



\bibitem{CEM2016} {\small P. Calabrese, F. H. L. Essler, G. Mussardo, 
%
\href{http://stacks.iop.org/1742-5468/2016/i=6/a=064001}{J. Stat. Mech. (2016) 064001.} }

{\small \vspace{-0.13cm} }

\bibitem{Derrida1998} {\small B. Derrida, 
\href{http://dx.doi.org/10.1016/S0370-1573(98)00006-4}{Phys. Rep. \textbf{301%
} (1998) 65.} \vspace{-0.13cm} }

\bibitem{Schutz2000} {\small G. M. Sch\"{u}tz, in {\em Phase Transitions and Critical Phenomena}, vol. {\bf 19 } 
London : Academic Press (2001) edited by C. Domb and J.L. Lebowitz }

{\small 
}

{\small 
}

{\small \vspace{-0.13cm} }

\bibitem{Alcaraz1994} {\small F. C. Alcaraz, M. Droz, M. Henkel and V.
Rittenberg, 
\href{http://dx.doi.org/10.1006/aphy.1994.1026}{Ann. Phys. \textbf{230}
(1994) 250.} }

{\small \vspace{-0.13cm} }

\bibitem{Bajnok2006} {\small Z. Bajnok, 
\href{http://iopscience.iop.org/article/10.1088/1742-5468/2006/06/P06010/meta%
}{J. Stat. Mech. (2006) P06010.} }

{\small \vspace{-0.13cm} }

\bibitem{deGier2005} {\small J. de Gier and F. H. L. Essler, 
\href{http://dx.doi.org/10.1103/PhysRevLett.95.240601}{Phys. Rev. Lett. 
\textbf{95} (2005) 240601} ; 
\href{http://iopscience.iop.org/article/10.1088/1742-5468/2006/12/P12011/meta%
}{J. Stat. Mech. (2006) P12011.} }

{\small \vspace{-0.13cm} }

\bibitem{SirPA09} {\small J.~Sirker, R.~G. Pereira, and I.~Affleck, 
\href{http://dx.doi.org/10.1103/PhysRevLett.103.216602}{\emph{Phys. Rev.
Lett.} \textbf{103} (2009) 216602.} }

{\small \vspace{-0.13cm} }

\bibitem{Pro11} {\small T.~Prosen, 
\href{http://dx.doi.org/10.1103/PhysRevLett.106.217206}{\emph{Phys. Rev.
Lett.} \textbf{106} (2011) 217206.} }

{\small \vspace{-0.13cm} }

\bibitem{OpenCyChe84} {\small I. V. Cherednik, 
\href{http://dx.doi.org/10.1007/BF01038545}{Theor. Math. Phys. \textbf{61}
(1984) 977.} }

{\small \vspace{-0.13cm} }

\bibitem{OpenCySkly88} {\small E. K. Sklyanin, 
\href{http://iopscience.iop.org/0305-4470/21/10/015}{J. Phys. A \textbf{21}
(1988) 2375.} }

{\small 
}

{\small \vspace{-0.13cm} }

\bibitem{OpenCyH28} {\small W. Heisenberg, 
\href{http://link.springer.com/article/10.1007\%2FBF01328601}{Z. Physik 
\textbf{49} (1928) 619}. }

{\small \vspace{-0.13cm} }

\bibitem{OpenCyBe31} {\small H. Bethe, 
\href{http://link.springer.com/article/10.1007\%2FBF01341708}{1931 Z. Phys. 
\textbf{71} 205.} }

{\small 
L. Hulthen, \href{http://www.worldcat.org/title/uber-das-austauschproblem-eines-kristalles/oclc/250557196%
}{Ark. Mat. Astron. Fys. \textbf{26} (1938) 1} ; }

{\small 
R. Orbach, \href{http://dx.doi.org/10.1103/PhysRev.112.309}{Phys. Rev. 
\textbf{112} (1958) 309} ; }

{\small 
L. R. Walker,\href{http://dx.doi.org/10.1103/PhysRev.116.1089}{Phys. Rev. 
\textbf{116} (1959) 1089.} }

{\small \vspace{-0.13cm} }

\bibitem{OpenCyYY661} {\small C. N. Yang and C. P. Yang, 
\href{http://dx.doi.org/10.1103/PhysRev.150.321}{Phys. Rev. \textbf{150}
(1966) 321} ; } {\small 
\href{http://dx.doi.org/10.1103/PhysRev.150.327}{Phys. Rev. \textbf{150}
(1966) 327.} }

{\small \vspace{-0.13cm} }

\bibitem{OpenCyGa83} {\small M. Gaudin, \textit{La Fonction d'onde de Bethe. 
} {Paris, Masson (1983).} }

{\small \vspace{-0.13cm} }

\bibitem{OpenCyGau71} {\small M. Gaudin, 
\href{http://dx.doi.org/10.1103/PhysRevA.4.386}{Phys. Rev. A \textbf{4}
(1971) 386.} }

{\small \vspace{-0.13cm} }

{\small 
}

\bibitem{Bariev1979} {\small R. Z. Bariev, 
\href{https://doi.org/10.1007/BF01019245}{
Theor. Math. Phys.  \textbf{40} (1979) 623-626}  ; \href{https://doi.org/10.1007/BF01032121}{
 \textbf{42} (1980) 173-178} ; \href{https://doi.org/10.1007/BF01028685}{
  \textbf{77} (1988) 1090-1095.} \vspace{-0.13cm} }

{\small \vspace{-0.13cm} }




\bibitem{Bariev1980} {\small R. Z. Bariev, 
\href{https://doi.org/10.1016/0378-4371(80)90224-1}{
Physica  A \textbf{103} (1980) 363-370.} \vspace{-0.13cm} }

{\small \vspace{-0.13cm} }

\bibitem{Schulz1985} {\small H. Schulz, 
\href{https://doi.org/10.1088/0022-3719/18/3/010}{
J. Phys. C \textbf{18} (1980) 581.} \vspace{-0.13cm} }

{\small \vspace{-0.13cm} }


\bibitem{OpenCyAlca87} {\small F. C. Alcaraz, M. N. Barber, M. T. Batchelor,
R. J. Baxter and G. R. W. Quispel, 
\href{http://iopscience.iop.org/article/10.1088/0305-4470/20/18/038/meta}{J.
Phys. A \textbf{20} (1987) 6397.} \vspace{-0.13cm} }

{\small \vspace{-0.13cm} }

\bibitem{OpenCyMazNR90} {\small L. Mezincescu, R. I. Nepomechie and V.
Rittenberg, 
\href{http://dx.doi.org/10.1016/0375-9601(90)90016-H}{Phys. Lett. A \textbf{%
147} (1990) 70.} }


{\small \vspace{-0.13cm} }


{\small 
}

\bibitem{OpenCyPS} {\small V. Pasquier and H. Saleur, 
\href{http://dx.doi.org/10.1016/0550-3213(90)90122-T}{Nucl. Phys. B \textbf{%
330} (1990) 523} }

{\small \vspace{-0.13cm} }
\bibitem{ADMFBMNR} {\small M.T. Batchelor, L. Mezincescu, R.I. Nepomechie,
V. Rittenberg, 
\href{http://iopscience.iop.org/article/10.1088/0305-4470/23/4/003/meta}{J.
Phys. A \textbf{23} (1990) L141} }

{\small \vspace{-0.13cm} }



\bibitem{OpenCyKulS91} {\small P. P. Kulish and E. K. Sklyanin, 
\href{http://iopscience.iop.org/article/10.1088/0305-4470/24/8/009/meta}{J.
Phys. A \textbf{24} (1991) L435.} }

{\small \vspace{-0.13cm} }

\bibitem{OpenCyMazN91} {\small L. Mezincescu and R. Nepomechie, 
\href{http://dx.doi.org/10.1142/S0217751X9200257X}{Int. J. Mod. Phys. A 
\textbf{6} (1991) 5231.} }

{\small \vspace{-0.13cm} }

{\small 
}


{\small 
}

\bibitem{OpenCy} {\small P. P. Kulish and E. K. Sklyanin, 
\href{https://doi.org/10.1088/0305-4470/25/22/022}{J. Phys. A \textbf{25}  (1992)
5963.} }

{\small \vspace{-0.13cm} }

\bibitem{OpenCydeVegaG-R93} {\small H. J. de Vega and A. Gonz\'alez-Ruiz, 
\href{https://doi.org/10.1088/0305-4470/26/12/007}{J. Phys. A \textbf{26}
(1993) L519.} }

{\small \vspace{-0.13cm} }

\bibitem{deVegaR-1994} {\small H J de Vega and A Gonzalez-Ruiz \href{https://doi.org/10.1088/0305-4470/27/18/021} {J. Phys. A \textbf{27} (1994) 6129.}}

{\small \vspace{-0.13cm} }

\bibitem{OpenCyGhosZ94} {\small S. Ghoshal and A. B. Zamolodchikov, 
\href{http://www.worldscientific.com/doi/abs/10.1142/S0217751X94001552}{Int.
J. Mod. Phys. A \textbf{9} (1994) 3841} ; \href{http://www.worldscientific.com/doi/abs/10.1142/S0217751X94002430%
}{\textbf{9} (1994) 4353} }

{\small 
}

{\small \vspace{-0.13cm} }

\bibitem{OpenCyJimKKKM95-1} {\small M.~Jimbo, R.~Kedem, T.~Kojima, H.~Konno,
T.~Miwa, 
\href{https://doi.org/10.1016/0550-3213(95)00062-W}{Nucl. Phys. B \textbf{441} (1995)
437.} \vspace{-0.13cm} }

\bibitem{OpenCyJimKKMW95-2} {\small M.~Jimbo, R.~Kedem, H.~Konno, T.~Miwa,
and R.~Weston, 
\href{https://doi.org/10.1016/0550-3213(95)00218-H}{Nucl. Phys. B \textbf{448} (1995)
429.} }

{\small \vspace{-0.13cm} }

\bibitem{OpenCyBBOY95} {\small M. T. Batchelor, R. J. Baxter, M. J.
O'Rourke, C. M. Yung, 
\href{http://iopscience.iop.org/article/10.1088/0305-4470/28/10/009/meta}{J.
Phys. A \textbf{28} (1995) 2759. } }

{\small 
}

{\small \vspace{-0.13cm} }

\bibitem{OpenCyDoi03} {\small A. Doikou, 
\href{http://dx.doi.org/10.1016/j.nuclphysb.2003.07.001}{Nucl. Phys. B 
\textbf{668} (2003) 447} }

{\small \vspace{-0.13cm} }

\bibitem{OpenCyFrapNR07} {\small L. Frappat, R. Nepomechie, E. Ragoucy, 
\href{https://doi.org/10.1088/1742-5468/2007/09/P09009}{J. Stat. Mech. (2007) P09009} }

{\small \vspace{-0.13cm} }

\bibitem{OpenCyKKMNST07} {\small N. Kitanine, K. K. Kozlowski, J. M.
Maillet, G. Niccoli, N. A. Slavnov and V. Terras, 
\href{http://stacks.iop.org/1742-5468/2007/i=10/a=P10009}{J. Stat. Mech.
(2007) P10009} ; 
\href{http://stacks.iop.org/1742-5468/2008/i=07/a=P07010}{J. Stat. Mech.
(2008) P07010}. }

{\small 
}


{\small 
}

{\small \vspace{-0.13cm} }

\bibitem{OpenCyRag2+1} {\small N. Crampe, E. Ragoucy, D. Simon 
\href{http://iopscience.iop.org/article/10.1088/1742-5468/2010/11/P11038}{J.
Stat. Mech. (2010) P11038. } }

{\small \vspace{-0.13cm} }

\bibitem{OpenCyRag2+2} {\small N. Crampe, E. Ragoucy 
\href{https://doi.org/10.1016/j.nuclphysb.2012.01.020}{Nucl. Phys. B \textbf{858} (2012) 502.}
}

{\small 
}

{\small \vspace{-0.13cm} }

\bibitem{OpenCyBK05} {\small P. Baseilhac, K. Koizumi, 
\href{http://iopscience.iop.org/article/10.1088/1742-5468/2005/10/P10005/pdf
}{J. Stat. Mech. (2005) P10005} ; 
\href{http://iopscience.iop.org/article/10.1088/1742-5468/2007/09/P09006/meta%
}{J. Stat. Mech. (2007) P09006.} }

{\small 
\vspace{-0.13cm} }

\bibitem{Baseilhac2013} {\small P. Baseilhac, S. Belliard, 
\href{https://doi.org/10.1016/j.nuclphysb.2013.05.003}{Nucl. Phys. B {\bf 873} (2013) 550-584} ; 
\href{https://doi.org/10.1016/j.nuclphysb.2017.01.012}{Nucl. Phys. B {\bf 916} (2017) 373-385.} }


{\small \vspace{-0.13cm} }

\bibitem{Belliard2013a} {\small S. Belliard, N. Crampe,
\href{https://doi.org/10.3842/SIGMA.2013.072}{SIGMA \textbf{9} (2013) 072.}
}

{\small \vspace{-0.13cm} }

\bibitem{Belliard2013b} {\small S. Belliard, N. Crampe, E. Ragoucy,
\href{https://doi.org/10.1007/s11005-012-0601-6}{Lett. Math. Phys. \textbf{103} (2013) 493-506.}
}

{\small \vspace{-0.13cm} }

\bibitem{Belliard2015a} {\small S. Belliard,
\href{https://doi.org/10.1016/j.nuclphysb.2015.01.003}{Nucl. Phys. B \textbf{892} (2015) 1-20.}
}

{\small \vspace{-0.13cm} }

\bibitem{Belliard2015b} {\small S. Belliard, R. A. Pimenta, 
\href{https://doi.org/10.1016/j.nuclphysb.2015.03.016}{Nucl. Phys. B \textbf{894} (2015) 527-552.}
}


{\small \vspace{-0.13cm} }

\bibitem{Belliard2015c} {\small J. Avan, S. Belliard, N. Grosjean, et al.,
\href{https://doi.org/10.1016/j.nuclphysb.2015.08.006}{Nucl. Phys. B \textbf{899} (2015) 229-246.}
}

{\small \vspace{-0.13cm} }



\bibitem{OpenCyN02} {\small R. I. Nepomechie, 
\href{http://dx.doi.org/10.1016/S0550-3213(01)00585-5}{Nucl. Phys. B \textbf{%
622} (2002) 615} ; 
\href{https://link.springer.com/article/10.1023/A:1023016602955}{J. Stat. Phys. \textbf{111}
(2003) 1363} }

{\small 
\vspace{-0.13cm} }

\bibitem{OpenCyNR03} {\small R. I. Nepomechie, F. Ravanini, 
\href{http://iopscience.iop.org/article/10.1088/0305-4470/36/45/003/meta}{J.
Phys. A\textbf{36} (2003) 11391} \vspace{-0.13cm} }

{\small 
}

\bibitem{OpenCyMNS06} {\small R. Murgan, R. I. Nepomechie, and C. Shi, 
\href{https://doi.org/10.1088/1742-5468/2006/08/P08006}{J. Stat. Mech. (2006) P08006.} }

{\small \vspace{-0.13cm} }

\bibitem{OpenCyGal08} {\small W. Galleas, 
\href{http://dx.doi.org/10.1016/j.nuclphysb.2007.09.011}{Nucl. Phys. B 
\textbf{790} (2008) 524.} }

{\small \vspace{-0.13cm} }

{\small 
}

\bibitem{OpenCyYNZ06} {\small Wen-Li Yang, R. I. Nepomechie, Y.-Z. Zhang, 
\href{https://doi.org/10.1016/j.physletb.2005.12.022}{Phys.Lett. B \textbf{633} (2006)
664.} }

{\small \vspace{-0.13cm} }

\bibitem{ADMFCaoYSW13} {\small J. Cao, Wen-Li Yang, K. Shi, Y. Wang, 
\href{https://doi.org/10.1016/j.nuclphysb.2013.10.001}{Nucl. Phys. B \textbf{877} [FS] (2013)
152.} }

{\small \vspace{-0.13cm} }

\bibitem{Xu2016} {\small Xu, X., Hao, K., Yang, T. et al. 
\href{https://doi.org/10.1007/JHEP11(2016)080}{J. High Energ. Phys. (2016) 2016: 80.} }



{\small 
}

{\small \vspace{-0.13cm} }

\bibitem{OpenCyDerKM03-2} {\small D.E. Derkachov, G.P. Korchemsky, A.N.
Manashov, 
\href{https://doi.org/10.1088/1126-6708/2003/10/053}{JHEP 0310 (2003) 053.} }

{\small \vspace{-0.13cm} }

\bibitem{OpenCyFSW08} {\small H. Frahm, A. Seel, T. Wirth, 
\href{http://dx.doi.org/10.1016/j.nuclphysb.2008.04.008}{Nucl. Phys. B 
\textbf{802} (2008) 351.} }

{\small \vspace{-0.13cm} }

\bibitem{OpenCyFGSW11} {\small H. Frahm, J. H. Grelik A. Seel, T. Wirth, 
\href{http://iopscience.iop.org/article/10.1088/1751-8113/44/1/015001/meta}{%
J. Phys. A \textbf{44} (2011) 015001.} \vspace{-0.13cm} }

\bibitem{OpenCyFram+2} {\small L. Amico, H. Frahm, A. Osterloh, and T.
Wirth, 
\href{https://doi.org/10.1016/j.nuclphysb.2010.07.005}{Nucl. Phys. B \textbf{839} [FS] (2010)
604.} }

{\small \vspace{-0.13cm} }

\bibitem{OpenCyFram+1} {\small L. Amico, H. Frahm, A. Osterloh and G. A. P.
Ribeiro, 
\href{https://doi.org/10.1016/j.nuclphysb.2007.07.022}{Nucl. Phys. B \textbf{787}
(2007) 283.} }

{\small \vspace{-0.13cm} }

\bibitem{OpenCyGN12-open} {\small G. Niccoli, 
\href{http://stacks.iop.org/1742-5468/2012/i=10/a=P10025}{J. Stat. Mech.
(2012) P10025} }

{\small \vspace{-0.13cm} }

\bibitem{OpenCyFalN14} {\small S. Faldella and G. Niccoli, 
\href{http://stacks.iop.org/1751-8121/47/i=11/a=115202}{J. Phys. A \textbf{47%
}, 11 (2014) 115202.} }

{\small \vspace{-0.13cm} }

\bibitem{OpenCyFalKN14} {\small S. Faldella, N. Kitanine and G. Niccoli, 
\href{http://stacks.iop.org/1742-5468/2014/i=1/a=P01011}{ J. Stat Mech.
(2014) P01011}. }

{\small \vspace{-0.13cm} }

\bibitem{OpenCyKitMN14} {\small N. Kitanine, J. M. Maillet, G. Niccoli, 
\href{http://stacks.iop.org/1742-5468/2014/i=5/a=P05015}{J. Stat. Mech.
(2014), P05015}. }

{\small 
}

{\small \vspace{-0.13cm} }

\bibitem{OpenCyFanHSY96} {\small Fan H, Hou B-Y, Shi K-J and Yang Z-X 
\href{https://doi.org/10.1016/0550-3213(96)00398-7}{Nucl. Phys. B \textbf{478} (1996)
723 } }

{\small \vspace{-0.13cm} }

{\small 
}

\bibitem{OpenCyCao03} {\small J. Cao, H.-Q. Lin, K.-J. Shi, Y. Wang, 
\href{http://dx.doi.org/10.1016/S0550-3213(03)00372-9}{Nucl. Phys. B \textbf{%
663} (2003) 487.} }

{\small \vspace{-0.13cm} }

\bibitem{OpenCyZhang07} {\small W.-L. Yang, Y.-Z. Zhang, 
\href{https://doi.org/10.1088/1126-6708/2007/04/044}{J. High Energy Phys. \textbf{4}
(2007) 044.} }

{\small 
}

{\small \vspace{-0.13cm} }

\bibitem{OpenCyACDFR05} {\small D. Arnaudon, N. Cramp\'e, A. Doikou, L.
Frappat and E. Ragoucy,%
\href{https://doi.org/10.1088/1742-5468/2005/02/P02007}{ J. Stat. Mech. 02 (2005) P02007} ; 
\href{https://doi.org/10.1007/s00023-006-0280-x}{Ann. H. Poincar\'e 7 (2006) 1217.}
\vspace{-0.13cm} }

\bibitem{OpenCyRag1+} {\small E. Ragoucy and G. Satta, 
\href{https://doi.org/10.1088/1126-6708/2007/09/001}{JHEP 09 (2007) 001.} }

{\small \vspace{-0.13cm} }

\bibitem{OpenCyRag2+} {\small S.Belliard, E. Ragoucy 
\href{http://iopscience.iop.org/article/10.1088/1751-8113/42/20/205203/meta}{%
J. Phys. A \textbf{42} (2009) 205203} }

{\small 
}

{\small \vspace{-0.13cm} }


\bibitem{OpenCy-H1a} {\small H. Q. Zhou, 
\href{http://journals.aps.org/prb/pdf/10.1103/PhysRevB.54.41}{Phys. Rev. B 
\textbf{54} (1996), 41}; 
\href{http://dx.doi.org/10.1016/S0375-9601(97)00035-2}{Phys. Lett. A \textbf{%
228} (1997), 48.} \vspace{-0.13cm} }

\bibitem{OpenCy-H1b} {\small H. Asakawa and M. Suzuki, 
\href{http://iopscience.iop.org/article/10.1088/0305-4470/29/2/004/meta}{J.
Phys. A \textbf{29} (1996), 225.} \vspace{-0.13cm} }

\bibitem{OpenCy-H1c} {\small X.W. Guan, M. S. Wang and S. D. Yang, 
\href{http://dx.doi.org/10.1016/S0550-3213(96)00630-X}{Nucl. Phys. B \textbf{%
485} (1997), 685.} }

{\small \vspace{-0.13cm} }

\bibitem{OpenCy-H1d} {\small X.W. Guan, 
\href{http://iopscience.iop.org/article/10.1088/0305-4470/33/30/309/meta}{J.
Phys. A \textbf{33} (2000), 5391.} }

{\small \vspace{-0.13cm} }

\bibitem{OpenCy-ShiW97} {\small M. Shiroishi and M. Wadati, 
\href{http://dx.doi.org/10.1143/JPSJ.66.1}{J. Phys. Soc. Japan \textbf{66}
(1997), 1} ; 
\href{https://doi.org/10.1143/JPSJ.66.2288}{J. Phys. Soc. Japan \textbf{66}
(1997), 2288 } }

{\small \vspace{-0.13cm} }

\bibitem{Doikou-2006} {\small A. Doikou 
\href{https://doi.org/10.1088/1742-5468/2006/09/P09010}{J. Stat. Mech. (2006) P09010}}

{\small \vspace{-0.13cm} }

\bibitem{OpenCySF78} {\small E. K. Sklyanin and L. D. Faddeev, 
\href{http://inspirehep.net/record/138384?ln=fr}{Sov. Phys. Dokl. \textbf{23}
(1978) 902.} }


{\small \vspace{-0.13cm} }

\bibitem{OpenCyFT79} {\small L. A. Takhtajan, L. D. Faddeev, 
\href{http://iopscience.iop.org/article/10.1070/RM1979v034n05ABEH003909/pdf}{%
Russ. Math. Surv. \textbf{34} : 5 (1979) 11.} }

{\small \vspace{-0.13cm} }

\bibitem{OpenCyS79} {\small E. K. Sklyanin, 
{Dokl. Akad. Nauk SSSR \textbf{244} (1979) 1337} ; 
\href{http://adsabs.harvard.edu/abs/1979SPhD...24..107S}{Sov. Phys. Dokl. 
\textbf{24} (1979) 107.} }

{\small \vspace{-0.13cm} }

\bibitem{OpenCyKS79} {\small P. P. Kulish and E. K. Sklyanin, 
\href{http://dx.doi.org/10.1016/0375-9601(79)90365-7}{Phys. Lett. A \textbf{%
70} (1979) 461.} }

{\small \vspace{-0.13cm} }

\bibitem{OpenCyFST80} {\small E. K. Sklyanin and L. A. Takhtajan, L. D.
Faddeev, 
\href{http://inspirehep.net/record/141835?ln=ca}{Theor. Math. Phys. \textbf{%
40} (1980) 688.} }


{\small \vspace{-0.13cm} }

\bibitem{OpenCyF80} {\small L. D. Faddeev, 
\href{https://inspirehep.net/record/152168?ln=fr}{Sov. Sci. Rev. Math. Cl
(1980) 107.} \vspace{-0.13cm} }

\bibitem{OpenCyS82} {\small E. K. Sklyanin, 
\href{http://link.springer.com/article/10.1007\%2FBF01091462}{J. Sov. Math. 
\textbf{19} (1982) 1546.} }

{\small \vspace{-0.13cm} }

\bibitem{OpenCyF82} {\small L. D. Faddeev, Les Houches lectures (1982),
Elsevier Sci. Publ. \textbf{563} (1984). }

{\small \vspace{-0.13cm} }

\bibitem{OpenCyF95} {\small L. D. Faddeev, 
\href{http://arxiv.org/abs/hep-th/9605187}{Les Houches lectures (1996).} }

{\small \vspace{-0.13cm} }

\bibitem{OpenCyJ90} {\small M. Jimbo, 
{Adv. Series in Math. Phys. \textbf{10}, Singapore, World Scientific, (1990).%
} }

{\small \vspace{-0.13cm} }

\bibitem{OpenCyKS82} {\small P. P. Kulish and E. K. Sklyanin, 
\href{http://link.springer.com/chapter/10.1007\%2F3-540-11190-5_8}{Lect.
Notes in Phys. \textbf{151} (1982) 61.} }


{\small \vspace{-0.13cm} }

\bibitem{OpenCyTh81} {\small H. B. Thacker, 
\href{http://dx.doi.org/10.1103/RevModPhys.53.253}{Rev. Mod. Phys. \textbf{53%
} (1981) 253.} }

{\small \vspace{-0.13cm} }

\bibitem{OpenCyIK82} {\small A. G. Izergin and V. E. Korepin, 
\href{http://dx.doi.org/10.1016/0550-3213(82)90365-0}{Nucl. Phys. B \textbf{%
205} (1982) 401.} }


{\small \vspace{-0.13cm} }

\bibitem{OpenCyBaxBook} {\small R. J. Baxter, \textit{Exactly Solved Models
in Statistical Mechanics} \href{https://physics.anu.edu.au/theophys/_files/Exactly.pdf%
}{Academic Press, New York U.S.A. (1982).} }

{\small \vspace{-0.13cm} }

\bibitem{OpenCyLM66} {\small E. H. Lieb and D. C. Mattis, \textit{%
Mathematical Physics in One Dimension}, \href{https://books.google.fr/books?hl=fr&lr=&id=MLk3BQAAQBAJ&oi=fnd&pg=PP1&dq=E.+H.+Lieb+and+D.+C.+Mattis,++Mathematical+Physics+in+One+Dimension+New-York:+Academic+(1966).&ots=ksa6AI4axI&sig=vTgV_FFw2RkT2yoEHlJ2NzFroNA#v=onepage&q=E.\%20H.\%20Lieb\%20and\%20D.\%20C.\%20Mattis\%2C\%20\%20Mathematical\%20Physics\%20in\%20One\%20Dimension\%20New-York\%3A\%20Academic\%20(1966).&f=false%
}{New-York : Academic (1966).} }

{\small \vspace{-0.13cm} }

\bibitem{OpenCySh85} {\small B. S. Shastry, S. S. Jha and V. Singh, 
\href{http://onlinelibrary.wiley.com/doi/10.1002/zamm.19870670606/abstract}{%
Lect. Notes in Phys. \textbf{242} (1985).} }

{\small \vspace{-0.13cm} }

\bibitem{OpenCySk1} {\small E. K.~Sklyanin, 
\href{http://link.springer.com/chapter/10.1007\%2F3-540-15213-X_80}{Lect.
Notes Phys. \textbf{226} (1985) 196} }

{\small \vspace{-0.13cm} }

\bibitem{Skl1985} {\small E. K.~Sklyanin, 
\href{http://link.springer.com/article/10.1007\%2FBF02107243}{J. Sov. Math. 
\textbf{31} (1985) 3417} }

{\small \vspace{-0.13cm} }

\bibitem{OpenCySk2} {\small E.K.~Sklyanin, \emph{Quantum inverse scattering
method. Selected topics}. \href{http://arxiv.org/abs/hep-th/9211111}{in :
Quantum groups and quantum integrable systems, World Scientific (1992) 63}. }

{\small \vspace{-0.13cm} }

\bibitem{OpenCySk3} {\small E. K.~Sklyanin, 
\href{https://doi.org/10.1143/PTPS.118.35}{Prog.\ Theor.\ Phys.\ Suppl.\ 
\textbf{118} (1995) 35.} }

{\small \vspace{-0.13cm} }

{\small 
}

\bibitem{OpenCyBBS96} {\small O. Babelon, D. Bernard, F. Smirnov, 
\href{http://projecteuclid.org/euclid.cmp/1104288151}{Comm. Math. Phys. 
\textbf{182} (1996) 319} ; 
\ \href{https://doi.org/10.1007/s002200050122}{Comm. Math. Phys. 
\textbf{186} (1997) 601.} }

{\small \vspace{-0.13cm} }

\bibitem{OpenCySm98} {\small F. Smirnov, 
\href{http://iopscience.iop.org/article/10.1088/0305-4470/31/44/019/meta}{J.
Phys. A \textbf{31} (1998) 8953.} }

{\small 
}

{\small \vspace{-0.13cm} }

\bibitem{DerKM03} {\small S. E. Derkachov, G. P. Korchemsky and A. N.
Manashov, 
\href{http://arxiv.org/abs/hep-th/0210216}{JHEP 07 (2003), 047.} }

{\small \vspace{-0.13cm} }

\bibitem{OpenCyBT06} {\small A. Bytsko, J. Teschner, 
\href{http://arxiv.org/abs/hep-th/0602093}{J. Phys. A \textbf{39} (2006)
12927.} }

{\small 
}

{\small \vspace{-0.13cm} }

\bibitem{OpenCyGIPS06} {\small G. von Gehlen, N. Iorgov, S. Pakuliak and V.
Shadura, 
\href{http://iopscience.iop.org/article/10.1088/0305-4470/39/23/006/meta}{J.
Phys. A \textbf{39} (2006) 7257} ; 
\href{http://iopscience.iop.org/article/10.1088/1751-8113/42/30/304026}{J.
Phys. A \textbf{42} (2009) 304026.} }

{\small \vspace{-0.13cm} }

\bibitem{OpenCyGIPST07} {\small G. von Gehlen, N. Iorgov, S. Pakuliak, V.
Shadura and Yu Tykhyy, 
\href{http://iopscience.iop.org/article/10.1088/1751-8113/40/47/006}{J.
Phys. A \textbf{40} (2007) 14117} ; 
\href{http://iopscience.iop.org/article/10.1088/1751-8113/41/9/095003/meta}{%
J. Phys. A \textbf{41} (2008) 095003.} }

{\small \vspace{-0.13cm} }

{\small 
}

\bibitem{OpenCyNT-10} {\small G. Niccoli and J. Teschner, 
\href{http://stacks.iop.org/1742-5468/2010/i=09/a=P09014}{J. Stat. Mech.
(2010) P09014.} }

{\small \vspace{-0.13cm} }

\bibitem{OpenCyN-10} {\small G. Niccoli, 
\href{http://dx.doi.org/10.1016/j.nuclphysb.2010.03.009}{Nucl. Phys. B 
\textbf{835} (2010) 263}. }

{\small \vspace{-0.13cm} }

\bibitem{OpenCyGN12} {\small N. Grosjean and G. Niccoli, 
\href{http://stacks.iop.org/1742-5468/2012/i=11/a=P11005}{J. Stat. Mech.
(2012) P11005} }

{\small \vspace{-0.13cm} }

\bibitem{OpenCyGMN12-SG} {\small N. Grosjean, J. M. Maillet, G. Niccoli, 
\href{http://stacks.iop.org/1742-5468/2012/i=10/a=P10006}{J. Stat. Mech.
(2012) P10006} ; 
\href{http://dx.doi.org/10.1007/s00023-014-0358-9}{Ann. H. Poincar\'e 16
(2015) 1103}. }

{\small \vspace{-0.13cm} }

\bibitem{OpenCyNWF09} {\small S. Niekamp, T. Wirth, H. Frahm, 
\href{http://iopscience.iop.org/article/10.1088/1751-8113/42/19/195008/meta}{%
J. Phys. A \textbf{42} (2009) 195008.} }

{\small \vspace{-0.13cm} }

\bibitem{OpenCyN12-0} {\small G. Niccoli, 
\href{http://dx.doi.org/10.1016/j.nuclphysb.2013.01.017}{ Nucl. Phys. 
\textbf{B} 870 (2013), 397} ; 
\href{http://arxiv.org/abs/1301.4924}{Proceedings (World Scientific ICMP12)
of the XVIIth International Congress on Mathematical Physics, August 2012,
Aalborg, Danemark. } }

{\small \vspace{-0.13cm} }

\bibitem{OpenCyN12-1} {\small G. Niccoli, 
\href{http://dx.doi.org/10.1063/1.4807078}{J. Math. Phys. 54 (2013) 053516}
; 
\href{http://stacks.iop.org/1751-8121/46/i=7/a=075003}{J. Phys. A \textbf{46}%
, 7, (2013) 075003.} }

{\small \vspace{-0.13cm} }

\bibitem{OpenCyNicT15-2} {\small G. Niccoli and V. Terras, 
\href{http://dx.doi.org/10.1088/1751-8113/49/4/044001}{J. Phys. A \textbf{49}
044001}. }

{\small \vspace{-0.13cm} }

\bibitem{OpenCyNicT15} {\small G. Niccoli and V. Terras, 
\href{http://link.springer.com/article/10.1007/s11005-015-0759-9}{Lett.
Math. Phys. \textbf{105} (2015) 989}. }

{\small \vspace{-0.13cm} }

\bibitem{OpenCyLevNT15} {\small D. Levy-Bencheton, G. Niccoli and V. Terras, 
\href{http://dx.doi.org/10.1088/1742-5468/2016/03/033110}{J. Stat. Mech.
(2016) 033110.} }

{\small \vspace{-0.13cm} }

\bibitem{OpenCyKitMNT15} {\small N. Kitanine, J.M. Maillet, G. Niccoli and
V. Terras, 
\href{http://arxiv.org/abs/1506.02630}{J. Phys. A \textbf{49} (2016) 104002.} 
}

{\small \vspace{-0.13cm} }

\bibitem{KitMNT16} {\small N. Kitanine, J.M. Maillet, G. Niccoli and V.
Terras, 
\href{https://doi.org/10.1088/1751-8121/aa6cc9}{J. Phys. A {\bf 50} (2017) 224001.}
}



{\small \vspace{-0.13cm} }

\bibitem{RocheA-1989} {\small P. Roche, D. Arnaudon, \href{http://link.springer.com/article/10.1007/BF00399753}{Lett. Math. Phys. 17 (1989) 295-300.} }

{\small \vspace{-0.13cm} }

\bibitem{OpenCyBS90} {\small V. V. Bazhanov, Yu. G. Stroganov, 
\href{http://link.springer.com/article/10.1007\%2FBF01025851}{J. Stat. Phys. 
\textbf{59} (1990) 799.} }


{\small \vspace{-0.13cm} }

\bibitem{JimboDMM1990} {\small E. Date, M. Jimbo, K. Miki, T. Miwa, \href{http://dx.doi.org/10.1016/0375-9601(90)90573-7} {Phys. Lett. A 148 (1990) 45.}}

{\small \vspace{-0.13cm} }

\bibitem{JimboDMM1991} {\small E. Date, M. Jimbo, K. Miki, T. Miwa, \href{https://projecteuclid.org/euclid.cmp/1104202516}{Comm. Math. Phys. 137 (1991) 133.}}

{\small \vspace{-0.13cm} }


\bibitem{Tarasov-1992} {\small V. Tarasov, \href{http://dx.doi.org/10.1142/S0217751X92004129}{Int. J. Mod. Phys. A 7, Suppl. IB, 963-975 (1992); Tsuchia, A., Eguchi, T.,
Jimbo, M. (eds.) RIMS Research Project (1992) infinite Analysis. Proceedings. Singapore: World Scientific 1992.}}

{\small \vspace{-0.13cm} }

\bibitem{Tarasov-1993} {\small V. Tarasov, \href{http://link.springer.com/article/10.1007/BF02096799}{ Comm. Math. Phys. 158, 459-483 (1993).}}

{\small \vspace{-0.13cm} }

\bibitem{OpenCyKitMT99} {\small N.~Kitanine, J.~M. Maillet, and V.~Terras, 
\href{http://dx.doi.org/10.1016/S0550-3213(99)00295-3}{Nucl. Phys. B \textbf{%
554} (1999) 647} }

{\small \vspace{-0.13cm} }

\bibitem{OpenCyMaiT00} {\small J.~M. Maillet and V.~Terras, 
\href{http://arxiv.org/abs/hep-th/9911030}{Nucl. Phys. B \textbf{575} (2000)
627.} }



{\small \vspace{-0.13cm} }

\bibitem{OpenCyauYMcPTY} {\small H. Au-Yang, B. M. McCoy, J. H. H. Perk, S.
Tang, and M. Yan, 
\href{http://dx.doi.org/10.1016/0375-9601(87)90065-X}{Phys. Lett. A \textbf{%
123} (1987) 219.} }

{\small \vspace{-0.13cm} }

\bibitem{OpenCyMcPTS} {\small B. M. McCoy, J. H. H. Perk, S. Tang and C. H.
Sah, 
\href{http://dx.doi.org/10.1016/0375-9601(87)90509-3}{Phys. Lett. A \textbf{%
125} (1987) 9. } }

{\small \vspace{-0.13cm} }

\bibitem{OpenCyauYMcPT} {\small H. Au-Yang, B. M. McCoy, J. H. H. Perk, and
S. Tang, 
\textit{Algebraic Analysis} \textbf{1}M. Kashiwara and T. Kawai, eds
Academic Press, New York (1988). }


{\small \vspace{-0.13cm} }

\bibitem{OpenCyBaPauY} {\small R. J. Baxter, J. H. H. Perk and H. Au-Yang, 
\href{http://dx.doi.org/10.1016/0375-9601(88)90896-1}{Phys. Lett. A \textbf{%
128} (1988) 138} ; }


{\small \vspace{-0.13cm} }

\bibitem{OpenCyBBP90} {\small R. J. Baxter, V. V. Bazhanov and J. H. H.
Perk, 
\href{http://www.worldscientific.com/doi/abs/10.1142/S0217979290000395}{Int.
J. Mod. Phys. \textbf{B4} (1990) 803.} }

{\small \vspace{-0.13cm} }

\bibitem{OpenCyBa89} {\small R. J. Baxter, 
\href{http://link.springer.com/article/10.1007\%2FBF01023632}{J. Stat. Phys. 
\textbf{57} (1989) 1} ; 
\href{http://dx.doi.org/10.1016/0375-9601(88)91014-6}{Phys. Lett. A \textbf{%
133} (1989) 185.} }

{\small \vspace{-0.13cm} }

\bibitem{OpenCyAMcP0} {\small G. Albertini, B. M. McCoy and J. H. H. Perk, 
\href{http://inspirehep.net/record/25242?ln=fr}{Adv. Study in Pure Math. 
\textbf{19} (1989) 1} ; 
\href{http://dx.doi.org/10.1016/0375-9601(89)90254-5}{Phys. Lett. \textbf{A
135} (1989) 159} ; 
\href{http://dx.doi.org/10.1016/0375-9601(89)90142-4}{Phys. Lett. A \textbf{%
139} (1989) 204, } } 

{\small H. Au-Yang and J. H. H. Perk, 
Adv. Studies in Pure Math. \textbf{19} Kinokuniya : Academic (1989). }



{\small \vspace{-0.13cm} }

\bibitem{OpenCyTarasovSChP} {\small V. O. Tarasov, 
\href{http://dx.doi.org/10.1016/0375-9601(90)90612-R}{Phys. Lett. A \textbf{%
147} (1990) 487.} }

{\small \vspace{-0.13cm} }

\bibitem{OpenCyBa04} {\small R. J. Baxter, 
\href{http://link.springer.com/article/10.1023/B:JOSS.0000044062.64287.b9}{%
J. Stat. Phys. \textbf{117} (2004) 1.} }


{\small \vspace{-0.13cm} }

\bibitem{OpenCy-Au-YP08} {\small Au-Yang H and Perk J H H J. 
\href{http://iopscience.iop.org/article/10.1088/1751-8113/41/27/275201/meta}{%
Phys. A \textbf{41} (2008) 275201} ; }  {\small 
\href{http://iopscience.iop.org/article/10.1088/1751-8113/42/37/375208}{J.
Phys. A \textbf{42} (2009) 375208 } }

{\small \vspace{-0.13cm} }


\bibitem{MNP2018b} {\small J. M. Maillet, G. Niccoli, B. Pezelier,  Integrable local Hamiltonians associated to cyclic representations of the 6-vertex reflection algebra, 
\href{}{to appear.} }




\end{thebibliography}
\end{document}